\theoremstyle{plain}
\newtheorem{assumption}[theorem]{ASSUMPTION}
\definecolor{julia1}{rgb}{0.0,0.6056031611752245,0.9786801175696073}
\definecolor{julia2}{rgb}{0.8888735002725198,0.43564919034818994,0.2781229361419438}
\renewcommand{\fnum@algorithm}{\fname@algorithm}
\DeclareFontShape{T1}{lmtt}{m}{it}
     {<->sub*lmtt/m/sl}{}
\newcommand{\apponly}{}
\newcommand{\mainonly}{}
\title{Strongly local p-norm-cut algorithms for semi-supervised learning  and local graph clustering}
\author{Meng Liu and David F.~Gleich \\ Purdue University}
\begin{document}%
\marginnote[8.5in]{\textbf{Acknowledgements.} This research was supported in part by NSF awards IIS-1546488, CCF-1909528, and NSF Center for Science of Information STC, CCF-0939370, as well as DOE DE-SC0014543, NASA, and the Sloan Foundation.
}%
\maketitle

\begin{abstract}
Graph based semi-supervised learning is the problem 
of learning a labeling function for the graph nodes given a few example nodes, often called seeds, usually under the assumption that the graph's edges indicate similarity of labels. This is closely related to the local graph clustering or community detection problem of finding a cluster or community of nodes around a given seed. For this problem, we propose a novel generalization of random walk, diffusion, or smooth function methods in the literature to a convex p-norm cut function. The need for our p-norm methods is that, in our study of existing methods, we find those principled methods based on eigenvector, spectral, random walk, or linear system often have difficulty capturing the correct boundary of a target label or target cluster. In contrast, 1-norm or maxflow-mincut based methods capture the boundary, but cannot grow from small seed set; hybrid procedures that use both have many hard to set parameters. In this paper, we propose a generalization of the objective function behind these methods involving p-norms. To solve the p-norm cut problem we give a strongly local algorithm -- one whose runtime depends on the size of the output rather than the size of the graph. Our method can be thought as a nonlinear generalization of the Anderson-Chung-Lang push procedure to approximate a personalized PageRank vector efficiently. Our procedure is general and can solve other types of nonlinear objective functions, such as p-norm variants of Huber losses. We provide a theoretical analysis of finding planted target clusters with our method and show that the p-norm cut functions improve on the standard Cheeger inequalities for random walk and spectral methods. Finally, we demonstrate the speed and accuracy of our new method in synthetic and real world datasets. Our code is available \url{github.com/MengLiuPurdue/SLQ}.
\end{abstract}

\section{Introduction}
Many datasets important to machine learning either start as a graph or have a simple translation into graph data. For instance, relational network data naturally starts as a graph. Arbitrary data vectors become graphs via nearest-neighbor constructions, among other choices. Consequently, understanding graph-based learning algorithms -- those that learn from graphs -- is a recurring problem.
This field has a rich history with methods based on linear systems~\cite{Zhou-2003-semi-supervised,Zhu-2003-diffusion}, eigenvectors~\cite{Joachims-2003-transductive,Hansen-2012-semi-supervised-eigenvectors}, graph cuts~\cite{Blum-2001-mincuts}, and network flows~\cite{LR04,AL08_SODA,Veldt-2016-simple-local-flow}, although recent work in graph-based learning has often focused on embeddings~\cite{Perozzi-2014-deepwalk,Grover-2016-node2vec} and graph neural networks~\cite{yadati2019hypergcn,klicpera2018predict,li2019label}. Our research seeks to understand the possibilities enabled by a certain $p$-norm generalization of the standard techniques. 

Perhaps the prototypical graph-based learning problems are \emph{semi-supervised learning} and \emph{local clustering}. Other graph-based learning problems include role discovery and alignments. Semi-supervised learning involves learning a labeling function for the nodes of a graph based on a few examples, often called seeds. The most interesting scenarios are when most of the graph has unknown labels and there are only a few examples per label. This could be a constant number of examples per label, such as 10 or 50, or a small fraction of the total label size, such as 1\%. Local clustering is the problem of finding a cluster or community of nodes around a given set of seeds. This is closely related to semi-supervised learning because that cluster is a natural suggestion for nodes that ought to share the same label, if there is a homophily property for edges in the network. If this homophily is not present, then there are  transformations of the graph  that can make these methods work better~\cite{Peel-2017-relational}.

For both problems, a standard set of techniques is based on random walk diffusions and mincut constructions~\cite{Zhou-2003-semi-supervised,Zhu-2003-diffusion,Joachims-2003-transductive,Gleich-2015-robustifying,pan2004-cross-modal-discovery}. These reduce the problem to a linear system, eigenvector, random walk, or mincut-maxflow problem, which can often be further approximated. As a simple example, consider solving a seeded PageRank problem that is seeded on the nodes known to be labeled with a single label. The resulting PageRank vector indicates other nodes likely to share that same label. This propensity of PageRank to propogate labels has been used in a many applications and it has many interpretations~\cite{Kloumann-2016-block-models,Gleich-2015-prbeyond,Orecchia-2011-implicit-regularization,pan2004-cross-modal-discovery,Lisewski-2010-gid,ghosh2014interplay}, including guilt-by-association~\cite{Koutra-2011-guilt-by-association}. A related class of mincut-maxflow constructions uses similar reasoning~\cite{Blum-2001-mincuts,Veldt-2016-simple-local-flow,Veldt-2019-flow}.

The link between these PageRank methods and the mincut-maxflow computations is that they correspond to $1$-norm and $2$-norm variations on a general objective function (see~\cite{gleich2014anti} and Equation~\ref{eq:linear-cut}). In this paper, we replace the norm with a general $p$-norm. (For various reasons, we refer to it as a $q$-norm in the subsequent technical sections. We use $p$-norm here as this usage is more common.) The literature on $1$ and $2$-norms is well established and largely suggests that $1$-norm (mincut) objectives are best used for refining \emph{large} results from other methods -- especially because they tend to sharpen boundaries -- whereas $2$-norm methods are best used for \emph{expanding} small seed sets~\cite{Veldt-2016-simple-local-flow}. There is a technical reason for why mincut-maxflow formulations cannot expand small seed sets, unless they have uncommon properties, discussed in~\cite[Lemma 7.2]{Fountoulakis-preprint-flow}. The downside to $2$-norm methods is that they tend to ``expand'' or ``bleed out'' over natural boundaries in the data. This is illustrated in Figure~\ref{fig:boundaries}(b). The hypothesis motivating this work is that techniques that use a $p$-norm where $1 < p < 2$ should provide a useful alternative -- if they can be solved as efficiently as the other cases. This is indeed what we find and a small example of what our methods are capable of is shown in  Figure~\ref{fig:boundaries}(c), where we use a $1.1$-norm to avoid the over-expansion from the $2$-norm method. 

\begin{tuftefigure}[t]
\parbox{0.33\linewidth}{(a) Seed node and target.}%
\parbox{0.33\linewidth}{(b) 2-norm problem.}%
\parbox{0.33\linewidth}{(c) 1.1-norm problem.}

\includegraphics[width=0.33\linewidth]{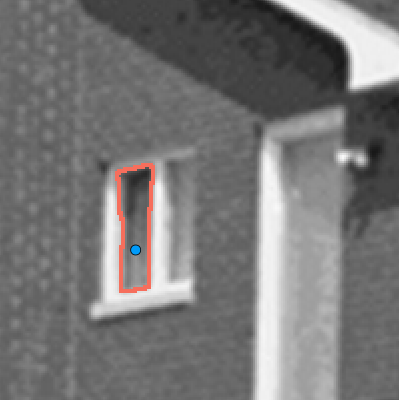}%
\includegraphics[width=0.33\linewidth]{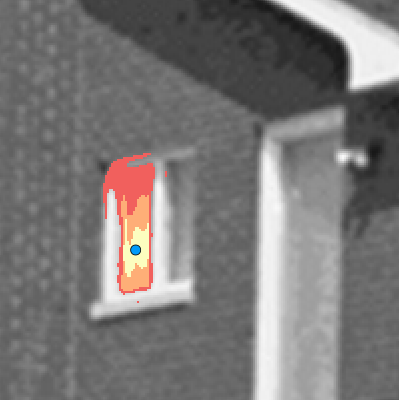}%
\includegraphics[width=0.33\linewidth]{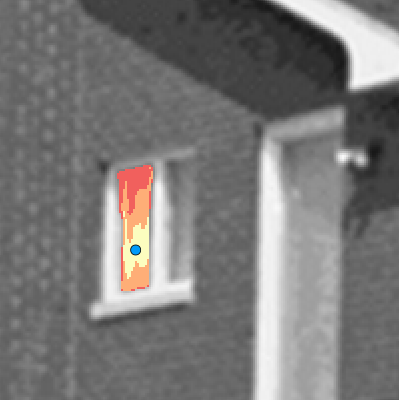}
\caption{A simple illustration of the benefits of our $p$-norm methods. In this problem, we generate a graph from an image with weighted neighbors as described in~\cite{Shi2000-normalized-cuts}. We intentionally make this graph consider \emph{large} regions, so each pixel is connected to all neighbors within 40 pixels away.  The target in this problem is the \emph{cluster} defined by the interior of the window and we select a single pixel inside the window as the seed. The three colors (yellow, orange, red) show how the non-zero elements of the solution \emph{fill-in} as we decrease a sparsity penalty in our formulation (yellow is sparsest, red is densest). The $2$-norm result exhibits a typical phenomenon of over-expansion, whereas the $1.1$-norm accurately captures the true boundary. We tried running various $1$-norm methods, but they were unable to grow a single seed node, as has been observed in many past experiments and also theoretically justified in \cite[Lemma 7.2]{Fountoulakis-preprint-flow}.

\textbf{Full details} The image is a real-valued grey-scale image between $0$ and $1$. We use Malik and Shi's procedure~\cite{Shi2000-normalized-cuts} to convert the image into a weighted graph. In the graph, pixels represent nodes and pixels are connected within a $2$-squared-norm distance of $40$. The weight on an edge is $w(i,j) = \exp(-|I(i) - I(j)|^2/\sigma_I^2-|D(i,j)|^2/\sigma_x^2)Ind[|D(i,j)^2 \le r]$, where $I(i)$ is the intensity at pixel $i$, $D(i,j)$ is the 2-norm distance in pixel locations, and $Ind[\cdot]$ is the indicator function. The value of $r = 40$, $\sigma_I^2 =  0.001$, which is the weight on differences in intensity, and the value of $\sigma_d^2 = 512/10$. We ran our SLQ solver with $\gamma = 0.001$ and $\kappa = [0.005,0.0025,0.001]$ and $\rho = 0.5$ for $q=1.1$ to get the 3 colored regions. We terminated this after $1,000,000$ steps, even though it had not fully converged. Running it longer (over one billion steps) shows that there are a few exceptionally small entries that bleed out of the target window. (Recall that we show any non-zero entry ever introduced by the algorithms.) These are illustrated in Figure~\ref{fig:boundary-more}.} 
\label{fig:boundaries}

\end{tuftefigure}

\begin{marginfigure}[-20ex]
\centering
\includegraphics[width=0.4\linewidth]{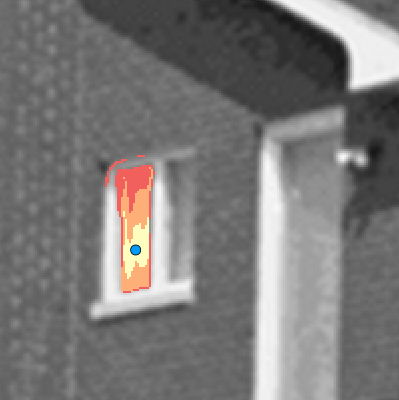}%
\caption{Running our SLQ solver for an extremely long time will cause a few entries to bleed out of the target window. Compare with Figure~\ref{fig:boundaries}. }
\label{fig:boundary-more}
\end{marginfigure}

We are hardly the first to notice these effects or propose $p$-norms as a solution. For instance, the $p$-Laplacian~\cite{Amghibech-2003-pLaplacian} and related ideas~\cite{NIPS2011_4185} has been widely studied as a way to improve results in spectral clustering~\cite{buhler2009spectral} and semi-supervised learning~\cite{Bridle-2013-p-Laplacians}. This has recently been used to show the power of simple nonlinearities in diffusions for semi-supervised learning as well~\cite{Ibrahim-2019-nonlinear}. The major rationale for our paper is that our algorithmic techniques are closely related to those used for $2$-norm optimization. It remains the case that spectral (2-norm) approaches are far more widely used in practice, partly because they are simpler to implement and use, whereas the other approaches involve more delicate computations. Our new formulations are amenable to similar computation techniques as used for 2-norm problems, which we hope will enable them to be widely used. 

To forward the goal of making these techniques useful, we release all of experimental code and the tools necessary to easily use the strongly-local $p$-norm cuts on github: 
\begin{center}
\url{github.com/MengLiuPurdue/SLQ}
\end{center}
This includes related codes for similar purposes as well.

The remainder of this paper consists of a demonstration of the potential of this idea. We first formally state the problem and review technical preliminaries in Section~\ref{sec:preliminaries}. As an optimization problem the $p$-norm problem is strongly convex with a unique solution. Next, we provide a \emph{strongly local} algorithm to approximate the solution (Section~\ref{sec:algorithm}). A strongly local algorithm is one where the runtime depends on the size of the output rather than the size of the input graph. This enables the methods to run efficiently even on large graphs, because, simply put, we are able to bound the maximum output size and runtime  independently of the graph size.  A hallmark of the existing literature on these methods is a recovery guarantee called a Cheeger inequality. Roughly, this inequality shows that, \emph{if} the methods are seeded nearby a \emph{good cluster}, then the methods will return something that is \emph{not too far away} from that good cluster. This is often quantified in terms of the conductance of the good cluster and the conductance of the returned cluster. There are a variety of tradeoffs possible here~\cite{andersen2006local,zhu2013local,wang2017capacity}. We prove such a relationship for our methods where the quality of the guarantee depends on the exponent $1/p$, which reproduces the square root Cheeger guarantees~\cite{Chung-1992-book} for $p=2$ but gives better results when $p < 2$.  Finally, we empirically demonstrate a number of aspects of our methods in comparison with a number of other techniques in Section~\ref{sec:experiments}. The goal is to highlight places where  our $p$-norm objectives differ.

At the end, we have a number of concluding discussions (Section~\ref{sec:discussion}), which highlight dimensions where our methods could be improved, as well as related literature. For instance, there are many ways to use personalized PageRank methods with graph convolutional networks and embedding techniques~\cite{klicpera2018predict} -- we conjecture that our $p$-norm methods will simply improve on these relationships. Also, and importantly, as we were completing this paper, we became aware of~\cite{Yang-preprint} which discusses $p$-norms for flow-based diffusions. Our two papers have many similar findings on the benefit of $p$-norms, although there are some meaningful differences in the approaches, which we discuss in Section~\ref{sec:related}. In particular, our algorithm is distinct and follows a simple generalization of the widely used and deployed \emph{push} method for PageRank. Our hope is that both papers can highlight the benefits of this idea to improve the practice of graph-based learning.

\newpage
\section{Generalized local graph cuts}
\label{sec:preliminaries}
\label{sec:problem}

We consider graphs that are undirected, connected, and weighted with positive edge weights lower-bounded by 1. Let $G=(V,E,w)$ be such a graph, where $n = |V|$ and $m=|E|$. The adjacency matrix $\mA$ has non-zero entries $w(i,j)$ for each edge $(i,j)$, and all other entries are zero. This is symmetric because the graph is undirected. The degree vector $\vd$ is defined as the row sum of $\mA$ and $\mD$ is a diagonal matrix defined as $\text{diag}(\vd)$. The incidence matrix $\mB\in\{0,-1,1\}^{m\times n}$ measures the differences of adjacent nodes. The $k$th row of $\mB$ represents the $k$th edge and each row has exactly two nonzero elements, i.e. $1$ for start node of $k$th edge and $-1$ for end node of $k$th edge. For undirected graphs, either node can be the start node or end node and the order does not matter. We use $\text{vol}(S)$ for the sum of weighted degrees of the nodes in $S$ and $\phi(S) = \frac{\text{cut}(S)}{\min(\text{vol}(S),\text{vol}(\bar{S}))}$ for conductance. 
\apponly{We use $i\sim j$ to represent that node $i$ and node $j$ are adjacent.}

For simplicity, we begin with PageRank, which has been used for all of these tasks in various guises~\cite{Zhou-2003-semi-supervised,Gleich-2015-robustifying,andersen2006local}. A PageRank vector~\cite{Gleich-2015-prbeyond} is the solution of the linear system $(\mI - \alpha \mA \mD^{-1}) \vx = (1-\alpha) \vv$ where $\alpha$ is a probability between $0$ and $1$ and $\vv$ is a stochastic vector that gives the \emph{seed} distribution. This can be easily reworked into the equivalent linear system $(\gamma \mD + \mL) \vy = \gamma \vv$ where $\vx = \mD \vy$ and $\mL$ is the graph Laplacian $\mL = \mD - \mA$. The starting point for our methods is a result shown in~\cite{gleich2014anti}, where we can further translate this into a 2-norm ``cut'' computation on a graph called the localized cut graph that is closely related to common constructions in maxflow-mincut computations for cluster improvement~\cite{AL08_SODA,Fountoulakis-preprint-flow}.

The localized cut graph is created from the original graph, a set $S$, and a value $\gamma$. The construction adds an extra source node $s$ and an extra sink node $t$, and edges from $s$ to the original graph that \emph{localize} a solution, or bias, a solution within the graph near the set $S$. Formally, given a graph $G=(V,E)$ with adjacency matrix $\mA$, a seed set $S\subset V$ and a non-negative constant $\gamma$, the adjacency matrix of the localized cut graph is:
\[\mA_S= \bmat{
0 & \gamma\vd_S^T & 0\\
\gamma\vd_S & \mA & \gamma\vd_{\bar{S}}\\
0 & \gamma\vd^T_{\bar{S}} & 0\\
} \quad 
 \begin{array}{@{}c@{}} \text{and a small} \\ \text{illustration is} \end{array} 
\quad \vcenter{\hbox{\includegraphics[height=45pt]{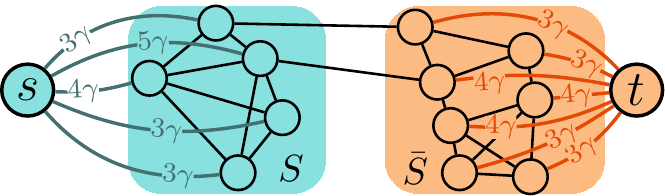}}} \]
Here $\bar{S}$ is the complement set of $S$, $\vd_S=\mD\ve_S$, $\vd_{\bar{S}}=\mD\ve_{\bar{S}}$, and $\ve_S$ is an indicator vector for $S$.

Let $\mB, \vw$ be the incidence matrix and weight vector for the localized cut-graph. Then PageRank is equivalent to the following \emph{2-norm-cut} problem (see full details in~\cite{gleich2014anti})
\begin{equation}
\label{eq:linear-cut}
\MINone{\vx}{\vw^T(\mB\vx)^2 = \sum_{i,j} w_{i,j} (x_i - x_j)^2 = \vx^T \mB^T \text{diag}(\vw) \mB \vx}{x_s=1,x_t=0}
\end{equation}
We call this a \emph{cut} problem because if we replace the squared term with an absolute value (i.e., $\sum w_{i,j} |x_i - x_j|$), then we have the standard $s,t$-mincut problem. Our paper proceeds from changing this power of $2$ into a more general loss-function $\ell$ and also adding a sparsity penalty, which is often needed to produce strongly local solutions~\cite{gleich2014anti}. We define this formally now.

\begin{fullwidthfigure}[t]

\includegraphics[width=0.25\linewidth]{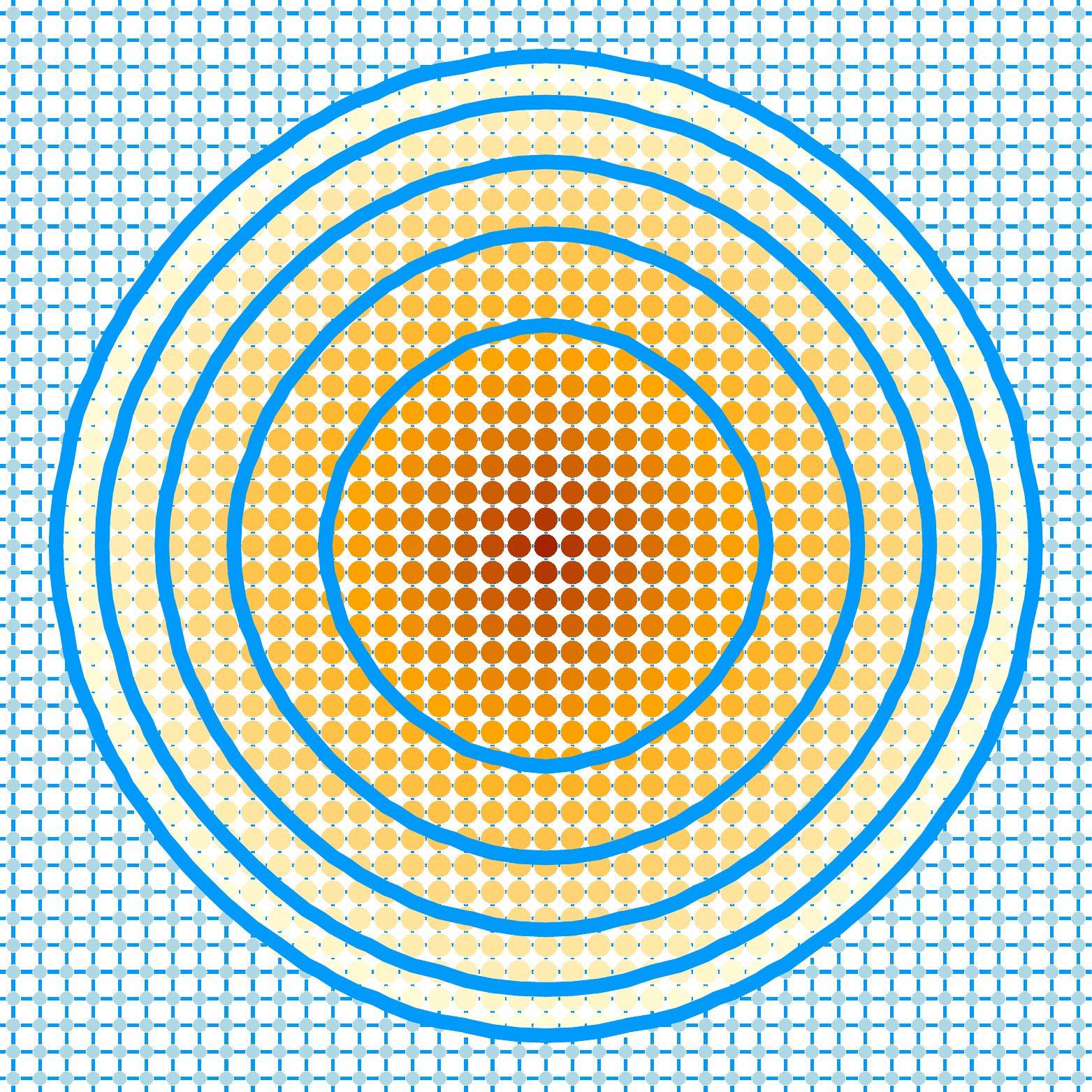}%
\includegraphics[width=0.25\linewidth]{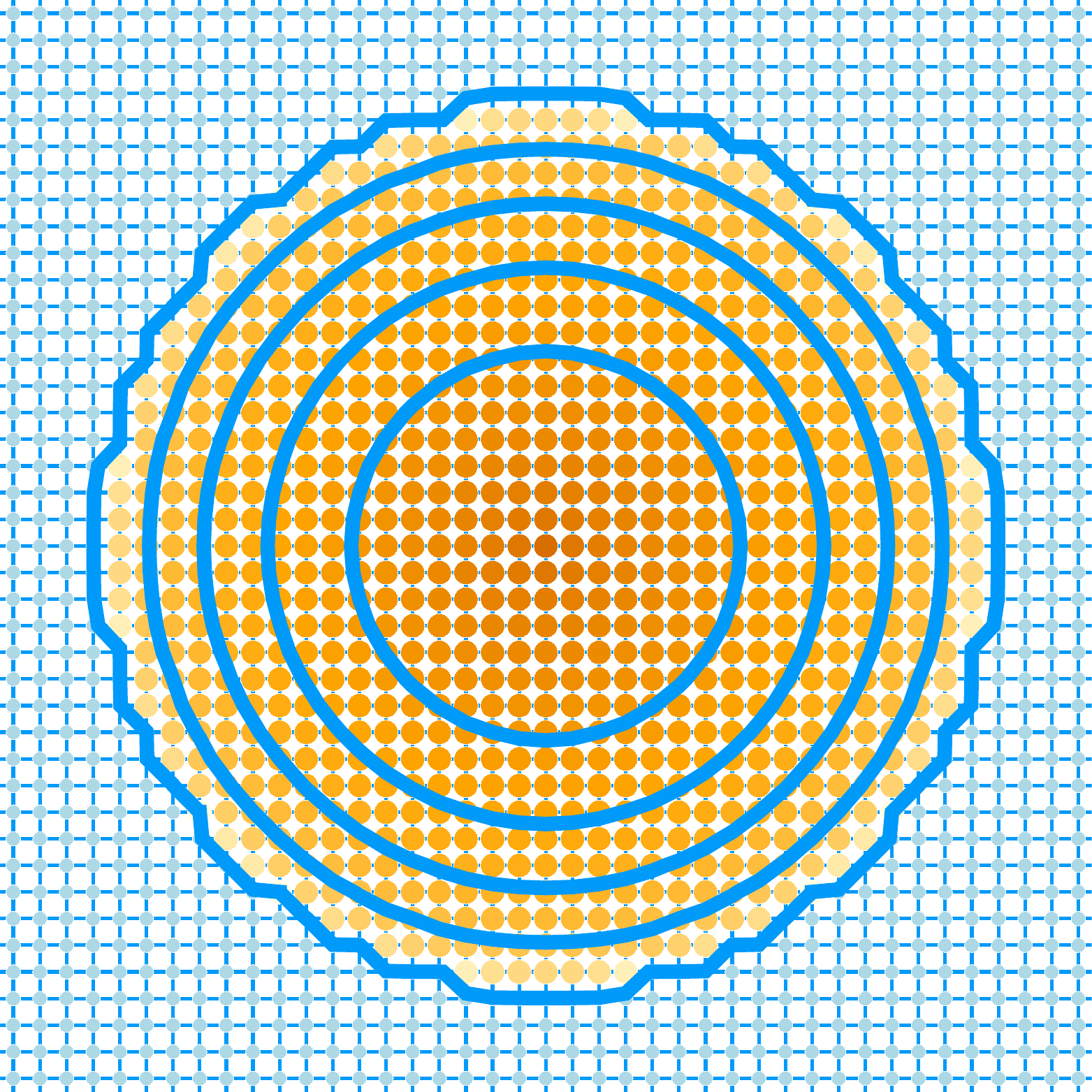}%
\includegraphics[width=0.25\linewidth]{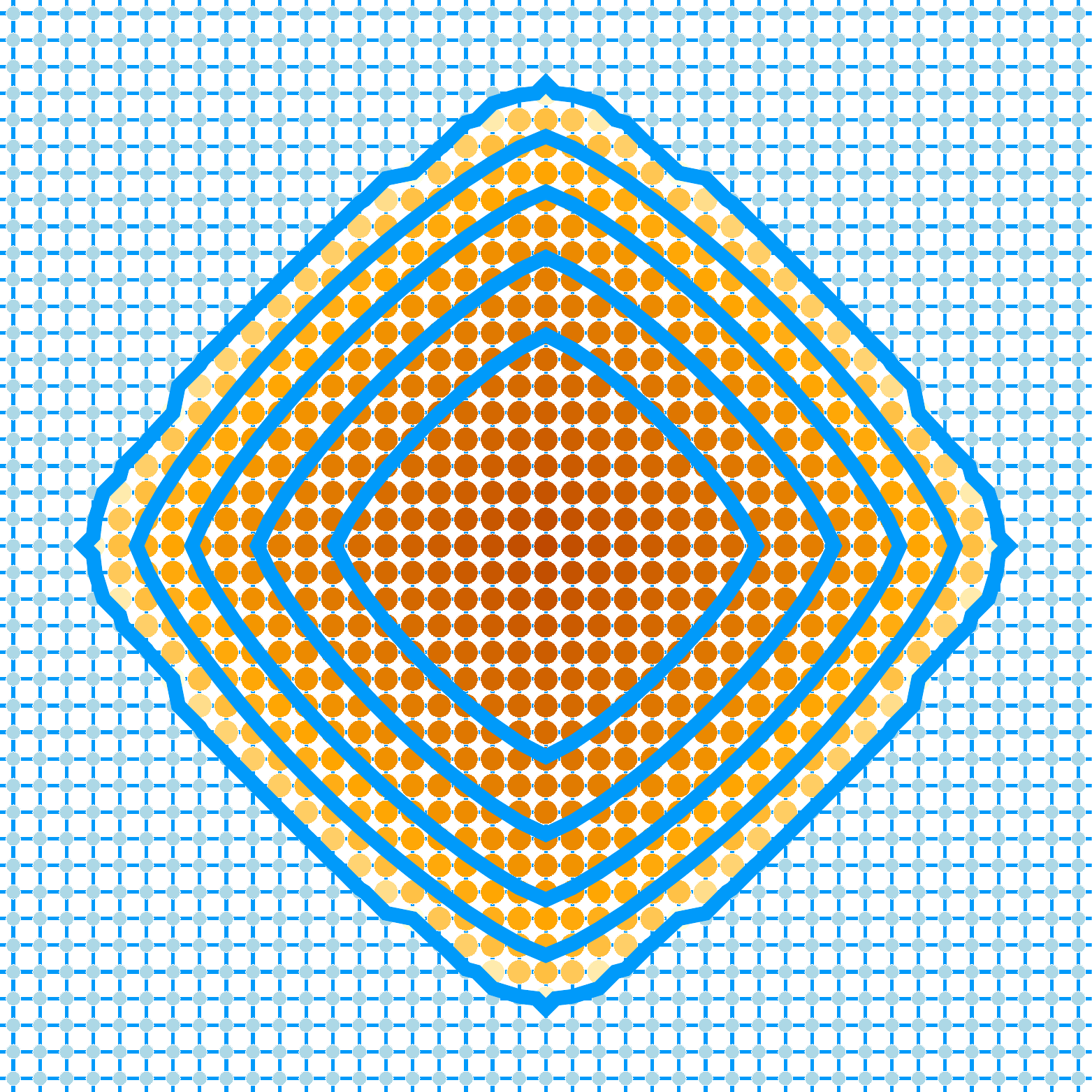}%
\includegraphics[width=0.25\linewidth]{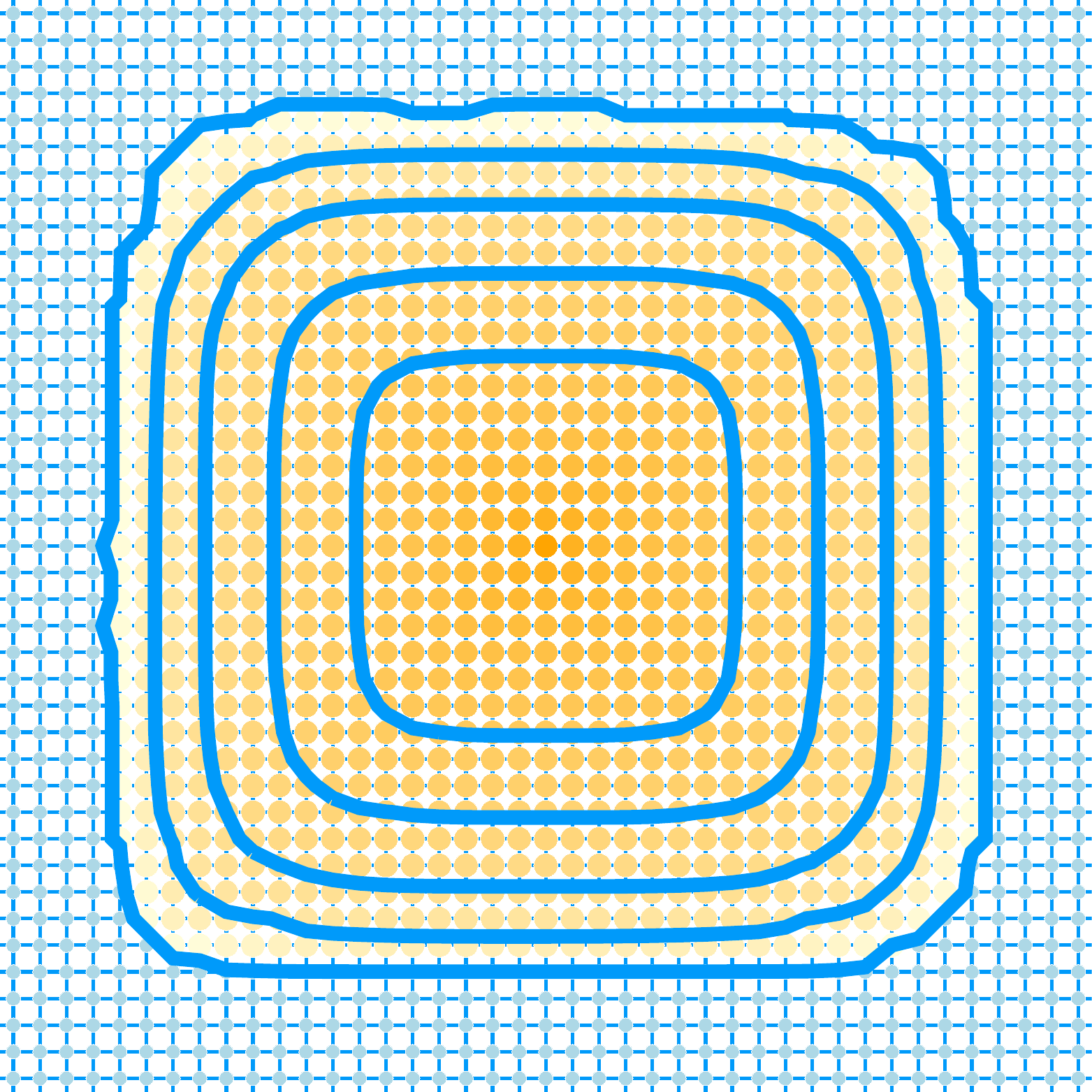}

\parbox{0.25\linewidth}{\footnotesize(a) PageRank ($\alpha=0.85$)}%
\parbox{0.25\linewidth}{\footnotesize(b) $q\!\!=\!\!2, \gamma\!\!=\kappa=\!\!10^{-3}$}%
\parbox{0.25\linewidth}{\footnotesize(c) $q\!\!=\!\!5, \gamma\!\!=\!\!10^{-5}, \kappa\!\!=\!\!10^{-4}$}%
\parbox{0.25\linewidth}{\footnotesize(d) $q\!\!=\!\!1.25, \gamma\!\!=\kappa=\!\!10^{-3}$}

\bigskip

\includegraphics[width=0.25\linewidth]{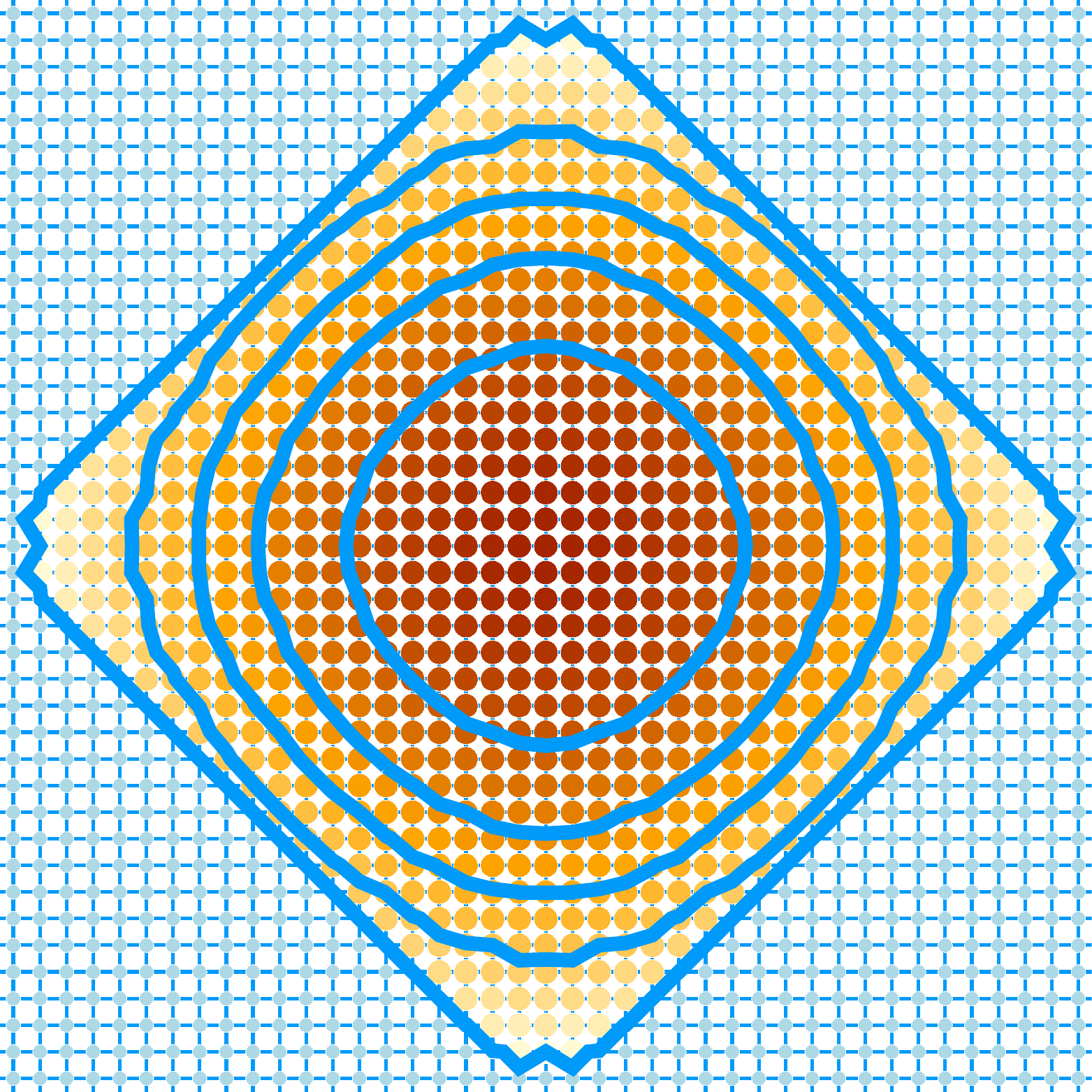}%
\includegraphics[width=0.25\linewidth]{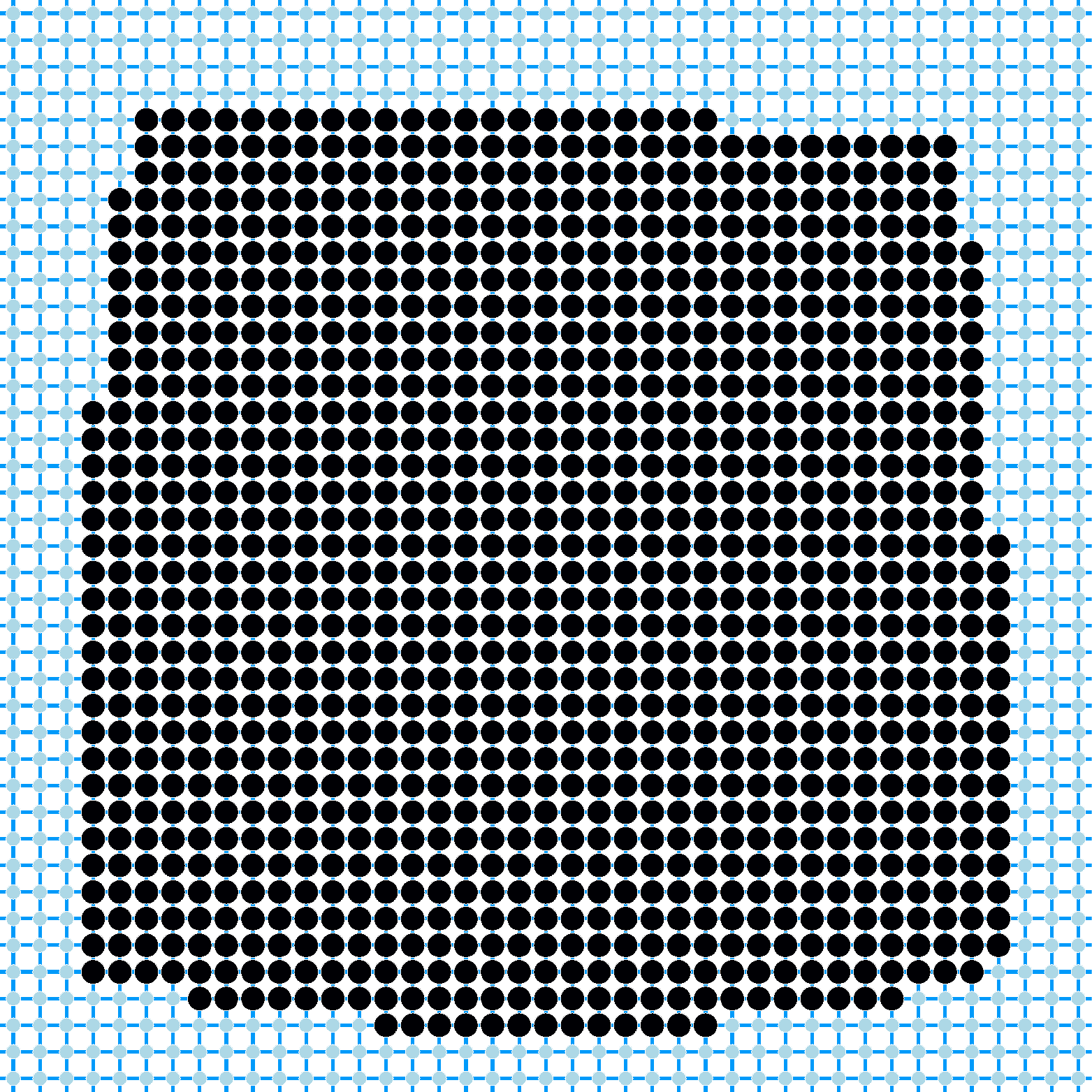}%
\includegraphics[width=0.25\linewidth]{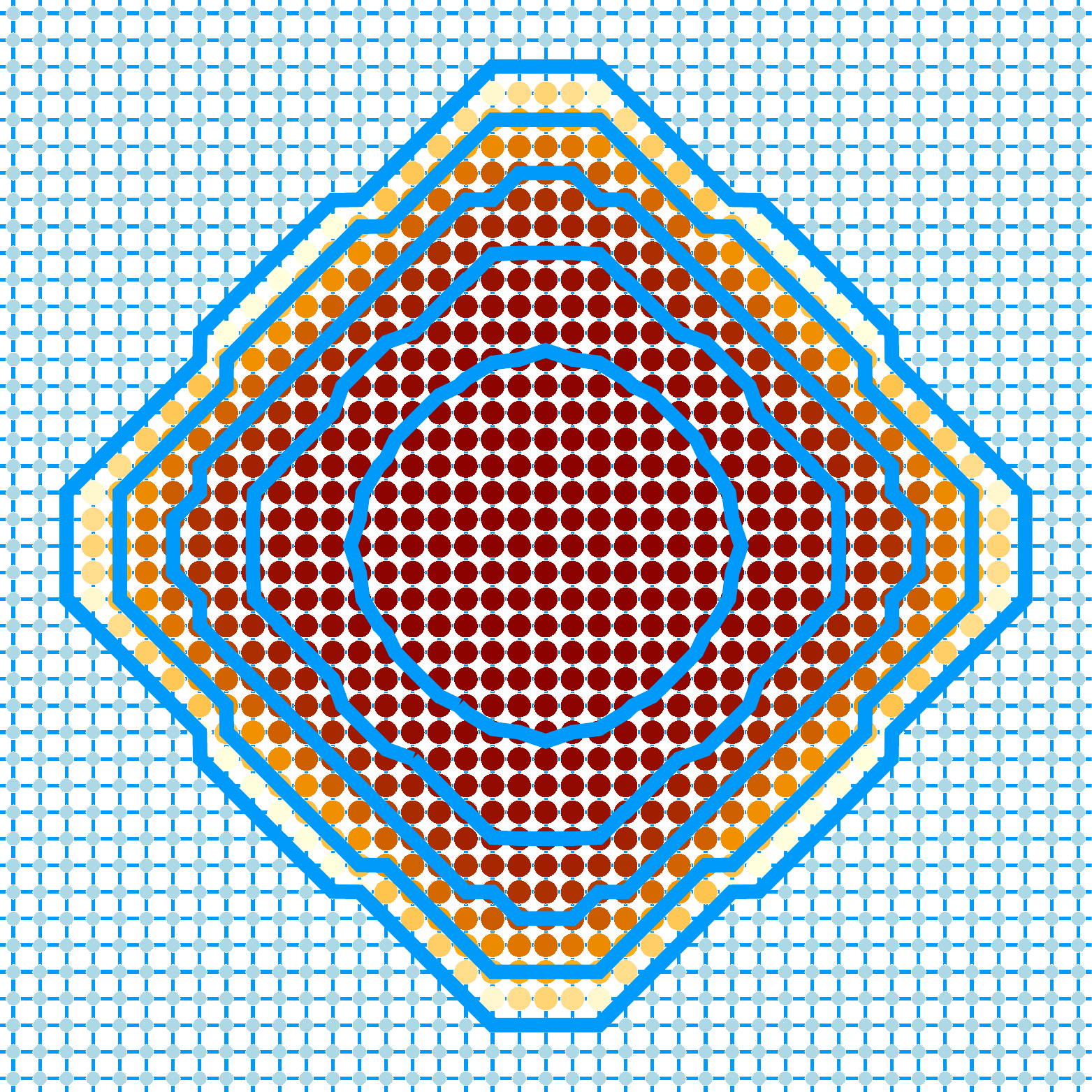}%
\includegraphics[width=0.25\linewidth]{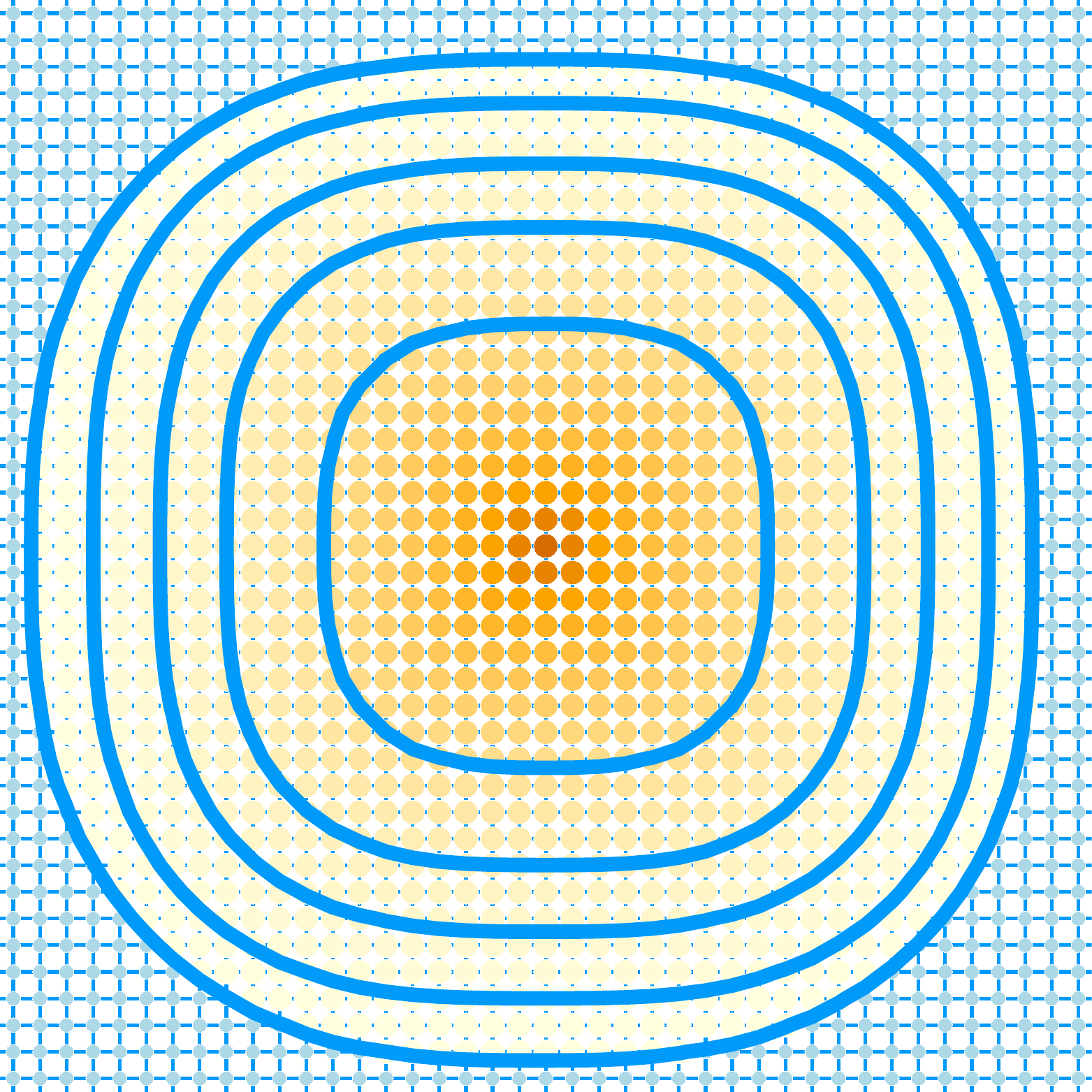}

\parbox{0.25\linewidth}{\footnotesize(e) heat kernel \newline $t=10, \eps=0.003$}%
\parbox{0.25\linewidth}{\footnotesize(f) CRD \newline $U=60,h=60,w=5$}%
\parbox{0.25\linewidth}{\footnotesize(g) $p=1.5$-diffusion, $h\!\!=\!\!0.002,$ \newline $k=35000$}%
\parbox{0.25\linewidth}{\footnotesize(h) $1.5$-Laplacian, $h\!\!=\!\!0.0001,$ \newline $k=7500$}

\caption{A comparison of seeded cut-like and clustering objectives on a regular grid-graph with 4 axis-aligned neighbors. The graph is 50-by-50, the seed is in the center. The diffusions localize before the boundary so we only show the relevant region and the quantile contours of the values. We selected the parameters to give similar-sized outputs. (Top row) At left (a), we have seeded PageRank; (b)-(d) show our $q$-norm objectives; (b) is a 2-norm which closely resembles PageRank; (c) is a $5$-norm that has diamond-contours; and (d) is a $1.25$-norm that has square contours. (Bottom row) Existing work with the (e) heat kernel diffusion~\cite{chung2007-pagerank-heat,Kloster-2014-hkrelax}, (f) CRD~\cite{wang2017capacity}, (g) nonlinear diffusions~\cite{Ibrahim-2019-nonlinear} (with a simple (g) $p$-norm nonlinearity in the diffusion or a (h) $p$-Laplacian) show that similar results are possible with existing methods, although they lack the simplicity of our optimization setup and often lack the strongly local algorithms.} 
\label{fig:norms}
\end{fullwidthfigure}

\begin{definition}[Generalized local graph cut] \label{def:general-cut}
Fix a set $S$ of seeds and a value of $\gamma$. Let $\mB$, $\vw$ be the incidence matrix and weight vector of the localized cut graph. Then the generalized local graph cut problem is:
\begin{equation}
\label{nonlinear-cut}
\MINone{\vx}{\vw^T\ell(\mB\vx)+\kappa\gamma\vd^T\vx = \sum_{ij} w_{i,j} \ell(x_i - x_j) + \kappa \gamma \sum_i x_i d_i}{x_s=1,x_t=0,\vx\geq 0.}
\end{equation}	
Here $\ell(\vx)$ is an element-wise function and $\kappa \ge 0$ is a sparsity-promoting term. 
\end{definition}
We compare using power functions $\ell(x) = \tfrac{1}{q} |x|^q$ to a variety of other techniques for semi-supervised learning and local clustering in Figure~\ref{fig:norms}. 
If $\ell$ is convex, then the problem is convex and can be solved via general-purpose solvers such as CVX.  An additional convex solver is SnapVX~\cite{hallac2017snapvx}, which studied a general combination of convex functions on nodes and edges of a graph, although neither of these approaches scale to the large graphs we study in subsequent portions of this paper (65 million edges). To produce a specialized, strongly local solver, we found it necessary to restrict the class of functions $\ell$ to have similar properties to the power function $\ell(x) = \frac{1}{q} |x|^q$ and its derivative $\ell'(x)$. 

\apponly{\paragraph{Reproduction notes for Figure~\ref{fig:norms}.} 
We release the exact code to reproduce this figure. For all methods, for all values above a threshold, we compute 4 quantile lines to give roughly equally spaced regions. (a). PageRank is mathematically non-zero at all nodes in connected graph. Here, we threshold at $10^{-8}$ to focus on the circular contours. This is reproduced by (b) using $q=2$. The ``wiggles'' around the edge are because we used CVX to solve this problem and there were minor tolerance issues around the edge. We also boosted the threshold to $5\cdot 10^{-7}$ because of the tolerance in CVX. (c) Same as (b). (d) we used our SLQ solver as CVXpy with either the ECOS or SCS solver reported an error while using $q=1.25$. We set $\rho=0.99$ to get an accurate solution (close to KKT). Here, we used the algorithmic non-zeros as the code introduces elements ``sparsely''. (e) This used mathematical non-zeros again because the algorithm from~\cite{Kloster-2014-hkrelax} uses the same sparse ``push'' mechanisms as our SLQ algorithm. (f) CRD returns a set, so we simply display that set. The parameters were chosen to make it look as close to a square as possible. (g and h) We used the forward Euler algorithm from~\cite{Ibrahim-2019-nonlinear} with non-zero truncation. $k$ is the number of steps and $h$ is the step-size. These were chosen to make the pictures look like diamonds and squares, respectively to mirror our results. The entry thresholds were also 5 times the minimum element because the vectors are non-zero everywhere.}

\begin{definition}
\label{nonlin-cut-def}\label{def:nonlinear-loss}
In the $[-1,1]$ domain, the loss function $\ell(x)$ should satisfy (1) $\ell(x)$ is convex; (2) $\ell'(x)$ is an increasing and anti-symmetric function; (3) For $\Delta x>0$, $\ell'(x)$ should satisfy either of the following condition with constants $k>0$ and $c>0$ (3a) $\ell'(x+\Delta x)\leq\ell'(x)+k\ell'(\Delta x)$ and $\ell''(x)>c$ or (3b) $\ell'(x)$ is $c$-Lipschitz continuous and $\ell'(x+\Delta x)\geq \ell'(x)+k\ell'(\Delta x)$ when $x\geq 0$.
\end{definition}
\begin{remark}
If $\ell'(x)$ is Lipschitz continuous with Lipschitz constant to be $L$ and $\ell''(x)>c$, then constraint 3(a) can be satisfied with $k=L/c$. However, $\ell'(x)$ can still satisfy 3(a) even if it is not Lipschitz continuous. A simple example is $\ell(x)=|x|^{1.5}$, $-1\le x \le 1$. In this case, $k=1$ but it is not Lipschitz continuous at $x=0$. On the other hand, when $\ell'(x)$ is Lipschitz continuous, it can satisfy constraint 3(b) even if $\ell''(x)= 0$. An example is $\ell(x)=|x|^{3.5}$, $-1<x<1$. In this case $\ell''(x)=0$ when $x=0$ but $\ell'(x+\Delta x)\geq \ell'(x)+\ell'(\Delta x)$ when $x\geq 0$.
\end{remark}

\begin{lemma}
\label{q-cut-def-1}
The power function 
$\ell(x)=\smash{\tfrac{1}{q}}|x|^q$, $-1<x<1$ satisfies definition~\ref{nonlin-cut-def} for any $q>1$. More specifically, when $1<q<2$, $\ell(x)$ satisfies 3(a) with $c=q-1$ and $k=2^{\smash{2-q}}$, when $q\geq 2$, $\ell(x)$ satisfies 3(b) with $c=q-1$ and $k=1$.
\end{lemma}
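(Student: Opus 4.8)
The plan is to reduce every clause of Definition~\ref{nonlin-cut-def} to elementary facts about the single power $t\mapsto t^{q-1}$, after recording the derivatives $\ell'(x)=\operatorname{sign}(x)\,|x|^{q-1}$ and $\ell''(x)=(q-1)|x|^{q-2}$ for $x\neq 0$. Conditions (1) and (2) are then immediate: $\ell''\ge 0$ gives convexity, while $\ell'$ is continuous through $0$, strictly increasing on each side, and obeys $\ell'(-x)=-\ell'(x)$, so it is increasing and anti-symmetric. The substance is condition (3), which I would split according to the range of $q$, since the two ranges trigger the two sub-conditions (3a) and (3b).

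For $1<q<2$ I would verify (3a). The curvature bound is easy: because $q-2<0$ and the open domain forces $|x|<1$, we have $|x|^{q-2}>1$, hence $\ell''(x)=(q-1)|x|^{q-2}>q-1=c$. For the substantive inequality $\ell'(x+\Delta x)\le \ell'(x)+k\,\ell'(\Delta x)$ with $k=2^{2-q}$, I would split on the signs of $x$ and $x+\Delta x$. When both have the same sign the claim reduces, after writing the two terms with the right orientation, to the subadditivity $(a+\Delta x)^{q-1}-a^{q-1}\le(\Delta x)^{q-1}$ of $t\mapsto t^{q-1}$ (valid since $0<q-1<1$), and as $k\ge 1$ this is comfortably satisfied. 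The only binding case is the zero-crossing $x<0<x+\Delta x$; setting $a=x+\Delta x>0$ and $b=-x>0$ so that $a+b=\Delta x$, the inequality becomes $a^{q-1}+b^{q-1}\le 2^{2-q}(a+b)^{q-1}$.

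I expect this zero-crossing case to be the main obstacle, as it is precisely where the constant $2^{2-q}$ is pinned down. I would handle it by homogeneity: fix $a+b=1$ and maximize $g(a)=a^{q-1}+(1-a)^{q-1}$ over $a\in(0,1)$. Since $t^{q-1}$ is concave for $0<q-1<1$, the derivative $g'(a)=(q-1)\big(a^{q-2}-(1-a)^{q-2}\big)$ is positive for $a<\tfrac12$ and negative for $a>\tfrac12$, so the maximum sits at $a=\tfrac12$ with value $2\cdot(\tfrac12)^{q-1}=2^{2-q}$. Rescaling by $a+b$ recovers the inequality with exactly the advertised sharp constant, completing (3a).

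For $q\ge 2$ I would instead verify (3b). Lipschitz continuity of $\ell'$ follows from bounding its slope: now $q-2\ge 0$, so $|x|^{q-2}\le 1$ on $[-1,1]$ and $\ell''(x)=(q-1)|x|^{q-2}\le q-1$; since $\ell'$ is monotone with derivative bounded by $q-1$ wherever it exists, the mean value estimate gives that $\ell'$ is $(q-1)$-Lipschitz, i.e. $c=q-1$. The remaining inequality $\ell'(x+\Delta x)\ge \ell'(x)+\ell'(\Delta x)$ for $x\ge 0$ becomes $(x+\Delta x)^{q-1}\ge x^{q-1}+(\Delta x)^{q-1}$, which is the superadditivity of $t\mapsto t^{q-1}$ on $[0,\infty)$, valid because the exponent $q-1\ge 1$ (so $k=1$). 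This settles both regimes and therefore the whole lemma.
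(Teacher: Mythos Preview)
Your argument is correct and tracks the paper's proof closely. The only methodological difference is in the handling of (3a): the paper divides through by $\ell'(\Delta x)=(\Delta x)^{q-1}$ and reduces everything to the single-variable function $f(t)=|1+t|^{q-1}\operatorname{sgn}(1+t)-|t|^{q-1}\operatorname{sgn}(t)$ of $t=x/\Delta x$, whose maximum $f(-1/2)=2^{2-q}$ it locates via $f'(t)=|1+t|^{q-2}-|t|^{q-2}$; you instead split into the same-sign and zero-crossing cases and, in the latter, normalize $a+b=1$ to maximize $a^{q-1}+(1-a)^{q-1}$. These are the same optimization in different coordinates (your $a=1/2$ is the paper's $t=-1/2$), so the approaches are equivalent. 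One small omission: at $x=0$ with $1<q<2$ the second derivative is not defined classically; the paper simply declares $\ell''(0)=\infty$, and you should say a word to the same effect so that the bound $\ell''(x)>q-1$ covers the whole open interval.
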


\apponly{\begin{proof}
First, we know $\ell'(x)=|x|^{q-1}\text{sgn}(x)$ and $\ell''(x)=(q-1)|x|^{q-2}$. And we define $\ell''(0)=\infty$. \newline
For 3(a), since $-1<x<1$, $1<q<2$, we have $\ell''(x)>(q-1)$. On the other hand 
\[\frac{\ell'(x+\Delta x)-\ell'(x)}{\ell'(\Delta x)}=\left|\frac{x}{\Delta x}+1\right|^{q-1}\text{sgn}\left(\frac{x}{\Delta x}+1\right)-\left|\frac{x}{\Delta x}\right|^{q-1}\text{sgn}\left(\frac{x}{\Delta x}\right)\]
Define a new function $f(x)=|1+x|^{q-1}\text{sgn}(1+x)-|x|^{q-1}\text{sgn}(x)$. $f'(x)=|1+x|^{q-2}-|x|^{q-2}$. So the maximum of $f(x)$ is achived at $f(-0.5)=2^{2-q}$.\newline
For 3(b), since $-1<x<1$, $q>2$, we have $\ell''(x)<(q-1)$. And when $x\geq 0$, $(x+\Delta x)^{q-1}\geq x^{q-1}+\Delta x^{q-1}$ is obvious.
\end{proof}}

Note that the $\ell(x) = |x|$ does not satisfy either choice for property $(3)$. Consequently, our theory will not apply to mincut problems. In order to justify the \emph{generalized} term, we note that $q$-norm generalizations of the Huber and Berhu loss functions~\cite{owen2007robust} do satisfy these definitions. 

\begin{definition}
	\label{Cuber-Berq} \label{def:qhuber}
	Given $1<q<2$ and $0<\delta<1$, the ``q-Huber'' and ``Berq'' function are
	
	\begin{equation*}	
	\text{\emph{$q$-Huber}} \quad 
	\ell(x) = 	
	\vcenter{\hbox{\includegraphics[width=60pt]{../figures/loss-huber-12-03-colors}}} 
	=
	\left\{  \begin{array}{@{}ll@{}}
		\textcolor{julia2}{\frac{1}{2}\delta^{q-2}x^2} & \text{if }|x|\leq \delta \\
		\textcolor{julia1}{\frac{1}{q}|x|^q+(\frac{q-2}{2q})\delta^q} & \text{otherwise} 
	\end{array} \right.
	\end{equation*}
	
	\begin{equation*}
	\text{\emph{Berq}} \quad 
	\ell(x) = 	
	\vcenter{\hbox{\includegraphics[width=60pt]{../figures/loss-berq-12-03-colors}}} 
	=
	\left\{  \begin{array}{@{}ll@{}}
		\textcolor{julia2}{\frac{1}{q}\delta^{2-q}|x|^q} & \text{if }|x|\leq \delta \\
		\textcolor{julia1}{\frac{1}{2}x^2+(\frac{2-q}{2q})\delta^2} & \text{otherwise.}
	\end{array} \right.
	\end{equation*}
	\end{definition}

\begin{lemma}
When $-1\leq x\leq 1$, both ``$q$-Huber'' and ``Berq'' satisfy Definition~\ref{nonlin-cut-def}. The value of  $k$ for both is $2^{2-q}$, the $c$ for $q$-Huber is $q-1$ while the $c$ for ``Berq'' is $1$.
\end{lemma}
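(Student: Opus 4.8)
The plan is to verify the three requirements of Definition~\ref{def:nonlinear-loss} separately for each function, exploiting that both losses are glued from a quadratic piece and a power-$q$ piece meeting in a $C^1$ fashion at $\pm\delta$, and to reduce the hard part to the power-function case already settled in Lemma~\ref{q-cut-def-1}. First I would record the derivatives piecewise: for $q$-Huber, $\ell'(x)=\delta^{q-2}x$ on $|x|\le\delta$ and $\ell'(x)=|x|^{q-1}\text{sgn}(x)$ on $|x|>\delta$; for Berq, $\ell'(x)=\delta^{2-q}|x|^{q-1}\text{sgn}(x)$ on $|x|\le\delta$ and $\ell'(x)=x$ on $|x|>\delta$. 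Properties (1) and (2) are then one-line checks on each piece: the two pieces of $\ell'$ agree at the breakpoint (at $\pm\delta^{q-1}$ for $q$-Huber and at $\pm\delta$ for Berq), $\ell'$ is odd by inspection, and $\ell'$ is increasing on each piece, so $\ell'$ is increasing on all of $[-1,1]$ and $\ell$ is convex.

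For the second-order half of condition~(3) I would compute $\ell''$ on each piece. For $q$-Huber, $\ell''=\delta^{q-2}>1$ on the inner quadratic piece and $\ell''=(q-1)|x|^{q-2}\ge q-1$ on the outer power piece (using $|x|\le 1$ and $q-2<0$), so $\ell''\ge q-1$ on all of $[-1,1]$ and the lower bound $c=q-1$ follows. For Berq the inner power piece likewise gives $\ell''\ge q-1$, attained at the breakpoint, while the outer quadratic piece has curvature exactly $1$; the constant $c=1$ in the statement is the curvature of this far-field quadratic piece, which is the regime governing the downstream bound. In both cases the remaining, and harder, requirement is the additive inequality, which I turn to next.

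The substance of the argument is the additive inequality $\ell'(x+\Delta x)\le \ell'(x)+k\,\ell'(\Delta x)$ with $k=2^{2-q}$ for $\Delta x>0$ and arguments in $[-1,1]$. I would prove it by splitting on the signs of $x$ and $x+\Delta x$ and on the positions of $x,\Delta x,x+\Delta x$ relative to $\delta$. When all three arguments lie in the outer region the statement is exactly Lemma~\ref{q-cut-def-1} (for $q$-Huber) or the trivial linear identity (for Berq); when they lie in the inner region it is the quadratic identity (for $q$-Huber) or again the power statement (for Berq). The genuinely new ``crossing'' cases, where $x$ and $x+\Delta x$ lie on different pieces or straddle $0$, I would control using a chord-versus-curve comparison at the breakpoint: since $t\mapsto t^{q-1}$ is concave with $q-1\in(0,1)$, the power curve $t^{q-1}$ lies above the chord $\delta^{q-2}t$ on $[0,\delta]$ and the two agree at $\delta$, which lets me replace an inner linear/power value by a comparable power value and collapse each crossing case to a one-variable monotone estimate. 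Inspecting the extremal symmetric straddle $x=-s,\ \Delta x=2s$ as $s$ crosses $\delta$ shows the bound is attained and pinned exactly at $k=2^{2-q}$, matching both functions.

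The main obstacle I anticipate is precisely these crossing configurations. Because each loss splices two different homogeneities --- degree $1$ and degree $q-1$ in $\ell'$ --- the rescaling $x\mapsto x/\Delta x$ that eliminates a variable and trivializes the pure power-function proof no longer applies, so the extremal configuration cannot be normalized away and must be pinned down by hand across the breakpoint. The concavity comparison between the linear chord and the power curve at $\delta$ is the device that keeps this tractable and forces the same subadditivity constant $k=2^{2-q}$ as in the power case; confirming that no crossing configuration beats the symmetric straddle is the step I expect to require the most care. A secondary point is the bookkeeping of $c$: the curvature of Berq dips to $q-1$ at its breakpoint, so matching the reported $c=1$ means reading that constant off the far-field quadratic piece rather than from a single uniform bound over $[-1,1]$, and I would state the curvature bound in the form actually used downstream to make this precise.
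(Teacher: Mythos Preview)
Your approach is essentially the same as the paper's: both dispatch (1)--(2) by direct inspection of the piecewise derivatives, then establish the subadditivity half of (3a) by a case analysis on where the arguments sit relative to the breakpoint $\pm\delta$, reducing the pure cases to the already-known power-function bound and handling the crossing cases individually. The paper organizes the split as $\Delta x>\delta$ versus $\Delta x\le\delta$ and then the four combinations of $|x|\gtrless\delta$, $|x+\Delta x|\gtrless\delta$; in each crossing case it writes the ratio $[\ell'(x+\Delta x)-\ell'(x)]/\ell'(\Delta x)$ explicitly, shows monotonicity in $\Delta x$, and bounds the extremal endpoint by Jensen's inequality---which is precisely the concavity of $t\mapsto t^{q-1}$ underlying your ``chord-versus-curve'' device. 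So on the main content the two arguments coincide; the paper carries the crossing cases through by explicit algebra rather than by the comparison-to-the-power-curve reduction you sketch, but the inequality being used is the same.

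Your concern about $c=1$ for Berq is legitimate and is \emph{not} resolved in the paper either: the paper proves only the $q$-Huber case in detail and says ``the proof of Berq is also similar.'' As you correctly compute, $\ell''$ for Berq equals $1$ on the outer quadratic piece but drops to $q-1$ at the breakpoint $|x|=\delta$, so the uniform lower bound on $\ell''$ over $[-1,1]$ is $q-1$, not $1$. Your attempt to rescue the stated $c=1$ by reading it off the far-field piece is not a proof of condition (3a) as written; if the downstream use of $c$ only ever invokes the curvature at arguments with $|x|>\delta$, that would need to be isolated and argued separately. Absent such a refinement, the constant that actually follows from the definition is $c=q-1$ for Berq as well, and you should flag this rather than try to reconcile it.
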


\apponly{

\begin{proof}
Obviously, both condition (1) and (2) are satisfied for ``$q$-Huber'' and ``Berq''. Now we show 3(a) is also satisfied for ``$q$-Huber'' based on the proof of lemma~\ref{q-cut-def-1}. The proof of ``Berq'' is also similar.

When $\Delta x>\delta$ ($\Delta x\leq \delta$ is similar)
\begin{align*}
k &= \frac{\ell'(x+\Delta x)-\ell'(x)}{\Delta x^{q-1}}\\
&= \begin{cases}
\left|\frac{x}{\Delta x}+1\right|^{q-1}\text{sgn}\left(\frac{x}{\Delta x}+1\right)-\left|\frac{x}{\Delta x}\right|^{q-1}\text{sgn}\left(\frac{x}{\Delta x}\right) & |x|>\delta,|x+\Delta x|>\delta\\
\frac{\delta^{q-2}(x+\Delta x)-|x|^{q-1}\text{sgn}(x)}{\Delta x^{q-1}} & |x|>\delta,|x+\Delta x|\leq\delta\\
\frac{|x+\Delta x|^{q-1}\text{sgn}(x+\Delta x)-\delta^{q-2}x}{\Delta x^{q-1}} & |x|\leq\delta,|x+\Delta x|>\delta\\
\frac{\Delta x^{2-q}}{\delta^{2-q}}& |x|\leq\delta,|x+\Delta x|\leq\delta
\end{cases}
\end{align*}

\textbf{Case 1:}
Same as the proof of lemma~\ref{q-cut-def-1}.

\mbox{\textbf{Case 2:}}
In this case, $x$ can only be negative, i.e. $x<-\delta$. After some simplification,
\[k=\left(\frac{\Delta x}{\delta}\right)^{2-q}-\left(\left(\frac{-x}{\delta}\right)^{2-q}-1\right)\left(\frac{-x}{\Delta x}\right)^{q-1}\]
Note that the right hand side is an increasing function of $\Delta x$ and $-\delta-x\leq\Delta x\leq\delta-x$. Replacing $\Delta x$ by $-\delta-x$ yields
\[k=\frac{(-x)^{q-1}-\delta^{q-1}}{(-x-\delta)^{q-1}}>0\]
Replacing $\Delta x$ by $\delta-x$ yields
\[k=\frac{\delta^{q-1}+(-x)^{q-1}}{(\delta-x)^{q-1}}\leq 2^{2-q}\]
Here the last inequality is due to Jensen's inequality.

\textbf{Case 3:}
Its proof is very similar to case 2.

\textbf{Case 4:}
Since $0<\Delta x\leq 2\delta$, $0\leq k\leq 2^{2-q}$.
\end{proof}	

}

We now state uniqueness. 

\begin{theorem} \label{thm:kkt} \label{thm:unique}
Fix a set $S$, $\gamma > 0, \kappa > 0$. For any loss function satisfying Definition~\ref{def:nonlinear-loss}, then the solution $\vx$ of~\eqref{nonlinear-cut} is unique. Moreover, define a residual function $\vr(\vx)=-\frac{1}{\gamma}\mB^T\text{diag}(\ell'(\mB \vx)) \vw$. A necessary and sufficient condition to satisfy the KKT conditions is to find $\vx^*$ where $\vx^* \ge 0$, $\vr(\vx^*) = [r_s, \vg^T, r_t]^T$ with $\vg \le \kappa \vd$ (where $\vd$ reflects the original graph), $\vk^* = [0,\kappa \vd -\vg,0]^T$ and $\vx^T(\kappa\vd-\vg)=0$. 
\end{theorem}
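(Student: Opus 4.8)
The plan is to read \eqref{nonlinear-cut} as a convex program with a smooth, strictly convex objective $f(\vx) = \vw^T\ell(\mB\vx) + \kappa\gamma\vd^T\vx$ over the polyhedron $\{x_s=1,\,x_t=0,\,\vx\ge 0\}$, and then to show that the residual-based conditions listed are exactly its Karush--Kuhn--Tucker (KKT) system.

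First I would establish \emph{uniqueness}. Each summand $w_e\,\ell(\cdot)$ is strictly convex, since all weights are positive and $\ell'$ is strictly increasing by property (2) of Definition~\ref{def:nonlinear-loss}; composing with the linear map $\mB$ and adding the linear penalty keeps the objective convex. To upgrade to \emph{strict} convexity on the feasible set, I would use that the localized cut graph is connected, so $\operatorname{null}(\mB)=\operatorname{span}(\mathbf{1})$: if two feasible points satisfy $\mB\vx=\mB\vy$ then $\vx-\vy=c\mathbf{1}$, and the constraints $x_s=1,\ x_t=0$ force $c=0$, so $\mB$ is injective on the feasible affine subspace. Hence $\vx\mapsto\vw^T\ell(\mB\vx)$ is strictly convex there, and the linear term does not destroy this. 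Existence follows from coercivity: $\ell$ is bounded below (it is minimized at $0$ by the anti-symmetry of $\ell'$) while the penalty $\kappa\gamma\sum_i d_i x_i$ with $\kappa,\gamma>0$ and $d_i>0$ tends to $+\infty$ as $\|\vx\|\to\infty$ inside $\{\vx\ge 0\}$. A coercive, strictly convex function on a nonempty closed convex set has exactly one minimizer.

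Next I would identify the gradient with the residual. Because $\operatorname{diag}(\ell'(\mB\vx))\vw=\operatorname{diag}(\vw)\ell'(\mB\vx)$, the chain rule gives $\nabla\bigl(\vw^T\ell(\mB\vx)\bigr)=\mB^T\operatorname{diag}(\vw)\ell'(\mB\vx)=-\gamma\,\vr(\vx)$, so $\nabla f(\vx)=\gamma\bigl(\kappa\vd-\vr(\vx)\bigr)$. Writing $\vr(\vx)=[r_s,\vg^T,r_t]^T$ in the source/interior/sink block structure, the interior block of $\tfrac1\gamma\nabla f$ is exactly $\kappa\vd-\vg$. I would then assemble the KKT system for the scaled objective $\tfrac1\gamma f$, with equality multipliers $\mu_s,\mu_t$ for $x_s=1,\ x_t=0$ and nonnegativity multipliers $\vk\ge 0$: stationarity reads $\kappa\vd-\vr(\vx)=\mu_s\ve_s+\mu_t\ve_t+\vk$. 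On the interior block this forces $\vk_{\text{interior}}=\kappa\vd-\vg$, so dual feasibility $\vk\ge 0$ is precisely $\vg\le\kappa\vd$, and complementary slackness $\vk^T\vx=0$ becomes $\vx^T(\kappa\vd-\vg)=0$. At $s$ the constraint $x_s=1>0$ is inactive, so $k_s=0$ and $\mu_s=-r_s$; at $t$ the free multiplier $\mu_t$ can absorb $r_t$, letting me take $k_t=0$, which is why $\vk^*=[0,(\kappa\vd-\vg)^T,0]^T$ and why $r_s,r_t$ impose no sign conditions. Since the objective is convex and every constraint is affine, the linear constraint qualification holds, so these KKT conditions are both necessary and sufficient for global optimality; together with the uniqueness above, the listed conditions characterize the unique solution.

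I expect the main obstacle to be the bookkeeping at the source and sink rather than any deep analysis: one must verify that the degree penalty acts only on the original nodes (so that $\kappa\vd$ has no $s,t$ components), track the factor of $\gamma$ cleanly by working with the rescaled multipliers, and argue convincingly that the nonnegativity multipliers at $s$ and $t$ may be taken to vanish because the two equality constraints fully account for those coordinates. The strict-convexity and coercivity steps are standard once the injectivity of $\mB$ on the feasible affine subspace is noted.
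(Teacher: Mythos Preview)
Your plan is sound and, for the KKT characterization, essentially matches the paper: the paper introduces an auxiliary variable $\vu=\mB\vx$ and writes out the Lagrangian, whereas you differentiate directly via the chain rule, but both routes land on the same stationarity identity $\kappa\vd-\vg=\vk$ on the interior block with complementary slackness $\vx^T(\kappa\vd-\vg)=0$. Your observation that $\operatorname{null}(\mB)=\operatorname{span}(\mathbf{1})$ together with $x_s=1,\,x_t=0$ forces injectivity of $\mB$ on the feasible affine subspace is a clean way to upgrade edgewise strict convexity of $\ell$ to strict convexity of the composite; the paper instead computes the Hessian directly and argues $\vx^T\mH\vx>0$ from the lower bound $\ell''>c$ in case~3(a). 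Your explicit coercivity argument for existence is a nice addition the paper omits.

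There is, however, a genuine gap in your uniqueness step under branch~3(b) of Definition~\ref{def:nonlinear-loss}. You appeal to ``$\ell'$ is strictly increasing by property~(2)'', but property~(2) only says \emph{increasing}, and under~3(b) the paper explicitly allows $\ell'(x)=\ell''(x)=0$ on an interval $[-\psi,\psi]$ around the origin (the Lipschitz condition plus the superadditivity inequality $\ell'(x+\Delta x)\ge \ell'(x)+k\ell'(\Delta x)$ does not force strict monotonicity at~$0$). In that regime $\ell$ is merely convex, your injectivity-of-$\mB$ argument no longer yields strict convexity of the objective, and uniqueness must instead be rescued by the linear term $\kappa\gamma\vd^T\vx$: on any flat face of the first term this linear penalty with $\kappa,\gamma>0$ selects the coordinatewise smallest feasible point. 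The paper handles the two branches separately for exactly this reason; you should do the same, or at least flag that your strict-convexity route covers~3(a) and sketch the tiebreaker argument for~3(b).
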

\apponly{
\begin{proof}
We first prove uniqueness. The Hessian of the objective in~\eqref{nonlinear-cut} is:
\begin{equation}
H(i,j) =
\left\{
\begin{aligned}
& \ell''(x_i-(\ve_S)_i)\quad\text{if } i=j \\
& \ell''(x_i-x_j)\quad\text{if } i\sim j \\
& 0\quad\text{otherwise}
\end{aligned}
\right.
\end{equation}
Thus $\vx^T\mH\vx = \sum_{i\in V}x_i^2\ell''(x_i-(\ve_S)_i)+\sum_{i,j,i\sim j}x_ix_j\ell''(x_i-x_j)$. If 3(a) is satisfied, we have $\ell''(x)>0$ which means $\vx^T\mH\vx>0$. So the objective~\ref{nonlinear-cut} is strictly convex and the uniqueness is guaranteed. When 3(b) is satisfied, $\ell'(x+\Delta x)\geq \ell'(x)+k\ell'(\Delta x)$ guarantees that $\ell''(x)$ can only become zero in a range around zero, i.e. $\ell'(x)=\ell''(x)=0$ when $x\in [-\psi,\psi]$, where $0\leq \psi \leq 1$. Then $\vx^T\mH\vx=0$ implies $x_i\geq 1-\psi$ when $i\in S$, $x_i\leq \psi$ when $i\notin S$ and $-\psi \leq x_i-x_j\leq \psi$ or $x_ix_j=0$. In this case, the uniqueness is implied by $\kappa\gamma\vd$ in~\eqref{nonlinear-cut}, i.e. each $x_i$ will be the smallest feasible value.

Next, we will show the KKT condition of~\eqref{nonlinear-cut}. If we translate problem \eqref{def:nonlinear-loss} to add the constraint $\vu = \mB \vx$, then the loss is $\ell(\vu)$. The Lagrangian is
\[\mathcal{L} =\vw^T\ell(\vu)+\kappa\gamma\vd^T\vx-\vf^T(\mB\vx-\vu)-\lambda_s(x_s-1)-\lambda_tx_t-\vk^T\vx\]
Standard optimality results give the KKT of~\eqref{def:nonlinear-loss} as
\begin{equation}
\label{KKT}
\begin{aligned}
\frac{\partial{L}}{\partial{\vx}} &= \kappa\vd-\frac{1}{\gamma}\mB^T\vf-\lambda_s\ve_s-\lambda_t\ve_t-\vk = 0 \\
\frac{\partial{L}}{\partial{\vu}} &= \text{diag}(\ell'(\vu))\vw+\vf= 0\\
\vk^T\vx &= 0\\
\mB\vx &= \vu\\
\vk &\geq 0, x_s = 1, x_t = 0\\
\end{aligned}
\end{equation}
Thus, combining the first and second equations, $\vr=\frac{1}{\gamma}\mB^T\vf$. Since $\vk\geq 0$, from the first equation, we have $\vg\leq\kappa\vd$. And from $\vk^T\vx=0$, we have $\vx^T(\kappa\vd-\vg)=0$.
\end{proof}
}

\section{Strongly Local Algorithms}
\label{sec:algorithm}

In this section, we will provide a strongly local algorithm to approximately optimize equation~\eqref{nonlinear-cut} with $\ell(x)$ satisfying definition~\ref{nonlin-cut-def}.
The simplest way to understand this algorithms is as a nonlinear generalization of the Andersen-Chung-Lang \emph{push} procedure for PageRank~\cite{andersen2006local}, which we call ACL. (The ACL procedure has strong relationships with Gauss-Seidel, coordinate solvers, and various other standard algorithms.) The overall algorithm is simple: find a vertex $i$ where the KKT conditions from Theorem~\ref{thm:kkt} are violated and increase $x_i$ on that node until we approximately satisfy the KKT conditions. Update the residual, look for another violation, and repeat. The ACL algorithm targets $q=2$ case, which has a closed form update. We simply need to replace this with a binary search.

\begin{algorithm}
\begin{adjustwidth}{}{-1.5in}
\caption{$\texttt{nonlin-cut}(\gamma,\kappa,\rho,\eps)$ for set $S$ and graph $G$ where $0\!\!<\!\!\rho\!\!<\!\!1$ and $0\!\!<\!\!\eps$ determine accuracy}
\begin{algorithmic}[1]
\label{cut-algo}
\STATE Let $x(i)=0$ except for $x_s=1$ and set $\vr=-\frac{1}{\gamma}\mB^T\text{diag}[\ell'(\mB\vx)]\vw$
\STATE While there is any vertex $i$ where $r_i > \kappa d_i$, or stop if none exists
			\emph{\textcolor{DodgerBlue}{(find a KKT violation)}}
\STATE \qquad Apply $\texttt{nonlin-push}$ at vertex $i$, updating $\vx$ and $\vr$
\STATE Return $\vx$
\end{algorithmic}
\end{adjustwidth}
\end{algorithm}

\begin{algorithm}
\begin{adjustwidth}{}{-1.5in}
\caption{$\texttt{nonlin-push}(i,\gamma,\kappa,\vx,\vr,\rho,\eps)$}
\begin{algorithmic}[1]
\label{push-algo}
\STATE Use binary search to find $\Delta x_i$ such that the $i$th coordinate of the residual after adding $\Delta x_i$ to $x_i$, $r_i' = \rho\kappa d_i$, the binary search stops when the range of $\Delta x$ is smaller than $\eps$ 			\emph{\textcolor{DodgerBlue}{(satisfy KKT at i)}}.
\STATE Change the following entries in $\vx$ and $\vr$ to update the solution and residual
\STATE (a) $x_i \leftarrow x_i+\Delta x_i$
\STATE (b) For each neighbor $j$ in the original graph $G$, $r_j \leftarrow r_j\!+\!\frac{1}{\gamma}w_{i,j}\ell'(x_j\!-\!x_i)\!-\!\frac{1}{\gamma}w_{i,j}\ell'(x_j\!-\!x_i\!-\!\Delta x_i)$
\end{algorithmic}
\end{adjustwidth}
\end{algorithm}

For $\rho < 1$, we only approximately satisfy the KKT conditions, as discussed further in the Section~\ref{sec:rho}. We have the following strongly local runtime guarantee when 3(a) in definition~\ref{nonlin-cut-def} is satisfied. See Section~\ref{sec:runtime-3b} for similar guarantee on 3(b). (This ignores binary search, but that only scales the runtime by $\log(1/\eps)$ because the values are in $[0,1]$.)

\begin{theorem} \label{thm:runtime-3a}
Let $\gamma > 0, \kappa > 0$ be fixed and let $k$ and $c$ be the parameters from Definition~\ref{def:nonlinear-loss} for $\ell(x)$. For $0<\rho<1$, suppose {\slshape\texttt{nonlin-cut}} stops after $K$ iterations, and $d_i$ is the degree of node updated at the $i$-th iteration, then $K$ must satisfy: 
$\sum_{i=1}^Kd_i\leq{\text{vol}(S)}/{c\ell'^{-1}\left({\gamma(1-\rho)\kappa}/{k(1+\gamma)}\right)} = O(\!\text{vol}(S)\!)$.
\end{theorem}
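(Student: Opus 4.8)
The plan is to combine two estimates: a uniform lower bound on how much the pushed coordinate grows in a single push, and a global potential argument that bounds the total degree-weighted work by $\text{vol}(S)$. Writing $\Delta x_i$ for the increment applied at the $i$-th push and $d_i$ for the degree of the pushed node, I would show (i) each push increases the pushed coordinate by at least $\Delta_{\min} := \ell'^{-1}\!\left(\gamma(1-\rho)\kappa/(k(1+\gamma))\right)$, and (ii) the quantity $R = \sum_{v\in V} r_v$ starts at $\text{vol}(S)$, stays nonnegative, and drops by more than $c\,d_i\,\Delta x_i$ at each push. Chaining these gives $c\,\Delta_{\min}\sum_{i=1}^K d_i < R_{\text{initial}} \le \text{vol}(S)$, which is the claim.

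For (i), I would use the stopping rule of \texttt{nonlin-push}: before the push $r_i > \kappa d_i$ and afterward $r_i' = \rho\kappa d_i$, so the residual at the pushed node decreases by more than $(1-\rho)\kappa d_i$. On the other hand, this decrease is a sum over the edges incident to $i$ (the $d_i$ worth of edges in $G$ together with the $\gamma d_i$ worth of source/sink edge) of terms of the form $\ell'(\,\cdot + \Delta x_i) - \ell'(\,\cdot)$. Applying condition 3(a) of Definition~\ref{nonlin-cut-def}, namely $\ell'(x+\Delta x) \le \ell'(x) + k\ell'(\Delta x)$, to each such term bounds the decrease above by $\tfrac{k}{\gamma}\ell'(\Delta x_i)(1+\gamma)d_i$. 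Comparing the two estimates yields $\ell'(\Delta x_i) > \gamma(1-\rho)\kappa/(k(1+\gamma))$, and monotonicity of $\ell'$ (condition (2)) gives $\Delta x_i > \Delta_{\min}$.

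For (ii), I would track $R = \sum_{v\in V} r_v$. The crucial observation is a telescoping cancellation in the residual update of \texttt{nonlin-push}: when pushing at $i$, the increases $\delta_j$ spread to the neighbors $j\sim i$ in $G$ exactly account (using anti-symmetry of $\ell'$) for the part of $r_i$'s decrease coming from the edges inside $G$, so the net change of $R$ equals only the contribution of the single source/sink edge at $i$, namely $-d_i[\ell'(1-x_i)-\ell'(1-x_i-\Delta x_i)]$ (or its sink analogue). Since $\ell'' > c$ on $(-1,1)$ by 3(a), this source/sink decrease exceeds $c\,d_i\,\Delta x_i$. Moreover $R \ge 0$ throughout: initially $R = \sum_{v\in S} d_v\,\ell'(1) = \text{vol}(S)$ because only seeds carry residual and $\ell'(1)=1$ for the power, $q$-Huber, and Berq losses, and every push leaves the pushed coordinate at $\rho\kappa d_i > 0$ while only increasing the neighbors, so by induction no residual ever becomes negative. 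Summing the per-push decrease over all $K$ pushes gives $c\,\Delta_{\min}\sum_{i=1}^K d_i < R_{\text{initial}} - R_{\text{final}} \le \text{vol}(S)$, i.e. $\sum_{i=1}^K d_i \le \text{vol}(S)/(c\,\Delta_{\min})$; since $c, k, \gamma, \kappa, \rho$ are constants independent of the graph, this is $O(\text{vol}(S))$.

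I expect the telescoping/conservation step in (ii) to be the main obstacle: it requires carefully bookkeeping the update in step (b) of \texttt{nonlin-push} to see that the neighbor spreads cancel the in-graph part of the pushed node's decrease, so that the monotone part of the potential is \emph{precisely} the source/sink term, and separately verifying nonnegativity of $R$ so that the $-R_{\text{final}}$ term can be dropped. The per-push bound (i) is more routine once 3(a) is invoked, though one must be careful to apply it at the shifted argument $1-x_i-\Delta x_i$ for the source edge, not only at the in-graph differences $x_i - x_j$.
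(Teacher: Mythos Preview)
Your proposal is correct and essentially identical to the paper's proof. The paper tracks the potential $\|\vg\|_1=\sum_v r_v$, proves the same telescoping identity (so only the source/sink term survives), uses $\ell''>c$ from 3(a) to get the per-push decrease $\ge c\,d_i\,\Delta x_i$, and derives the same lower bound $\Delta x_i>(\ell')^{-1}\!\bigl(\gamma(1-\rho)\kappa/(k(1+\gamma))\bigr)$ by comparing $g_i-\rho\kappa d_i$ against the 3(a) upper bound on the residual drop; nonnegativity of the residuals is isolated as a separate lemma, and you are right to flag that the initial value $\text{vol}(S)$ uses $\ell'(1)=1$, which the paper assumes implicitly.
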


\apponly{The notation $\ell^{'-1}$ refers to the inverse functions of $\ell'(x)$, This function must be invertible under the the definition of 3(a). The runtime bound when 3(b) holds is slightly different, see below.} Note that this sum of degrees bounds the total work because a \emph{push} step at node $i$ is $O(d_i)$ work (ignoring the binary search). 

Also note that if $\kappa = 0$, $\gamma = 0$, or $\rho=1$, then this bound goes to $\infty$ and we lose our guarantee. \emph{However, if these are not the case, then the bound shows that the algorithm will terminate in time that is independent of the size of the graph.} This is the type of guarantee provided by \emph{strongly local} graph algorithms and has been extremely useful to scalable network analysis methods~\cite{Leskovec-2009-community-structure,Jeub-2015-locally,Yin-2017-local-motif,Veldt-2016-simple-local-flow,Kloster-2014-hkrelax}.

\apponly{

\begin{lemma}
\label{nonnegativity}
During algorithm~\ref{cut-algo}, for any $i\in\{V\backslash \{s,t\}\}$, $g_i$ will stay nonnegative and $0\leq x_i\leq 1$.
\end{lemma}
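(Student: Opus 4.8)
The plan is to prove both claims simultaneously by induction on the number of completed \texttt{nonlin-push} steps, working with the residual in its explicit coordinate form. Expanding $\vr = -\tfrac{1}{\gamma}\mB^T\mathrm{diag}(\ell'(\mB\vx))\vw$ and using that $\ell'$ is anti-symmetric (property (2)), the residual at an original node is $r_i(\vx) = \tfrac{1}{\gamma}\sum_{j\sim i} w_{ij}\,\ell'(x_j - x_i)$, where the sum ranges over all neighbors in the localized cut graph (including $s$ and $t$). I would first note that the neighbor update in step (b) of \texttt{nonlin-push} is exactly the change this closed form undergoes when $x_i$ is raised (again by anti-symmetry), so throughout the run the maintained vector equals $\vr(\vx)$ and the two invariants (I) $g_i = r_i \ge 0$ and (II) $0 \le x_i \le 1$ for all $i \in V\setminus\{s,t\}$ may be argued on the closed form. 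For the base case, the initial vector has $x_s = 1$ and $x_i = 0$ otherwise, giving (II); and since $\ell'(0)=0$, we get $r_i = d_i\,\ell'(1) \ge 0$ for $i\in S$ (the only nonzero term is the source edge of weight $\gamma d_i$) and $r_i = 0$ otherwise, giving (I).

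For the inductive step, consider a push at a node $i$ with $r_i > \kappa d_i$. I would first record that $r_i$ is non-increasing in $x_i$, since $\partial r_i/\partial x_i = -\tfrac{1}{\gamma}\sum_{j\sim i} w_{ij}\,\ell''(x_j - x_i) \le 0$ by convexity of $\ell$. Because the binary search lowers the $i$-th residual from $r_i > \kappa d_i$ to the strictly smaller target $\rho\kappa d_i$ (note $0<\rho<1$ and $\kappa d_i > 0$), monotonicity forces $\Delta x_i > 0$, which together with $x_i \ge 0$ from the hypothesis preserves $x_i \ge 0$. Next, the increment applied to a neighbor $j$ is $\tfrac{1}{\gamma}w_{ij}\big(\ell'(x_j - x_i) - \ell'(x_j - x_i - \Delta x_i)\big) \ge 0$, because $\ell'$ is increasing and $\Delta x_i > 0$; hence each $g_j$ can only grow and invariant (I) is preserved on the neighbors, while the pushed node itself terminates at $r_i = \rho\kappa d_i \ge 0$ and all untouched coordinates are unchanged.

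The crux — and the step I expect to be the main obstacle — is the upper bound $x_i \le 1$ in (II), i.e.\ showing the binary search cannot drive $x_i$ past the source value. Here I would evaluate the target residual at the boundary $x_i = 1$: every neighbor satisfies $x_j \le 1$ by the inductive hypothesis (including $x_s = 1$ and $x_t = 0$), so each term obeys $\ell'(x_j - 1) \le \ell'(0) = 0$ by monotonicity and anti-symmetry, whence $r_i\big|_{x_i = 1} \le 0 < \rho\kappa d_i$. Since $r_i$ is continuous and non-increasing in $x_i$ and exceeds $\rho\kappa d_i$ at the current value, the intermediate value theorem places the root of $r_i = \rho\kappa d_i$ strictly inside $(x_i,\,1)$; running the binary search on the bracket $\Delta x_i \in [0,\,1-x_i]$ therefore returns a new value below $1$ (up to the $\eps$ tolerance), preserving (II). The delicate point to handle carefully is the regime $q \ge 2$ under condition 3(b), where $\ell''$ may vanish on an interval so that $r_i$ is only weakly monotone; continuity still guarantees the bracket contains a valid root, so the conclusion is unaffected. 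Finally, since pushes occur only at nodes $i \in V\setminus\{s,t\}$ and neither $x_s$ nor $x_t$ is ever modified, the invariants need only be verified on $V\setminus\{s,t\}$, which completes the induction.
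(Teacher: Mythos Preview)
Your argument is correct and tracks the paper's proof closely for the nonnegativity of $g_i$ and the lower bound $x_i\ge 0$: both proceed by induction on push steps, observing that neighbor residuals only increase (since $\ell'$ is increasing and $\Delta x_i>0$) and that the pushed coordinate lands at $\rho\kappa d_i\ge 0$.

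The one genuine point of divergence is the upper bound $x_i\le 1$. You carry this as part of the joint induction: assuming $x_j\le 1$ for every neighbor, you evaluate $r_i$ at the hypothetical value $x_i=1$, obtain $r_i\le 0<\rho\kappa d_i$, and conclude by the intermediate value theorem that the binary-search root lies in $[0,\,1-x_i]$. The paper instead decouples the two invariants: it first establishes $g_i\ge 0$ throughout, and then argues $x_i\le 1$ by a maximum-principle contradiction --- if some $x_i$ were the \emph{largest} coordinate and exceeded $1$, every neighbor term $\ell'(x_i-x_j)$ would be nonnegative and the source/sink term $\ell'(x_i-(\ve_S)_i)$ strictly positive, forcing $g_i<0$. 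The paper's route is a bit shorter because it needs no bookkeeping about the binary-search bracket or the $\eps$ tolerance, and it does not require the inductive hypothesis $x_j\le 1$ on neighbors (only that $x_i$ is maximal). Your route has the compensating virtue of being fully operational: it explains exactly why each push keeps the iterate in the box, rather than appealing to a global property of any state with $g\ge 0$. Both are valid.
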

\begin{proof}
We can show this by induction. At the initial step, for node $i\in S$, $g_i=d_i$, and for node $i\in \bar{S}$, $g_i=0$. And after a nonlin-push step, every $g_i$ will stay nonnegative.

To prove $0\leq x_i\leq 1$, by expanding $g_i$, we have
\[g_i=-\frac{1}{\gamma}\sum_{j\sim i}w_i\ell'(x_i-x_j)-d_i\ell'(x_i-(\ve_S)_i)\]
$x_i\geq 0$ is because we only increase $\vx$ and it starts at zero. Suppose $x_i$ is the largest element of $\vx$ and $x_i>1$, then we will have $\ell'(x_i-x_j)\geq 0$ for $j\sim i$ and $\ell'(x_i-(\ve_S)_i)>0$. Then $g_i<0$, which is a contradiction.
\end{proof}

\subsection{Running time analysis when 3(a) is satisfied}
\begin{lemma}
\label{decrease-q-1}
When 3(a) is satisfied, after calling \texttt{nonlin-push} on node $i$, the decrease of $||\vg||_1$ will be strictly larger than
\[cd_i(\ell')^{-1}\left(\frac{\gamma(1-\rho)\kappa}{k(1+\gamma)}\right)\]
\end{lemma}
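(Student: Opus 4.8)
The plan is to track how the quantity $\|\vg\|_1 = \sum_{i \in V \setminus \{s,t\}} g_i$ changes during a single \texttt{nonlin-push} step at node $i$. From Lemma~\ref{nonnegativity} we know each $g_i \ge 0$, so $\|\vg\|_1 = \sum_i g_i$ is simply the sum of the interior residual entries, and it is this potential that I will show decreases by the claimed amount per push. The residual update in Algorithm~\ref{push-algo}(b) changes $r_j$ for neighbors $j$ and also changes $r_i$ itself, so the first task is to write the net change $\Delta \|\vg\|_1$ explicitly in terms of $\ell'$ evaluated at the edge differences before and after the increment $\Delta x_i$.

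**The two sources of change.**
First I would isolate the decrease at node $i$ alone. Before the push, the stopping rule in Algorithm~\ref{cut-algo} guarantees $r_i > \kappa d_i$, and the binary search drives $r_i$ down to $r_i' = \rho\kappa d_i$; hence the residual at $i$ drops by strictly more than $(1-\rho)\kappa d_i$. Second, I must control the \emph{increase} in the residuals $r_j$ at neighbors $j \sim i$, since pushing at $i$ raises those and partially offsets the gain. The key observation is that the change in $g_i$ is governed by the diagonal (self) term whose derivative is $\ell''$, while the spillover to neighbors is governed by $\ell'$ evaluated at incremented arguments. Using the relation $r_i = g_i/\ldots$ together with the monotonicity of $\ell'$ (property (2)) and the lower bound $\ell''(x) > c$ from 3(a), I can lower-bound $\Delta x_i$ in terms of the residual gap: because $\ell''> c$, a drop of at least $(1-\rho)\kappa d_i$ in $r_i$ forces $\Delta x_i$ to be bounded below, and the $c$ will appear as a multiplicative factor.

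**Bounding the spillover via property 3(a).**
The crucial step — and the main obstacle — is bounding the total mass that leaks to the neighbors. Here I would invoke the subadditivity-type inequality $\ell'(x + \Delta x) \le \ell'(x) + k\,\ell'(\Delta x)$ from 3(a). Each neighbor contribution has the form $\tfrac{1}{\gamma} w_{i,j}[\ell'(x_j - x_i) - \ell'(x_j - x_i - \Delta x_i)]$; applying 3(a) lets me bound the neighbor increase by $\tfrac{k}{\gamma} \ell'(\Delta x_i) \sum_{j \sim i} w_{i,j} = \tfrac{k}{\gamma} d_i \ell'(\Delta x_i)$, and an analogous term with factor $\tfrac{1}{\gamma}$ from the source-edge contributes the $(1+\gamma)$ denominator. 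Combining the guaranteed decrease $(1-\rho)\kappa d_i$ at node $i$ against this leakage, the net decrease in $\|\vg\|_1$ is at least $c\, d_i$ times the value $\Delta x_i$ satisfying the balance $\ell'(\Delta x_i) = \gamma(1-\rho)\kappa / (k(1+\gamma))$. Since $\ell'$ is strictly increasing (hence invertible on $[0,1]$), I can solve for $\Delta x_i = (\ell')^{-1}\!\bigl(\gamma(1-\rho)\kappa/(k(1+\gamma))\bigr)$, yielding exactly the claimed bound
\[
\Delta \|\vg\|_1 \;>\; c\, d_i\, (\ell')^{-1}\!\left(\frac{\gamma(1-\rho)\kappa}{k(1+\gamma)}\right).
\]

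**Where the difficulty concentrates.**
I expect the delicate part to be the simultaneous accounting of the self-term (which uses $\ell''$ and contributes the factor $c$) and the neighbor spillover (which uses the $k$-subadditivity of $\ell'$), while correctly tracking the $\gamma$ factors coming from the localized cut graph's source/sink edges. A clean way to organize this is to use the mean value theorem on the self-term to extract $\ell''(\xi) > c$ for some intermediate $\xi$, converting the $r_i$ drop into a lower bound on $c\,\Delta x_i$, and then to show the leakage never cancels more than a fraction of this gain precisely because the balance point was chosen so that $k(1+\gamma)\ell'(\Delta x_i) = \gamma(1-\rho)\kappa$. Ensuring the inequality remains \emph{strict} (matching the statement) follows from the strict inequality $r_i > \kappa d_i$ at the moment the push is initiated.
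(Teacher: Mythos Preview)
Your plan has the right ingredients but misassembles them, and it misses the structural observation that makes the argument close.

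\textbf{The cancellation is the point.} You intend to bound the spillover to neighbors and subtract it from the decrease at $i$. But expand $\|\vg\|_1=\sum_j g_j$: for every interior edge $(i,j)$, $g_i$ carries $-\tfrac{1}{\gamma}w_{ij}\ell'(x_i-x_j)$ and $g_j$ carries $-\tfrac{1}{\gamma}w_{ij}\ell'(x_j-x_i)$, and these cancel by anti-symmetry of $\ell'$. Hence $\|\vg\|_1=-\sum_{i\in S}d_i\ell'(x_i-1)-\sum_{i\in\bar S}d_i\ell'(x_i)$ depends only on the source/sink edges, and after a push at $i$ exactly one term moves:
\[
\|\vg\|_1-\|\vg'\|_1 \;=\; d_i\bigl[\ell'(x_i+\Delta x_i-(\ve_S)_i)-\ell'(x_i-(\ve_S)_i)\bigr]\;\ge\; c\,d_i\,\Delta x_i,
\]
by the mean value theorem and $\ell''>c$. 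There is no leakage to subtract; the spillover $\sum_{j\sim i}(g_j'-g_j)$ is \emph{equal} to the neighbor-edge part of $g_i-g_i'$ and cancels exactly. If instead you only bound the spillover by $\tfrac{k}{\gamma}d_i\ell'(\Delta x_i)$ and the decrease at $i$ by $(1-\rho)\kappa d_i$, you obtain net decrease $>(1-\rho)\kappa d_i-\tfrac{k}{\gamma}d_i\ell'(\Delta x_i)$, which can be negative absent an \emph{upper} bound on $\Delta x_i$; that route does not close.

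\textbf{The roles of $c$ and $k$ are swapped.} You write that $\ell''>c$ ``forces $\Delta x_i$ to be bounded below'' from the residual drop; in fact a steeper $\ell'$ means a \emph{smaller} $\Delta x_i$ achieves the same drop, so $\ell''>c$ by itself gives only an upper bound on $\Delta x_i$. In the paper's argument $\ell''>c$ is used solely in the display above to extract the factor $c$. The lower bound on $\Delta x_i$ comes from the subadditivity in 3(a), applied to \emph{every} term of $g_i-g_i'$ (neighbor and source/sink edges, whence the $(1+\gamma)$), giving $g_i-g_i'\le \tfrac{k(1+\gamma)}{\gamma}d_i\ell'(\Delta x_i)$; combined with $g_i-g_i'>(1-\rho)\kappa d_i$ this yields $\Delta x_i>(\ell')^{-1}\bigl(\tfrac{\gamma(1-\rho)\kappa}{k(1+\gamma)}\bigr)$, and the display then finishes the proof.
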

\begin{proof}
We use $\vg'$ to denote $\vg$ after calling \texttt{nonlin-push} on node $i$.
At any intermediate step of \texttt{nonlin-cut} procedure,
\[||\vg||_1=\sum g_i=-\sum_{i\in S}d_i\ell'(x_i-1)-\sum_{i\in\bar{S}}d_i\ell'(x_i)\]
This is because for any edge $(i,j)\in E$, $g_i$ has a term $\frac{1}{\gamma}w(i,j)\ell'(x_i-x_j)$ while $g_j$ has a term $\frac{1}{\gamma}w(j,i)\ell'(x_j-x_i)$. Since our graph is undirected, $w(i,j)=w(j,i)$, so these two terms will cancel out. What remains are the terms corresponding to the edges connecting to $s$ or $t$. So after calling \texttt{nonlin-push} on node $i$,
\begin{align*}
||\vg||_1-||\vg'||_1
&=d_i\ell'(x_i+\Delta x_i-(\ve_S)_i)-d_i\ell'(x_i-(\ve_S)_i)\\
&\geq d_i\text{min}\{l''(x_i+\Delta x_i-(\ve_S)_i),l''(x_i-(\ve_S)_i)\}\Delta x_i \\
&\geq cd_i\Delta x_i
\end{align*}
On the other hand, we need to choose $\Delta x_i$ such that $g'_i=\rho\kappa d_i$. We know
\[g'_i=-\frac{1}{\gamma}\sum_{j\sim i}w(i,j)\ell'(x_i+\Delta x_i-x_j)-d_i\ell'(x_i+\Delta x_i-(\ve_S)_i)\]
is a decreasing function of $\Delta x_i$. And when $\Delta x_i=0$, $g'_i=\kappa d_i>\rho\kappa d_i$, when $\Delta x_i=1$, $g'_i<0<\rho\kappa d_i$, since $\ell'(x)$ is a strictly increasing function, there exists a unique $\Delta x_i$ such that $g'_i=\rho\kappa d_i$. Moreover, we can lower bound $\Delta x_i$. To see that,
\begin{align*}
g'_i
&=\rho\kappa d_i\\
&=-\frac{1}{\gamma}\sum_{j\sim i}w(i,j)\ell'(x_i+\Delta x_i-x_j)-d_i\ell'(x_i+\Delta x_i-(\ve_S)_i)\\
&\geq -\frac{1}{\gamma}\sum_{j\sim i}w(i,j)\ell'(x_i-x_j)-d_i\ell'(x_i-(\ve_S)_i)-\frac{k(1+\gamma)}{\gamma}d_i \ell'(\Delta x_i)\\
&=g_i-\frac{k(1+\gamma)}{\gamma}d_i\ell'(\Delta x_i)\\
\end{align*}
Thus, we have
\[\Delta x_i\geq (\ell')^{-1}\left(\frac{\gamma(g_i-\rho\kappa d_i)}{k(1+\gamma)d_i}\right)> (\ell')^{-1}\left(\frac{\gamma(1-\rho)\kappa}{k(1+\gamma)}\right)\]
which means
\[||\vg||_1-||\vg'||_1>cd_i(\ell')^{-1}\left(\frac{\gamma(1-\rho)\kappa}{k(1+\gamma)}\right).\]
\end{proof}

The only step left to prove Theorem~\ref{thm:runtime-3a} is that 
at the beginning, we have $||\vg||_1=\text{vol}(S)$. Then the theorem follows by Lemma~\ref{decrease-q-1}.
}

\apponly{

\subsection{Running time analysis when 3(b) is satisfied}
\label{sec:runtime-3b}

For the following results, we add an extra strictly increasing condition so that $\ell'(\smash{\frac{\gamma (1-\rho)\kappa}{c(1+\gamma)}})$ is positive. When $\ell'$ is not strictly increasing, i.e. $\ell'(x)=0$ in a small range round 0, it is our conjecture that the algorithm will still finish in a strongly local time, although we have not yet proven that. Note that this strictly increasing criteria is true for all the loss used in the experiments. 

\begin{lemma}
\label{decrease-q-2}
When 3(b) is satisfied and $\ell'(x)$ is strictly increasing, then after calling \texttt{nonlin-push} on node $i$, the decrease of $||\vg||_1$ will be strictly larger than
\[kd_i\ell'\left(\frac{\gamma(1-\rho)\kappa}{c(1+\gamma)}\right)\]
\end{lemma}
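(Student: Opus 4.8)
The plan is to mirror the structure of the proof of Lemma~\ref{decrease-q-1} (the 3(a) case), adapting each step to exploit condition 3(b) instead of 3(a). The key conceptual difference is that in the 3(a) case we bounded the decrease in $\|\vg\|_1$ from below using strict convexity ($\ell'' > c$) and bounded $\Delta x_i$ using the subadditive-type inequality $\ell'(x+\Delta x)\le \ell'(x)+k\ell'(\Delta x)$. Under 3(b) these two roles swap: now $\ell'$ is $c$-Lipschitz (giving an upper, not lower, handle on $\ell''$) and satisfies the \emph{super}additive inequality $\ell'(x+\Delta x)\ge \ell'(x)+k\ell'(\Delta x)$ for $x\ge 0$. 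So I expect the constants $c$ and $k$ to exchange their positions in the final bound, which is consistent with the stated target $kd_i\,\ell'\!\left(\gamma(1-\rho)\kappa/(c(1+\gamma))\right)$.

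First I would recall, exactly as in Lemma~\ref{decrease-q-1}, that at any intermediate step the interior edge terms telescope, so $\|\vg\|_1 = -\sum_{i\in S} d_i\ell'(x_i-1) - \sum_{i\in\bar S} d_i\ell'(x_i)$, and hence after a push at node $i$ the decrease is exactly
\[
\|\vg\|_1 - \|\vg'\|_1 = d_i\bigl(\ell'(x_i+\Delta x_i-(\ve_S)_i) - \ell'(x_i-(\ve_S)_i)\bigr).
\]
To lower-bound this difference I would invoke the superadditivity in 3(b): writing the increment as $\ell'(x + \Delta x_i) - \ell'(x) \ge k\,\ell'(\Delta x_i)$, so the decrease is at least $k\,d_i\,\ell'(\Delta x_i)$. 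This is where the leading $k$ in the target bound comes from, and it replaces the $c\,d_i\,\Delta x_i$ estimate used in the 3(a) proof. I would need to be careful that the sign/shift argument makes the relevant argument nonnegative so that 3(b)'s ``$x\ge 0$'' hypothesis applies; Lemma~\ref{nonnegativity} guarantees $0\le x_i\le 1$ and $x_i-(\ve_S)_i\in[-1,0]$, so a small case analysis (whether $i\in S$ or not) may be required here.

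Next I would produce a lower bound on $\Delta x_i$ itself. As before, $\Delta x_i$ is chosen so that $g_i' = \rho\kappa d_i$, and expanding $g_i'$ gives a decreasing function of $\Delta x_i$. Here I would use the $c$-Lipschitz continuity of $\ell'$ (rather than strict convexity) to bound how much each term can move: $\ell'(x_i+\Delta x_i - x_j) \le \ell'(x_i-x_j) + c\,\Delta x_i$ and similarly for the source term. Substituting yields $\rho\kappa d_i \ge g_i - \tfrac{c(1+\gamma)}{\gamma} d_i \,\Delta x_i$, so that $\Delta x_i \ge \tfrac{\gamma(g_i-\rho\kappa d_i)}{c(1+\gamma)d_i} > \tfrac{\gamma(1-\rho)\kappa}{c(1+\gamma)}$, using $g_i > \kappa d_i$ at a KKT violation. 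Feeding this into $k\,d_i\,\ell'(\Delta x_i)$ and using that $\ell'$ is increasing gives precisely $k\,d_i\,\ell'\!\left(\gamma(1-\rho)\kappa/(c(1+\gamma))\right)$.

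The main obstacle I anticipate is the superadditivity step: 3(b) only asserts $\ell'(x+\Delta x)\ge \ell'(x)+k\ell'(\Delta x)$ \emph{when $x\ge 0$}, whereas the natural argument of $\ell'$ in the decrease term is $x_i-(\ve_S)_i$, which is negative when $i\in S$. Handling the seed nodes correctly — possibly by treating the two monotone segments of $\ell'$ separately or by using anti-symmetry of $\ell'$ from condition (2) to reduce to the nonnegative case — is the delicate point. I also note the remark preceding the lemma flags that 3(b) alone does not force $\ell'$ to be strictly increasing, which is why the hypothesis ``$\ell'$ strictly increasing'' is added: this is exactly what guarantees $\ell'\!\left(\gamma(1-\rho)\kappa/(c(1+\gamma))\right) > 0$, so that the per-push decrease is a strictly positive constant and the telescoping $\|\vg\|_1 = \text{vol}(S)$ at initialization then yields a finite bound on $\sum_i d_i$, giving the strongly local runtime as in the 3(a) case.
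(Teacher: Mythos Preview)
Your proposal is correct and follows essentially the same approach as the paper: the paper likewise writes $\|\vg\|_1-\|\vg'\|_1 \ge k d_i\ell'(\Delta x_i)$ via 3(b)'s superadditivity, then uses $c$-Lipschitzness termwise to get $g_i' \ge g_i - \tfrac{c(1+\gamma)}{\gamma}d_i\Delta x_i$, yielding $\Delta x_i > \tfrac{\gamma(1-\rho)\kappa}{c(1+\gamma)}$. You are in fact more careful than the paper on one point: the paper applies the superadditive inequality $\ell'(x+\Delta x)\ge \ell'(x)+k\ell'(\Delta x)$ without commenting on the hypothesis $x\ge 0$, whereas you correctly flag the seed-node case $x_i-1\le 0$; that case is indeed handled by anti-symmetry together with $x_i+\Delta x_i\le 1$ (Lemma~\ref{nonnegativity}), exactly along the lines you suggest.
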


\begin{proof}
Similarly to the proof of lemma~\ref{decrease-q-1}, after calling \texttt{nonlin-push} on node $i$,
\begin{align*}
||\vg||_1-||\vg'||_1
&=d_i\ell'(x_i+\Delta x_i-(\ve_S)_i)-d_i\ell'(x_i-(\ve_S)_i)\\
&\geq kd_i\ell'(\Delta x_i) \\
\end{align*}
On the other hand,
\begin{align*}
g'_i
&=\rho\kappa d_i\\
&=-\frac{1}{\gamma}\sum_{j\sim i}w(i,j)\ell'(x_i+\Delta x_i-x_j)-d_i\ell'(x_i+\Delta x_i-(\ve_S)_i)\\
&\geq -\frac{1}{\gamma}\sum_{j\sim i}w(i,j)\ell'(x_i-x_j)-d_i\ell'(x_i-(\ve_S)_i)-\frac{c(1+\gamma)}{\gamma}d_i \Delta x_i\\
&=g_i-\frac{c(1+\gamma)}{\gamma}d_i\Delta x_i\\
\end{align*}
Thus, we have
\[\Delta x_i\geq \frac{\gamma(r_i-\rho\kappa d_i)}{c(1+\gamma)d_i}> \frac{\gamma(1-\rho)\kappa}{c(1+\gamma)}\]
which means
\[||\vg||_1-||\vg'||_1>kd_i\ell'\left(\frac{\gamma(1-\rho)\kappa}{c(1+\gamma)}\right).\]
\end{proof}

Lemma~\ref{decrease-q-2} along with the same type of analysis as before give the following result when 3(b) is satisfied.

\begin{theorem}
Let $\gamma > 0, \kappa > 0$ be fixed and let $k$ and $c$ be the parameters from Definition~\ref{def:nonlinear-loss} for $\ell(x)$ when 3(b) is satisfied with a strict increase. For $0<\rho<1$, suppose \texttt{nonlin-cut} stops after $K$ iterations, and $d_i$ is the degree of node updated at the $i$-th iteration, then $K$ must satisfy: 
$\sum_{i=1}^Kd_i\leq{\text{vol}(S)}/{k\ell'\left({\gamma(1-\rho)\kappa}/{c(1+\gamma)}\right)} = O(\!\text{vol}(S)\!)$.
\end{theorem}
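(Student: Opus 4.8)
The plan is to mirror the monotone-potential argument that establishes Theorem~\ref{thm:runtime-3a}, using the $\ell_1$-mass of the residual restricted to the original graph as a potential that is drained by a fixed amount at every push. I would track $\|\vg\|_1$, where $\vg$ is the block of $\vr$ indexed by $V\setminus\{s,t\}$, and argue it starts at $\text{vol}(S)$, never goes negative, and decreases by an amount proportional to $d_i$ at each \texttt{nonlin-push}. Dividing the total available mass by the guaranteed per-step decrease then bounds the weighted work $\sum_i d_i$, which is exactly the quantity that controls the runtime.

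Concretely, I would proceed in three steps. First, establish nonnegativity and the initial value exactly as in the 3(a) case: by Lemma~\ref{nonnegativity} every coordinate $g_i$ stays nonnegative throughout, so $\|\vg\|_1 = \sum_i g_i \ge 0$ at all times, and at the initial iterate $\vx=\ve_s$ the internal-edge contributions cancel by undirected symmetry ($w(i,j)=w(j,i)$), leaving $\|\vg\|_1 = \text{vol}(S)$. Second, invoke Lemma~\ref{decrease-q-2}: since we are in case 3(b) with $\ell'$ strictly increasing, each push at node $i$ strictly decreases $\|\vg\|_1$ by more than $k d_i \ell'\!\left(\tfrac{\gamma(1-\rho)\kappa}{c(1+\gamma)}\right)$. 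Third, telescope: because the potential can fall by at most $\text{vol}(S)$ before it would turn negative, summing the per-step decreases over the $K$ iterations gives $\sum_{i=1}^K k d_i \ell'\!\left(\tfrac{\gamma(1-\rho)\kappa}{c(1+\gamma)}\right) < \text{vol}(S)$, and solving for $\sum_i d_i$ yields the stated bound.

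The step I expect to require the most care is guaranteeing that the denominator $k\,\ell'\!\left(\tfrac{\gamma(1-\rho)\kappa}{c(1+\gamma)}\right)$ is a \emph{strictly positive constant}, which is precisely what makes the bound $O(\text{vol}(S))$ rather than vacuous. Unlike case 3(a), where invertibility of $\ell'$ was available through $(\ell')^{-1}$, under 3(b) the derivative $\ell'$ may vanish on an interval around $0$ (as in the $|x|^{3.5}$ example of the Remark), so $\ell'$ evaluated at a small positive argument could in principle be zero. The extra strict-increase hypothesis resolves this: since $\gamma>0$, $\kappa>0$, $0<\rho<1$, and $c>0$ force the argument $\tfrac{\gamma(1-\rho)\kappa}{c(1+\gamma)}$ to be strictly positive, anti-symmetry gives $\ell'(0)=0$ and strict monotonicity then gives $\ell'\!\left(\tfrac{\gamma(1-\rho)\kappa}{c(1+\gamma)}\right) > 0$. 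With $k>0$ fixed as well, the denominator is a positive constant independent of $n$ and $m$, so the right-hand side depends only on $\text{vol}(S)$ and the fixed parameters $\gamma,\kappa,\rho,k,c$, delivering the strongly local guarantee.
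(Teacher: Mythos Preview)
Your proposal is correct and follows exactly the paper's approach: the paper states that Lemma~\ref{decrease-q-2} together with ``the same type of analysis as before'' (namely the $\|\vg\|_1=\text{vol}(S)$ initialization plus telescoping used for Theorem~\ref{thm:runtime-3a}) yields the result. Your discussion of why the strict-increase hypothesis is needed to make the denominator positive also matches the paper's remark preceding Lemma~\ref{decrease-q-2}.
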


}

\apponly{
\subsection{More details on rho}
\label{sec:rho}
When $\rho < 1$, then we only approximately satisfy the KKT conditions. Here, we do some quick analysis of the difference in the idealized slackness condition $\vk^T \vx = 0$ compared to what we get from our solver.  Note that by choosing $\rho$ close to 1, we do produce a fairly accurate solution when 3(a) is satisfied.

\begin{lemma}
When Algorithm~\ref{cut-algo} returns, if $\ell(x)$ satisfies 3(a) we have
\[\vk^T\vx\leq\frac{\kappa k\ell'(1)(1-\rho)\text{vol}(S)}{c}\]
\end{lemma}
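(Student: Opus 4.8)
The plan is to start from the KKT characterization in Theorem~\ref{thm:kkt}. Since $\vk = [0,\kappa\vd - \vg,0]^T$ with $\vg$ the interior part of the residual at the returned point, the slackness quantity is $\vk^T\vx = \sum_i x_i(\kappa d_i - g_i)$, where the sum runs over interior vertices and untouched vertices contribute nothing because $x_i=0$ there. Two ingredients then suffice: a per-vertex upper bound on $\kappa d_i - g_i$ at termination, and a global bound on $\sum_i x_i d_i$. I would prove each separately and multiply.

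For the first ingredient, I would track the coordinate $g_i$ after the \emph{last} \texttt{nonlin-push} performed at vertex $i$. That push sets $g_i = \rho\kappa d_i$ exactly. Afterwards $g_i$ is only modified by pushes at neighbors $j$, and by step (b) of Algorithm~\ref{push-algo} each such update adds $\tfrac{1}{\gamma}w_{i,j}[\ell'(x_i - x_j) - \ell'(x_i - x_j - \Delta x_j)]$ to $g_i$; since $\Delta x_j > 0$ and $\ell'$ is increasing, this increment is nonnegative, so $g_i$ never drops below $\rho\kappa d_i$. On the other hand, the termination test of Algorithm~\ref{cut-algo} forces $g_i \le \kappa d_i$ at the end (otherwise a further push would fire). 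Hence $0 \le \kappa d_i - g_i \le (1-\rho)\kappa d_i$ for every touched vertex, which is where the $(1-\rho)$ factor originates.

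For the second ingredient, I would reuse the cancellation identity behind Lemma~\ref{decrease-q-1}: because the two orientations of each internal edge cancel, $\|\vg\|_1 = -\sum_{i\in S}d_i\ell'(x_i - 1) - \sum_{i\in\bar{S}}d_i\ell'(x_i)$, a quantity depending only on the current $\vx$. At initialization ($x_s=1$, all else $0$) it equals $\ell'(1)\,\text{vol}(S)$ by anti-symmetry of $\ell'$, and it stays nonnegative throughout by Lemma~\ref{nonnegativity}. Each push at $i$ decreases it by $d_i[\ell'(x_i+\Delta x_i - (\ve_S)_i) - \ell'(x_i - (\ve_S)_i)] \ge c\,d_i\Delta x_i$ using $\ell'' > c$ from 3(a). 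Telescoping over all pushes and grouping increments so that the total of $\Delta x_i$ at a fixed vertex equals its final $x_i$, I get $c\sum_i x_i d_i \le \ell'(1)\,\text{vol}(S)$. Combining the two ingredients gives $\vk^T\vx \le (1-\rho)\kappa\sum_i x_i d_i \le (1-\rho)\kappa\,\ell'(1)\,\text{vol}(S)/c$, which implies the stated bound since $k\ge 1$ (take $x\to 0$ in 3(a)) absorbs the remaining slack.

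The main obstacle I anticipate is the monotonicity step in the first ingredient: rigorously arguing that a coordinate $g_i$ only increases after its final push requires carefully handling the interleaving of pushes at $i$ and at its neighbors and checking the sign of every residual update, rather than just a single isolated update. Once that is established, together with the cancellation identity and the initial value $\ell'(1)\,\text{vol}(S)$, the remainder is routine bookkeeping.
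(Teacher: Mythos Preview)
Your proposal is correct, and the first ingredient (the bound $\kappa d_i - g_i \le (1-\rho)\kappa d_i$ for every touched vertex) is exactly what the paper does. The monotonicity concern you flag is not a real obstacle: after any push at $i$ the value $g_i$ is reset to $\rho\kappa d_i$, and every subsequent neighbor push adds a nonnegative increment, so the inequality $g_i\ge\rho\kappa d_i$ holds at all later times, in particular at termination.

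Where you differ from the paper is in the second ingredient, the bound on $\vd^T\vx$. The paper argues \emph{statically} at the terminal state: it writes $\|\vg\|_1 = \sum_{i\in S} d_i\ell'(1-x_i) - \sum_{i\notin S} d_i\ell'(x_i)$, applies the additive part of 3(a) in the form $\ell'(1-x_i)\le -\ell'(x_i)+k\ell'(1)$, and then uses $\ell'(x_i)\ge c x_i$ to get $0\le\|\vg\|_1\le -c\,\vd^T\vx + k\ell'(1)\,\text{vol}(S)$, hence $\vd^T\vx\le k\ell'(1)\,\text{vol}(S)/c$. You instead argue \emph{dynamically}: you telescope the per-push decrease $d_i[\ell'(x_i+\Delta x_i-(\ve_S)_i)-\ell'(x_i-(\ve_S)_i)]\ge c\,d_i\Delta x_i$ over the whole run, use that the initial $\|\vg\|_1=\ell'(1)\,\text{vol}(S)$ and the final value is nonnegative, and group increments per vertex to obtain $c\,\vd^T\vx\le \ell'(1)\,\text{vol}(S)$. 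Your route uses only the $\ell''>c$ half of 3(a) and produces a bound that is tighter by a factor of $k$; you then invoke $k\ge1$ (which indeed follows by setting $x=0$ in 3(a)) to recover the lemma as stated. Both approaches are valid; yours is slightly sharper and more elementary, while the paper's is a one-line manipulation at the terminal state that makes the role of the constant $k$ from 3(a) explicit.
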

\begin{proof}
We know $\vk=[0,\kappa\vd-\vr,0]^T$. Every time algorithm~\ref{push-algo} is called at node $i$, it will set $g_i=\rho\kappa d_i$. In the following iterations, $g_i$ can only increase until algorithm~\ref{push-algo} is called at node $i$ again. This means $\vk\leq (1-\rho)\kappa\vd$. 

On the other hand, when 3(a) is satisfied, $\ell'(1-x_i)\leq -\ell'(x_i)+k\ell'(1)$
\[
\begin{aligned}
||\vg||_1= & -\sum_{i\notin S}d_i\ell'(x_i)-\sum_{i\in S}d_i\ell'(x_i-1)\leq-\sum_{i\in V}d_i\ell'(x_i)+k\ell'(1)\text{vol}(S)\\
& \leq -c\vd^T\vx+k\ell'(1)\text{vol}(S).
\end{aligned}
\]
Thus
\[\vd^T\vx\leq\frac{k\ell'(1)}{c}\text{vol}(S)\]
Combining the two inequality gives this lemma.
\end{proof}
When 3(b) is satisfied, it is easy to see $\vk^T\vx\leq (1-\rho)\kappa\vd^T\vx$, however, there isn't a closed form equation on the upper bound of $\vk^T\vx$ in terms of $\text{vol}(S)$.
}

\section{Main Theoretical Results -- Cut Quality Analysis}
\label{sec:theory}
A common use for the results of these localized cut solutions is as \emph{localized Fiedler} vectors of a graph to induce a cluster~\cite{andersen2006local,Leskovec-2009-community-structure,Mahoney-2012-local,zhu2013local,orecchia2014flow}. This was the original motivation of the ACL procedure~\cite{andersen2006local}, for which the goal was a small conductance cluster. One of the most common (and theoretically justified!) ways to convert a real-valued ``clustering hint'' vector $\vx$ into clusters is to use a sweep cut process. This involves sorting $\vx$ in decreasing order and evaluating the conductance of each prefix set $S_j=\{x_1,x_2,...,x_j\}$ for each $j\in [n]$. The set with the smallest conductance will be returned. This computation is a key piece of Cheeger inequalities~\cite{Chung-1992-book,Mihail-1989-conductance}. In the following, we seek a slightly different type of guarantee. We posit the existence of a target cluster $T$ and show that \emph{if} $T$ has useful clustering properties (small conductance, no good internal clusters), then a sweep cut over a $q$-norm or $q$-Huber localized cut vector seeded inside of $T$ will accurately recover $T$. The key piece is understanding how the computation plays out with respect to $T$ inside the graph and $T$ as a graph by itself.

\apponly{
\subsection{Useful Observations}
The following two observations are not directly related to the main result. But we still find them useful in understanding the problem in general.
\begin{lemma}
For two seed sets $S_1$ and $S_2$, denote $\vx_1$ and $\vx_2$ to be the solutions of Lq norm cut problem using $S_1$ and $S_2$ correspondingly, if $S_1\subseteq S_2$, then $\vx_1\leq \vx_2$.
\end{lemma}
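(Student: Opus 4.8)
The plan is to argue by a discrete maximum principle on the exact stationarity (KKT) conditions, which avoids having to reason about the order in which \texttt{nonlin-cut} performs its pushes. Write $g_i^{(S)}(\vx)$ for the entry of the residual $\vr(\vx)$ of Theorem~\ref{thm:kkt} at an original-graph node $i$ when the seed set is $S$; expanding $\vr$ gives
\[
g_i^{(S)}(\vx) = -\tfrac{1}{\gamma}\sum_{j \sim i} w_{ij}\,\ell'(x_i - x_j) \;-\; d_i\,\ell'\!\big(x_i - (\ve_S)_i\big),
\]
where the final term is the contribution of the weight-$\gamma d_i$ edge joining $i$ to the source (if $i\in S$, so $(\ve_S)_i = 1$) or to the sink (if $i \notin S$, so $(\ve_S)_i = 0$). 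Since $\ell'$ is increasing (condition~(2) of Definition~\ref{def:nonlinear-loss}), $g_i^{(S)}(\vx)$ is nonincreasing in $x_i$ and nondecreasing in each neighbor value $x_j$. The crucial extra fact is \emph{seed monotonicity}: if $S_1 \subseteq S_2$ then $(\ve_{S_1})_i \le (\ve_{S_2})_i$, so the source/sink term only gets larger and $g_i^{(S_1)}(\vx) \le g_i^{(S_2)}(\vx)$ holds pointwise in $\vx$.

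By Theorem~\ref{thm:unique} the two solutions $\vx_1, \vx_2$ are the unique minimizers of~\eqref{nonlinear-cut}, so each satisfies complementary slackness: $g_i^{(S)}(\vx^*) = \kappa d_i$ at every original node with $x^*_i > 0$, and $g_i^{(S)}(\vx^*) \le \kappa d_i$ everywhere. Suppose, for contradiction, that $\vx_1 \not\le \vx_2$, and pick $i^\ast \in \arg\max_{i \in V}\big((\vx_1)_i - (\vx_2)_i\big)$. The maximum is strictly positive; since $(\vx_1)_s = (\vx_2)_s = 1$ and $(\vx_1)_t = (\vx_2)_t = 0$, the maximizer is a genuine graph node with $(\vx_1)_{i^\ast} > (\vx_2)_{i^\ast} \ge 0$, hence $(\vx_1)_{i^\ast} > 0$.

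The heart of the proof is then the chain
\[
\kappa d_{i^\ast} \;=\; g_{i^\ast}^{(S_1)}(\vx_1) \;\le\; g_{i^\ast}^{(S_2)}(\vx_1) \;<\; g_{i^\ast}^{(S_2)}(\vx_2) \;\le\; \kappa d_{i^\ast},
\]
a contradiction. The first equality is complementary slackness for $\vx_1$, and the first inequality is seed monotonicity. The middle strict inequality is where the choice of $i^\ast$ is used: maximality of $(\vx_1)_{i^\ast} - (\vx_2)_{i^\ast}$ gives $(\vx_1)_{i^\ast} - (\vx_1)_j \ge (\vx_2)_{i^\ast} - (\vx_2)_j$ for every $j$, so each neighbor term of $g_{i^\ast}^{(S_2)}$ is at least as large at $\vx_2$ as at $\vx_1$, while the source/sink term is \emph{strictly} larger at $\vx_2$ because $(\vx_1)_{i^\ast} > (\vx_2)_{i^\ast}$, $\ell'$ is strictly increasing, and $d_{i^\ast} \ge 1$. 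The final inequality is the KKT bound for $\vx_2$.

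I expect the one delicate point to be the strictness in the middle step: it needs $\ell'$ to be strictly increasing at the relevant argument. This is automatic for the power, $q$-Huber, and Berq losses, but could fail inside the flat interval $[-\psi,\psi]$ permitted by condition~3(b). In that degenerate case I would replace strictness by the ``smallest feasible value'' tie-breaking already used to establish uniqueness in Theorem~\ref{thm:unique}: a chain of equalities forces both $\vx_1$ and $\vx_2$ to sit in the flat region at $i^\ast$, and minimality of the selected solution then rules out $(\vx_1)_{i^\ast} > (\vx_2)_{i^\ast}$.
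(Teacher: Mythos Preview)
Your argument is correct and takes a genuinely different route from the paper. The paper proves this lemma algorithmically: it observes that if one initializes the \texttt{nonlin-cut} process for $S_2$ at the solution $\vx_1$ of the $S_1$ problem, then every residual at a node outside $S_2\setminus S_1$ still satisfies the stopping condition, while residuals at nodes in $S_2\setminus S_1$ can only go up (the source/sink term switches from $-d_i\ell'(x_i)$ to $-d_i\ell'(x_i-1)$). Since each push only increases coordinates, running the algorithm to convergence from $\vx_1$ yields $\vx_2\ge\vx_1$. Your proof instead works directly with the KKT characterization of Theorem~\ref{thm:kkt} and a discrete maximum principle: pick the node maximizing $(\vx_1)_i-(\vx_2)_i$ and derive a contradiction from the chain of inequalities on the residuals.

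What each approach buys: the paper's argument is short and reuses the monotone-push machinery already developed, but it tacitly relies on the algorithm converging to the exact optimizer from the warm start $\vx_1$ (which is fine but not explicitly justified there, and the runtime bound of Theorem~\ref{thm:runtime-3a} degenerates at $\rho=1$). Your argument is self-contained at the level of the optimality conditions, needs no statement about algorithmic convergence, and goes through verbatim for any $\kappa\ge 0$. The one place your proof has to work harder is the strictness in the middle inequality, which hinges on $\ell'$ being strictly increasing; you flag this correctly, and for the losses actually used in the paper (power, $q$-Huber, Berq) it is automatic, so the caveat is essentially cosmetic.
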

\begin{proof}
Considering two \texttt{nonlin-cut} processes $P_1$, $P_2$ using $S_1$ or $S_2$ as input correspondingly, suppose we set the initial vector of $P_2$ to be the solution of $P_1$, i.e. $\vx_1$, then for nodes $i\notin S_2\backslash S_1$, its residual stays zero, while for nodes $i\in S_2\backslash S_1$, its residual becomes positive. This means $P_2$ needs more iterations to converge. And each iteration can only add nonnegative values to $\vx_1$. Thus, $\vx_1\leq \vx_2$.
\end{proof}

\begin{lemma}
Suppose that $\kappa = 0$. We can compute the exact solution of problem~\eqref{nonlinear-cut} under two extreme cases $\gamma\rightarrow\infty$ and $\gamma\rightarrow 0$,
\begin{itemize}
\item When $\gamma\rightarrow\infty$, $x_i=1$ for $i\in S$ and $x_i=0$ for $i\in\bar{S}$.
\item When $\gamma\rightarrow 0$, $x_i\geq\frac{(\text{vol}(S))^{\frac{1}{q-1}}}{(\text{vol}(V))^{\frac{1}{q-1}}}$ for any $i\in V$.
\end{itemize}
\end{lemma}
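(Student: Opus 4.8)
The plan is to handle the two limits separately, exploiting that with $\kappa=0$ the objective of~\eqref{nonlinear-cut} splits as $f(\vx)=\sum_{(i,j)\in E} w_{i,j}\ell(x_i-x_j)+\gamma\,\Phi(\vx)$, where I write $\Phi(\vx):=\sum_{i\in S}d_i\,\ell(1-x_i)+\sum_{i\in\bar S}d_i\,\ell(x_i)$ for the contribution of the source/sink edges of the localized cut graph. Both summands are nonnegative, and the earlier nonnegativity result gives $0\le x_i\le 1$ at the (unique, by Theorem~\ref{thm:kkt}) optimizer $\vx^{(\gamma)}$. I take $\ell(x)=\tfrac1q|x|^q$, so that $\ell(z)=0\iff z=0$ and $\ell'(z)=z^{q-1}$ for $z\ge0$.

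For $\gamma\to\infty$ I would compare against the feasible point $\ve_S$. On it the graph term equals $\tfrac1q\,\text{cut}(S)$ and $\Phi(\ve_S)=0$, so optimality gives $\gamma\,\Phi(\vx^{(\gamma)})\le f(\vx^{(\gamma)})\le f(\ve_S)=\tfrac1q\,\text{cut}(S)$, a constant in $\gamma$. Hence $\Phi(\vx^{(\gamma)})\to0$; since every term of $\Phi$ is nonnegative and $d_i\ge1$, this forces $\ell(1-x_i^{(\gamma)})\to0$ on $S$ and $\ell(x_i^{(\gamma)})\to0$ on $\bar S$, and by continuity $x_i\to1$ for $i\in S$ and $x_i\to0$ for $i\in\bar S$.

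For $\gamma\to0$ I would argue in two moves. \emph{Flattening:} testing optimality against any feasible constant vector $c_0\mathbf 1$ (on which the graph term vanishes) gives $\sum_{(i,j)\in E}w_{i,j}\ell(x_i^{(\gamma)}-x_j^{(\gamma)})\le f(\vx^{(\gamma)})\le\gamma\,\Phi(c_0\mathbf 1)\to0$; since $G$ is connected with edge weights at least $1$, every edge difference tends to $0$, so each subsequential limit of $\vx^{(\gamma)}$ (which exists by compactness of $[0,1]^V$) is a constant vector. \emph{Identifying the constant:} one first checks the optimizer is interior, $x_i>0$ for all $i\in V$ — a node at $0$ adjacent to a positive node would have $g_i\ge\tfrac1\gamma\sum_{j\sim i}w_{i,j}\,x_j^{q-1}>0$, contradicting $\vg\le\kappa\vd=\mathbf 0$, and connectivity then rules out $\vx=\mathbf 0$. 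Interiority together with the complementary slackness $\vx^T(\kappa\vd-\vg)=0$ of Theorem~\ref{thm:kkt} (here $\kappa=0$) forces $g_i=0$ at every node; summing these and telescoping the antisymmetric edge terms (exactly as in the $\|\vg\|_1$ computation) leaves $\sum_{i\in V}d_i\,\ell'(x_i-(\ve_S)_i)=0$. Passing to a subsequential limit $\tilde c\,\mathbf 1$ yields the balance equation $\text{vol}(\bar S)\,\tilde c^{\,q-1}=\text{vol}(S)(1-\tilde c)^{q-1}$, which is strictly monotone in $\tilde c$ and so has the single root $\beta=\text{vol}(S)^{1/(q-1)}/\big(\text{vol}(S)^{1/(q-1)}+\text{vol}(\bar S)^{1/(q-1)}\big)$; every subsequential limit being $\beta\mathbf 1$ gives $\vx^{(\gamma)}\to\beta\mathbf 1$. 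Finally, since $1/(q-1)\ge1$ for $q\le2$, superadditivity of $t\mapsto t^{1/(q-1)}$ gives $\text{vol}(V)^{1/(q-1)}\ge\text{vol}(S)^{1/(q-1)}+\text{vol}(\bar S)^{1/(q-1)}$, hence $\beta\ge\text{vol}(S)^{1/(q-1)}/\text{vol}(V)^{1/(q-1)}$, which is the claimed bound (with equality exactly at $q=2$).

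The main obstacle is the $\gamma\to0$ case: the leading term alone has an entire line of minimizers $\{c\mathbf 1\}$, so the optimizer degenerates and the lower bound cannot be read off from a single stationary point — it must be extracted by the flatten-then-identify argument, with the balance equation (equivalently, the one-variable minimization of $c\mapsto\Phi(c\mathbf 1)$) pinning down the limiting constant. I would take care that interiority holds so the telescoped stationarity identity is valid, and note that the superadditivity step is precisely where $q\le2$ (i.e. $1/(q-1)\ge1$) enters; for $q>2$ the same computation produces the constant $\beta$ but reverses the final inequality.
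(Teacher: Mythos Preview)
Your argument is correct and follows the same skeleton as the paper's proof: for $\gamma\to\infty$ the source/sink term $\Phi$ dominates and is minimized at $\ve_S$; for $\gamma\to0$ the edge term dominates and forces a constant vector, whose value is then pinned down by summing the stationarity conditions $\sum_i d_i\,\ell'(x_i-(\ve_S)_i)=0$ and solving. The paper's writeup is terser and more heuristic (it simply says ``the first term vanishes'' and ``$x_i$ converges to a fixed constant''), whereas you supply the comparison bounds, the compactness/subsequence step, and the interiority check that justifies using the KKT equalities --- these are genuine improvements in rigor over the paper's sketch. You also correctly flag that the closing inequality $\text{vol}(S)^{1/(q-1)}+\text{vol}(\bar S)^{1/(q-1)}\le\text{vol}(V)^{1/(q-1)}$ relies on $1/(q-1)\ge 1$, i.e.\ $q\le 2$; the paper states the bound without this caveat, so your observation that the inequality reverses for $q>2$ is a useful clarification.
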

\begin{proof}
When $\kappa=0$, the objective function of~\eqref{nonlinear-cut} becomes
\[\sum_{i\sim j}w(i,j)\ell(x_i-x_j)+\gamma\sum_{i\in V}d_i\ell(x_i-(\ve_S)_i)\]
When $\gamma\rightarrow \infty$, the first term vanishes, and the second term achieves its smallest value, when $x_i=1$ for $i\in S$ and $x_i=0$ for $i\in\bar{S}$. 

When $\gamma\rightarrow 0$, the second term vanishes, and the first term is minimal with objective zero when every $x_i$ converges to a fixed constant. 
Moreover, the KKT condition now becomes
\[\frac{1}{\gamma}\sum_{j\sim i}w(i,j)\ell'(x_i-x_j)+d_i\ell'(x_i-(\ve_S)_i)=0\]
Summing the KKT condition over all nodes yields:
\[\sum_{i\in V}d_i\ell'(x_i-(\ve_S)_i)=0\]
So we can compute the constant that $x_i$ converges to by making $x_i=c$, which is $c=\frac{(\text{vol}(S))^{\frac{1}{q-1}}}{(\text{vol}(V)-\text{vol}(S))^{\frac{1}{q-1}}+(\text{vol}(S))^{\frac{1}{q-1}}}\geq \frac{(\text{vol}(S))^{\frac{1}{q-1}}}{(\text{vol}(V))^{\frac{1}{q-1}}}$.
\end{proof}
}

\subsection{Main Theorem and Assumptions}

As we mentioned before, the key piece is understanding how the computation plays out with respect to $T$ inside the graph and $T$ as a graph by itself.
 We use  $\text{vol}_T(S)$ to be the volume of seed set $S$ in the subgraph induced by $T$ and $\partial T\subset T$ to be the boundary set of $T$, i.e. nodes in $\partial T$ has at least one edge connecting to $\bar{T}$. Quantities with tildes, e.g., $\tilde{d}$, reflect quantities in the subgraph induced by $T$. We assume $\kappa=0$, $\rho=1$ and: 
\begin{assumption}
\label{leaking}
The seed set $S$ satisfies $S\subseteq T$, $S\cap \partial T=\varnothing$ and
$\sum_{i\in \partial T}(d_i-\tilde{d}_i)x_i^{q-1}\leq 2\phi(T)\text{vol}(S)$.
\end{assumption}
\apponly{

We call this the leaking assumption, which roughly states that the solution with the set $S$ stays mostly within the set $T$. As some quick justification for this assumption, we note that when when $q=2$, \cite{zhu2013local} shows by a Markov bound that there exists $T_g$ where $\text{vol}(T_g)\geq\frac{1}{2}\text{vol}(T)$ such that any node $i\in T_g$ satisfies $\sum_{i\in \partial T}(d_i-\tilde{d}_i)x_i\leq 2\phi(T)d_i$. So in that case, any seed sets $S\subseteq T_g$ meets our assumption. For $1<q<2$, it is straightforward to see any set $S$ with $\text{vol}(S)\geq \frac{1}{2}\text{vol}(T)$ satisfies this assumption since the left hand side is always smaller than $\text{cut}(T)$. However, such a strong assumption is not necessary for our approach. The above guarantee allows for a small $\text{vol}(S)$ and we simply require Assumption~\ref{leaking} holds. We currently lack a detailed analysis of how many such seed sets there will be.

Our second assumption regards the behavior within only the set $T$ compared with the entire graph. To state it, we wish to be precise. Consider the localized cut graph associated with the hidden target set $T$ on the entire graph and let $\mB, \vw$ be the incidence and weights for this graph. We wish to understand how the solution $\vx$ on this problem 
\begin{equation}
\label{localized-cut-T-full}
\MINone{\vx}{\vw^T\ell(\mB \vx)}{x_s=1,x_t=0, \vx \ge 0}
\end{equation}
compares with one where we consider the problem \emph{only} on the subgraph induced by $T$. Let $\tilde{\mB}, \tilde{\vw}$ be the incidence matrix of the localized cut graph on the vertex induced subgraph corresponding to $T$ and seeded on $T$ (so the tilde-problem is seeded on all nodes). So formally, we wish to understand how $\tilde{\vx}$ in 
\begin{equation}
\label{localized-cut-T}
\MINone{\tilde{\vx}}{\tilde{\vw}^T\ell(\tilde{\mB} \tilde{\vx})}{\tilde{x}_s=1,\tilde{x}_t=0, \tilde{\vx} \ge 0 }
\end{equation}
compares to $\vx$. For these comparisons, we assume we are looking at values other than $x_s, x_t$ and $\tilde{x}_s, \tilde{x}_t$. 
}

\begin{assumption}
\label{mixing-well}
A relatively small $\gamma$ should be chosen such that the solution of localized $q$-norm cut problem in the subgraph induced by target cluster $T$ can satisfy $\text{min}(\tilde{\vx})\geq \frac{\left(0.5\text{vol}_T(S)\right)^{{1}/({q-1})}}{(\text{vol}_T(T))^{{1}/({q-1})}}=M$.
\end{assumption}

\apponly{

We will call Assumption~\ref{mixing-well} a ``mixing-well'' guarantee.

To better understand this assumption, when $\ell(x)=\frac{1}{q}|x|^q$ and $q=2$, a solution of the \texttt{nonlin-cut} process (Algorithm~\ref{cut-algo}) will be equivalent to a Markov process. In this case, one can lower bound $\text{min}(\tilde{\vx})$ by the well known infinity-norm mixing time of Markov chain. In fact, as shown in the proof of lemma 3.2 of \cite{zhu2013local}, when $\gamma\leq O\left(\phi(T)\cdot\text{Gap}\right)$, they show that $\text{min}(\tilde{\vx}_T)\geq \frac{0.8\text{vol}_T(S)}{\text{vol}_T(T)}$. Here $\text{Gap}$ is defined as the ratio of internal connectivity and external connectivity and often assumed to be $\Omega(1)$. Formally:
\begin{definition}
Given a target cluster $T$ with $\text{vol}(T)\leq\frac{1}{2}\text{vol}(V)$, $\phi(T)\leq\Psi$ and $\text{min}_{A\subset T}\phi_T(A)\geq \Phi$, the $\text{Gap}$ is defined as:
\[\text{Gap}=\frac{\Phi^2/\text{log vol}(T)}{\Psi}\]
\end{definition}
\footnote{The proof of lemma 3.2 in \cite{zhu2013local} proves that the teleportation probability $\beta=1-\alpha$ needs to be smaller than $O\left(\phi(T)\cdot\text{Gap}\right)$. When $q=2$, as shown in~\cite{gleich2014anti}, $\beta=\frac{\gamma_2}{1+\gamma_2}$, which means $\gamma_2=\frac{\beta}{1-\beta}$. Since we assume $\gamma_2<1$, we have $\beta<\gamma_2<2\beta$. In other words, $\gamma_2$ and $\beta$ are only different by a constant factor.} We refer to \cite{zhu2013local} for a detailed explanation of this. In the case of $q=2$, by using the infinity-norm mixing time of a Markov chain, any $\gamma\leq O(\phi(T)\cdot\text{Gap})$ satisfies this assumption as shown in lemma 3.2 of~\cite{zhu2013local}. For $1<q<2$, it will be more difficult to derive a closed form solution on how small $\gamma$ needs to be. However, in the supplement, we can show that this assumption still holds for subgraphs with small diameters, i.e. $O(\text{log}(|T|))$ (This is reasonable because we expect good clusters and good communities to have small diameters.).

\begin{lemma}
\label{small-gamma}
Assume the subgraph induced by target cluster $T$ has diameter $O(\log|T|)$ and when we uniformly randomly sample points from $T$ as seed sets, the expected largest distance of any node in $\bar{S}$ to $S$ is $O\left(\frac{\log(|T|)}{|S|}\right)$. Also define $\gamma_2$ to be the largest $\gamma$ such that assumption~\ref{mixing-well} is satisfied at $q=2$ and assume $\gamma_2 < 1$, if we set $\gamma=\gamma_2^{q-1}$ for $1<q<2$, and
\[\frac{\text{vol}_T(S)}{\text{vol}_T(T)}\leq 2\left( \frac{\gamma_2}{1+\gamma_2}\cdot\frac{1}{|T|^{\frac{1}{|S|}\log\left(1+l^{\frac{1}{q-1}}\right)}}\right)^{q-1}\]
where $l\leq (1+\gamma)\max (\tilde{d}_i)$. Then the solution of~\ref{localized-cut-T} can satisfy assumption~\ref{mixing-well}.
\end{lemma}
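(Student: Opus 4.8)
The plan is to reduce Assumption~\ref{mixing-well} to a single pointwise estimate: it suffices to show that the minimum entry of the subgraph solution $\tilde{\vx}$ of~\eqref{localized-cut-T} satisfies $\min(\tilde{\vx}) \ge \frac{\gamma_2}{1+\gamma_2}\cdot|T|^{-\frac{1}{|S|}\log(1+l^{1/(q-1)})}$, since the hypothesis on $\text{vol}_T(S)/\text{vol}_T(T)$ is precisely the statement that this right-hand side dominates $M$ after raising to the power $1/(q-1)$. I would build this estimate from two local consequences of the stationarity condition $\frac{1}{\gamma}\sum_{j\sim i}\tilde{w}(i,j)\ell'(\tilde{x}_i-\tilde{x}_j)+\tilde{d}_i\ell'(\tilde{x}_i-(\ve_S)_i)=0$ for $\ell(x)=\frac1q|x|^q$: a lower bound on the seed values and a per-edge decay bound. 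The whole point of the choice $\gamma=\gamma_2^{q-1}$ is that it makes the $(q-1)$-th-root scaling of the nonlinear stationarity conditions collapse back onto the linear $q=2$ teleportation constant $\gamma_2$.

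First I would establish the seed bound. Writing stationarity at a seed $i\in S$ as $\gamma\tilde{d}_i(1-\tilde{x}_i)^{q-1}=\sum_{j\sim i}\tilde{w}(i,j)\ell'(\tilde{x}_i-\tilde{x}_j)$ and bounding the right side above by $\tilde{d}_i \tilde{x}_i^{q-1}$ (using $\ell'(\tilde{x}_i-\tilde{x}_j)\le \tilde{x}_i^{q-1}$ when $\tilde{x}_j<\tilde{x}_i$ and discarding the negative terms otherwise), the factor $\tilde{d}_i$ cancels and taking $(q-1)$-th roots yields $\gamma_2(1-\tilde{x}_i)\le \tilde{x}_i$, i.e. $\tilde{x}_i\ge \frac{\gamma_2}{1+\gamma_2}$. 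Next, for a non-seed node $v$ and any neighbor $w$ with $\tilde{x}_w>\tilde{x}_v$, I would rearrange stationarity at $v$ into $\sum_{\tilde{x}_k>\tilde{x}_v}\tilde{w}(v,k)(\tilde{x}_k-\tilde{x}_v)^{q-1}=\gamma\tilde{d}_v\tilde{x}_v^{q-1}+\sum_{\tilde{x}_k<\tilde{x}_v}\tilde{w}(v,k)(\tilde{x}_v-\tilde{x}_k)^{q-1}$, keep only the $w$-term on the left (edge weights are $\ge 1$), and bound the right by $(1+\gamma)\tilde{d}_v\tilde{x}_v^{q-1}\le l\,\tilde{x}_v^{q-1}$ with $l=(1+\gamma)\max_i\tilde{d}_i$ (the extreme value allowed by the constraint $l\le(1+\gamma)\max_i\tilde{d}_i$). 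This gives the clean per-edge estimate $\tilde{x}_v\ge \tilde{x}_w/(1+l^{1/(q-1)})$, which also holds trivially when $\tilde{x}_w\le \tilde{x}_v$.

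I would then chain these along a shortest path. Let $v$ achieve $\min(\tilde{\vx})$; if $v\in S$ the seed bound already suffices, so assume $v\notin S$ and take a shortest path $v=u_0,u_1,\dots,u_D$ to the nearest seed $u_D\in S$, with $u_0,\dots,u_{D-1}\notin S$ so the non-seed stationarity form applies at every internal node. Applying the per-edge bound at each step gives $\tilde{x}_{u_0}\ge \tilde{x}_{u_D}(1+l^{1/(q-1)})^{-D}\ge \frac{\gamma_2}{1+\gamma_2}(1+l^{1/(q-1)})^{-D}$. Substituting the assumed seed-to-node distance $D=O(\log|T|/|S|)$ and rewriting $(1+l^{1/(q-1)})^{D}=|T|^{\frac{1}{|S|}\log(1+l^{1/(q-1)})}$ reproduces the target lower bound, and combining with the stated inequality on $\text{vol}_T(S)/\text{vol}_T(T)$ gives $\min(\tilde{\vx})\ge M$, i.e. Assumption~\ref{mixing-well}.

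The main obstacle is the treatment of $D$ in the last step: the hypothesis controls only the \emph{expected} largest distance under random seed sampling, whereas the decay bound uses the realized path length, and since $t\mapsto(1+l^{1/(q-1)})^{t}$ is convex, passing to the expectation of the denominator moves in the unfavorable direction ($\mathbb{E}[c^{D}]\ge c^{\mathbb{E}[D]}$). I therefore expect the honest reading of the conclusion to be a typical-case / in-expectation statement in which $D$ is replaced by its expected value $O(\log|T|/|S|)$, rather than a worst-case deterministic guarantee; making it fully rigorous would require either a high-probability bound on the eccentricity of the sampled seed set or an expectation bound that avoids this convexity loss. Everything else is a direct, if careful, manipulation of the KKT stationarity conditions enabled by the substitution $\gamma=\gamma_2^{q-1}$.
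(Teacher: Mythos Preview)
Your proposal is correct and follows essentially the same strategy as the paper's proof: establish a lower bound $\gamma^{1/(q-1)}/(1+\gamma^{1/(q-1)})=\gamma_2/(1+\gamma_2)$ on seed values, a geometric decay factor $(1+l^{1/(q-1)})^{-1}$ as one moves one step farther from $S$, and combine via the distance assumption. The paper derives the seed bound algorithmically (the first push adds at least $\gamma^{1/(q-1)}/(1+\gamma^{1/(q-1)})$ and the iterates are monotone) and organizes the decay as a layer-by-layer induction using an in/out degree split $d_i^{\text{in}},d_i^{\text{out}}$, whereas you obtain both directly from the KKT stationarity equation and chain along a single shortest path; the resulting bounds, and the loose treatment of the expected-versus-realized distance that you correctly flag, are the same.
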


\begin{proof}
Given a seed set $S$, we can partition the $\bar{S}$ into disjoint subsets $L_1\cup L_2\cup L_3 \ldots \cup L_n$, where $L_i$ contains nodes that are $i$ distance away from $S$.
For any node $i\in L_k$, we denote $d_i^{out}$ to be
\[d_i^{out}=\sum_{j\sim i, j\in L_{k}\cup L_{k+1}}w(i,j)\]
And $d_i^{in}=\tilde{d}_i-d_i^{out}$. Also define $l=(1+\gamma)\frac{d_i^{out}}{d_i^{in}}\leq (1+\gamma)\text{max}(\tilde{d}_i)$. Suppose $\tilde{x}_i\geq c$ for any node $i$ with distance at most $k-1$, then we can show for node $i\in L_k$, $\tilde{x}_i\geq\frac{c}{1+l^{\frac{1}{q-1}}}$. To see this, if $\tilde{x}_i<c$, then by the KKT condition,
\[d_i^{in}(c-\tilde{x}_i)^{q-1}\leq d_i^{out}x_i^{q-1}+\gamma d_ix_i^{q-1}\]
Here for $j\sim i$, if $j$ is closer to $S$, we set $\tilde{x}_j$ to be $c$, otherwise, we set $\tilde{x}_j$ to be $0$. This means
\[\tilde{x}_i\geq\frac{c(d_i^{in})^{\frac{1}{q-1}}}{\left(d_i^{out}+\gamma d_i\right)^{\frac{1}{q-1}}+(d_i^{in})^{\frac{1}{q-1}}}\geq\frac{c}{l^{\frac{1}{q-1}}+1}\]
Also, for node $i\in S$, the first iteration of $q$-norm process will add at least $\frac{\gamma^{\frac{1}{q-1}}}{1+\gamma^{\frac{1}{q-1}}}$ to $\tilde{x}_i$ (This follows from unrolling the first loop of our algorithm and checking that this satisfies the binary search criteria.), which means $\tilde{x}_i\geq \frac{\gamma^{\frac{1}{q-1}}}{1+\gamma^{\frac{1}{q-1}}}$. Thus, for node $i\in L_k$,
\[\tilde{x}_i\geq \frac{\gamma^{\frac{1}{q-1}}}{1+\gamma^{\frac{1}{q-1}}}\cdot \frac{1}{\left(1+l^{\frac{1}{q-1}}\right)^k}=\frac{\gamma_2}{1+\gamma_2}\cdot \frac{1}{\left(1+l^{\frac{1}{q-1}}\right)^k}\]
Since the subgraph induced by target cluster $T$ has diameter $O(\log(|T|))$ and when we uniformly randomly sample points from $T$ as seed sets, the expected largest distance $r$ of any node in $\bar{S}$ to $S$ is $O\left(\frac{\log(|T|)}{|S|}\right)$, we have $r=O\left(\frac{\log(|T|)}{|S|}\right)$, which means
\[\text{min}(\tilde{\vx})\geq \frac{\gamma_2}{1+\gamma_2}\cdot\frac{1}{|T|^{\frac{1}{|S|}\log\left(1+l^{\frac{1}{q-1}}\right)}}\]
Assumption~\ref{mixing-well} requires $\text{min}(\tilde{\vx})\geq \frac{\left(0.5\text{vol}_T(S)\right)^{\frac{1}{q-1}}}{(\text{vol}_T(T))^{\frac{1}{q-1}}}$. So we just need
\[\frac{\text{vol}_T(S)}{\text{vol}_T(T)}\leq 2\left( \frac{\gamma_2}{1+\gamma_2}\cdot\frac{1}{|T|^{\frac{1}{|S|}\log\left(1+l^{\frac{1}{q-1}}\right)}}\right)^{q-1},\]
which was the final assumption.
\end{proof}

\begin{lemma}
\label{rc-pr}
Under the previous assumptions, define a sweep cut set $S_c$ as \[ \left\{i\in V \mid x_i\geq \frac{c\left(0.5\text{vol}(S)\right)^{\frac{1}{q-1}}}{(\text{vol}(T))^{\frac{1}{q-1}}}\right\},\] then for any $0<c\leq\frac{1}{2}$,
\[\text{vol}(S_c\backslash T)=O\left(\frac{\phi(T)}{\gamma c^{q-1}}\right)\text{vol}(T)
\qquad \qquad \text{vol}(T\backslash S_c)=O\left(\frac{\phi(T)}{\gamma}\right)\text{vol}(T)\]
\end{lemma}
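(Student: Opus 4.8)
The plan is to establish both volume bounds by summing the stationarity (KKT) conditions of Theorem~\ref{thm:unique} over suitable vertex subsets. Since $\kappa=0$ and $\rho=1$, the returned $\vx$ is exact, so the residual vanishes: $g_i = -\tfrac1\gamma\sum_{j\sim i}w_{ij}\ell'(x_i-x_j) - d_i\ell'(x_i-(\ve_S)_i)=0$ at every graph node, where $\ell'(x)=x^{q-1}$ for $x\ge 0$ and $0\le x_i\le 1$ by Lemma~\ref{nonnegativity}. The key mechanism, already exploited in the proof of Lemma~\ref{decrease-q-1}, is that when these conditions are summed over a set, every edge internal to the set cancels (by antisymmetry of $\ell'$ together with $w_{ij}=w_{ji}$), leaving only terms on edges crossing the set's boundary plus the source/sink terms. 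I then convert the resulting ``mass'' estimates $\sum d_i x_i^{q-1}$ into volume estimates using the sweep threshold $\theta := c(0.5\,\text{vol}(S))^{1/(q-1)}/\text{vol}(T)^{1/(q-1)}$ that defines $S_c$.

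For $\text{vol}(S_c\setminus T)$ I sum $g_i=0$ over $i\notin T$ (all such nodes are non-seeds because $S\subseteq T$). Edges with both endpoints outside $T$ cancel, so $\sum_{i\notin T}d_i x_i^{q-1} = \tfrac1\gamma\sum_{\mathrm{bdry}}w_{ij}\,\ell'(x_j-x_i)$ with $j\in\partial T$, $i\notin T$. Discarding the (harmless) negative contributions and using $\ell'(x_j-x_i)=(x_j-x_i)^{q-1}\le x_j^{q-1}$ gives $\sum_{i\notin T}d_i x_i^{q-1}\le \tfrac1\gamma\sum_{j\in\partial T}(d_j-\tilde d_j)x_j^{q-1}\le \tfrac{2\phi(T)}{\gamma}\text{vol}(S)$, where the final step is exactly Assumption~\ref{leaking}. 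Since every $i\in S_c\setminus T$ has $x_i^{q-1}\ge\theta^{q-1}=c^{q-1}\cdot 0.5\,\text{vol}(S)/\text{vol}(T)$, dividing yields $\text{vol}(S_c\setminus T)\le 4\phi(T)\text{vol}(T)/(\gamma c^{q-1})$, the first bound.

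For $\text{vol}(T\setminus S_c)$ I compare $\vx$ with the solution $\tilde\vx$ of the induced-subgraph problem~\eqref{localized-cut-T}, for which Assumption~\ref{mixing-well} gives $\tilde x_i\ge M$ on all of $T$. The engine is the deficit $\tilde D:=\sum_{i\in T\setminus S}\tilde d_i(\tilde x_i^{q-1}-x_i^{q-1})$. Summing $g_i=0$ over all of $T$ for $\vx$ and the analogous boundary-free balance for $\tilde\vx$, then subtracting, the interior edges cancel and I am left with $\tilde D = (\tilde A - A) + \sum_{i\in\partial T}(d_i-\tilde d_i)x_i^{q-1}$, where $A=\sum_{i\in T\setminus S}d_i x_i^{q-1}$ and $\tilde A=\sum_{i\in T\setminus S}\tilde d_i \tilde x_i^{q-1}$ are the non-seed masses of the two problems. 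The first piece is at most the leaked flux $\tfrac1\gamma\sum_{\mathrm{bdry}}w_{ij}(x_i-x_j)^{q-1}\le \tfrac{2\phi(T)}{\gamma}\text{vol}(S)$ (Assumption~\ref{leaking} again, using that the interior seed terms move the right way once $\tilde x_i\ge x_i$), and the second piece is $\le 2\phi(T)\text{vol}(S)$ by Assumption~\ref{leaking} directly; hence $\tilde D = O(\phi(T)/\gamma)\text{vol}(S)$. Conversely, every $i\in T\setminus S_c$ has $\tilde x_i^{q-1}-x_i^{q-1}\ge M^{q-1}-\theta^{q-1}\ge (1-c^{q-1})M^{q-1}$, a fixed positive multiple of $M^{q-1}$ once $c\le\tfrac12$, so $\tilde D\ge (1-c^{q-1})M^{q-1}\sum_{i\in T\setminus S_c}\tilde d_i$. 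Combining these and substituting $M^{q-1}=0.5\,\text{vol}_T(S)/\text{vol}_T(T)$ with $\text{vol}_T(S)=\text{vol}(S)$ bounds the subgraph-volume $\sum_{i\in T\setminus S_c}\tilde d_i$ by $O(\phi(T)/\gamma)\text{vol}(T)$; finally $\sum_{i\in T\setminus S_c}(d_i-\tilde d_i)\le \text{cut}(T)\le\phi(T)\text{vol}(T)\le \tfrac{\phi(T)}{\gamma}\text{vol}(T)$ upgrades this to the full-degree volume, giving the second bound.

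The crux, and the step I expect to be the main obstacle, is the pointwise comparison $\tilde x_i\ge x_i$ for $i\in T$, which I use both to make $\tilde D$ a sum of nonnegative terms and to sign the seed contribution in the upper bound. Intuitively, passing to the induced subgraph only deletes boundary edges to low-valued outside nodes and shrinks the sink weights at $\partial T$, each of which can only raise the solution on $T$; I would prove it rigorously by a discrete maximum principle, examining a node $i^\star\in T$ maximizing $x_{i^\star}-\tilde x_{i^\star}$ and deriving a contradiction from the two stationarity conditions at $i^\star$ via monotonicity of $\ell'$. Two minor technicalities remain: handling vertices with $x_i=0$ through complementary slackness (they contribute nothing to the mass sums), and justifying $x_j\ge x_i$ across $\partial T$ so the boundary terms carry the claimed sign, which likewise follows from the same maximum principle.
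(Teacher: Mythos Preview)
Your argument for the first bound is exactly the paper's: sum the KKT conditions over $\bar T$, let the interior edges cancel, bound the surviving boundary flux by Assumption~\ref{leaking}, and divide by the sweep threshold. (You don't even need the case split on the sign of $x_j-x_i$: the paper just uses monotonicity, $\ell'(x_j-x_i)\le\ell'(x_j)$, which holds for every $x_i\ge 0$.)

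For the second bound your route is again essentially the paper's. The paper writes $x_i=\tilde x_i+v_i$ and $\ell'(x_i-(\ve_S)_i)=\ell'(\tilde x_i-(\ve_S)_i)+k_i\ell'(v_i)$, sums KKT over $T$, uses the tilde-KKT to kill the interior, and arrives at $\sum_{i\in T}k_i d_i\ell'(v_i)>-\tfrac{2(1+\gamma)}{\gamma}\phi(T)\text{vol}(S)$. For $i\notin S$ one has $k_i\ell'(v_i)=x_i^{q-1}-\tilde x_i^{q-1}$, so the non-seed part of that sum is exactly your $-\tilde D$; your ``seed terms move the right way'' remark is the seed part. Both derivations lead to the same inequality and the same final division by $(1-c^{q-1})M^{q-1}$.

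Where your proposal has a real gap is the pointwise comparison $\tilde x_i\ge x_i$ on $T$. You correctly flag it as the crux, but the maximum-principle sketch does not close. If the maximizer $i^\star$ of $x_i-\tilde x_i$ lies in $\partial T$, subtracting the two stationarity conditions yields a contradiction \emph{only if} $\sum_{j\notin T}w_{i^\star j}\ell'(x_{i^\star}-x_j)\ge 0$, i.e.\ only if $x_{i^\star}\ge x_j$ for the outside neighbours. You assert this ``likewise follows from the same maximum principle'', but that is circular, and the underlying edgewise claim is false: take $T=\{1,2,3\}$, $\bar T=\{4,5\}$, $S=\{1\}$, edges $1\!-\!2,\,2\!-\!3,\,3\!-\!4,\,4\!-\!5$ of weight $1$ and $2\!-\!4$ of weight $100$; with $q=2$, $\gamma=0.1$ one computes $x_4>x_3$, so the boundary edge $(3,4)$ has the ``wrong'' sign. (In that same example one still has $\tilde x_i>x_i$ on $T$, so the comparison itself appears to survive; it is your proposed proof of it that breaks.) The paper does not spell this step out either---it passes from $\sum_{i\in T}k_id_i\ell'(v_i)>-C$ directly to the bound on $\text{vol}(T\setminus S_c)$, which implicitly uses that the $T\cap S_c$ terms are nonpositive, i.e.\ $x_i\le\tilde x_i$. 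So you have located precisely the soft spot; you just have not filled it, and the edgewise boundary-sign route will not do it.
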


\begin{proof}
The proof is mostly a generalization to the proof of Lemma 3.4 in \cite{zhu2013local}. For any $i\in \bar{T}$, by the KKT condition and Assumption~\ref{leaking}
\begin{align*}
0 &= r_i(\vx)\\
&= -\frac{1}{\gamma}\sum_{j\sim i}w(i,j)\ell'(x_i-x_j)-d_ix_i^{q-1}\\
&= -\frac{1}{\gamma}\sum_{j\sim i,j\in \bar{T}}w(i,j)\ell'(x_i-x_j)-\frac{1}{\gamma}\sum_{j\sim i,j\in T}w(i,j)\ell'(x_i-x_j)-d_ix_i^{q-1}\\
&= -\frac{1}{\gamma}\sum_{j\sim i,j\in \bar{T}}w(i,j)\ell'(x_i-x_j)+\frac{1}{\gamma}\sum_{j\sim i,j\in T}w(i,j)\ell'(x_j-x_i)-d_ix_i^{q-1} \\
&< -\frac{1}{\gamma}\sum_{j\sim i,j\in \bar{T}}w(i,j)\ell'(x_i-x_j)+\frac{1}{\gamma}\sum_{j\sim i,j\in T}w(i,j)\ell'(x_j)-d_ix_i^{q-1}.
\end{align*}
By summing the inequality above over all nodes in $\bar{T}$, the first term will all cancel out, it yields that
\[\sum_{i\in\bar{T}}d_ix_i^{q-1}<\frac{1}{\gamma}\sum_{i\in\partial T}(d_i-\tilde{d}_i)x_i^{q-1}\leq\frac{2\phi(T)\text{vol}(S)}{\gamma}.\]
Now by the definition of our sweep cut set, we know that for $i\in S_c\backslash T$, $x_i^{q-1}\geq\frac{c^{q-1}u\text{vol}(S)}{\text{vol}(T)}$, thus
\[\frac{c^{q-1}\text{vol}(S)}{2\text{vol}(T)}\text{vol}(S_c\backslash T)\leq\sum_{i\in S_c\backslash T}d_ix_i^{q-1}\leq\frac{2\phi(T)\text{vol}(S)}{\gamma}\]
which means
\[\text{vol}(S_c\backslash T)=O\left(\frac{\phi(T)}{\gamma c^{q-1}}\right)\text{vol}(T).\]
In the following, we define $x_i=\tilde{x}_i+v_i$ and $\ell'(x_i-(\ve_S)_i)=\ell'(\tilde{x}_i-(\ve_S)_i)+k_i\ell'(v_i)$. For any node $i\in T$, by KKT condition,
\begin{fullwidth}
\begin{align*}
0&=r_i(\vx)\\
&=-\frac{1}{\gamma}\sum_{j\sim i}w(i,j)\ell'(x_i-x_j)-d_i\ell'(x_i-(\ve_S)_i)\\
&=-\frac{1}{\gamma}\sum_{j\sim i,j\in T}w(i,j)\ell'(x_i-x_j)-\frac{1}{\gamma}\sum_{j\sim i,j\in \bar{T}}w(i,j)\ell'(x_i-x_j)-d_i\ell'(x_i-(\ve_S)_i)\\
&>-\frac{1}{\gamma}\sum_{j\sim i,j\in T}w(i,j)\ell'(x_i-x_j)-\frac{1}{\gamma}\sum_{j\sim i,j\in \bar{T}}w(i,j)\ell'(x_i)-\tilde{d}_i\ell'(x_i-(\ve_S)_i)-(d_i-\tilde{d}_i)\ell'(x_i)\\
&=-\frac{1}{\gamma}\sum_{j\sim i,j\in T}w(i,j)\ell'(x_i-x_j)-\tilde{d}_i\ell'(\tilde{x}_i-(\ve_S)_i)-k_id_i\ell'(v_i)-(1+\frac{1}{\gamma})(d_i-\tilde{d}_i)\ell'(x_i)\\
&=-\frac{1}{\gamma}\sum_{j\sim i,j\in T}w(i,j)\ell'(x_i-x_j)- \\
& \qquad \qquad \frac{1}{\gamma}\sum_{j\sim i,j\in T}w(i,j)\ell'(\tilde{x}_i-\tilde{x}_j)-k_id_i\ell'(v_i)-(1+\frac{1}{\gamma})(d_i-\tilde{d}_i)\ell'(x_i).
\end{align*}
\end{fullwidth}
By summing the inequality above over all nodes in $T$, the first and the second terms cancel out, so it yields:
\[\sum_{i\in T}k_id_i\ell'(v_i)>-\frac{2(1+\gamma)}{\gamma}\phi(T)\text{vol}(S).\]
For nodes $i\in T\backslash S_c$, $x_i<c\tilde{x}_i$, which means $v_i<(c-1)\tilde{x}_i$. And $\ell'(v_i)=-(-v_i)^{q-1}<-(1-c)^{q-1}\frac{0.5\text{vol}_T(S)}{\text{vol}_T(T)}\leq-(1-c)^{q-1}\frac{0.5\text{vol}(S)}{\text{vol}(T)}$. (Here we use the fact that $\text{vol}_T(T) \leq \text{vol}(T)$ and $S\cap \partial T=\emptyset$). From the proof of lemma~\ref{small-gamma}, we know that $S$ will be included in $S_c$. When $i\notin S$,
\[k_i=\left(-\frac{\tilde{x}_i}{v_i}+1\right)^{q-1}-\left(-\frac{\tilde{x}_i}{v_i}\right)^{q-1}>\frac{(2-c)^{q-1}-1}{(1-c)^{q-1}}.\]
Thus, we have
\[\text{vol}(T\backslash S_c)=O\left(\frac{\phi(T)}{\gamma}\right)\text{vol}(T).\]
\end{proof}

\begin{lemma}
Under the same assumptions as lemma~\ref{rc-pr}, among sweep cut sets $S_c\in\{S_c|\frac{1}{4}\leq c\leq \frac{1}{2}\}$, there exsits one $R$ such that $\phi(R)=O\left(\frac{\phi(T)^{\frac{1}{q}}}{\text{Gap}^{\frac{q-1}{2}}}\right)$.
\end{lemma}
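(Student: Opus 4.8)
The plan is to run a generalized ($q$-norm) Cheeger sweep-cut argument over the restricted window $c\in[\tfrac14,\tfrac12]$ and then sharpen the resulting generic bound using the internal connectivity of $T$. First I would pin down the denominator of the conductance. By Lemma~\ref{rc-pr}, for every $c\in[\tfrac14,\tfrac12]$ we have $\text{vol}(T\setminus S_c)=O(\phi(T)/\gamma)\text{vol}(T)$ and $\text{vol}(S_c\setminus T)=O(\phi(T)/\gamma)\text{vol}(T)$ (the $c^{q-1}$ factor is $\Theta(1)$ on this window), so each $S_c$ differs from $T$ only in negligible volume; combined with $\text{vol}(T)\le\tfrac12\text{vol}(V)$ and the small leakage this gives $\text{vol}(S_c)=\Theta(\text{vol}(T))$ and $\text{vol}(S_c)\le\text{vol}(\bar S_c)$. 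Hence $\phi(S_c)=\text{cut}(S_c)/\Theta(\text{vol}(T))$, and the whole problem reduces to exhibiting one $c$ in the window with small cut.

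Second, I would bound the cut by a coarea/averaging argument. Writing $M_0=(0.5\text{vol}(S)/\text{vol}(T))^{1/(q-1)}$ so that $S_c=\{i:x_i\ge cM_0\}$, the coarea identity gives $\int_{M_0/4}^{M_0/2}\text{cut}(\{x\ge t\})\,dt\le\sum_{i\sim j}w_{ij}|x_i-x_j|$, where the sum may be restricted to edges incident to the active level set $\{x_i\ge M_0/4\}$, whose volume is $O(\text{vol}(T))$ by Lemma~\ref{rc-pr}. Averaging yields a threshold $t^\star$, hence a $c^\star\in[\tfrac14,\tfrac12]$, with $\text{cut}(S_{c^\star})\le(4/M_0)\sum w_{ij}|x_i-x_j|$. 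I would then pass from the $1$-norm to the controllable $q$-energy via H\"older, $\sum w_{ij}|x_i-x_j|\le(\sum w_{ij}|x_i-x_j|^q)^{1/q}\text{vol}(T)^{(q-1)/q}$, and bound the $q$-energy using optimality of $\vx$ together with Assumptions~\ref{leaking} and~\ref{mixing-well}: it splits into an internal part, which the subgraph objective forces to be $O(\gamma\text{vol}(T))$, and a boundary part $\sum_{i\in\partial T}(d_i-\tilde d_i)x_i^q$, which Assumption~\ref{leaking} plus the flatness $x_i\approx M_0$ from Assumption~\ref{mixing-well} forces to be $O(\phi(T)M_0\text{vol}(S))$. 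Carried through, this generic sweep already yields $\phi(R)=O(\phi(T)^{1/q})$, matching the classical $\sqrt{\phi(T)}$ bound at $q=2$.

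The hard part will be recovering the extra $\text{Gap}^{-(q-1)/2}$ factor, which is genuinely invisible to the crude coarea estimate: the boundary edges are a true lower bound on the $q$-energy, so no H\"older step can push the estimate below $\phi(T)^{1/q}$. This improvement must instead come from the well-connectedness of $T$, exactly as in the $q=2$ analysis of~\cite{zhu2013local}. With the tuned choice $\gamma=\gamma_2^{q-1}=\Theta((\phi(T)\cdot\text{Gap})^{q-1})$ — that is, $\gamma$ of order the inverse mixing time $\Phi^2/\log\text{vol}(T)$ of the subgraph — the solution is nearly constant inside $T$ (Assumption~\ref{mixing-well}, via Lemma~\ref{small-gamma}), so a sweep cut in $[\tfrac14,\tfrac12]$ of large conductance would have to carve a sparse cut \emph{inside} $T$, contradicting the internal expansion $\min_{A\subset T}\phi_T(A)\ge\Phi$. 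I would formalize this by generalizing the Lov\'asz--Simonovits curve argument of~\cite{zhu2013local} to the $q$-norm profile $x_i^{q-1}$: assuming every $S_c$ in the window had conductance exceeding the claimed bound, the induced decay of the sweep profile would contradict the mixing lower bound $\min(\tilde{\vx})\ge M$. Making this contradiction quantitative for general $1<q<2$ — in particular tracking how the $q$-th power interacts with the concavity underlying the Lov\'asz--Simonovits estimate — is the main technical obstacle; everything else is bookkeeping with the constants from Lemma~\ref{rc-pr} and the definition of $\text{Gap}$.
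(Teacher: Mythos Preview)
Your high-level diagnosis is right: the coarea-plus-H\"older route caps out at $O(\phi(T)^{1/q})$ and cannot see the $\text{Gap}$ factor, and the real argument is a Lov\'asz--Simonovits estimate on the profile $d_i x_i^{q-1}$. What is missing from your plan is the concrete device that makes the \emph{nonlinear} KKT condition compatible with an LS-type concavity inequality. The paper introduces, for each ordered adjacent pair, a coefficient $k(i,j)$ defined by $\ell'(x_i-x_j)=\ell'(x_i)-k(i,j)\,\ell'(x_j)$; oddness of $\ell'$ then forces $k(i,j)\ell'(x_j)+k(j,i)\ell'(x_i)=\ell'(x_i)+\ell'(x_j)$ for edges with both endpoints in $S_c$, and $k(i,j)<1$ for edges crossing out of $S_c$. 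With this linearization the KKT condition summed over $S_c$ becomes a one-step averaging inequality on the sorted vector $d_i x_i^{q-1}$, exactly what the LS curve $y$ consumes: if every cut in the window has $\text{cut}(S_c,\bar S_c)\ge E_0$, one obtains $2y[\text{vol}(S_c)]\le y[\text{vol}(S_c)-E_0]+y[\text{vol}(S_c)+E_0]+O(\phi(T)\text{vol}(S))$. This $k(i,j)$ trick is the step you flag as ``the main technical obstacle,'' and without it the $q$-norm interaction with LS concavity does not go through.

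The second correction is structural. You do not contradict Assumption~\ref{mixing-well} directly; that assumption was already spent in Lemma~\ref{rc-pr} to produce the symmetric-difference volume bounds, and it is \emph{those} bounds that close the argument. Concretely, one telescopes the concavity inequality from $S_{1/4}$ out to $S_{1/8}$, controlling the number of $E_0$-steps by $\text{vol}(S_{1/8}\setminus S_{1/4})/E_0\le\bigl(\text{vol}(S_{1/8}\setminus T)+\text{vol}(T\setminus S_{1/4})\bigr)/E_0=O(\phi(T)/\gamma)\,\text{vol}(T)/E_0$, and compares against the $\Theta(1)$ drop in the $x^{q-1}$-level across that window. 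Solving the resulting inequality gives $E_0=O(\phi(T)/\sqrt{\gamma})\,\text{vol}(T)$, hence $\phi(R)=O(\phi(T)/\sqrt{\gamma})$. The $\text{Gap}$ factor enters only at the very end through the tuned choice $\gamma=\gamma_2^{q-1}=\Theta\bigl((\phi(T)\cdot\text{Gap})^{q-1}\bigr)$, yielding $\phi(R)=O\bigl(\phi(T)^{(3-q)/2}/\text{Gap}^{(q-1)/2}\bigr)$, which is then relaxed to the stated $O\bigl(\phi(T)^{1/q}/\text{Gap}^{(q-1)/2}\bigr)$ via $(3-q)/2>1/q$ for $1<q<2$. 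Your coarea step plays no role anywhere in this chain and can be dropped.
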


\begin{proof}
Our proof is mostly a generalization to the proof of Lemma 4.1 in \cite{zhu2013local}. If $\text{cut}(S_c,\bar{S}_c)\geq E_0$ holds for all $\frac{1}{4}\leq c\leq \frac{1}{2}$, then we just need to upper bound $E_0$.

We introduce values $k(i,j)$ that allow us to break $\ell'(x_i-x_j)$ into $\ell'(x_i)-k(i,j)\ell'(x_j)$. The specific choice $k(i,j)>0$ is uniquely determined by $x_i$ and $x_j$. For any node $i\in S_c$, by KKT condition,
\begin{align*}
0&=\frac{1}{\gamma}\sum_{j\sim i}w(i,j)\ell'(x_i-x_j)+d_i\ell'(x_i-(\ve_S)_i)\\
&=\frac{1}{\gamma}\sum_{j\sim i}(w(i,j)\ell'(x_i)-w(i,j)k(i,j)\ell'(x_j))+d_i\ell'(x_i)-k_id_i(\ve_S)_i.
\end{align*}
Define $\mK$ to be the matrix induced by $k(i,j)$. Rearranging the equation above yields:
\[(\mK\circ\mA\vx^{q-1})_i=(1+\gamma)d_ix_i^{q-1}-\gamma k_id_i(\ve_S)_i.\]
Also for two adjacent nodes $i,j$ that are both in $S_c$, we have
\[k(i,j)\ell'(x_j)+k(j,i)\ell'(x_i)=\ell'(x_i)+\ell'(x_j).\]
This is because $\ell'(x_i-x_j)+\ell'(x_j-x_i)=0$. And for two adjacent nodes $i,j$ such that $i\in S_c$ and $j\notin S_c$, $x_i>x_j$, $k(i,j)<1$. Define a Lovasz-Simonovits curve $y$ over $d_ix_i^{q-1}$, then we have
\begin{align*}
& \sum_{i\in S_c}(\mK\circ\mA\vx^{q-1})_i+\sum_{i\in S_c}d_ix_i^{q-1} \\
&\qquad = 2\sum_{i\in S_c}\sum_{j\sim i, j\in S_c}w(i,j)x_j^{q-1}+\sum_{i\in S_c}\sum_{j\sim i, j\notin S_c}k(i,j)w(i,j)x_j^{q-1}\\
&\qquad < 2\sum_{i\in S_c}\sum_{j\sim i, j\in S_c}w(i,j)x_j^{q-1}+\sum_{i\in S_c}\sum_{j\sim i, j\notin S_c}w(i,j)x_j^{q-1}\\
&\qquad \leq y[\text{vol}(S)-\text{cut}(S_c,\bar{S}_c)]+y[\text{vol}(S)+\text{cut}(S_c,\bar{S}_c)]\\
&\qquad \leq y[\text{vol}(S)-E_0]+y[\text{vol}(S)+E_0]
\end{align*}
here the second inequality is due to the definition of Lovasz-Simonovits curve and the third inequality is due to $y(x)$ is concave. This means
\begin{align*}
y[\text{vol}(S)-E_0]+y[\text{vol}(S)+E_0]
&\geq \sum_{i\in S_c}(\mK\circ\mA\vx^{q-1})_i+\sum_{i\in S_c}d_ix_i^{q-1}\\
&\geq (2+\gamma)\sum_{i\in S_c}d_ix_i^{q-1}-\gamma \sum_{i\in S_c}k_id_i(\ve_S)_i\\
&\geq (2+\gamma)\sum_{i\in S_c}d_ix_i^{q-1}-\gamma \sum_{i\in S} k_id_i\\
&= (2+\gamma)\sum_{i\in S_c}d_ix_i^{q-1}-\gamma \sum_{i\in V}d_ix_i^{q-1}\\
&= 2\sum_{i\in S_c}d_ix_i^{q-1}-\gamma\sum_{i\notin S_c} d_ix_i^{q-1}\\
&\geq 2y[\text{vol}(S_c)]-O(\phi(T)\text{vol}(S)).
\end{align*}
Thus,
\[y[\text{vol}(S_c)]-y[\text{vol}(S_c-E_0)]\leq y[\text{vol}(S_c+E_0)]-y[\text{vol}(S_c)]+O(\phi(T)\text{vol}(S)).\]
Similarly to the proof of Lemma 4.1 in \cite{zhu2013local}, we can then derive
\begin{align*}
\frac{0.5E_0\text{vol}(S)}{4^{q-1}\text{vol}(T)}
&\leq y[\text{vol}(S_{1/4})]-y[\text{vol}(S_{1/4})-E_0]\\
&\leq \frac{\text{vol}(S_{1/8}\backslash S_{1/4})}{E_0}O(\phi(T)\text{vol}(S))+y[\text{vol}(S_{1/8})]-y[\text{vol}(S_{1/8})-E_0]\\
&\leq \frac{\text{vol}(S_{1/8}\backslash T)+\text{vol}(T\backslash S_{1/4})}{E_0}O(\phi(T)\text{vol}(S))+\frac{0.5E_0\text{vol}(S)}{8^{q-1}\text{vol}(T)}\\
&\leq \frac{O(\phi(T)/\gamma)\text{vol}(T)}{E_0}O(\phi(T)\text{vol}(S))+\frac{0.5E_0\text{vol}(S)}{8^{q-1}\text{vol}(T)}.
\end{align*}
\[\text{Hence}, E_0\leq O\left(\frac{\phi(T)}{\sqrt{\gamma}}\right)\text{vol}(T).\]
And from lemma~\ref{rc-pr}, we know $\text{vol}(S_c)=1\pm O\left(\frac{\phi(T)}{\gamma}\right)\text{vol(T)}$, since we choose $\gamma=(\gamma_2)^{q-1}$ and $\gamma_2=\Theta(\phi(T)\cdot\text{Gap})$, $\text{vol}(S_c)=\Theta(\text{vol}(T))$. So there exists $R$ such that
\[\phi(R)=O\left(\frac{\phi(T)}{\sqrt{\gamma}}\right)=O\left(\frac{\phi(T)^{\frac{3-q}{2}}}{\text{Gap}^{(q-1)/2}}\right)\leq O\left(\frac{\phi(T)^{\frac{1}{q}}}{\text{Gap}^{(q-1)/2}}\right).\]
Here the last inequality uses the fact that $(3-q)/2>1/q$ when $1<q<2$.
\end{proof}

}

\apponly{
By combing all these lemmas, we can get the following theorem.
}
\begin{theorem}
Assume the subgraph induced by target cluster $T$ has diameter $O(\log(|T|))$, when we uniformly randomly sample points from $T$ as seed sets, the expected largest distance of any node in $\bar{S}$ to $S$ is $O\left(\frac{\log(|T|)}{|S|}\right)$. Assume $\frac{\text{vol}_T(S)}{\text{vol}_T(T)}\leq 2\bigl( (\frac{\gamma_2}{1+\gamma_2})/|T|^{\frac{1}{|S|}\log\left(1+l^{1/(q-1)}\right)}\bigr)^{q-1}$ where $l\leq (1+\gamma)\text{max}(\tilde{d}_i)$, then we can set $\gamma=\gamma_2^{q-1}$ to satisfy assumption~\ref{mixing-well} for $1<q<2$. Then a sweep cut over $\vx$ will find a cluster $R$ where $\phi(R)=O\bigl({\phi(T)^{\frac{1}{q}}}/{\text{Gap}^{\frac{q-1}{2}}}\bigr)$.
\end{theorem}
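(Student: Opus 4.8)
The plan is to assemble the final theorem by chaining the three preceding lemmas, so that the proof is essentially a verification that their hypotheses line up. The three stages are: (i) confirm that the chosen $\gamma$ makes the ``mixing-well'' guarantee hold; (ii) use the KKT conditions together with the leaking assumption to bound how much mass of a sweep cut set $S_c$ leaks out of $T$ and how much of $T$ is missed; and (iii) run a Lov\'asz--Simonovits potential argument over the sweep cuts to extract a single set $R$ of small conductance.

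First I would invoke Lemma~\ref{small-gamma}. Its hypotheses --- diameter $O(\log|T|)$, the expected seed-distance bound $O(\log(|T|)/|S|)$, and the stated upper bound on $\text{vol}_T(S)/\text{vol}_T(T)$ --- are exactly the standing assumptions of the theorem, so setting $\gamma = \gamma_2^{q-1}$ yields $\text{min}(\tilde{\vx}) \ge M$, i.e. Assumption~\ref{mixing-well}. With both Assumption~\ref{leaking} and Assumption~\ref{mixing-well} now in force, I would apply Lemma~\ref{rc-pr} to the sweep cut family $S_c = \{i : x_i \ge c(0.5\,\text{vol}(S))^{1/(q-1)}/(\text{vol}(T))^{1/(q-1)}\}$. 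This gives, for $0 < c \le 1/2$, the containments $\text{vol}(S_c \backslash T) = O(\phi(T)/(\gamma c^{q-1}))\,\text{vol}(T)$ and $\text{vol}(T \backslash S_c) = O(\phi(T)/\gamma)\,\text{vol}(T)$; the first follows from summing the KKT stationarity conditions over $\bar T$, where the internal edge terms telescope and the boundary terms are controlled by the leaking assumption, and the second from the analogous sum over $T$ comparing $\vx$ with $\tilde{\vx}$. Because $\gamma = \gamma_2^{q-1} = \Theta((\phi(T)\cdot\text{Gap})^{q-1})$, these error volumes are a vanishing fraction of $\text{vol}(T)$, so every $S_c$ with $c \in [1/4, 1/2]$ satisfies $\text{vol}(S_c) = \Theta(\text{vol}(T))$.

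The main work --- and the step I expect to be the real obstacle --- is the Lov\'asz--Simonovits argument giving the conductance bound. Here the nonlinearity $\ell'(x) = |x|^{q-1}\text{sgn}(x)$ must be linearized by introducing split coefficients $k(i,j) > 0$ that rewrite $\ell'(x_i - x_j)$ as $\ell'(x_i) - k(i,j)\ell'(x_j)$, and the potential curve $y$ must be defined over the transformed mass $d_i x_i^{q-1}$ rather than the degree-weighted probabilities of the $q=2$ case. I would suppose $\text{cut}(S_c, \bar S_c) \ge E_0$ for all $c \in [1/4,1/2]$ and, using the KKT identity $(\mK \circ \mA \vx^{q-1})_i = (1+\gamma)d_i x_i^{q-1} - \gamma k_i d_i (\ve_S)_i$ together with the concavity of $y$ and the reciprocal relation $k(i,j)\ell'(x_j) + k(j,i)\ell'(x_i) = \ell'(x_i)+\ell'(x_j)$, derive the curve inequality $y[\text{vol}(S_c)] - y[\text{vol}(S_c) - E_0] \le y[\text{vol}(S_c)+E_0] - y[\text{vol}(S_c)] + O(\phi(T)\text{vol}(S))$. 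Telescoping this between the $c = 1/4$ and $c = 1/8$ sweep cuts and feeding in the leakage bounds from Lemma~\ref{rc-pr} collapses to $E_0 = O(\phi(T)/\sqrt{\gamma})\,\text{vol}(T)$, so some $R$ in the family attains $\phi(R) = O(\phi(T)/\sqrt{\gamma})$. Substituting $\gamma = \gamma_2^{q-1}$ with $\gamma_2 = \Theta(\phi(T)\cdot\text{Gap})$ turns this into $O(\phi(T)^{(3-q)/2}/\text{Gap}^{(q-1)/2})$, and the stated bound $O(\phi(T)^{1/q}/\text{Gap}^{(q-1)/2})$ follows from the elementary inequality $(3-q)/2 > 1/q$ for $1 < q < 2$. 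The delicate points I would watch are ensuring the split coefficients behave correctly on boundary edges, where $x_i > x_j$ forces $k(i,j) < 1$, and that the concavity and telescoping steps survive the $q$-th-power reweighting of the curve.
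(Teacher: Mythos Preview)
Your proposal is correct and follows the paper's proof essentially line for line: the paper states only that the theorem follows ``by combining all these lemmas,'' and the three stages you outline---Lemma~\ref{small-gamma} for the mixing guarantee, Lemma~\ref{rc-pr} for the volume error bounds, and the Lov\'asz--Simonovits lemma with the $k(i,j)$ splitting of $\ell'(x_i-x_j)$, the curve $y$ built on $d_i x_i^{q-1}$, and the final substitution $\gamma=\gamma_2^{q-1}$ together with $(3-q)/2>1/q$---are exactly those lemmas in the same order with the same mechanics. The delicate points you flag (behavior of $k(i,j)$ on boundary edges and the telescoping between $S_{1/4}$ and $S_{1/8}$) are precisely the ones the paper handles.
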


\section{Experiments}
\label{sec:experiments}
We perform three experiments that are designed to compare our method to others designed for similar problems. We call ours SLQ (strongly local $q$-norm) for $\ell(x) = (1/q) |x|^q$ with parameters $\gamma$ for localization and $\kappa$ for the sparsity. We call it SLQ$\delta$ with the $q$-Huber loss. 
Existing solvers are (i) ACL~\cite{andersen2006local}, that computes a personalized PageRank vector approximately adapted with the same parameters~\cite{gleich2014anti}; (ii) CRD~\cite{wang2017capacity}, which is hybrid of flow and spectral ideas; (iii) FS is FlowSeed~\cite{Veldt-2019-flow}, a 1-norm based method; (iv) HK is the push-based heat kernel~\cite{Kloster-2014-hkrelax}; (v) NLD is a recent nonlinear diffusion~\cite{Ibrahim-2019-nonlinear}; (vi) GCN is a graph convolutional network~\cite{kipf2016semi}. Parameters are chosen based on defaults or with slight variations designed to enhance the performance within a reasonable running time. All experiments in this section are performed on a server with Intel Xeon Platinum 8168 CPU and 5.9T RAM. (Nothing remotely used the full capacity of the system and these were run concurrently with other processes.) \mainonly{We provide a full Julia implementation of SLQ in the supplement.} We evaluate the routines in terms of their recovery performance for planted sets and clusters. The bands reflect randomizing seeds choices in the target cluster.

\subsection{Cluster recovery in a synthetic LFR model}
\label{sec:experiment-1}
\label{sec:experiment-lfr}

The first experiment uses the LFR benchmark~\cite{Lancichinetti-2008-lfr}. We vary the mixing parameter $\mu$ (where larger $\mu$ is more difficult) and provide $1\%$ of a cluster as a seed, then we check how much of the cluster we recover after a conductance-based sweep cut over the solutions from various methods. Here, we use the $F1$ score (harmonic mean of precision and recall) and conductance value (cut to volume ratio) of the sets to evaluate the methods. The results are in Figure~\ref{fig:lfr}.

\begin{fullwidthfigure}
\centering
\includegraphics[width=1.0\linewidth]{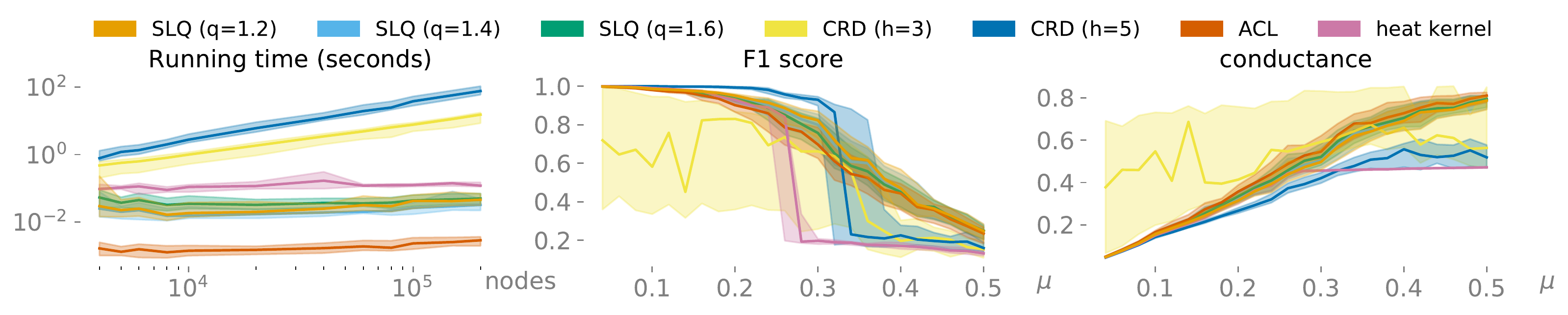}
\caption{The left figure shows the median running time for the methods as we scale the graph size keeping the cluster sizes roughly the same. As we vary cluster mixing $\mu$ for a graph with $10,000$ nodes, the middle figure shows the median F1 score (higher is better) along with the 20-80\% quantiles; the right figure shows the conductance values (lower is better). These results show SLQ is better than ACL and competitive with CRD while running much faster.}
\label{fig:running_time}
\label{fig:lfr}

\end{fullwidthfigure}

\paragraph{Reproduction details.}
	When creating the LFR graphs, we set the power law exponent for the degree distribution to be 2, power law exponent for the community size distribution to be 2, desired average degree to be 10, maximum degree to be 50, minimum size of community to be 200 and maximum size of community to be 500. We create 40 random graphs for each $\mu$. For SLQ, we set $\delta=0$, $\gamma=0.1$, $\rho=0.5$ and $\epsilon=10^{-8}$. For ACL, we set $\gamma=0.1$. For both SLQ and ACL, $\kappa$ is automatically chosen from $0.005$ and $0.002$ based on which will give a cluster with smaller conductance. For HK, we use four different pairs of $(\epsilon,t)$, which are $(0.0001,10)$, $(0.001,20)$, $(0.005,40)$ and $(0.01,80)$. And we return the one with the smallest conductance. For CRD, we use default parameters from "localgraphclustering" Python package except $h$, which is is the maximum flow that each edge can handle. We provide results of using $h=3$ and $h=5$. For methods that are using multiple choices of parameters, we report the total running time.

\subsection{Cluster recovery in Facebook school networks}
\label{sec:experiment-2}
\label{sec:experiment-facebook}

The second experiment uses the class-year metadata on Facebook~\cite{Traud-2012-facebook}, which is known to have good conductance structure for at least class year 2009~\cite{Veldt-2019-resolution} that should be identifiable with many methods. Other class years are harder to detect with conductance. Here, we use $F1$ values alone.  We use $1\%$ of the true set as seed. (For GCN, we also use the same number of negative nodes.) (In Section~\ref{sec:seeds} we vary the number of seeds.) The results are in Table~\ref{tab:facebook},\ref{table:facebook-runtime} and show SLQ is as good, or better than, CRD and much faster. 

\paragraph{Reproduction details.}
	In this experiment, for SLQ, we set $q=1.2$, $\gamma=0.05$, $\kappa=0.005$, $\epsilon=10^{-8}$, $\rho=0.5$ and $\delta=0$. For SLQ$\delta$, the parameters are the same as SLQ except we set $\delta=10^{-5}$. For ACL, we set $\gamma=0.05$ and $\kappa=0.005$. For CRD and HK, we use the same parameters as the first experiment. For FS, we set the locality parameter to be $0.5$. For NLD, we set the power to be $1.5$, step size to be $0.002$ and the number of iterations to be $5000$. For GCN, we use 5 hidden layers and negative log likelihood loss. We set dropout ratio to be 0.5, learning rate to be 0.01, weight decay to be 0.0005 and the number of iterations to be 200. The feature vector is the 6 different metadata info as described in~\cite{Traud-2012-facebook}. For each true set, we randomly choose $1\%$ of the true set as seed $50$ times.

\begin{fullwidthtable}[t]
\caption{Cluster recovery results from a set of 7 Facebook networks~\cite{Traud-2012-facebook}. Students with a specific graduation class year are used as target cluster. We use a random set of 1\% of the nodes identified with that class year as seeds. The class year 2009 is the set of incoming students, which form better conductance groups because the students had not yet mixed with the other classes. Class year 2008 is already mixed and so the methods do not do as well there. The values are median $F1$ and the violin plots show the distribution over choices of the seeds.} 

\label{tab:facebook}

\noindent \begin{tabularx}{\linewidth}{@{}l@{\,\,}l@{\,\,}*{7}{@{}l@{}X@{}}@{}}
\toprule
Year & Alg & 
\mbox{\rlap{UCLA}} & &
\mbox{\rlap{MIT}} & &
\mbox{\rlap{Duke}} & &
\mbox{\rlap{UPenn}} & &
\mbox{\rlap{Yale}} & &
\mbox{\rlap{Cornell}} & &
\mbox{\rlap{Stanford}} &  \\
& & \mbox{\rlap{\footnotesize F1 \& Med.}}  & &
\mbox{\rlap{\footnotesize F1 \& Med.}} & &
\mbox{\rlap{\footnotesize F1 \& Med.}} & &
\mbox{\rlap{\footnotesize F1 \& Med.}} & &
\mbox{\rlap{\footnotesize F1 \& Med.}} & &
\mbox{\rlap{\footnotesize F1 \& Med.}} & &
\mbox{\rlap{\footnotesize F1 \& Med.}} &    \\
\midrule
2009 & SLQ & \mbox{\rlap{\raisebox{-2pt}{\includegraphics[width=40pt,height=10pt]{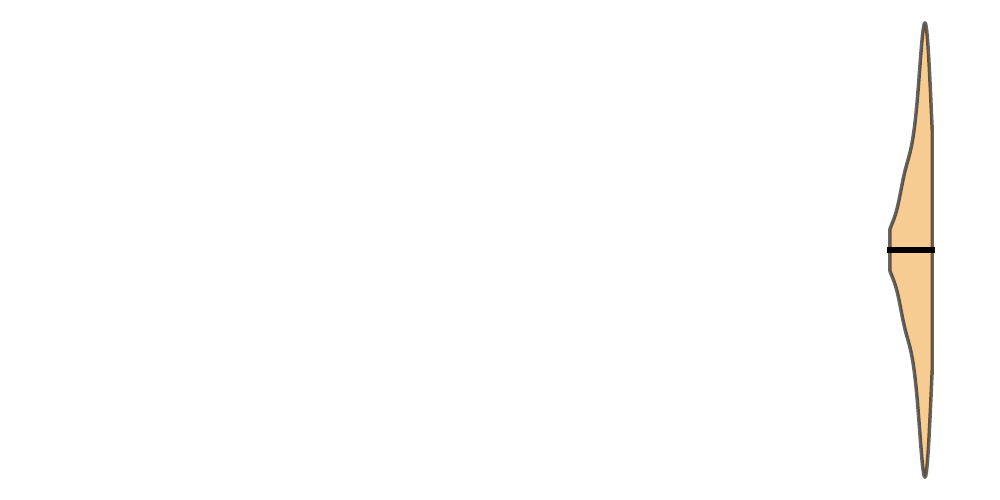}}}}\mbox{\rlap{\mbox{\hspace{27pt}0.9}}} & & \mbox{\rlap{\raisebox{-2pt}{\includegraphics[width=40pt,height=10pt]{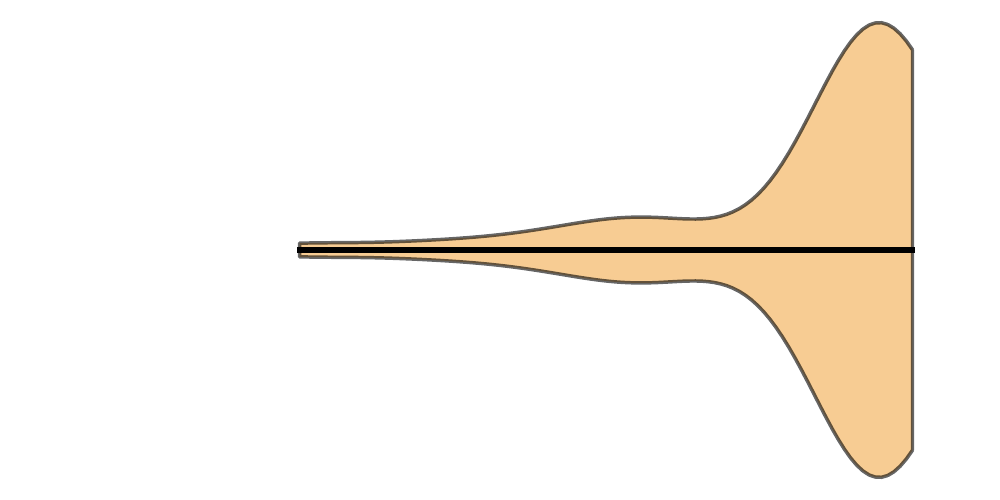}}}}\mbox{\rlap{\mbox{\hspace{27pt}0.9}}} & & \mbox{\rlap{\raisebox{-2pt}{\includegraphics[width=40pt,height=10pt]{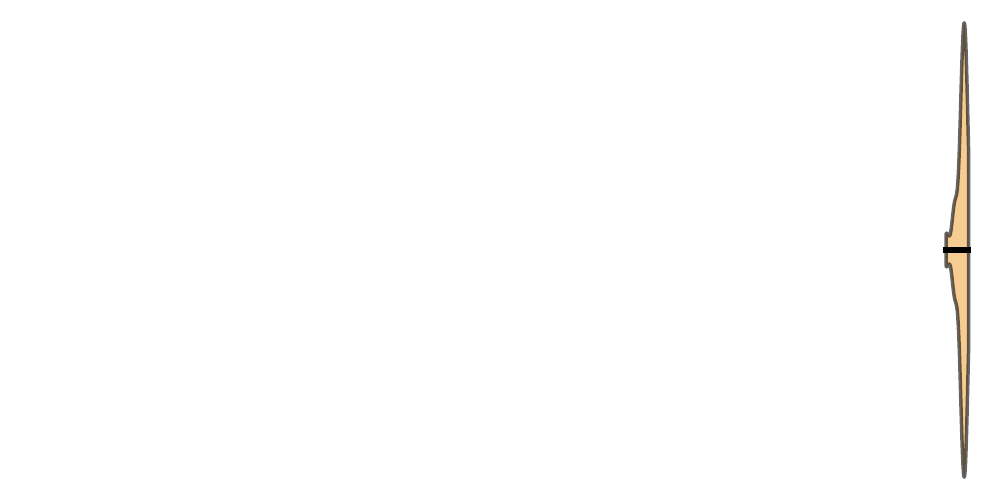}}}}\mbox{\rlap{\mbox{\hspace{30pt}1.0}}} & & \mbox{\rlap{\raisebox{-2pt}{\includegraphics[width=40pt,height=10pt]{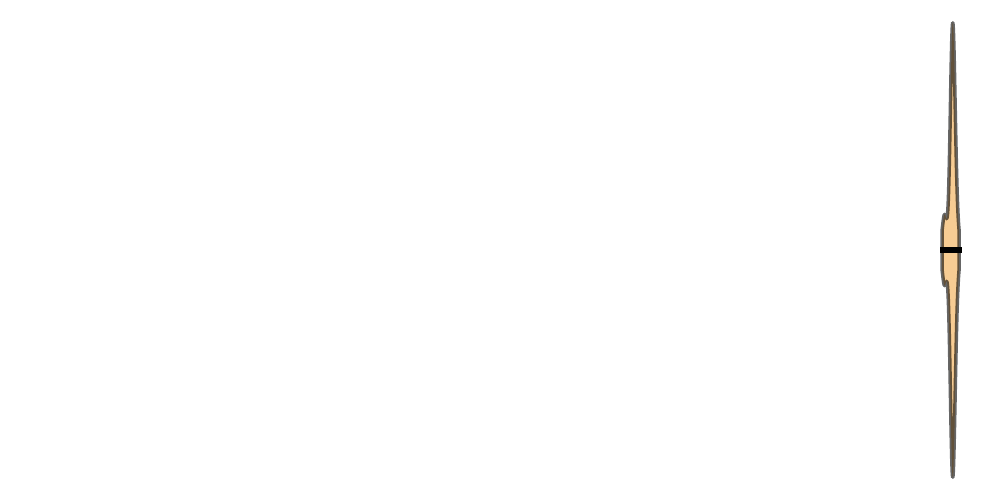}}}}\mbox{\rlap{\mbox{\hspace{30pt}1.0}}} & & \mbox{\rlap{\raisebox{-2pt}{\includegraphics[width=40pt,height=10pt]{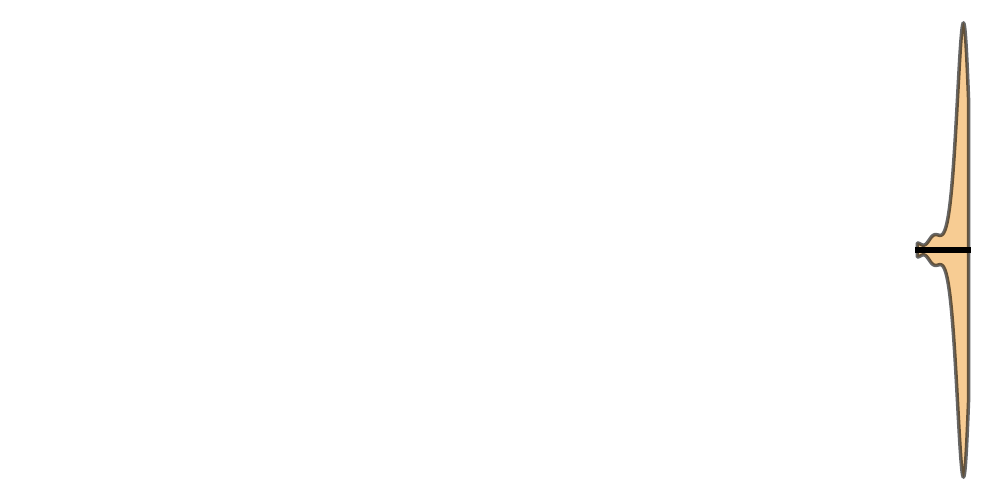}}}}\mbox{\rlap{\mbox{\hspace{30pt}1.0}}} & & \mbox{\rlap{\raisebox{-2pt}{\includegraphics[width=40pt,height=10pt]{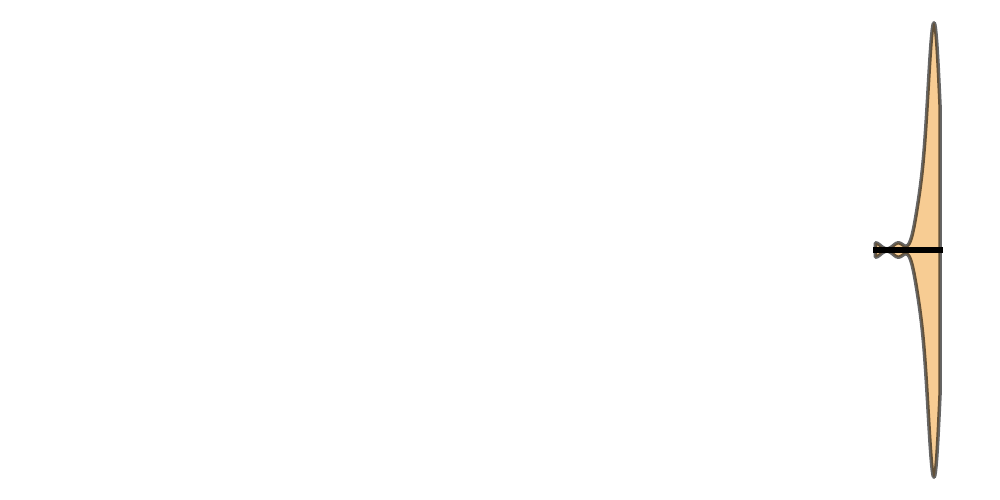}}}}\mbox{\rlap{\mbox{\hspace{27pt}0.9}}} & & \mbox{\rlap{\raisebox{-2pt}{\includegraphics[width=40pt,height=10pt]{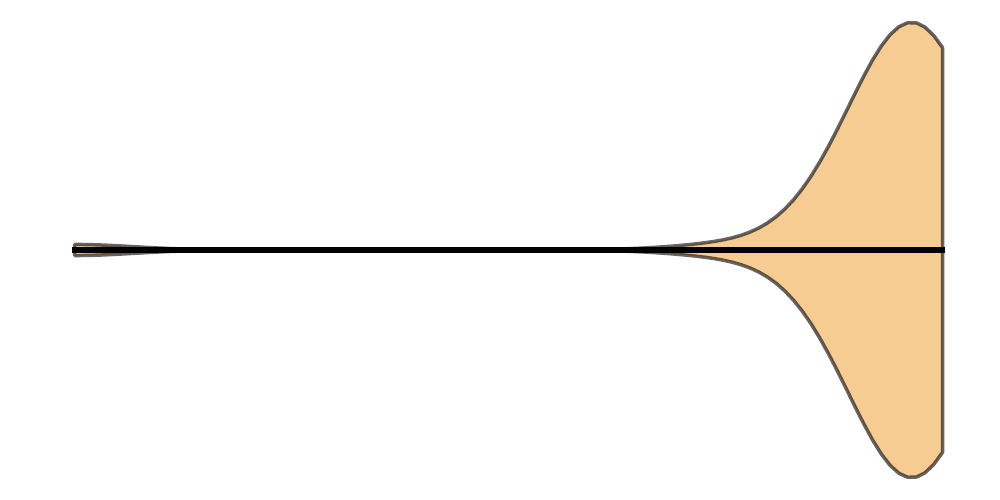}}}}\mbox{\rlap{\mbox{\hspace{27pt}0.9}}} & \\ 
     & SLQ$\delta$ & \mbox{\rlap{\raisebox{-2pt}{\includegraphics[width=40pt,height=10pt]{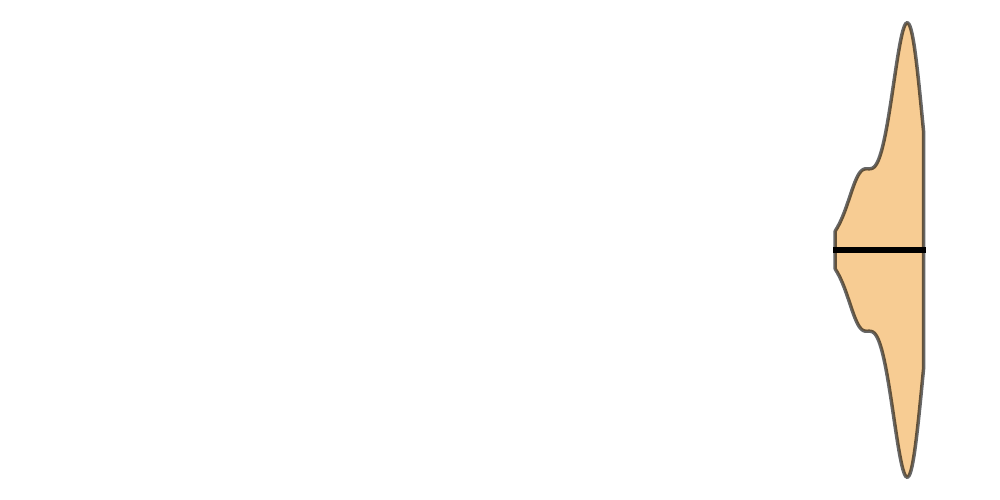}}}}\mbox{\rlap{\mbox{\hspace{27pt}0.9}}} & & \mbox{\rlap{\raisebox{-2pt}{\includegraphics[width=40pt,height=10pt]{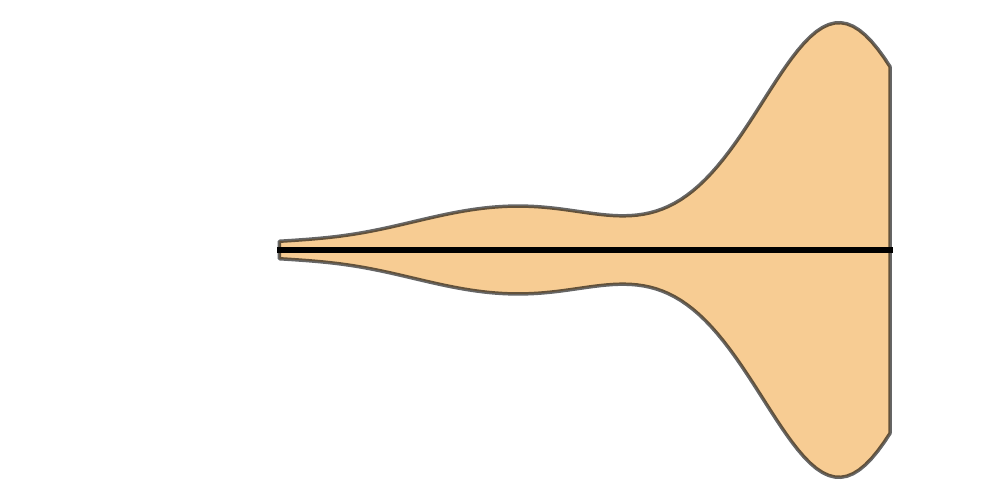}}}}\mbox{\rlap{\mbox{\hspace{24pt}0.8}}} & & \mbox{\rlap{\raisebox{-2pt}{\includegraphics[width=40pt,height=10pt]{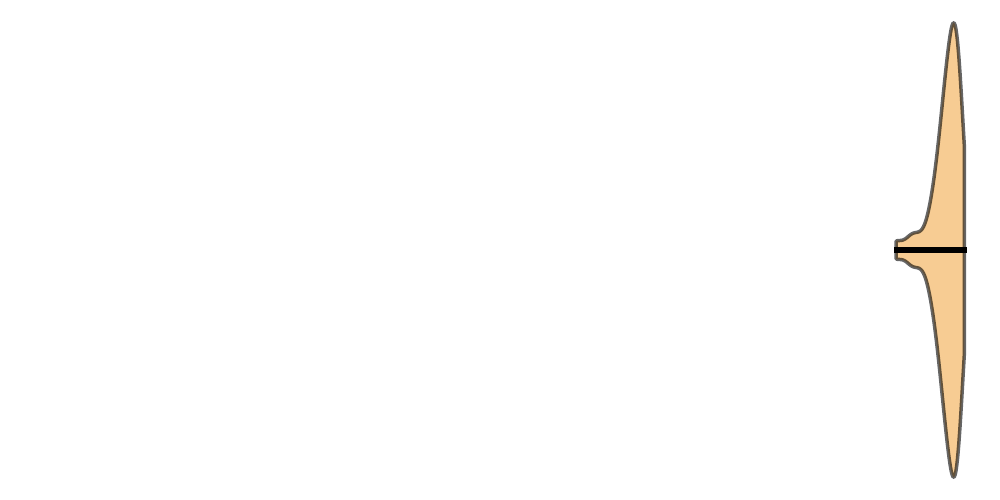}}}}\mbox{\rlap{\mbox{\hspace{30pt}1.0}}} & & \mbox{\rlap{\raisebox{-2pt}{\includegraphics[width=40pt,height=10pt]{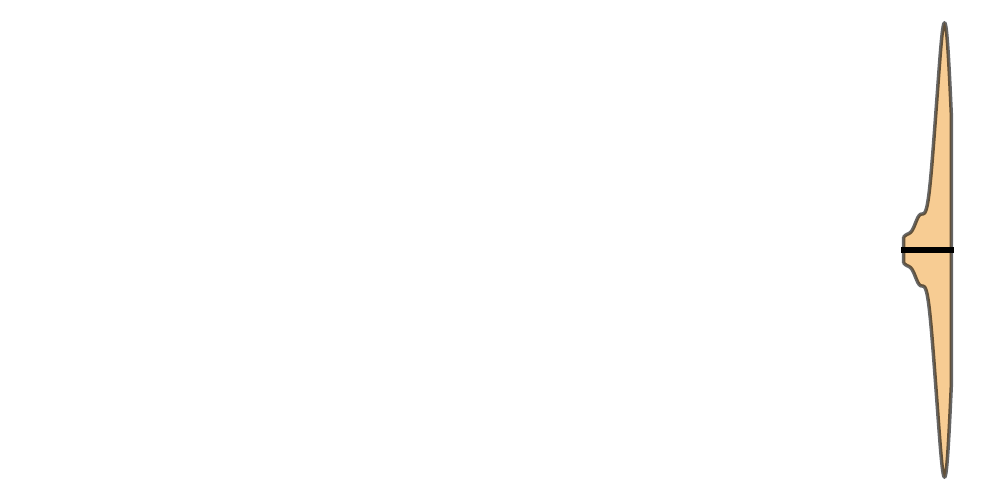}}}}\mbox{\rlap{\mbox{\hspace{27pt}0.9}}} & & \mbox{\rlap{\raisebox{-2pt}{\includegraphics[width=40pt,height=10pt]{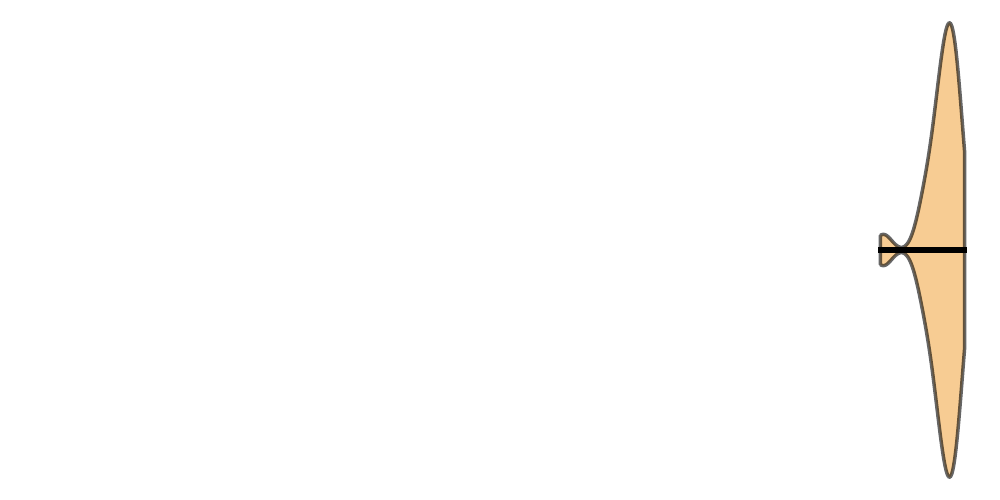}}}}\mbox{\rlap{\mbox{\hspace{27pt}0.9}}} & & \mbox{\rlap{\raisebox{-2pt}{\includegraphics[width=40pt,height=10pt]{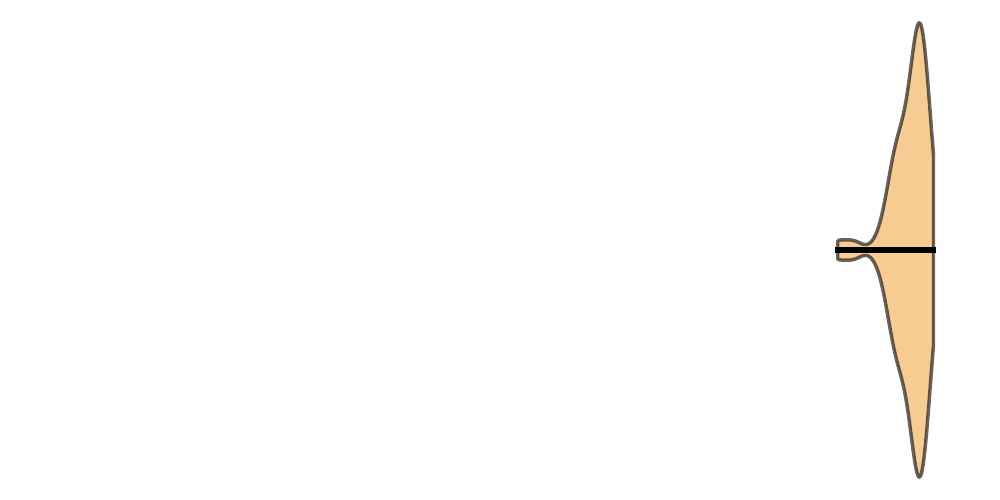}}}}\mbox{\rlap{\mbox{\hspace{27pt}0.9}}} & & \mbox{\rlap{\raisebox{-2pt}{\includegraphics[width=40pt,height=10pt]{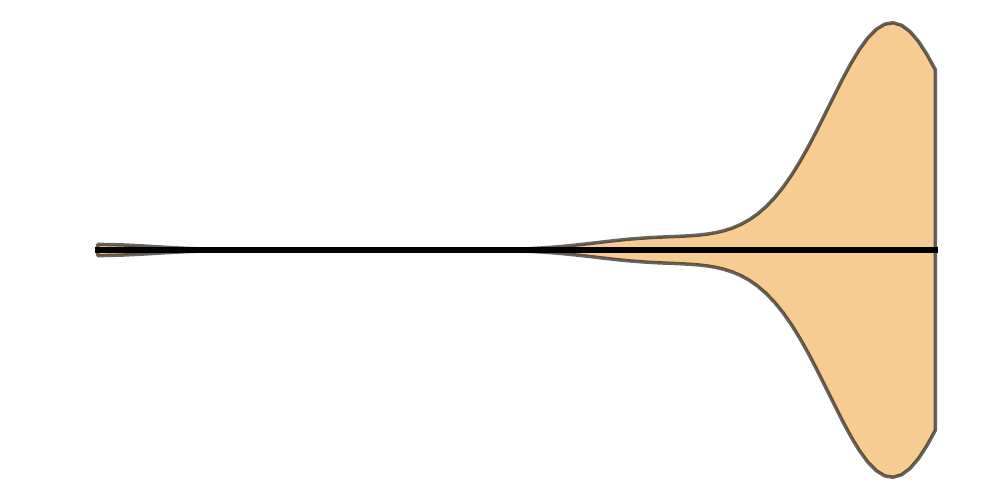}}}}\mbox{\rlap{\mbox{\hspace{27pt}0.9}}} & \\ 
     & CRD-3 & \mbox{\rlap{\raisebox{-2pt}{\includegraphics[width=40pt,height=10pt]{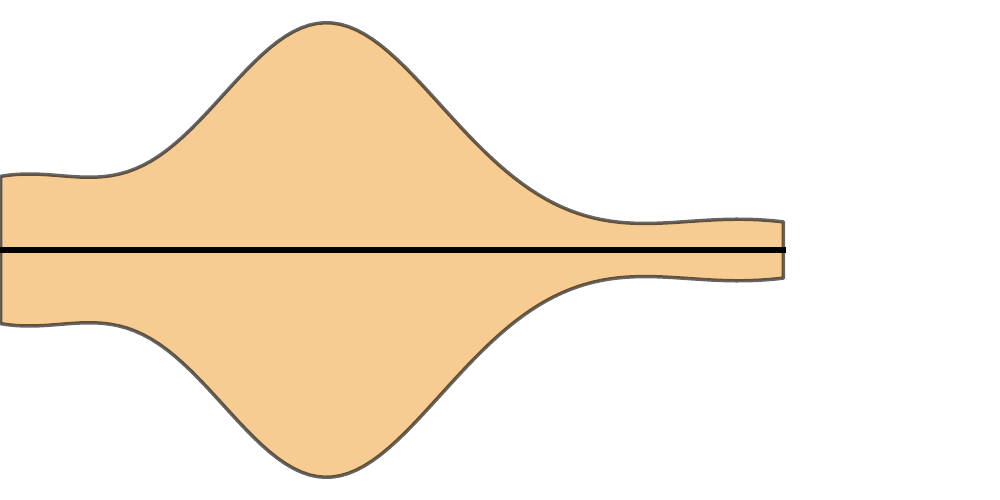}}}}\mbox{\rlap{\mbox{\hspace{9pt}0.3}}} & & \mbox{\rlap{\raisebox{-2pt}{\includegraphics[width=40pt,height=10pt]{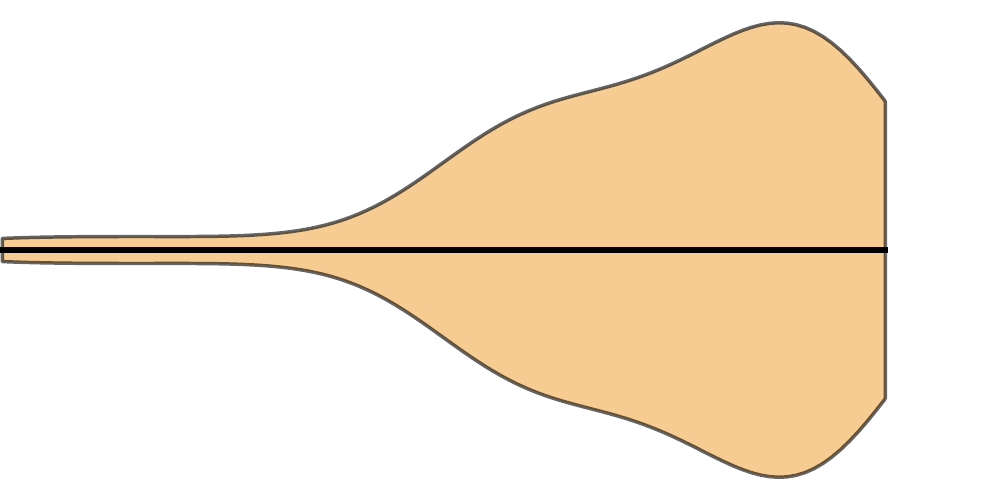}}}}\mbox{\rlap{\mbox{\hspace{21pt}0.7}}} & & \mbox{\rlap{\raisebox{-2pt}{\includegraphics[width=40pt,height=10pt]{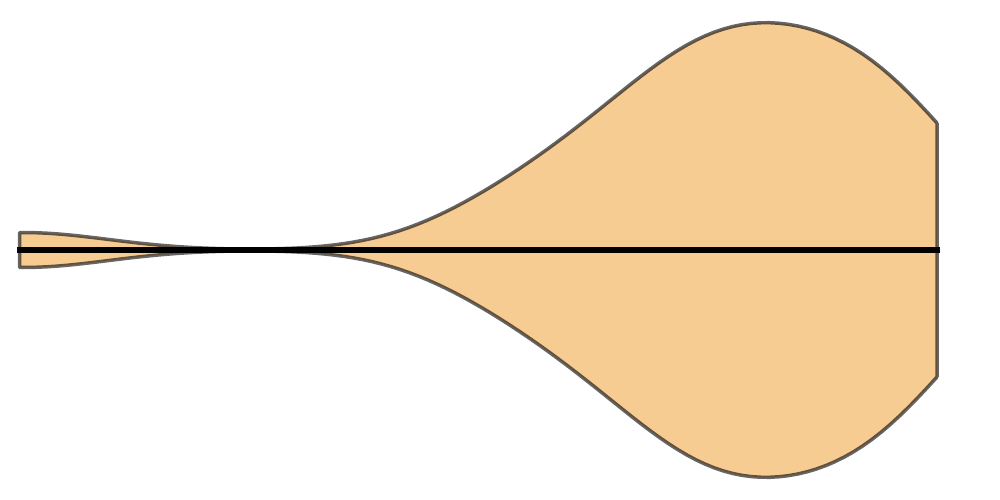}}}}\mbox{\rlap{\mbox{\hspace{21pt}0.7}}} & & \mbox{\rlap{\raisebox{-2pt}{\includegraphics[width=40pt,height=10pt]{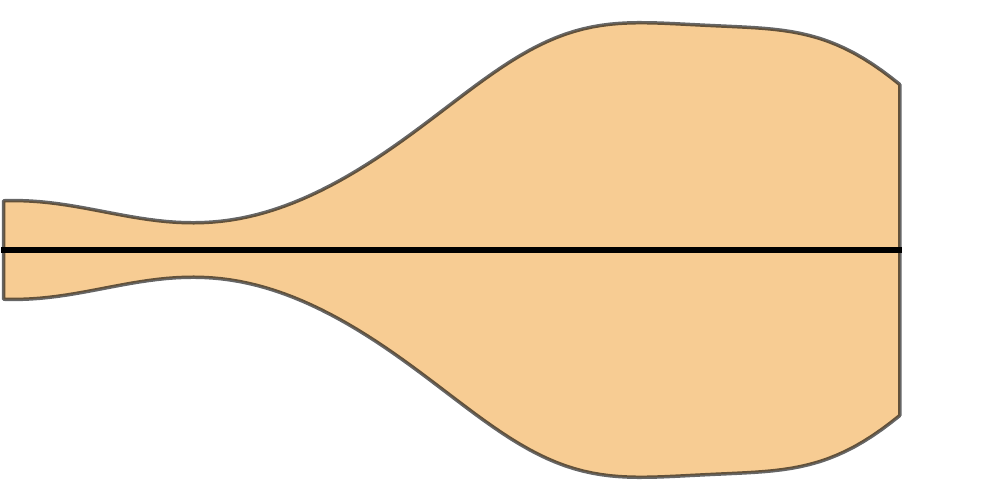}}}}\mbox{\rlap{\mbox{\hspace{18pt}0.6}}} & & \mbox{\rlap{\raisebox{-2pt}{\includegraphics[width=40pt,height=10pt]{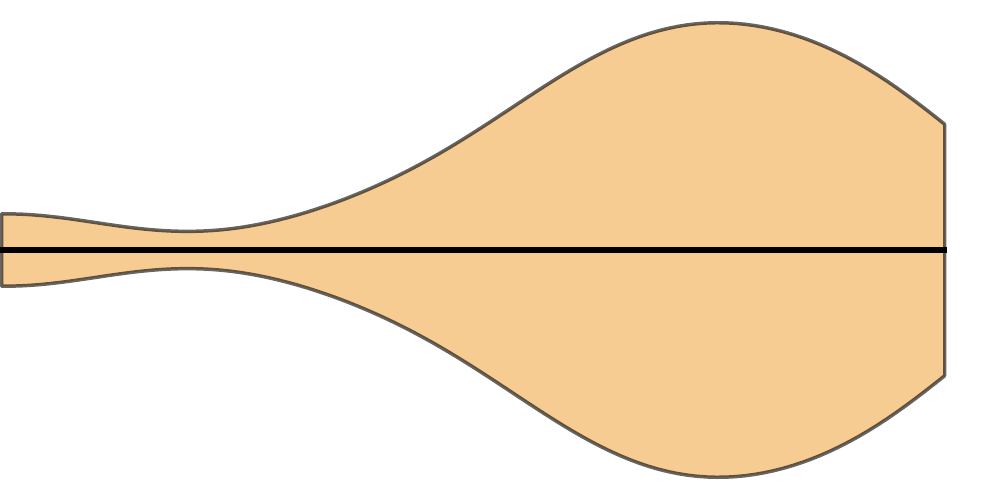}}}}\mbox{\rlap{\mbox{\hspace{21pt}0.7}}} & & \mbox{\rlap{\raisebox{-2pt}{\includegraphics[width=40pt,height=10pt]{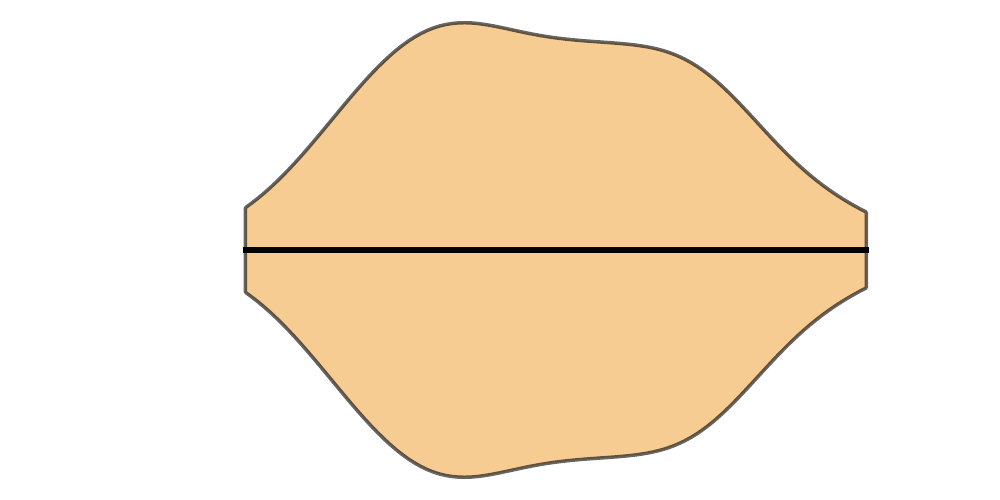}}}}\mbox{\rlap{\mbox{\hspace{15pt}0.5}}} & & \mbox{\rlap{\raisebox{-2pt}{\includegraphics[width=40pt,height=10pt]{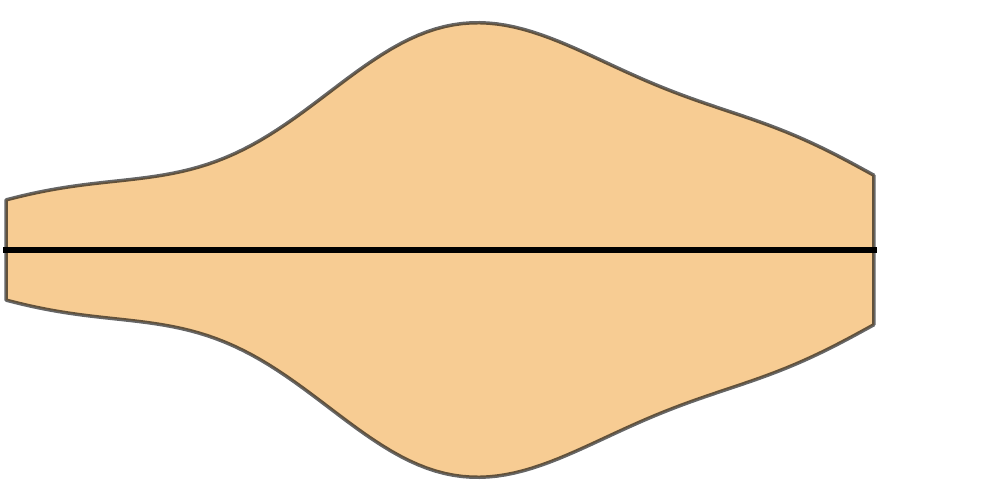}}}}\mbox{\rlap{\mbox{\hspace{15pt}0.5}}} & \\ 
     & CRD-5 & \mbox{\rlap{\raisebox{-2pt}{\includegraphics[width=40pt,height=10pt]{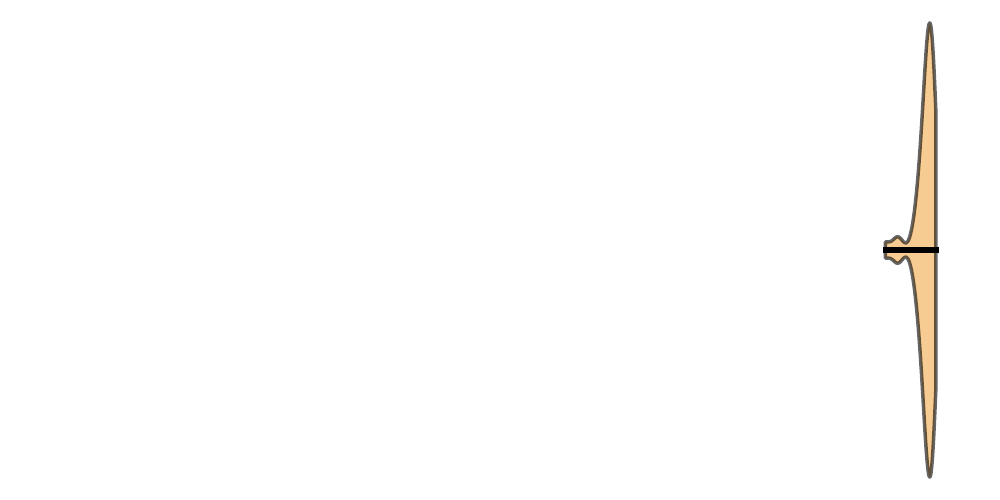}}}}\mbox{\rlap{\mbox{\hspace{27pt}0.9}}} & & \mbox{\rlap{\raisebox{-2pt}{\includegraphics[width=40pt,height=10pt]{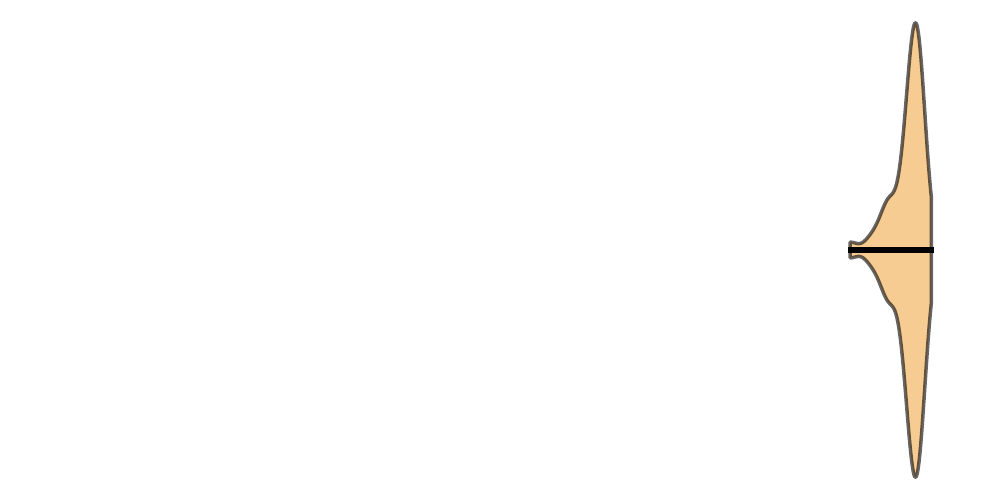}}}}\mbox{\rlap{\mbox{\hspace{27pt}0.9}}} & & \mbox{\rlap{\raisebox{-2pt}{\includegraphics[width=40pt,height=10pt]{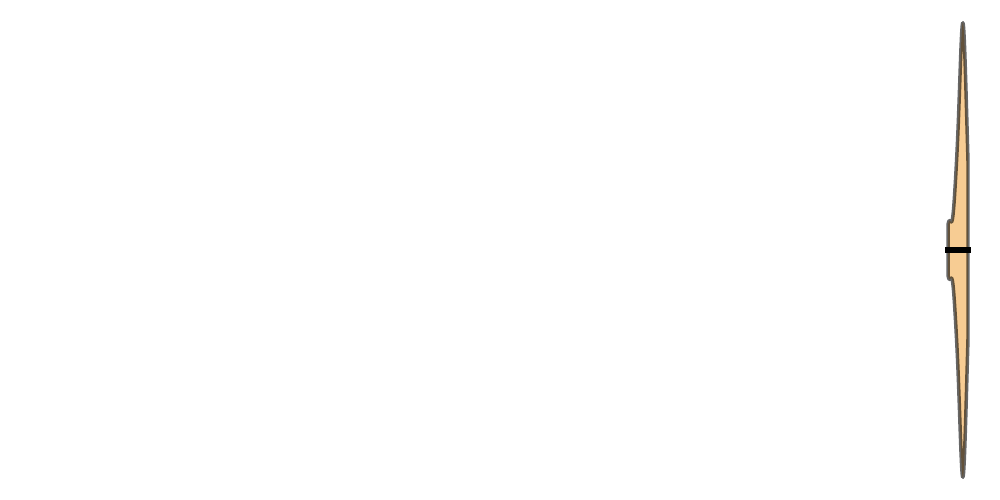}}}}\mbox{\rlap{\mbox{\hspace{30pt}1.0}}} & & \mbox{\rlap{\raisebox{-2pt}{\includegraphics[width=40pt,height=10pt]{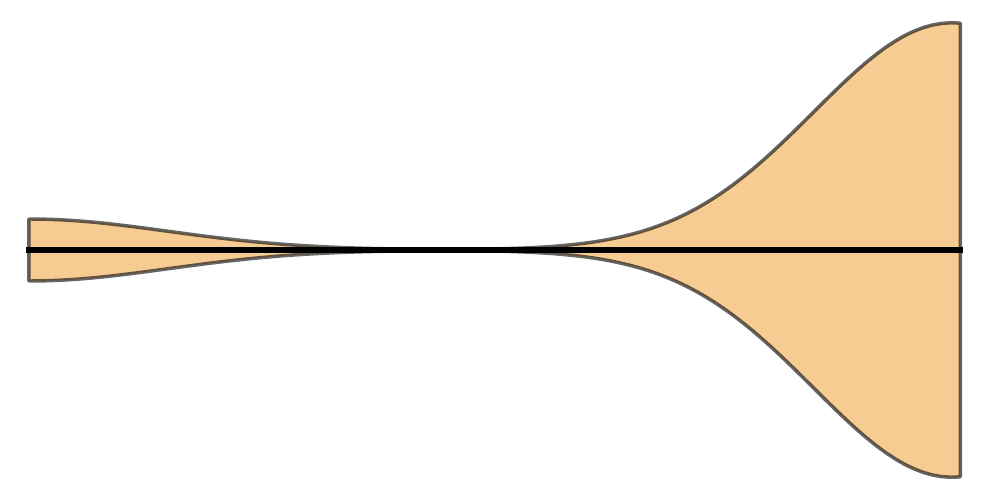}}}}\mbox{\rlap{\mbox{\hspace{30pt}1.0}}} & & \mbox{\rlap{\raisebox{-2pt}{\includegraphics[width=40pt,height=10pt]{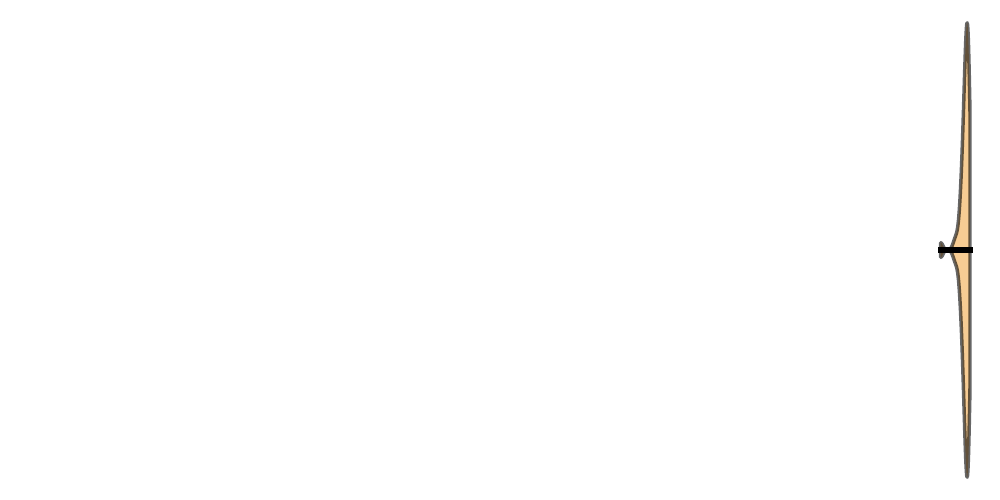}}}}\mbox{\rlap{\mbox{\hspace{30pt}1.0}}} & & \mbox{\rlap{\raisebox{-2pt}{\includegraphics[width=40pt,height=10pt]{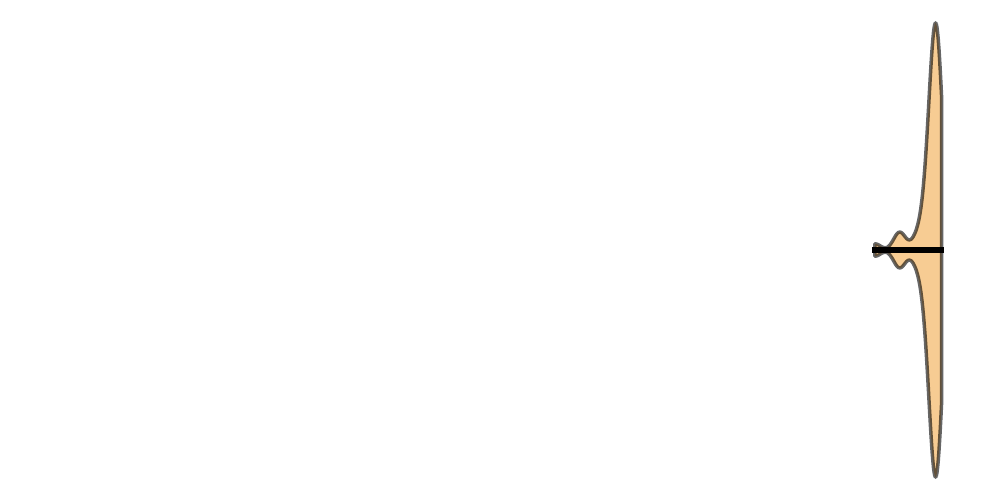}}}}\mbox{\rlap{\mbox{\hspace{27pt}0.9}}} & & \mbox{\rlap{\raisebox{-2pt}{\includegraphics[width=40pt,height=10pt]{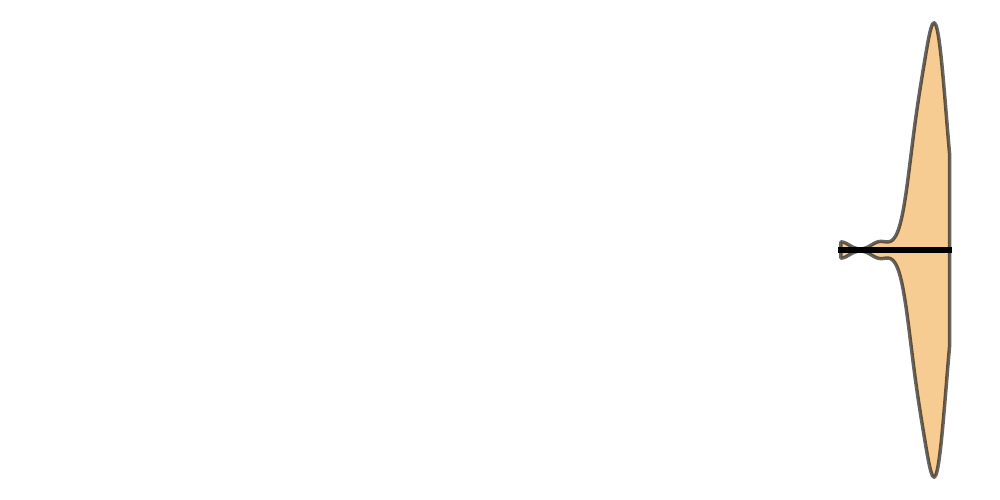}}}}\mbox{\rlap{\mbox{\hspace{27pt}0.9}}} & \\ 
     & ACL & \mbox{\rlap{\raisebox{-2pt}{\includegraphics[width=40pt,height=10pt]{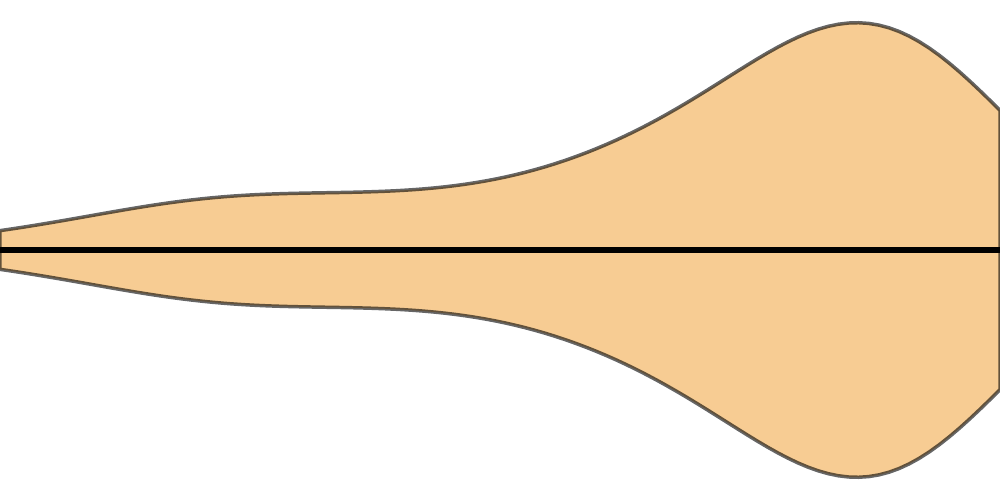}}}}\mbox{\rlap{\mbox{\hspace{27pt}0.9}}} & & \mbox{\rlap{\raisebox{-2pt}{\includegraphics[width=40pt,height=10pt]{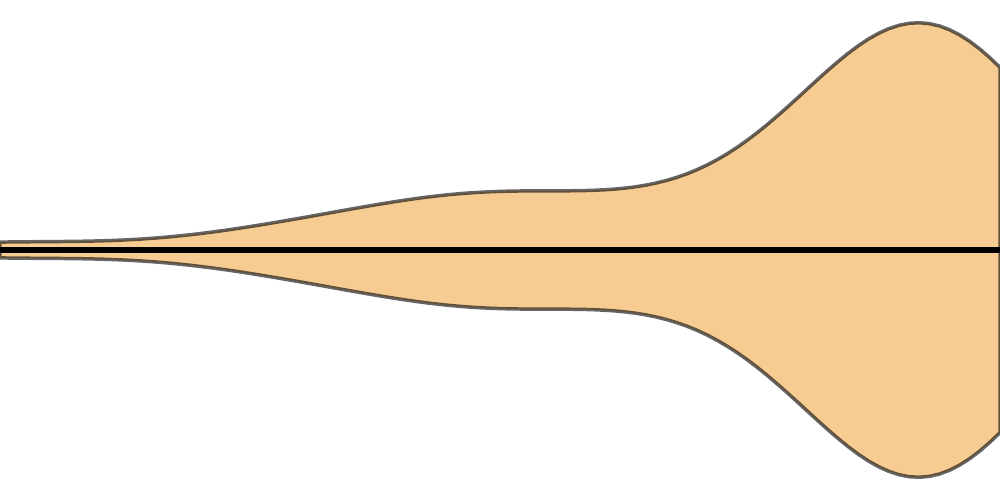}}}}\mbox{\rlap{\mbox{\hspace{24pt}0.8}}} & & \mbox{\rlap{\raisebox{-2pt}{\includegraphics[width=40pt,height=10pt]{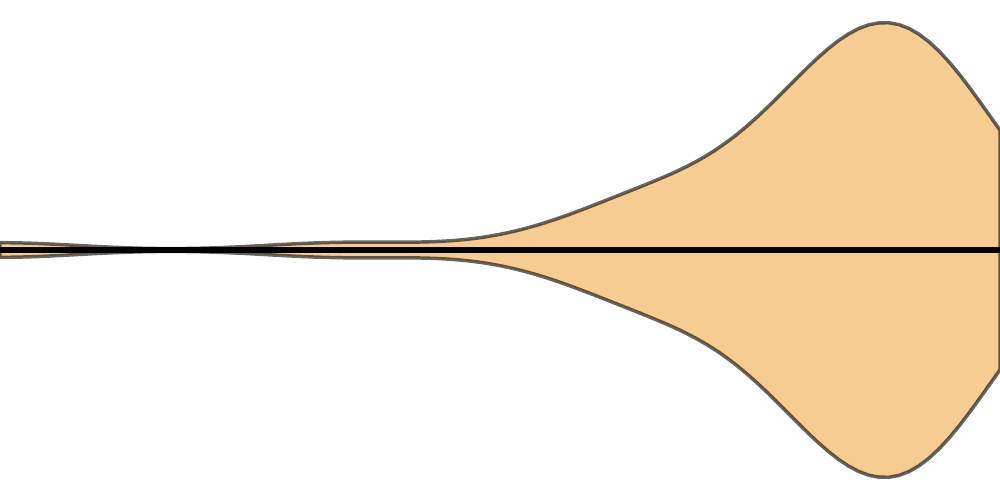}}}}\mbox{\rlap{\mbox{\hspace{27pt}0.9}}} & & \mbox{\rlap{\raisebox{-2pt}{\includegraphics[width=40pt,height=10pt]{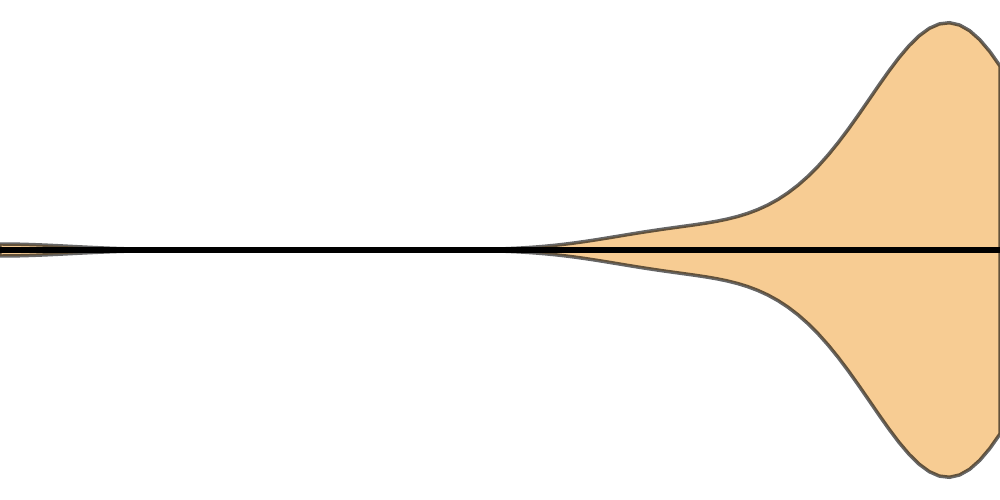}}}}\mbox{\rlap{\mbox{\hspace{27pt}0.9}}} & & \mbox{\rlap{\raisebox{-2pt}{\includegraphics[width=40pt,height=10pt]{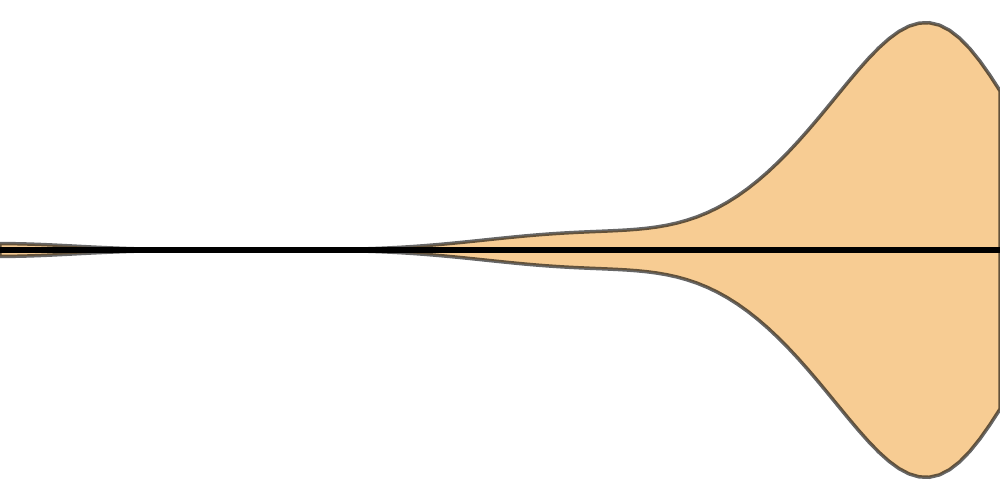}}}}\mbox{\rlap{\mbox{\hspace{27pt}0.9}}} & & \mbox{\rlap{\raisebox{-2pt}{\includegraphics[width=40pt,height=10pt]{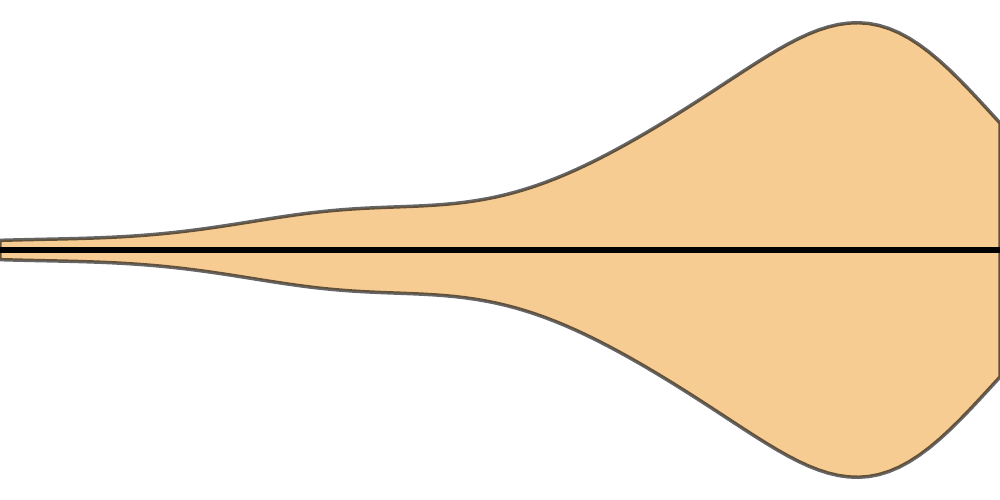}}}}\mbox{\rlap{\mbox{\hspace{27pt}0.9}}} & & \mbox{\rlap{\raisebox{-2pt}{\includegraphics[width=40pt,height=10pt]{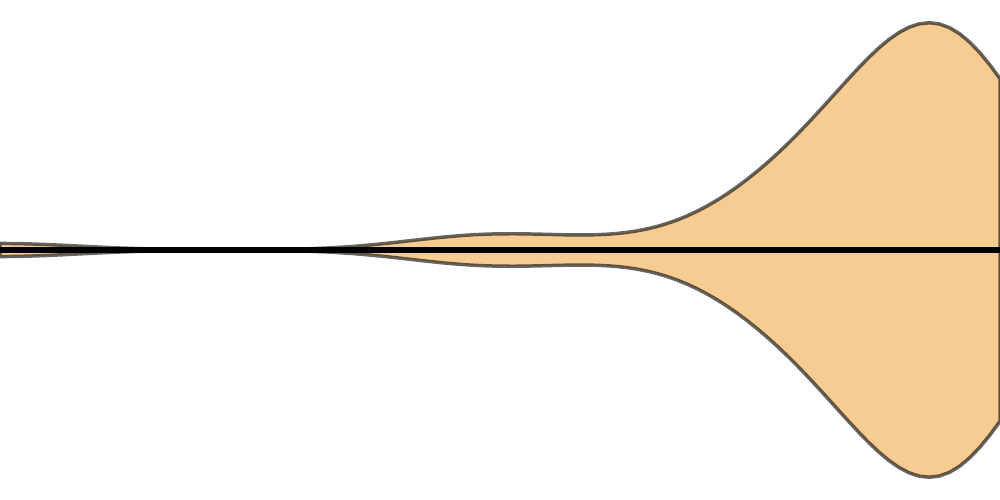}}}}\mbox{\rlap{\mbox{\hspace{27pt}0.9}}} & \\ 
     & FS & \mbox{\rlap{\raisebox{-2pt}{\includegraphics[width=40pt,height=10pt]{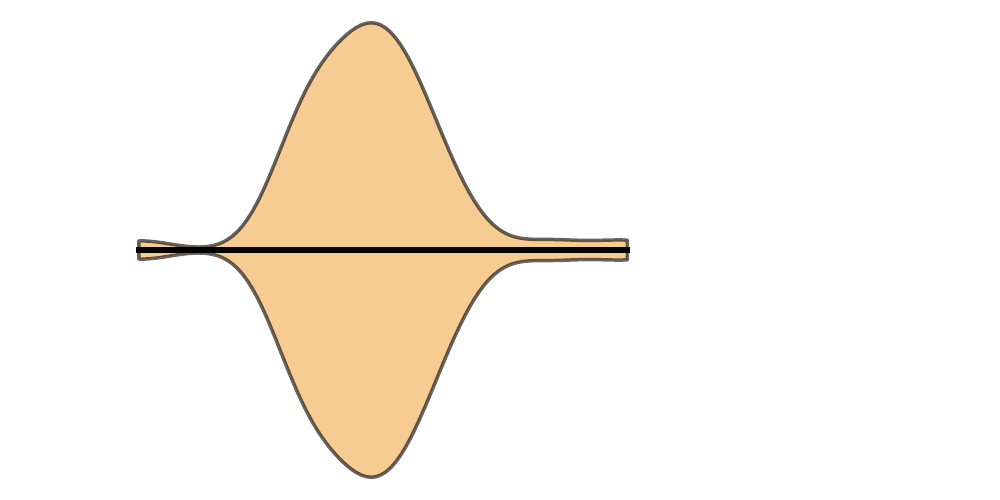}}}}\mbox{\rlap{\mbox{\hspace{12pt}0.4}}} & & \mbox{\rlap{\raisebox{-2pt}{\includegraphics[width=40pt,height=10pt]{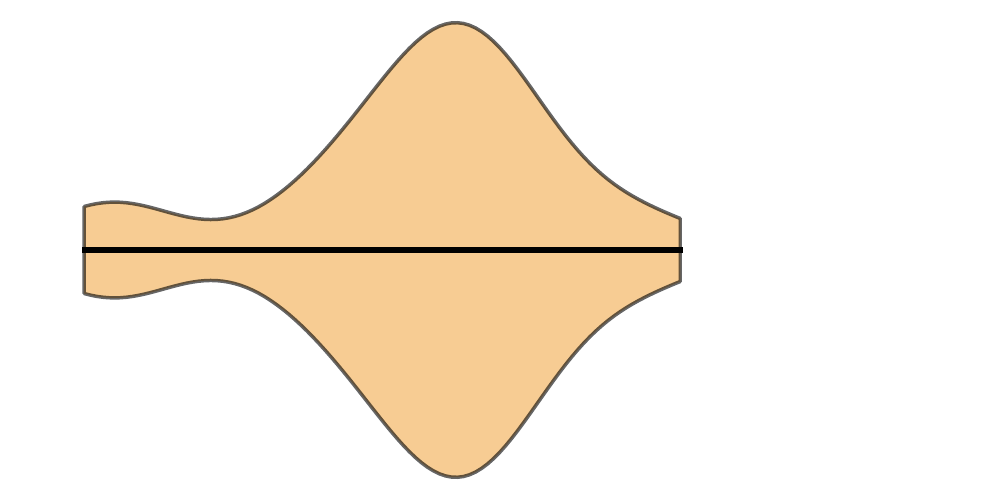}}}}\mbox{\rlap{\mbox{\hspace{12pt}0.4}}} & & \mbox{\rlap{\raisebox{-2pt}{\includegraphics[width=40pt,height=10pt]{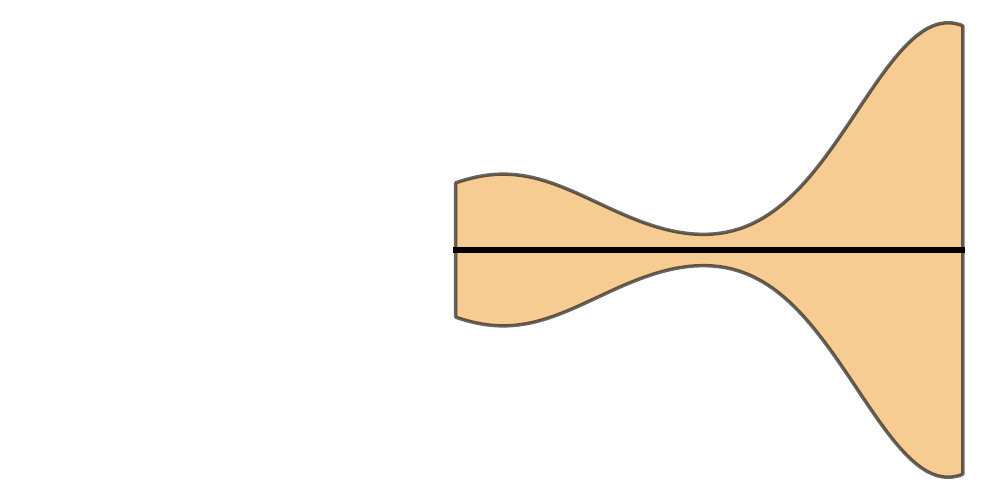}}}}\mbox{\rlap{\mbox{\hspace{27pt}0.9}}} & & \mbox{\rlap{\raisebox{-2pt}{\includegraphics[width=40pt,height=10pt]{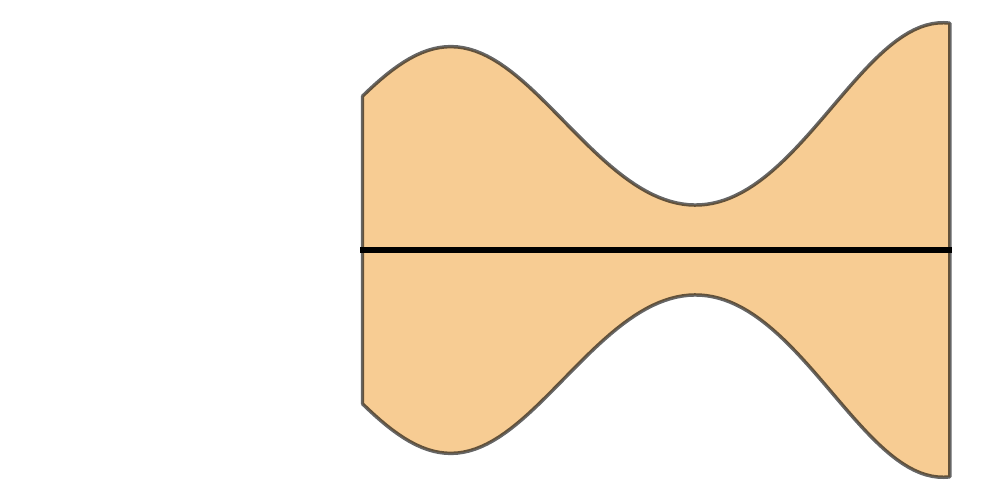}}}}\mbox{\rlap{\mbox{\hspace{27pt}0.9}}} & & \mbox{\rlap{\raisebox{-2pt}{\includegraphics[width=40pt,height=10pt]{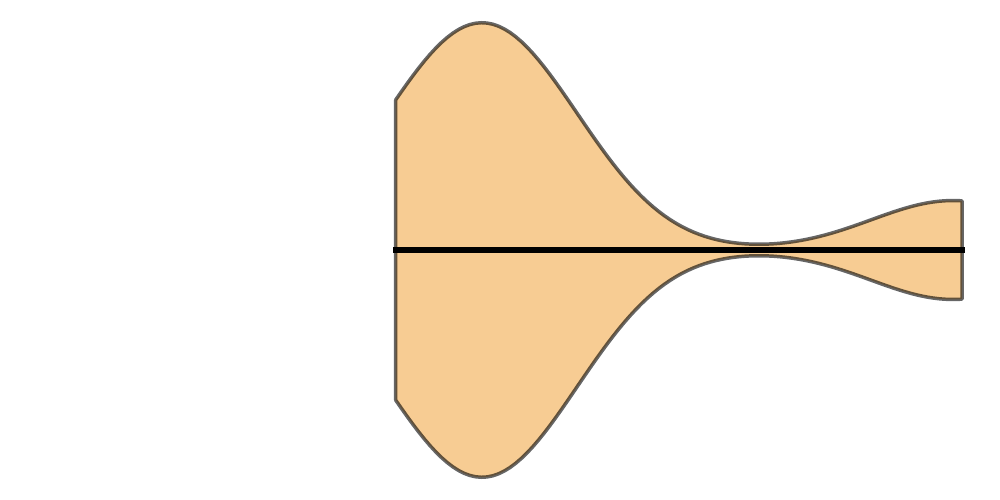}}}}\mbox{\rlap{\mbox{\hspace{15pt}0.5}}} & & \mbox{\rlap{\raisebox{-2pt}{\includegraphics[width=40pt,height=10pt]{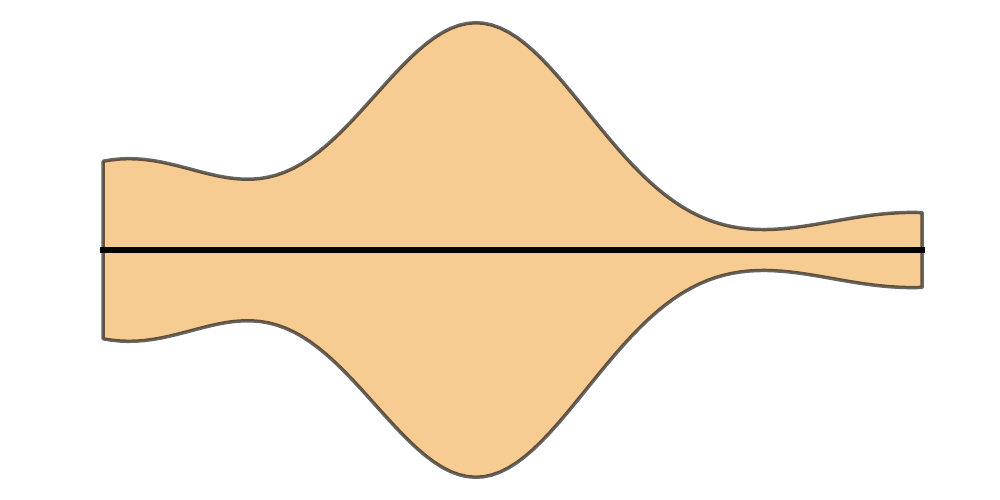}}}}\mbox{\rlap{\mbox{\hspace{15pt}0.5}}} & & \mbox{\rlap{\raisebox{-2pt}{\includegraphics[width=40pt,height=10pt]{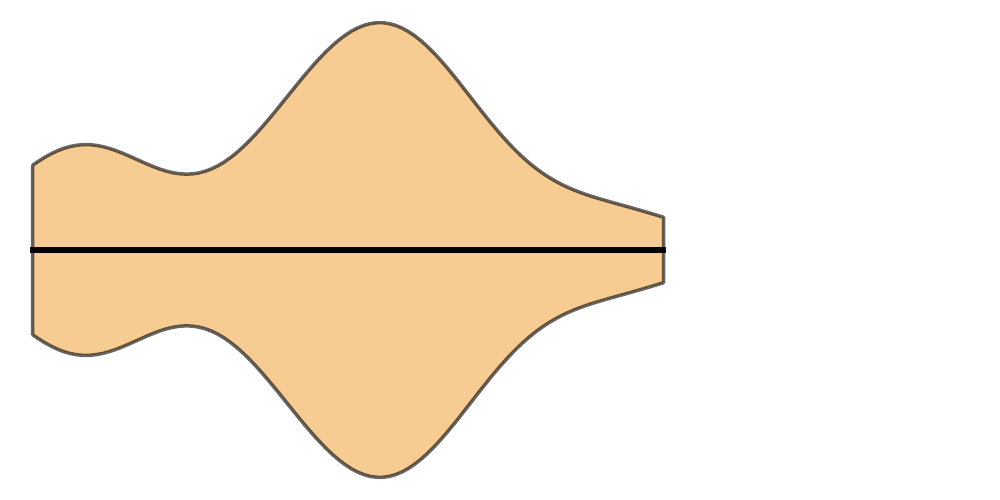}}}}\mbox{\rlap{\mbox{\hspace{12pt}0.4}}} & \\ 
     & HK & \mbox{\rlap{\raisebox{-2pt}{\includegraphics[width=40pt,height=10pt]{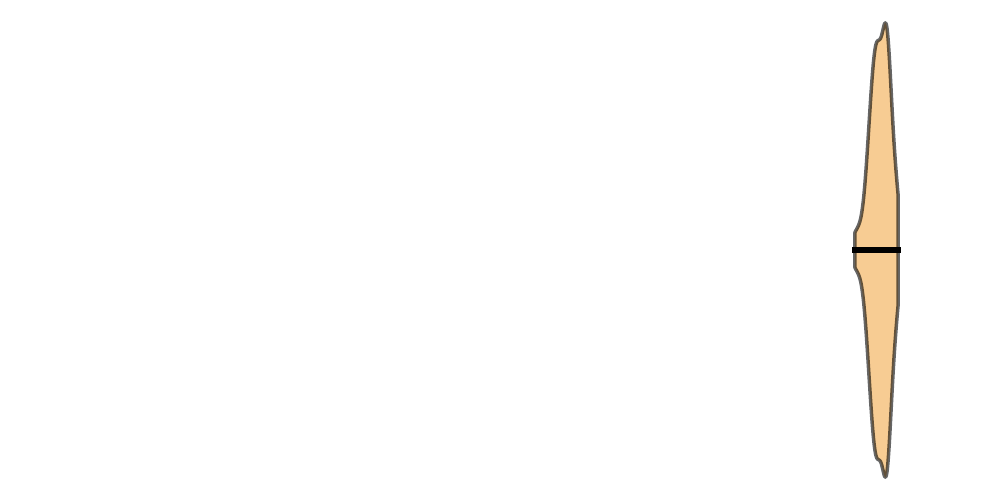}}}}\mbox{\rlap{\mbox{\hspace{27pt}0.9}}} & & \mbox{\rlap{\raisebox{-2pt}{\includegraphics[width=40pt,height=10pt]{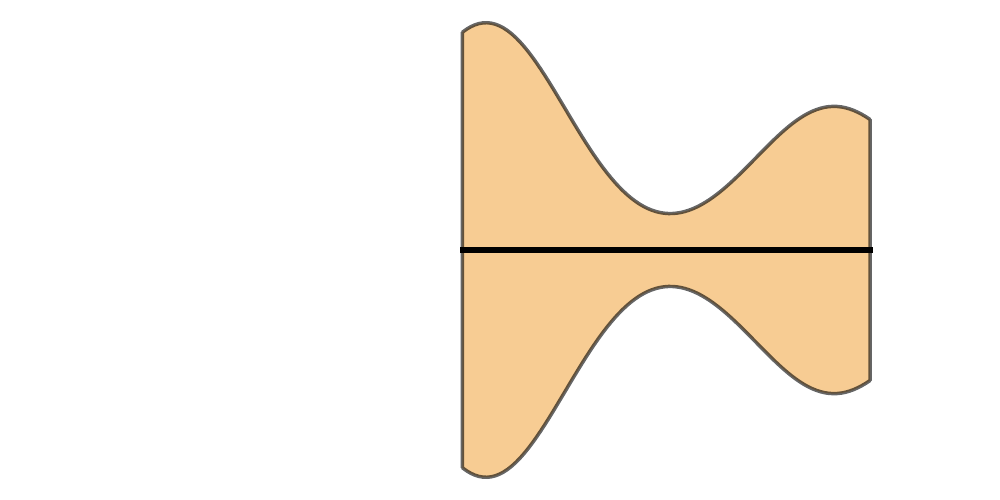}}}}\mbox{\rlap{\mbox{\hspace{15pt}0.5}}} & & \mbox{\rlap{\raisebox{-2pt}{\includegraphics[width=40pt,height=10pt]{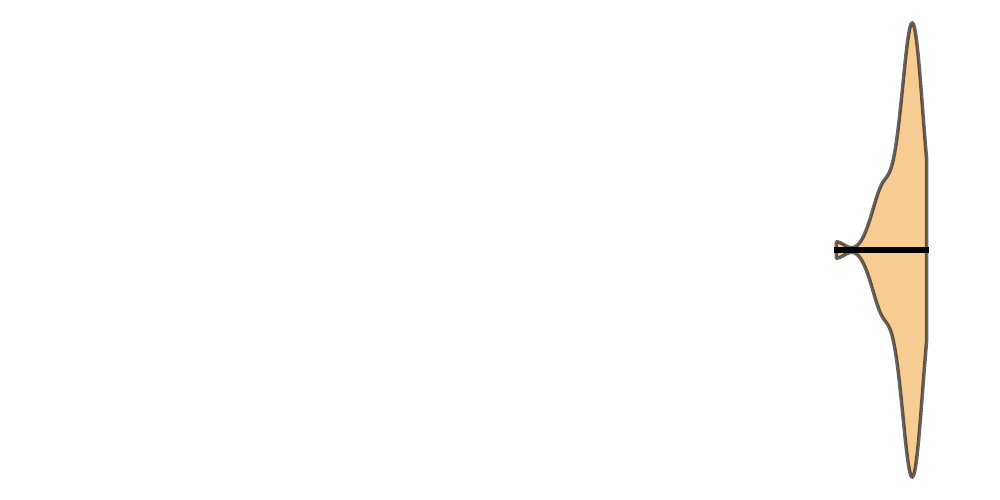}}}}\mbox{\rlap{\mbox{\hspace{27pt}0.9}}} & & \mbox{\rlap{\raisebox{-2pt}{\includegraphics[width=40pt,height=10pt]{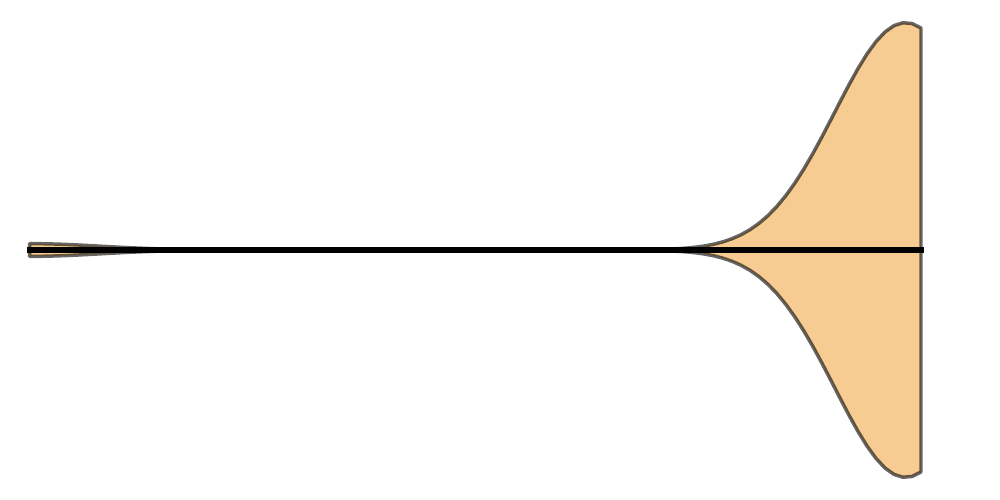}}}}\mbox{\rlap{\mbox{\hspace{27pt}0.9}}} & & \mbox{\rlap{\raisebox{-2pt}{\includegraphics[width=40pt,height=10pt]{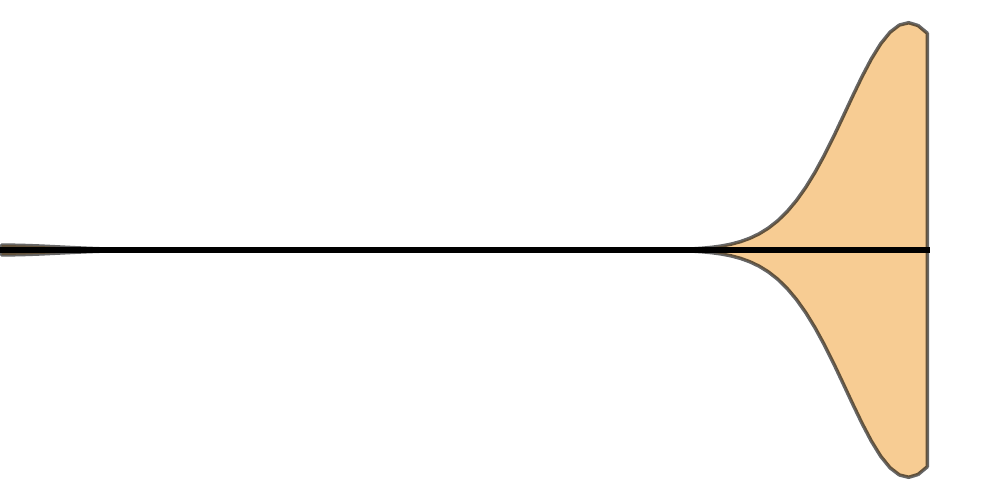}}}}\mbox{\rlap{\mbox{\hspace{27pt}0.9}}} & & \mbox{\rlap{\raisebox{-2pt}{\includegraphics[width=40pt,height=10pt]{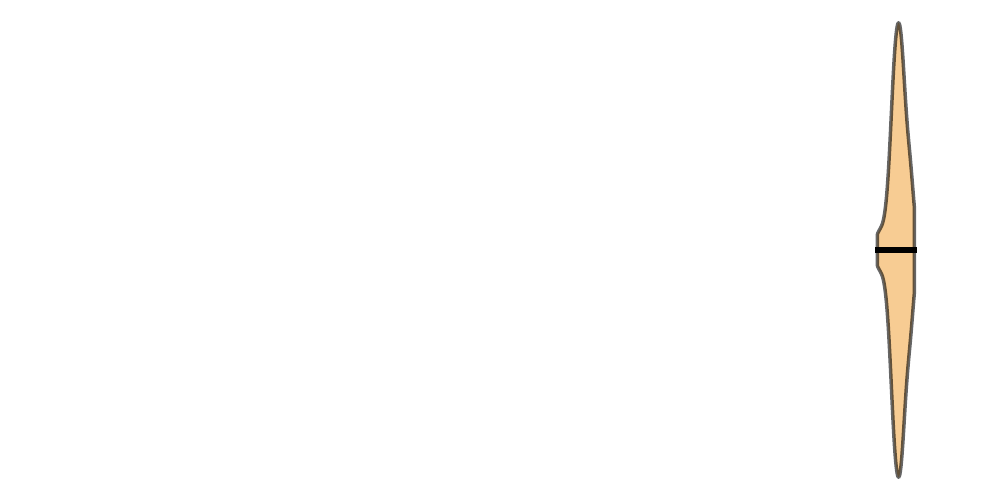}}}}\mbox{\rlap{\mbox{\hspace{27pt}0.9}}} & & \mbox{\rlap{\raisebox{-2pt}{\includegraphics[width=40pt,height=10pt]{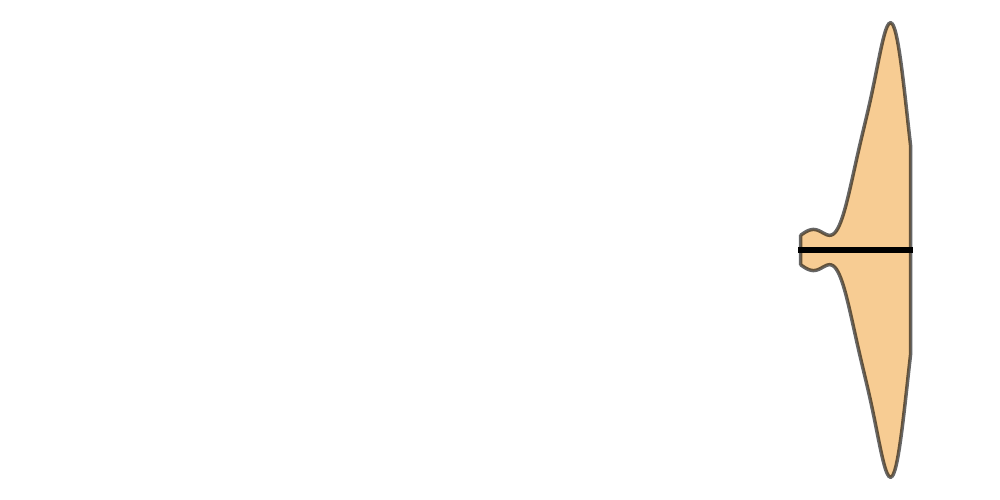}}}}\mbox{\rlap{\mbox{\hspace{27pt}0.9}}} & \\ 
     & NLD & \mbox{\rlap{\raisebox{-2pt}{\includegraphics[width=40pt,height=10pt]{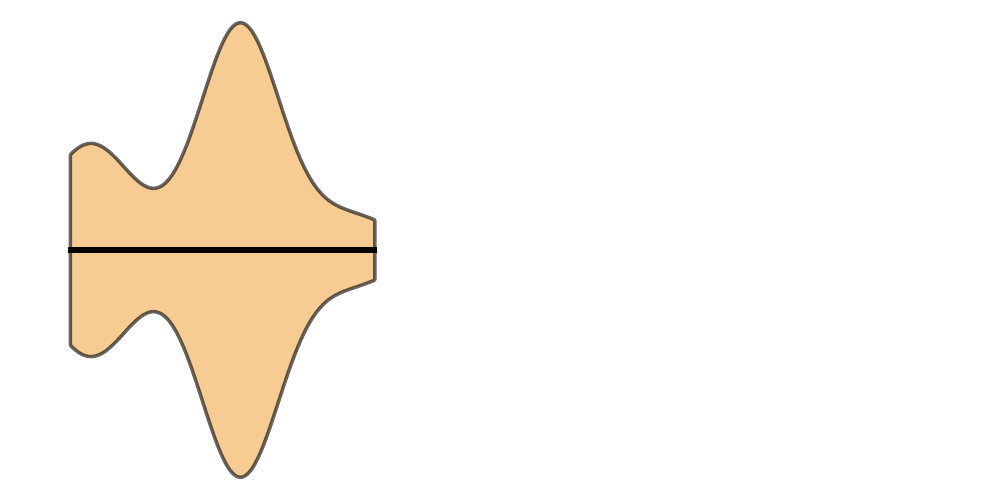}}}}\mbox{\rlap{\mbox{\hspace{6pt}0.2}}} & & \mbox{\rlap{\raisebox{-2pt}{\includegraphics[width=40pt,height=10pt]{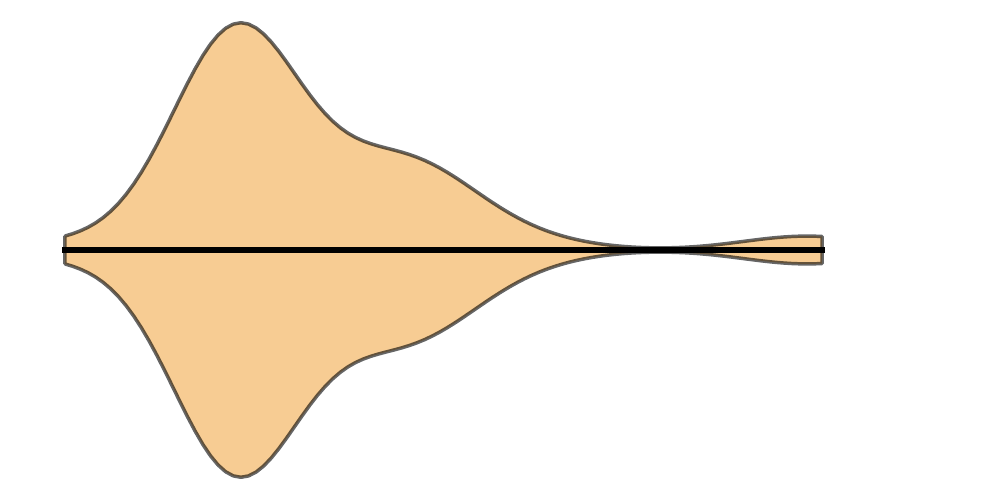}}}}\mbox{\rlap{\mbox{\hspace{6pt}0.2}}} & & \mbox{\rlap{\raisebox{-2pt}{\includegraphics[width=40pt,height=10pt]{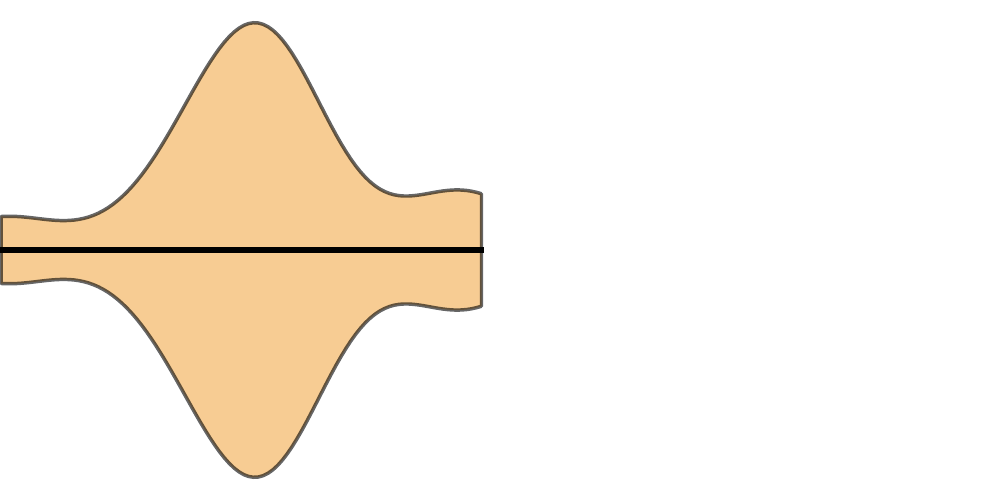}}}}\mbox{\rlap{\mbox{\hspace{9pt}0.3}}} & & \mbox{\rlap{\raisebox{-2pt}{\includegraphics[width=40pt,height=10pt]{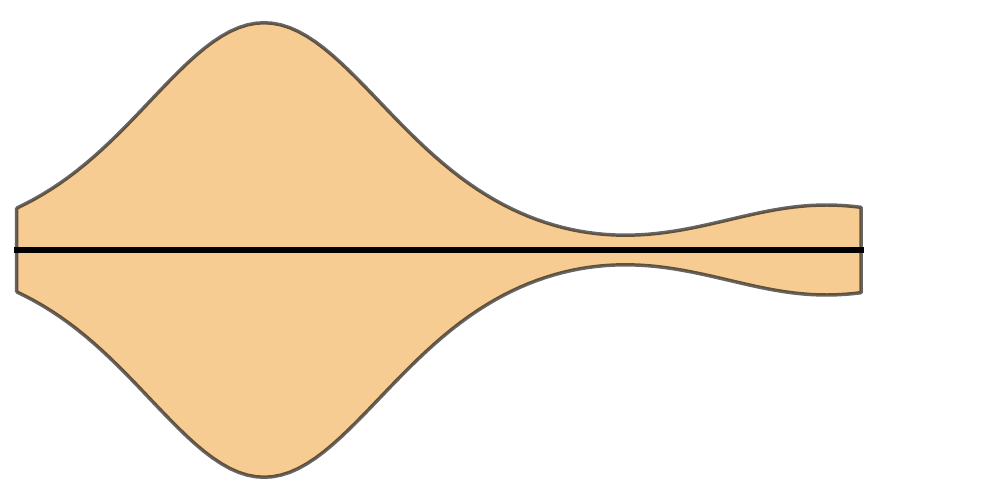}}}}\mbox{\rlap{\mbox{\hspace{9pt}0.3}}} & & \mbox{\rlap{\raisebox{-2pt}{\includegraphics[width=40pt,height=10pt]{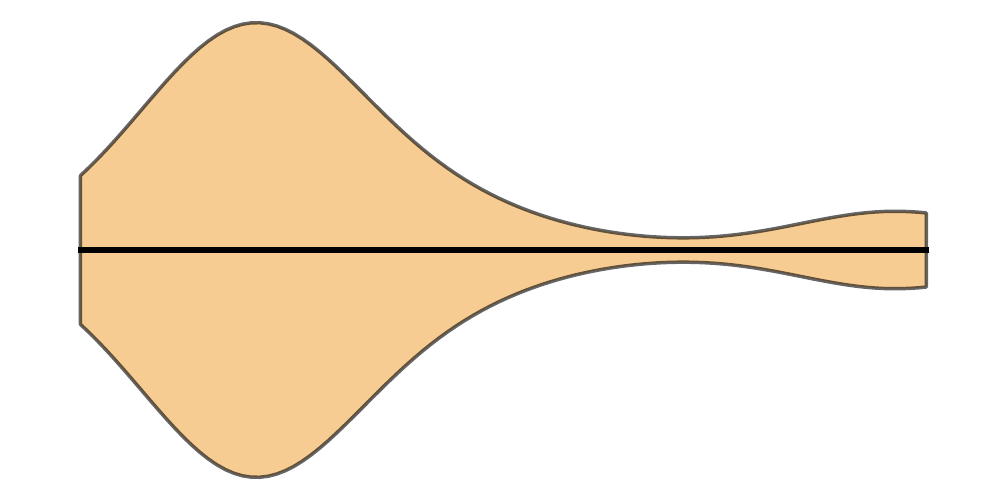}}}}\mbox{\rlap{\mbox{\hspace{9pt}0.3}}} & & \mbox{\rlap{\raisebox{-2pt}{\includegraphics[width=40pt,height=10pt]{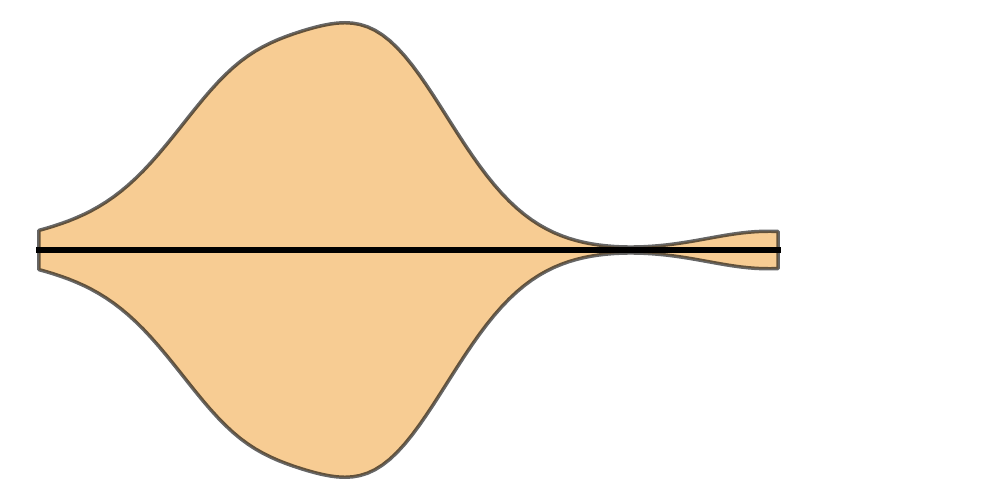}}}}\mbox{\rlap{\mbox{\hspace{9pt}0.3}}} & & \mbox{\rlap{\raisebox{-2pt}{\includegraphics[width=40pt,height=10pt]{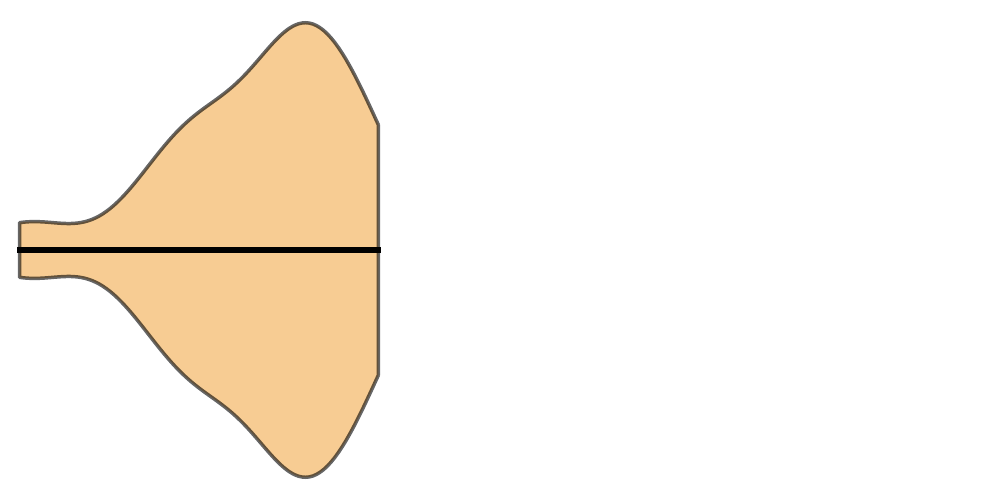}}}}\mbox{\rlap{\mbox{\hspace{9pt}0.3}}} & \\ 
     & GCN & \mbox{\rlap{\raisebox{-2pt}{\includegraphics[width=40pt,height=10pt]{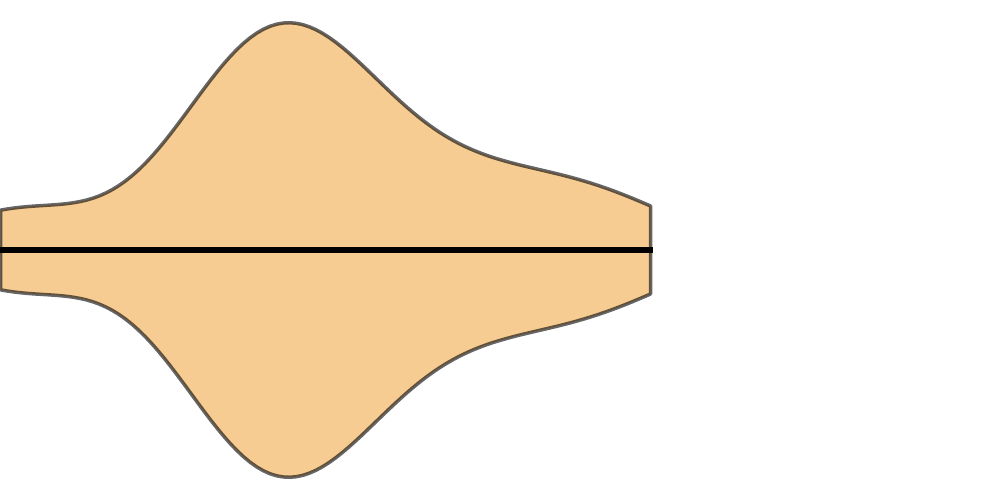}}}}\mbox{\rlap{\mbox{\hspace{9pt}0.3}}} & & \mbox{\rlap{\raisebox{-2pt}{\includegraphics[width=40pt,height=10pt]{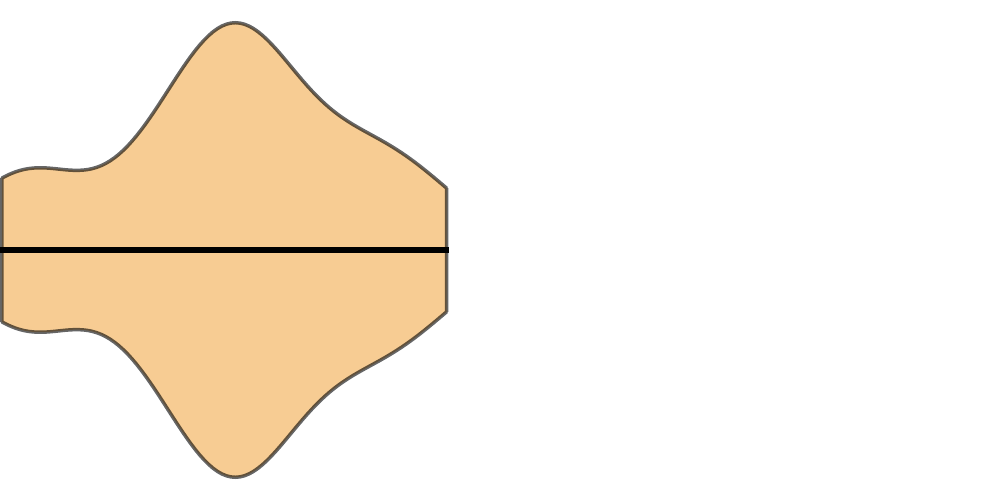}}}}\mbox{\rlap{\mbox{\hspace{6pt}0.2}}} & & \mbox{\rlap{\raisebox{-2pt}{\includegraphics[width=40pt,height=10pt]{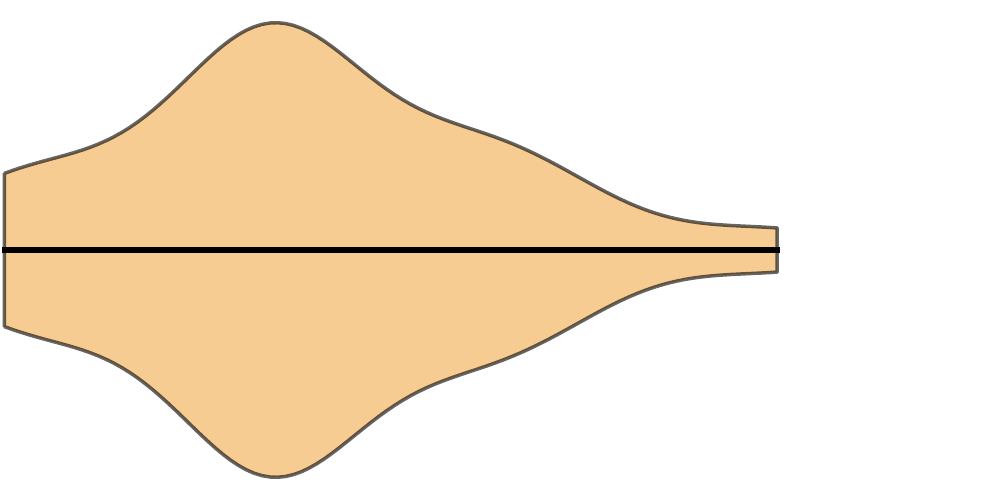}}}}\mbox{\rlap{\mbox{\hspace{9pt}0.3}}} & & \mbox{\rlap{\raisebox{-2pt}{\includegraphics[width=40pt,height=10pt]{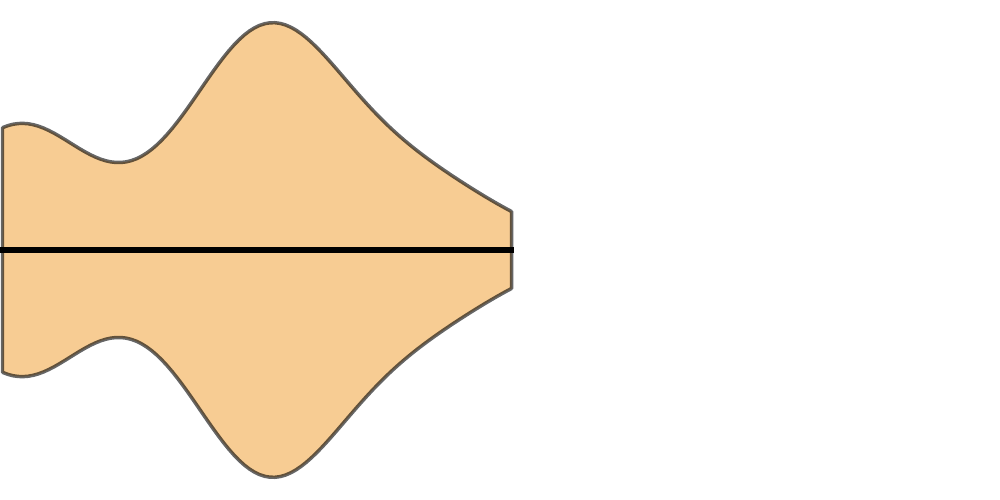}}}}\mbox{\rlap{\mbox{\hspace{9pt}0.3}}} & & \mbox{\rlap{\raisebox{-2pt}{\includegraphics[width=40pt,height=10pt]{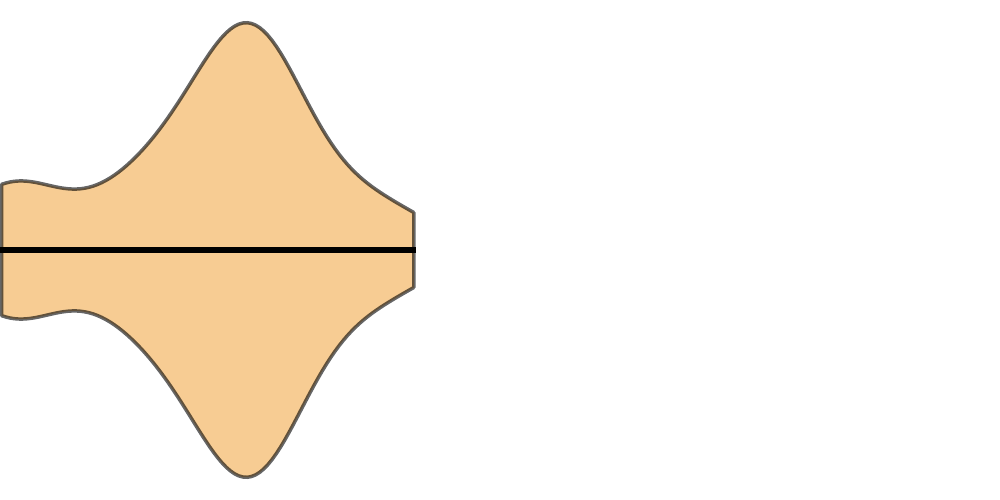}}}}\mbox{\rlap{\mbox{\hspace{6pt}0.2}}} & & \mbox{\rlap{\raisebox{-2pt}{\includegraphics[width=40pt,height=10pt]{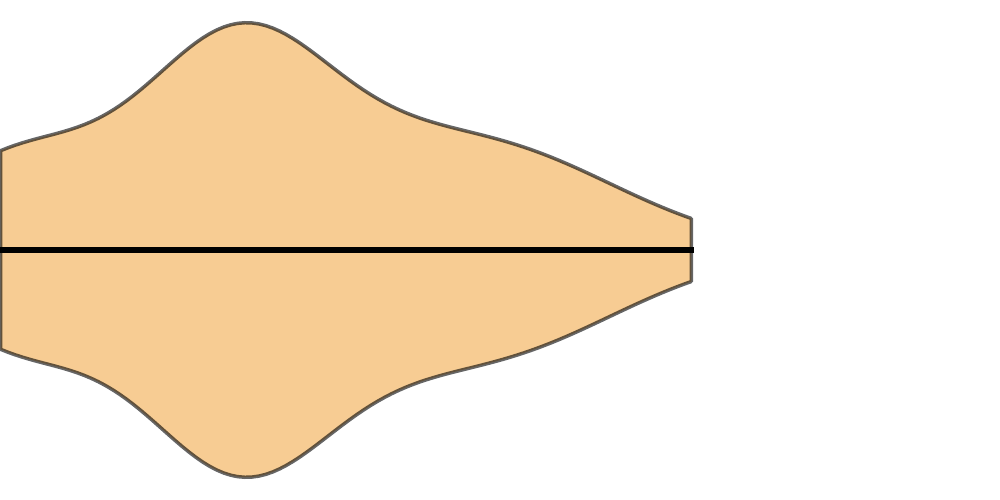}}}}\mbox{\rlap{\mbox{\hspace{9pt}0.3}}} & & \mbox{\rlap{\raisebox{-2pt}{\includegraphics[width=40pt,height=10pt]{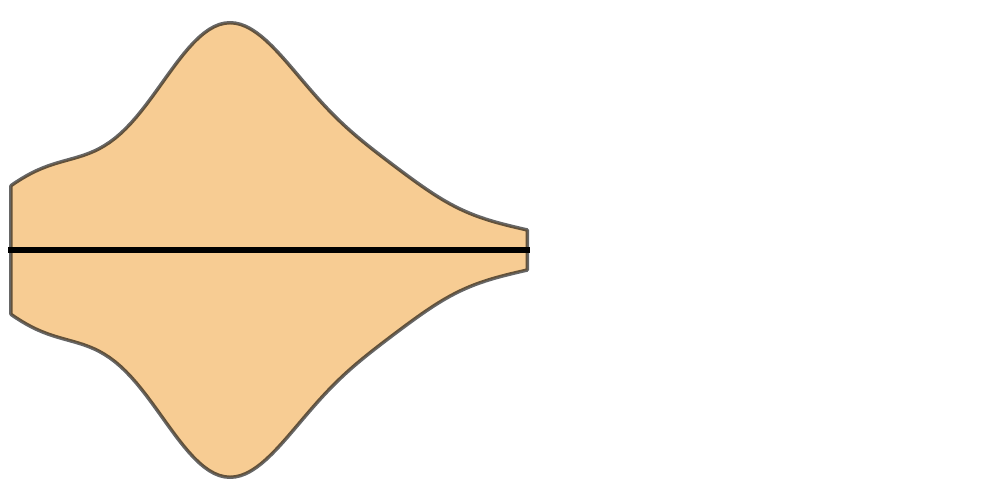}}}}\mbox{\rlap{\mbox{\hspace{6pt}0.2}}} & \\ 
\midrule 
2008 & SLQ & \mbox{\rlap{\raisebox{-2pt}{\includegraphics[width=40pt,height=10pt]{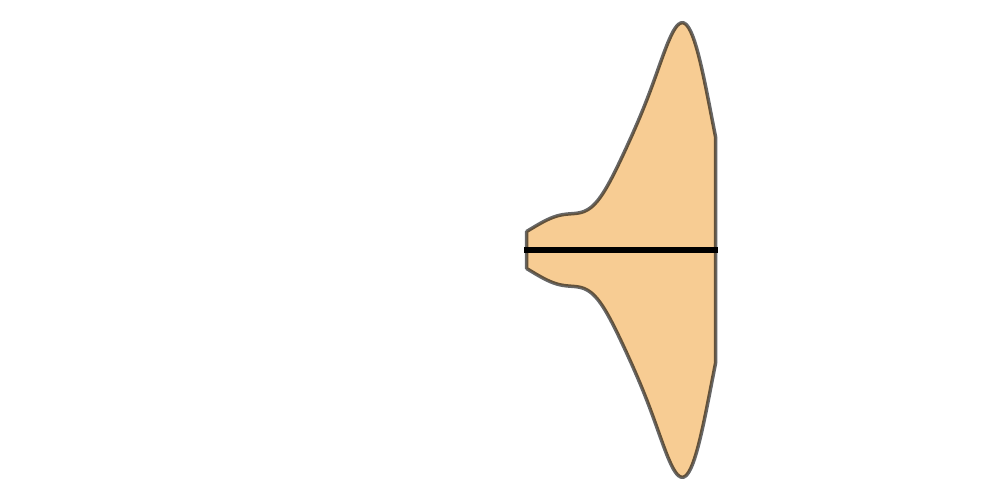}}}}\mbox{\rlap{\mbox{\hspace{21pt}0.7}}} & & \mbox{\rlap{\raisebox{-2pt}{\includegraphics[width=40pt,height=10pt]{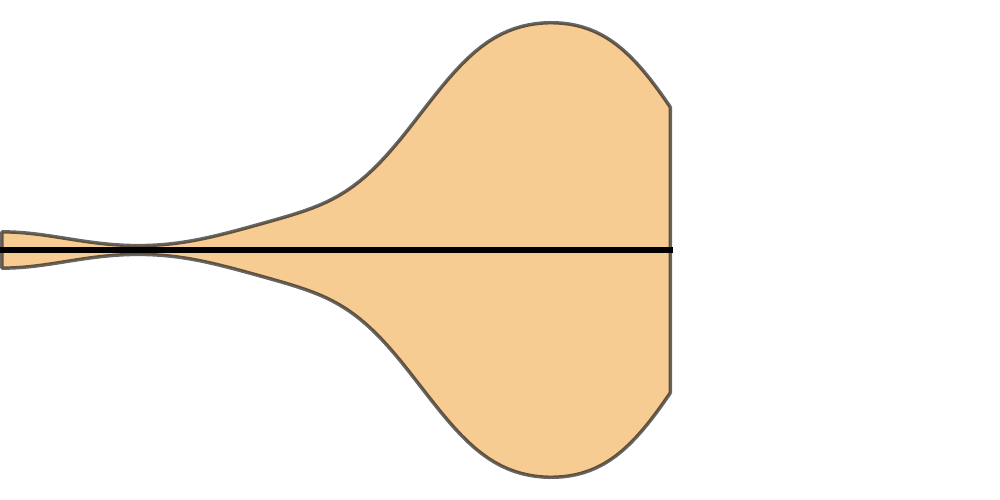}}}}\mbox{\rlap{\mbox{\hspace{15pt}0.5}}} & & \mbox{\rlap{\raisebox{-2pt}{\includegraphics[width=40pt,height=10pt]{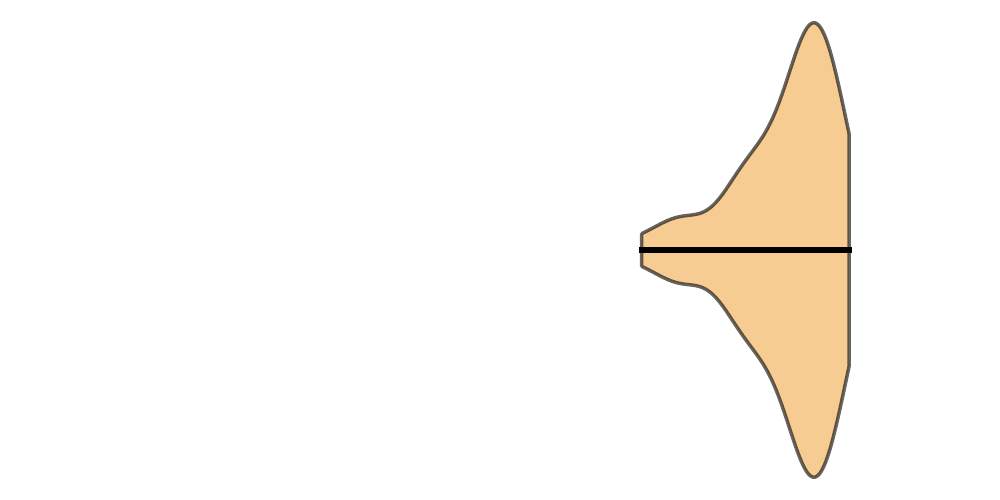}}}}\mbox{\rlap{\mbox{\hspace{24pt}0.8}}} & & \mbox{\rlap{\raisebox{-2pt}{\includegraphics[width=40pt,height=10pt]{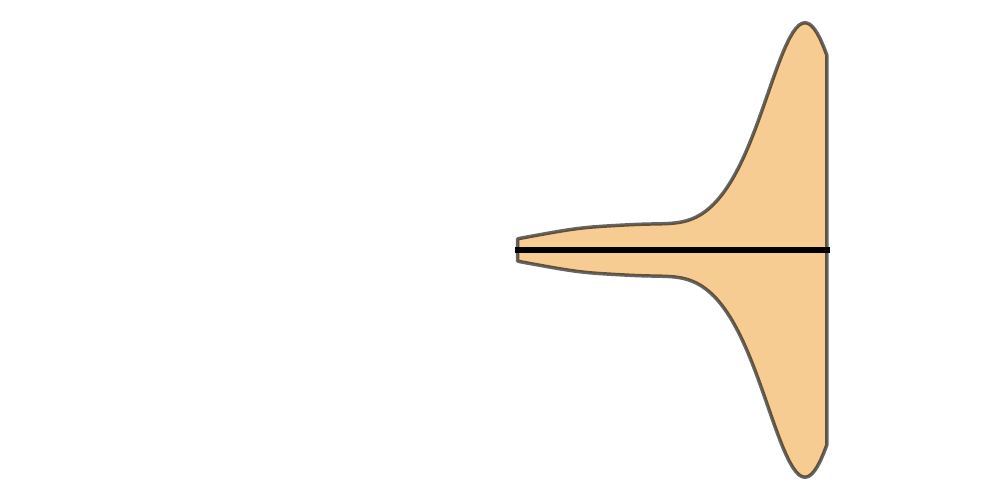}}}}\mbox{\rlap{\mbox{\hspace{24pt}0.8}}} & & \mbox{\rlap{\raisebox{-2pt}{\includegraphics[width=40pt,height=10pt]{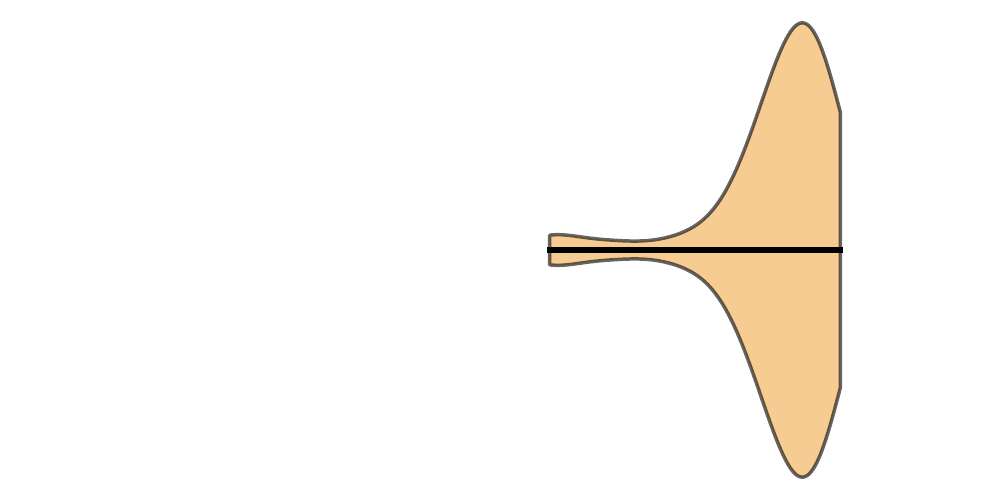}}}}\mbox{\rlap{\mbox{\hspace{24pt}0.8}}} & & \mbox{\rlap{\raisebox{-2pt}{\includegraphics[width=40pt,height=10pt]{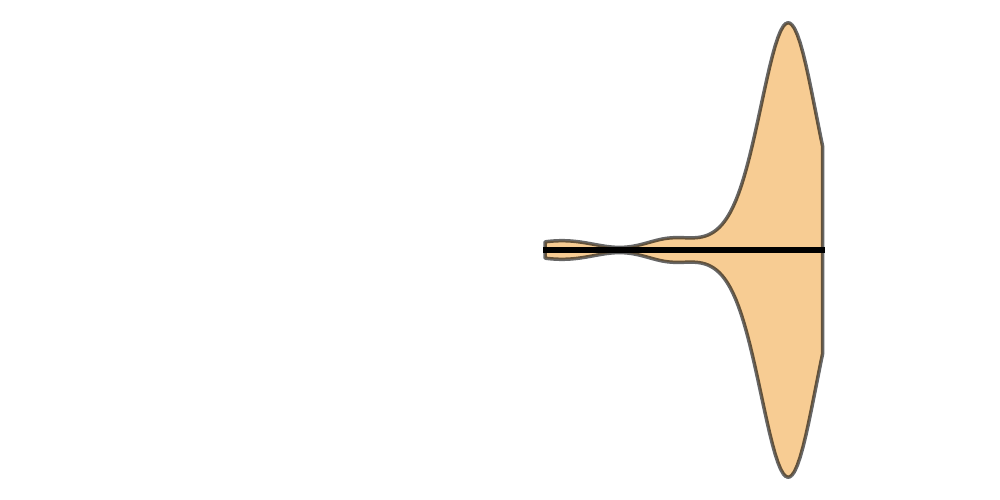}}}}\mbox{\rlap{\mbox{\hspace{24pt}0.8}}} & & \mbox{\rlap{\raisebox{-2pt}{\includegraphics[width=40pt,height=10pt]{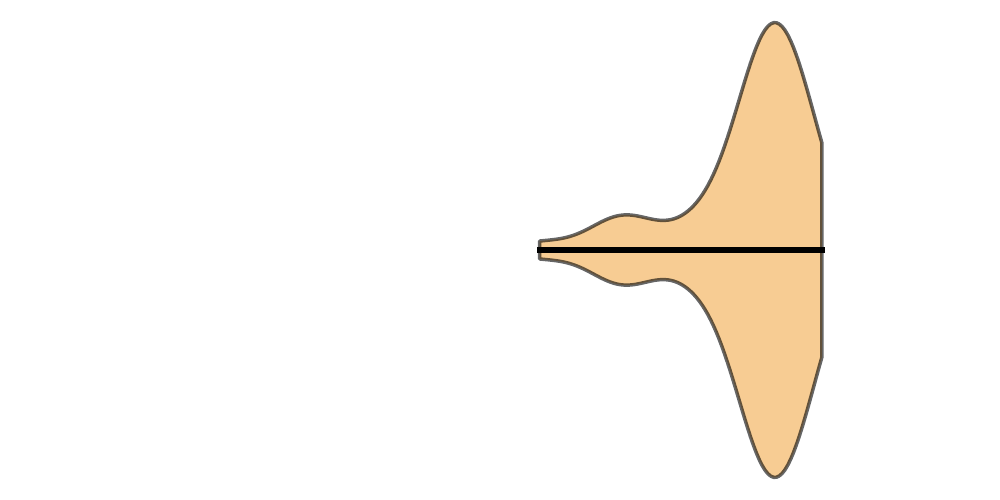}}}}\mbox{\rlap{\mbox{\hspace{24pt}0.8}}} & \\ 
     & SLQ$\delta$ & \mbox{\rlap{\raisebox{-2pt}{\includegraphics[width=40pt,height=10pt]{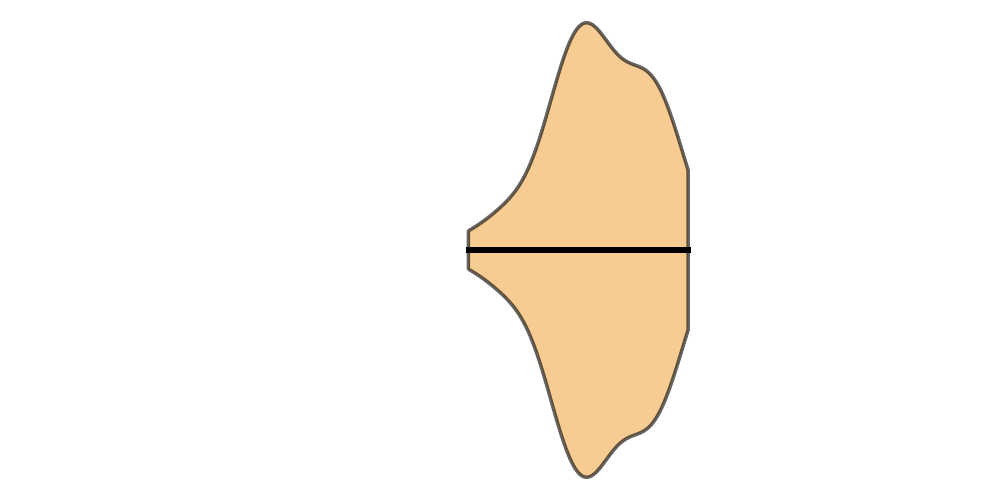}}}}\mbox{\rlap{\mbox{\hspace{18pt}0.6}}} & & \mbox{\rlap{\raisebox{-2pt}{\includegraphics[width=40pt,height=10pt]{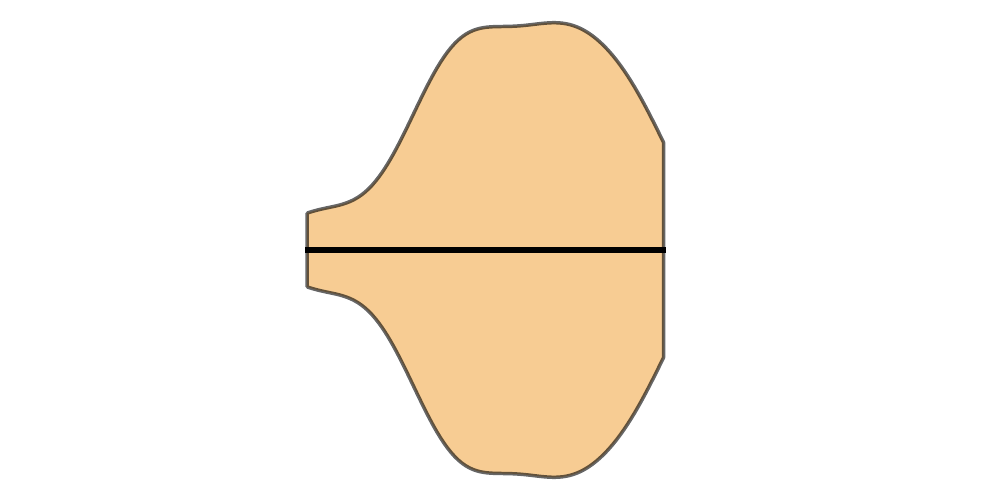}}}}\mbox{\rlap{\mbox{\hspace{15pt}0.5}}} & & \mbox{\rlap{\raisebox{-2pt}{\includegraphics[width=40pt,height=10pt]{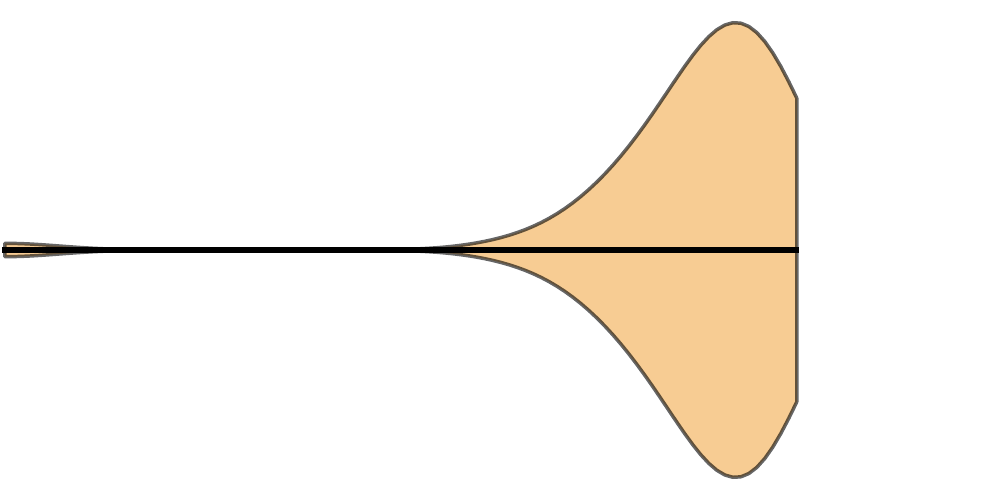}}}}\mbox{\rlap{\mbox{\hspace{21pt}0.7}}} & & \mbox{\rlap{\raisebox{-2pt}{\includegraphics[width=40pt,height=10pt]{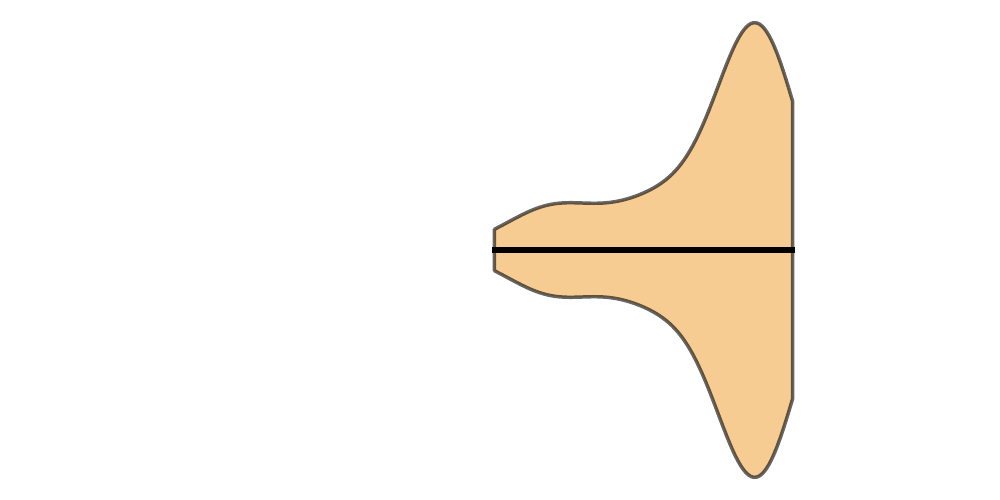}}}}\mbox{\rlap{\mbox{\hspace{21pt}0.7}}} & & \mbox{\rlap{\raisebox{-2pt}{\includegraphics[width=40pt,height=10pt]{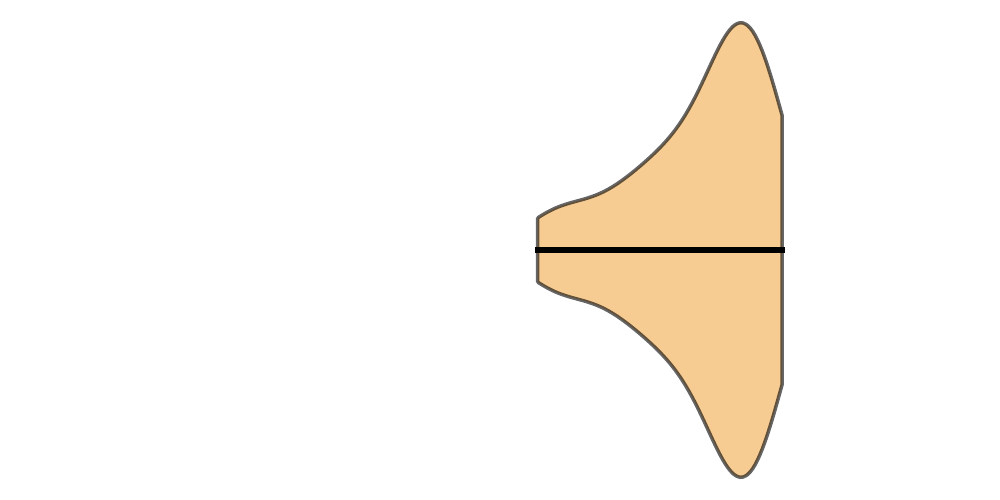}}}}\mbox{\rlap{\mbox{\hspace{21pt}0.7}}} & & \mbox{\rlap{\raisebox{-2pt}{\includegraphics[width=40pt,height=10pt]{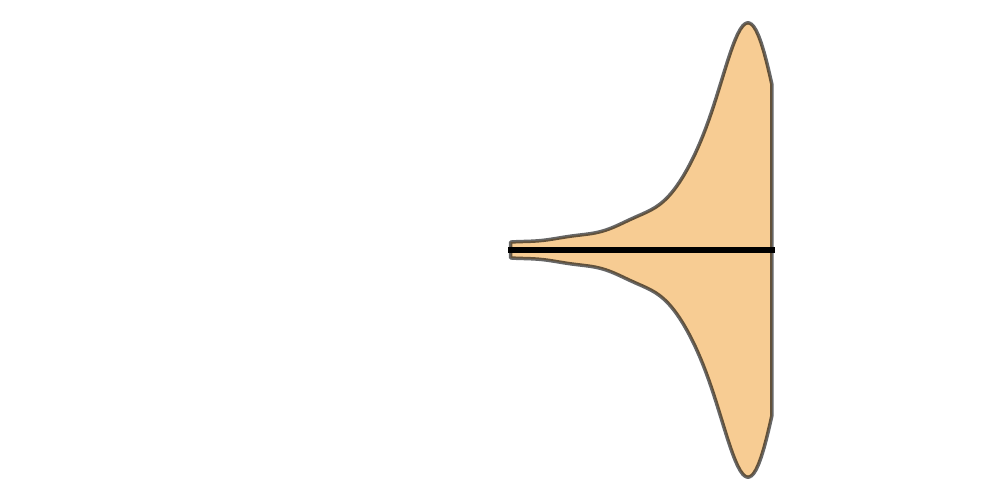}}}}\mbox{\rlap{\mbox{\hspace{21pt}0.7}}} & & \mbox{\rlap{\raisebox{-2pt}{\includegraphics[width=40pt,height=10pt]{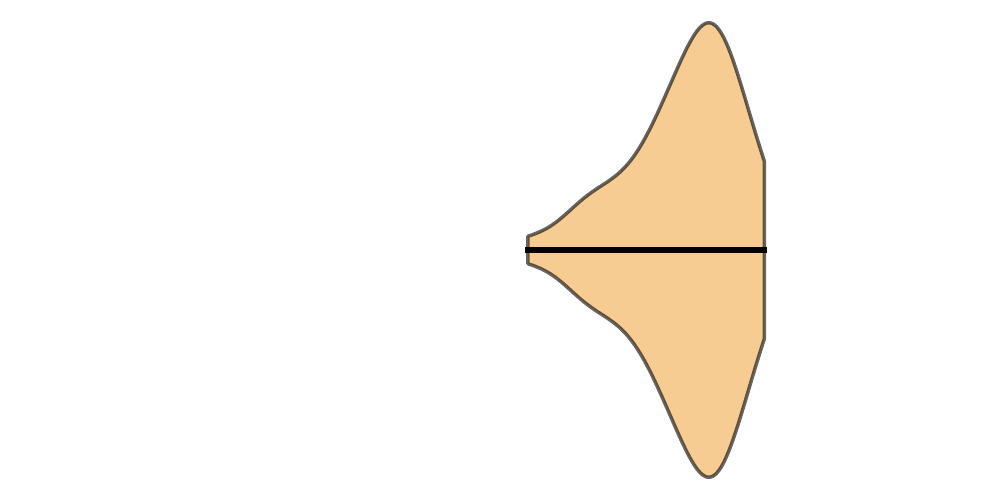}}}}\mbox{\rlap{\mbox{\hspace{21pt}0.7}}} & \\ 
     & CRD-3 & \mbox{\rlap{\raisebox{-2pt}{\includegraphics[width=40pt,height=10pt]{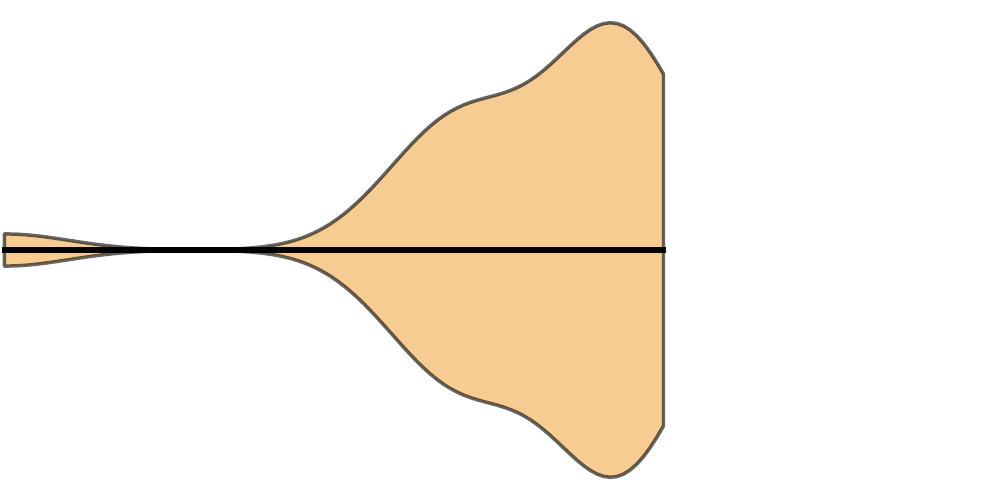}}}}\mbox{\rlap{\mbox{\hspace{18pt}0.6}}} & & \mbox{\rlap{\raisebox{-2pt}{\includegraphics[width=40pt,height=10pt]{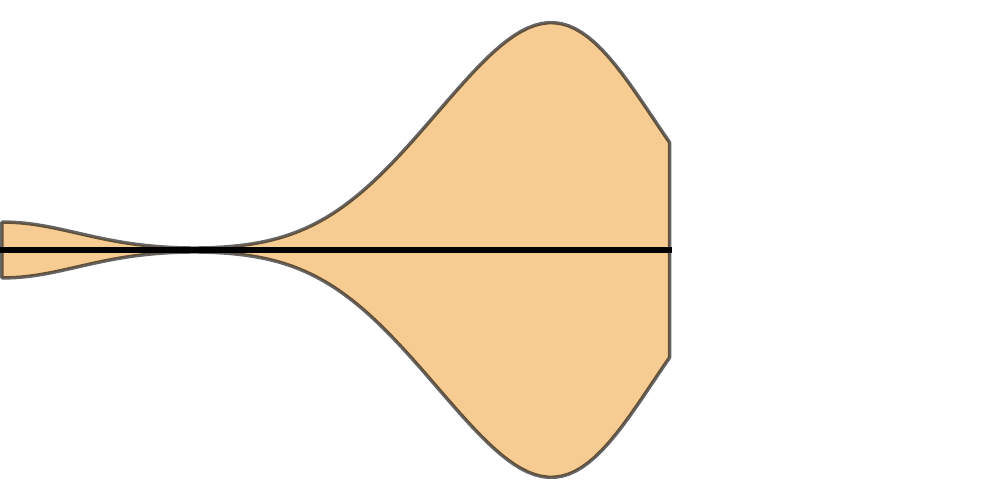}}}}\mbox{\rlap{\mbox{\hspace{15pt}0.5}}} & & \mbox{\rlap{\raisebox{-2pt}{\includegraphics[width=40pt,height=10pt]{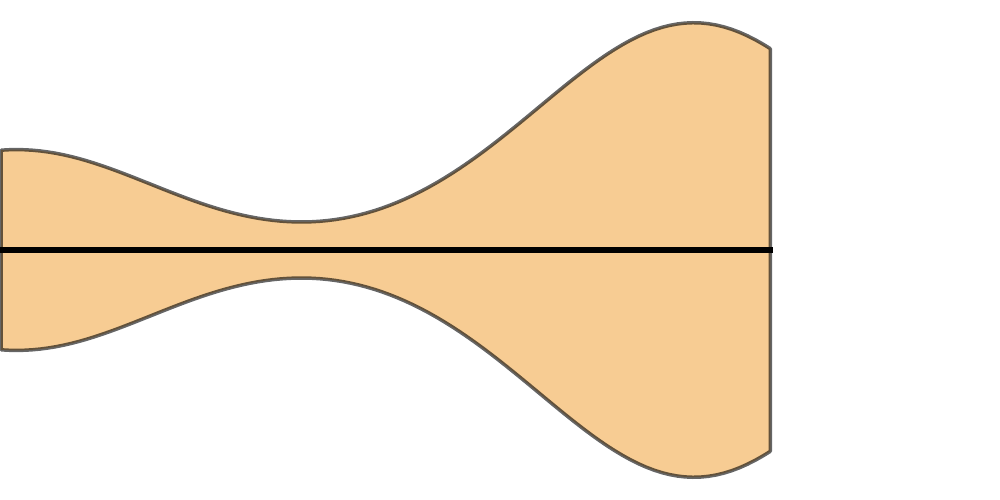}}}}\mbox{\rlap{\mbox{\hspace{21pt}0.7}}} & & \mbox{\rlap{\raisebox{-2pt}{\includegraphics[width=40pt,height=10pt]{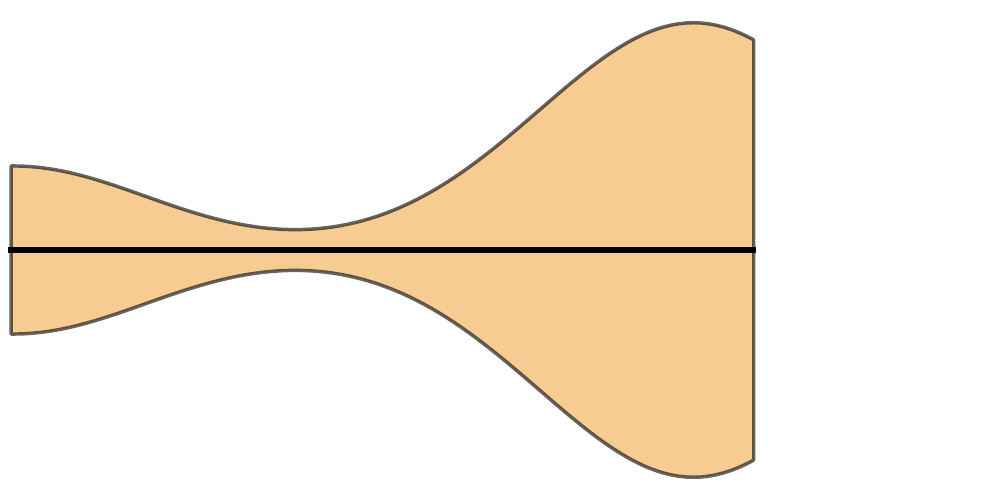}}}}\mbox{\rlap{\mbox{\hspace{21pt}0.7}}} & & \mbox{\rlap{\raisebox{-2pt}{\includegraphics[width=40pt,height=10pt]{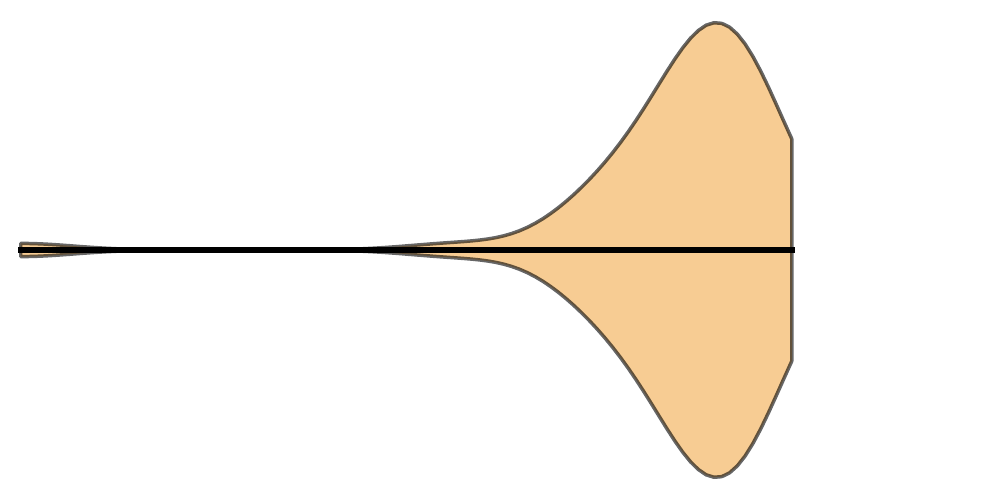}}}}\mbox{\rlap{\mbox{\hspace{21pt}0.7}}} & & \mbox{\rlap{\raisebox{-2pt}{\includegraphics[width=40pt,height=10pt]{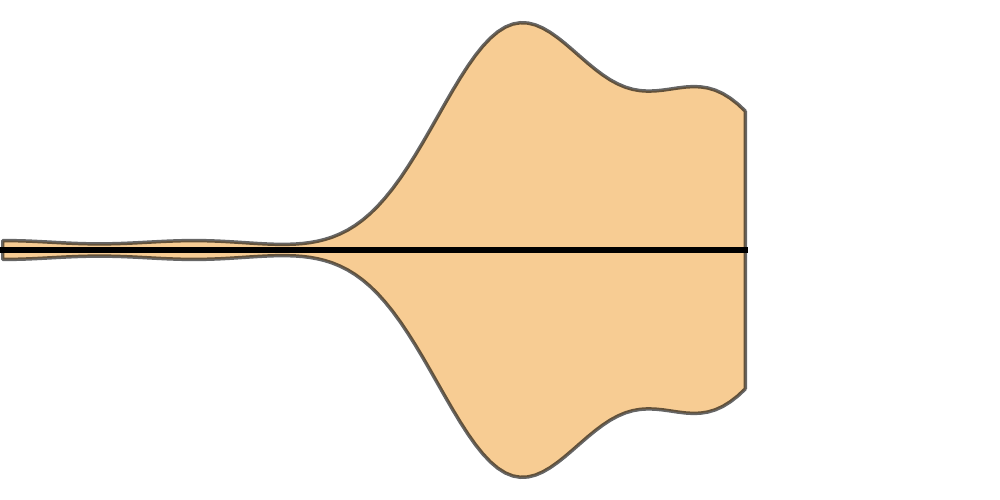}}}}\mbox{\rlap{\mbox{\hspace{18pt}0.6}}} & & \mbox{\rlap{\raisebox{-2pt}{\includegraphics[width=40pt,height=10pt]{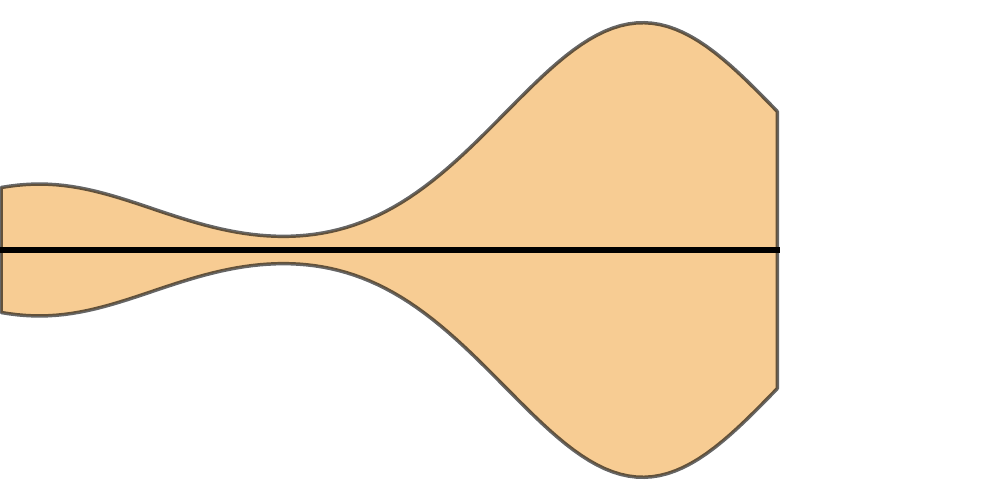}}}}\mbox{\rlap{\mbox{\hspace{18pt}0.6}}} & \\ 
     & CRD-5 & \mbox{\rlap{\raisebox{-2pt}{\includegraphics[width=40pt,height=10pt]{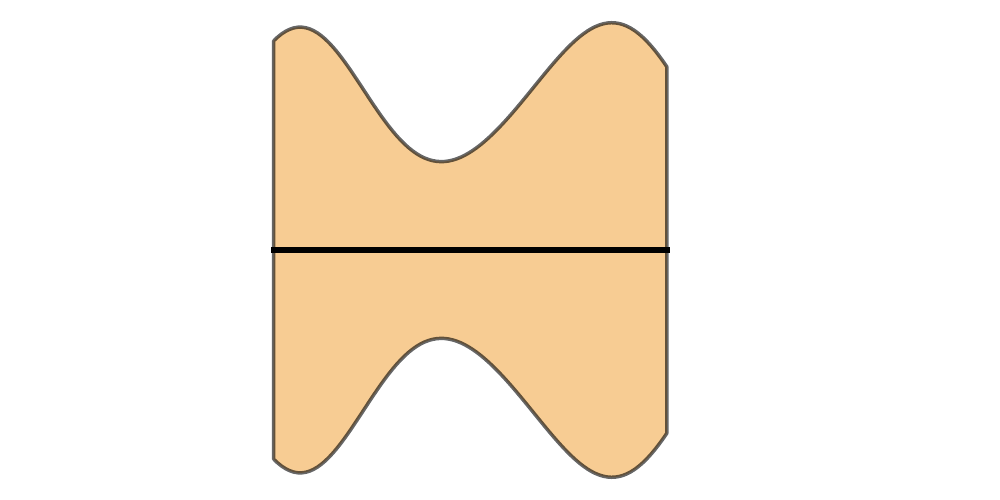}}}}\mbox{\rlap{\mbox{\hspace{15pt}0.5}}} & & \mbox{\rlap{\raisebox{-2pt}{\includegraphics[width=40pt,height=10pt]{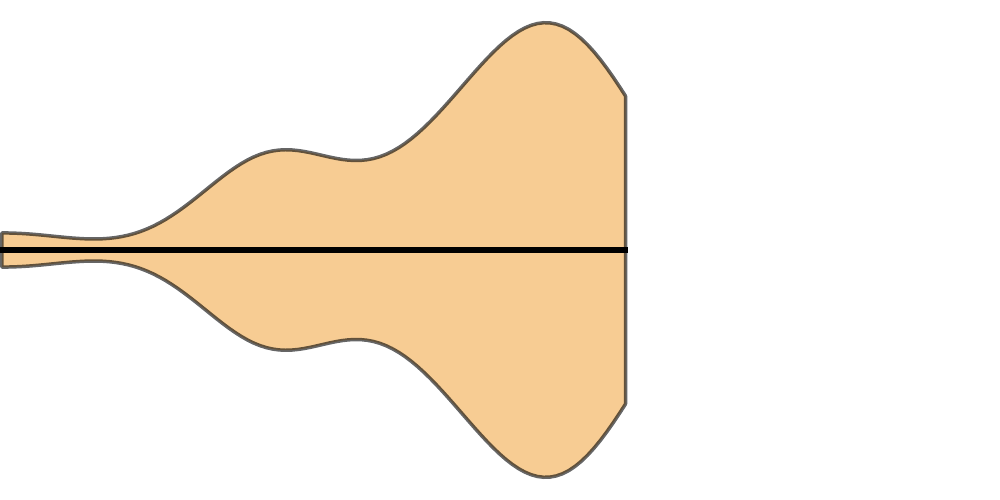}}}}\mbox{\rlap{\mbox{\hspace{15pt}0.5}}} & & \mbox{\rlap{\raisebox{-2pt}{\includegraphics[width=40pt,height=10pt]{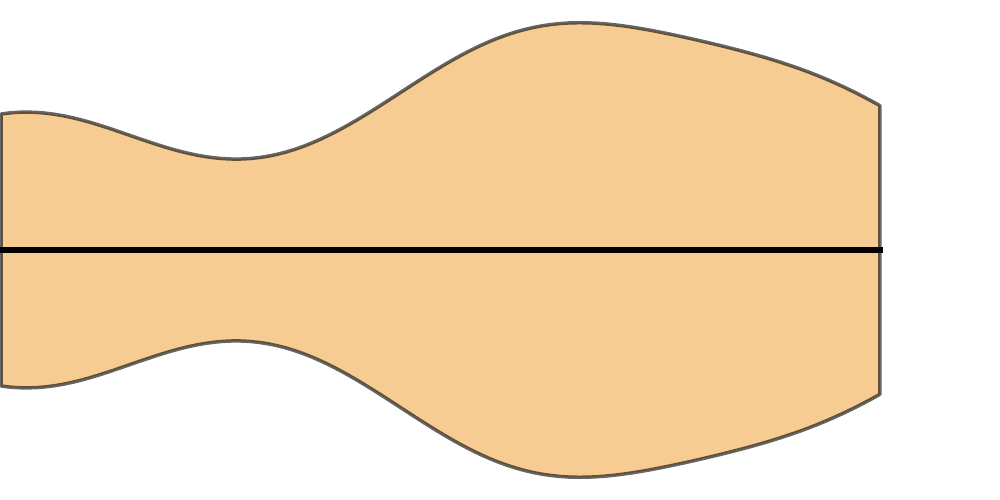}}}}\mbox{\rlap{\mbox{\hspace{15pt}0.5}}} & & \mbox{\rlap{\raisebox{-2pt}{\includegraphics[width=40pt,height=10pt]{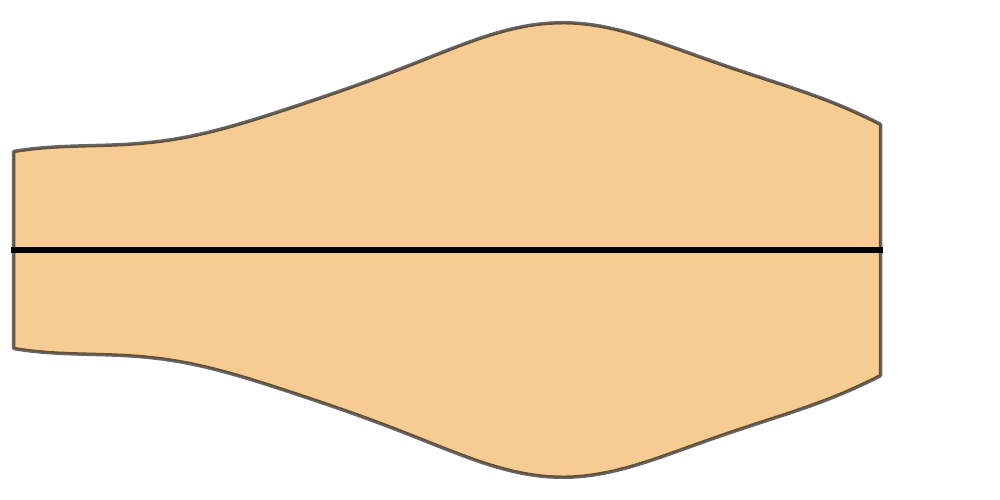}}}}\mbox{\rlap{\mbox{\hspace{15pt}0.5}}} & & \mbox{\rlap{\raisebox{-2pt}{\includegraphics[width=40pt,height=10pt]{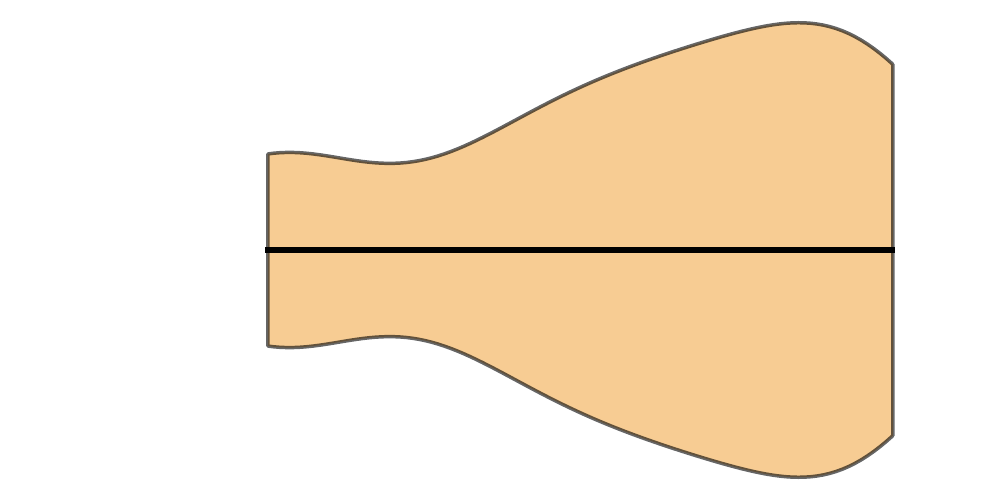}}}}\mbox{\rlap{\mbox{\hspace{21pt}0.7}}} & & \mbox{\rlap{\raisebox{-2pt}{\includegraphics[width=40pt,height=10pt]{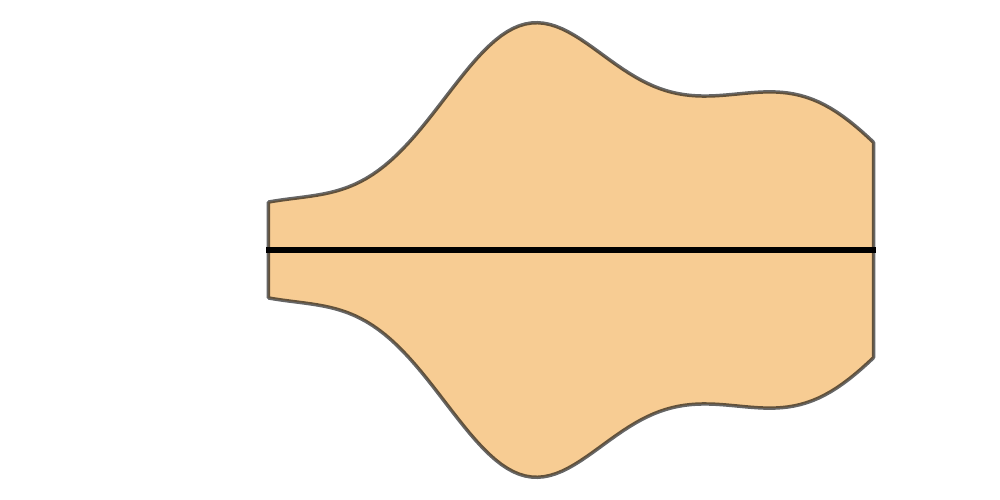}}}}\mbox{\rlap{\mbox{\hspace{18pt}0.6}}} & & \mbox{\rlap{\raisebox{-2pt}{\includegraphics[width=40pt,height=10pt]{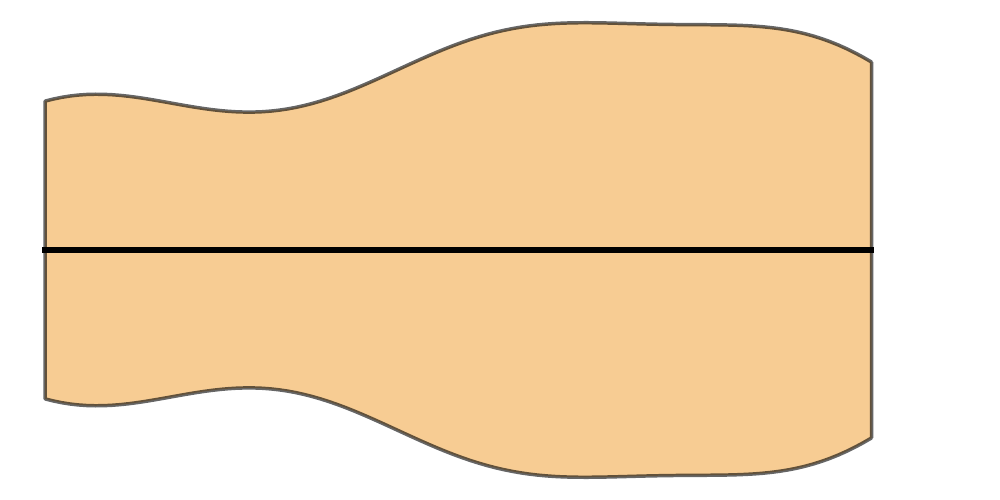}}}}\mbox{\rlap{\mbox{\hspace{15pt}0.5}}} & \\ 
     & ACL & \mbox{\rlap{\raisebox{-2pt}{\includegraphics[width=40pt,height=10pt]{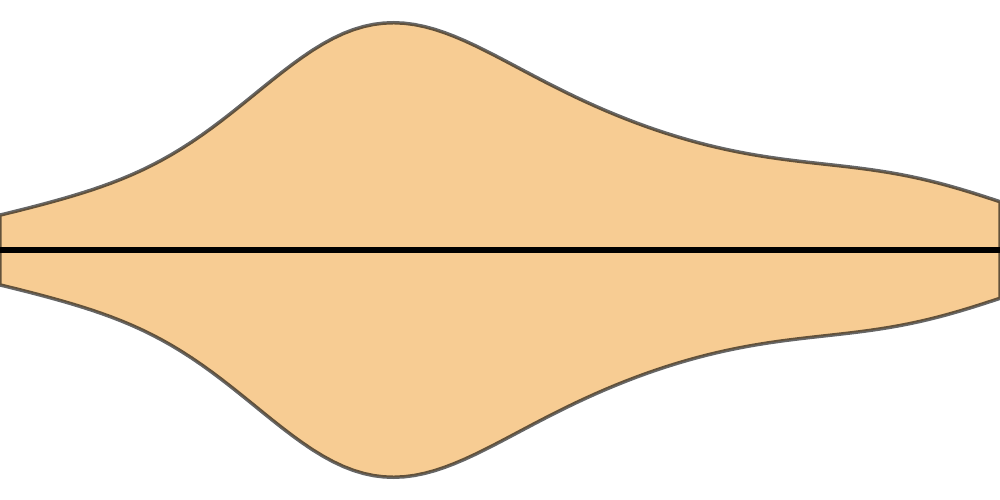}}}}\mbox{\rlap{\mbox{\hspace{15pt}0.5}}} & & \mbox{\rlap{\raisebox{-2pt}{\includegraphics[width=40pt,height=10pt]{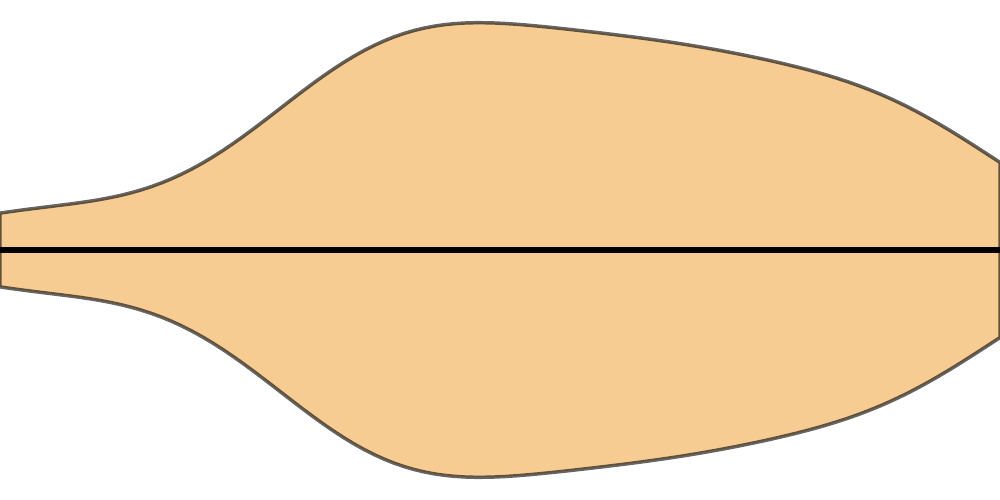}}}}\mbox{\rlap{\mbox{\hspace{15pt}0.5}}} & & \mbox{\rlap{\raisebox{-2pt}{\includegraphics[width=40pt,height=10pt]{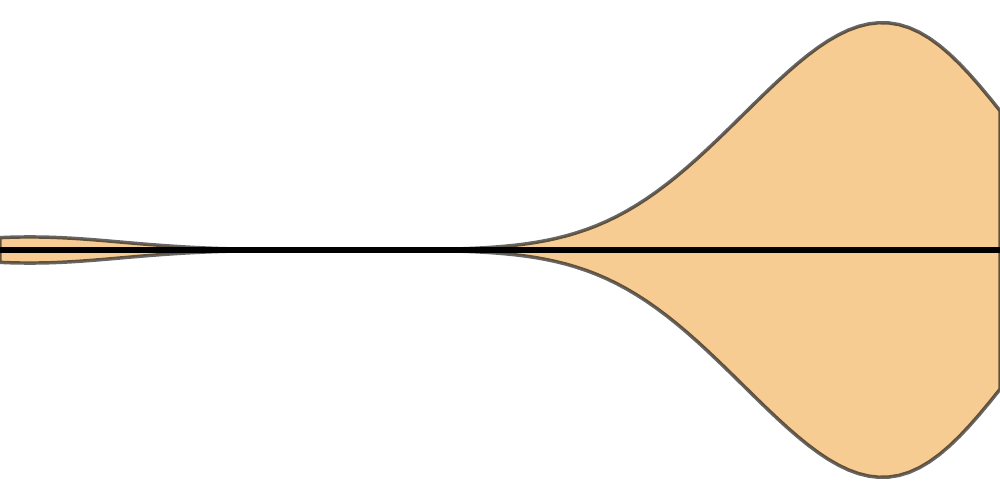}}}}\mbox{\rlap{\mbox{\hspace{21pt}0.7}}} & & \mbox{\rlap{\raisebox{-2pt}{\includegraphics[width=40pt,height=10pt]{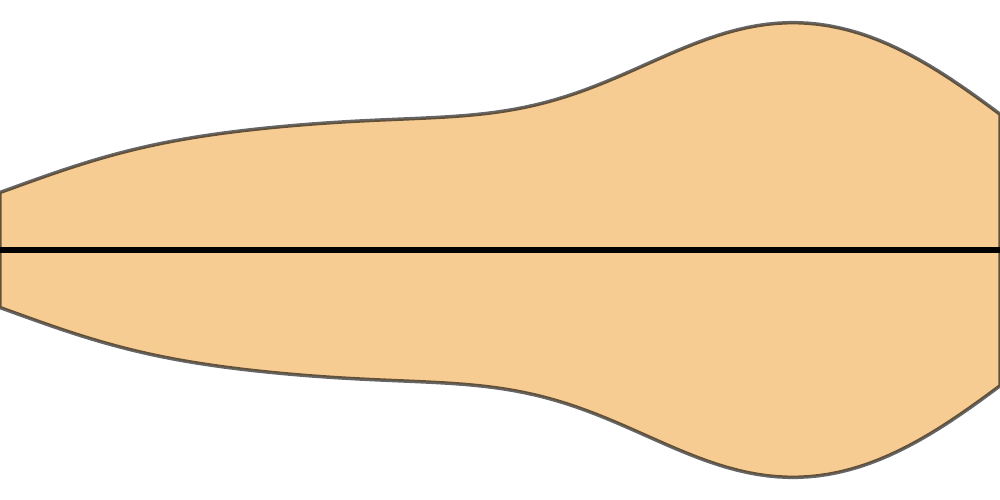}}}}\mbox{\rlap{\mbox{\hspace{21pt}0.7}}} & & \mbox{\rlap{\raisebox{-2pt}{\includegraphics[width=40pt,height=10pt]{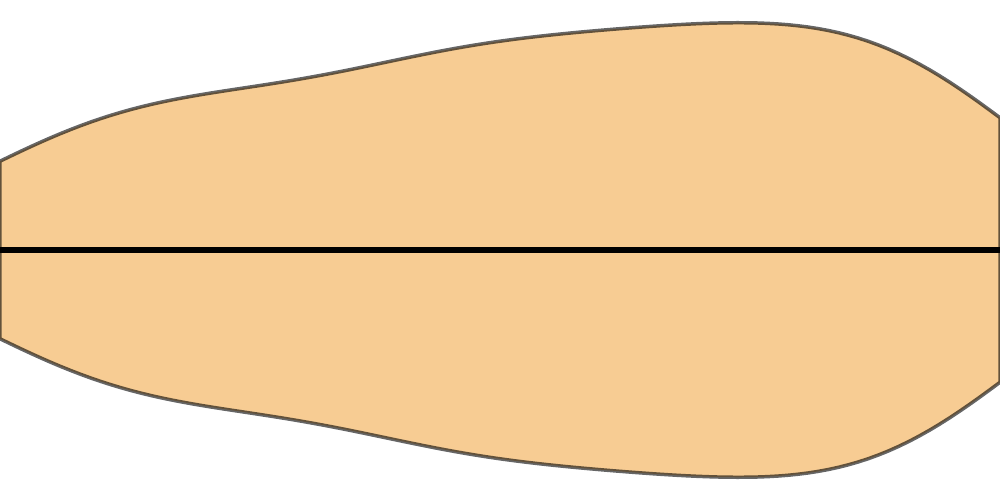}}}}\mbox{\rlap{\mbox{\hspace{21pt}0.7}}} & & \mbox{\rlap{\raisebox{-2pt}{\includegraphics[width=40pt,height=10pt]{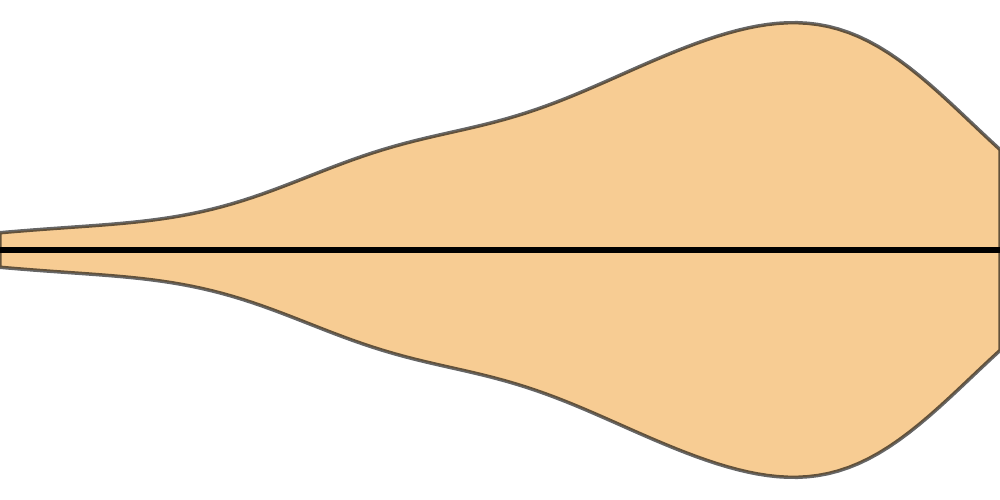}}}}\mbox{\rlap{\mbox{\hspace{21pt}0.7}}} & & \mbox{\rlap{\raisebox{-2pt}{\includegraphics[width=40pt,height=10pt]{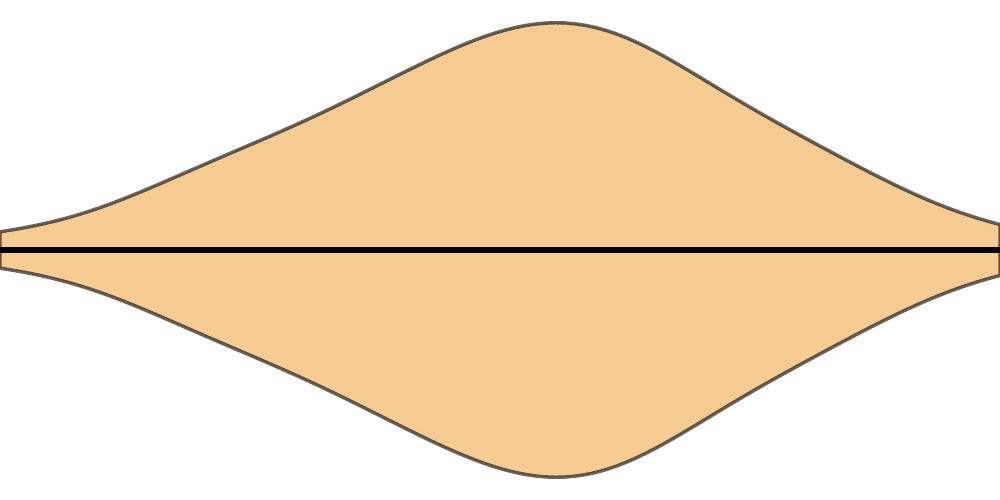}}}}\mbox{\rlap{\mbox{\hspace{21pt}0.7}}} & \\ 
     & FS & \mbox{\rlap{\raisebox{-2pt}{\includegraphics[width=40pt,height=10pt]{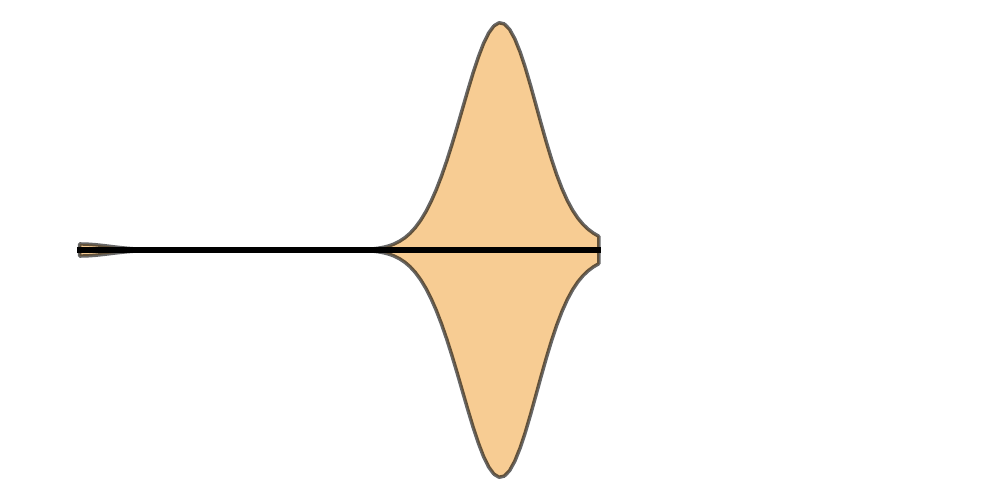}}}}\mbox{\rlap{\mbox{\hspace{15pt}0.5}}} & & \mbox{\rlap{\raisebox{-2pt}{\includegraphics[width=40pt,height=10pt]{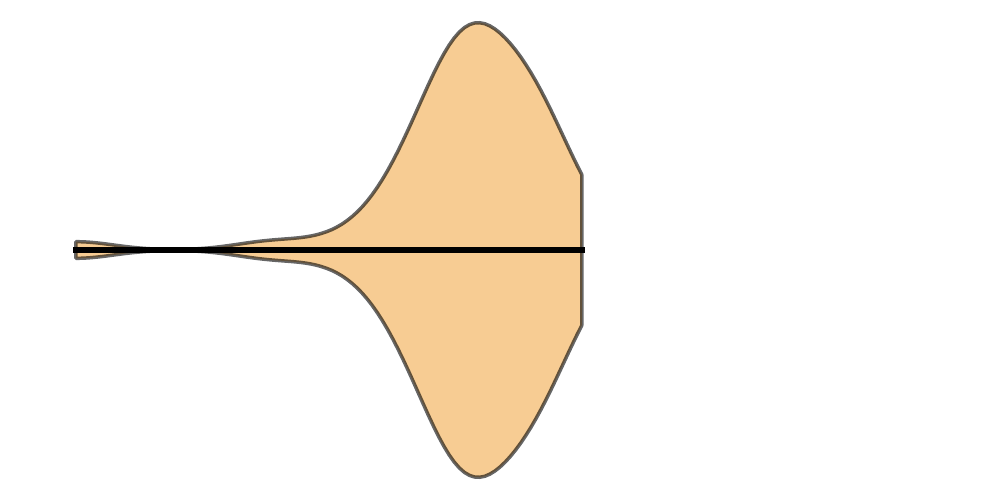}}}}\mbox{\rlap{\mbox{\hspace{15pt}0.5}}} & & \mbox{\rlap{\raisebox{-2pt}{\includegraphics[width=40pt,height=10pt]{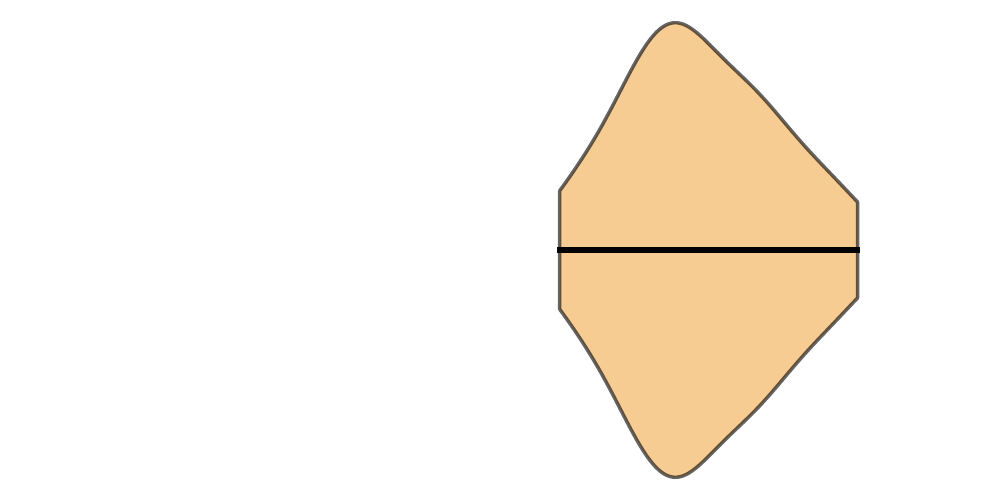}}}}\mbox{\rlap{\mbox{\hspace{21pt}0.7}}} & & \mbox{\rlap{\raisebox{-2pt}{\includegraphics[width=40pt,height=10pt]{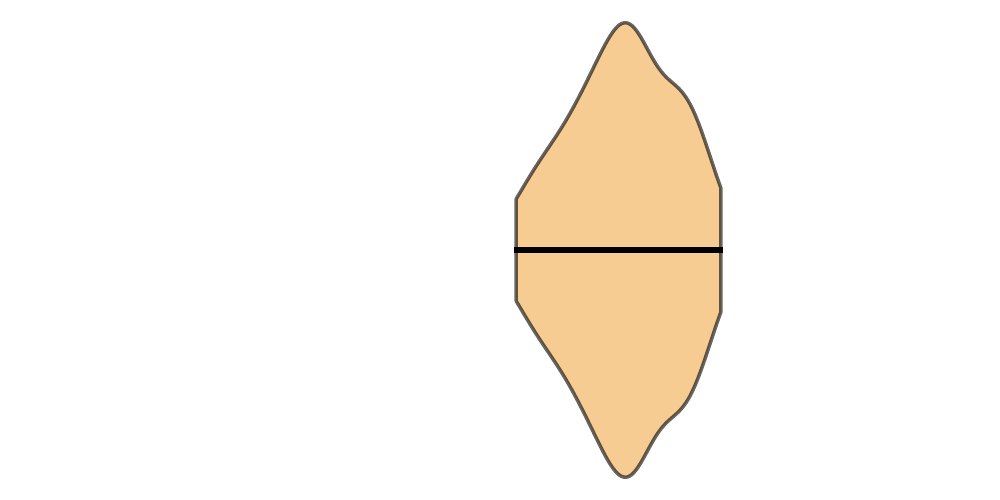}}}}\mbox{\rlap{\mbox{\hspace{18pt}0.6}}} & & \mbox{\rlap{\raisebox{-2pt}{\includegraphics[width=40pt,height=10pt]{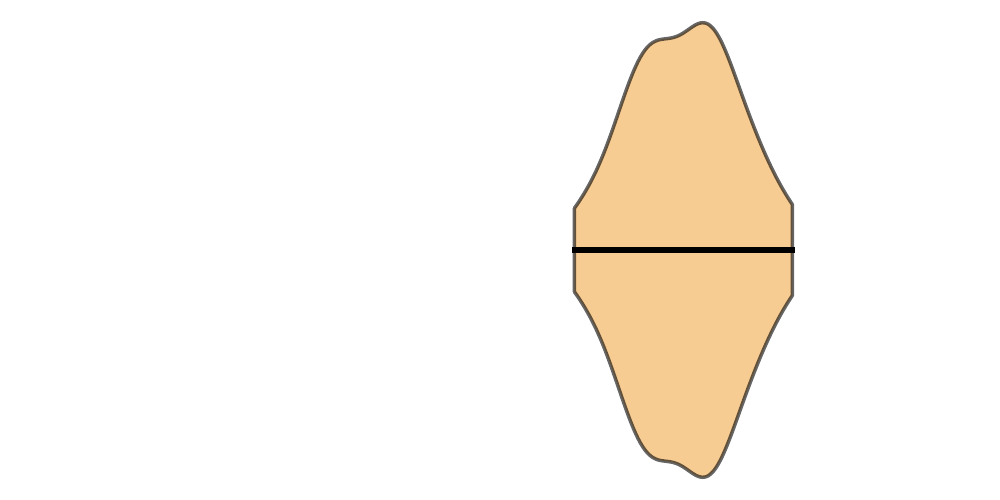}}}}\mbox{\rlap{\mbox{\hspace{21pt}0.7}}} & & \mbox{\rlap{\raisebox{-2pt}{\includegraphics[width=40pt,height=10pt]{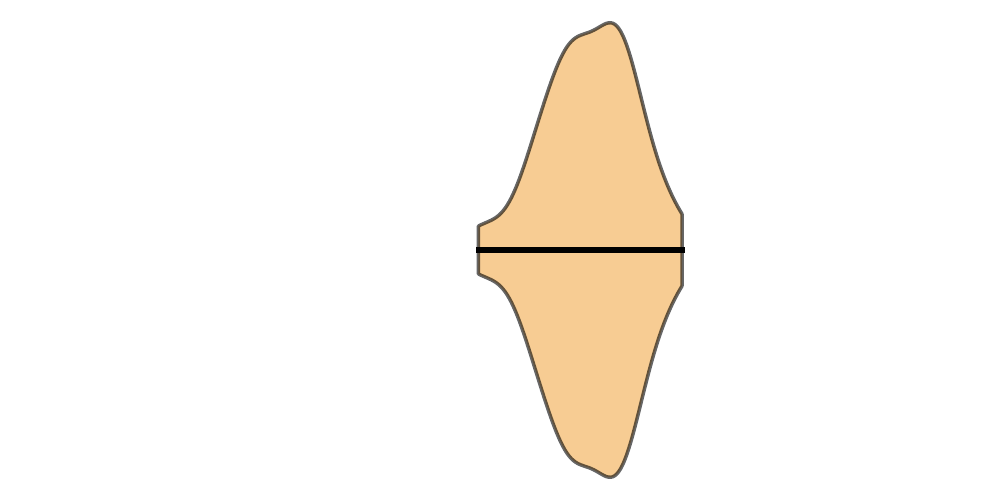}}}}\mbox{\rlap{\mbox{\hspace{18pt}0.6}}} & & \mbox{\rlap{\raisebox{-2pt}{\includegraphics[width=40pt,height=10pt]{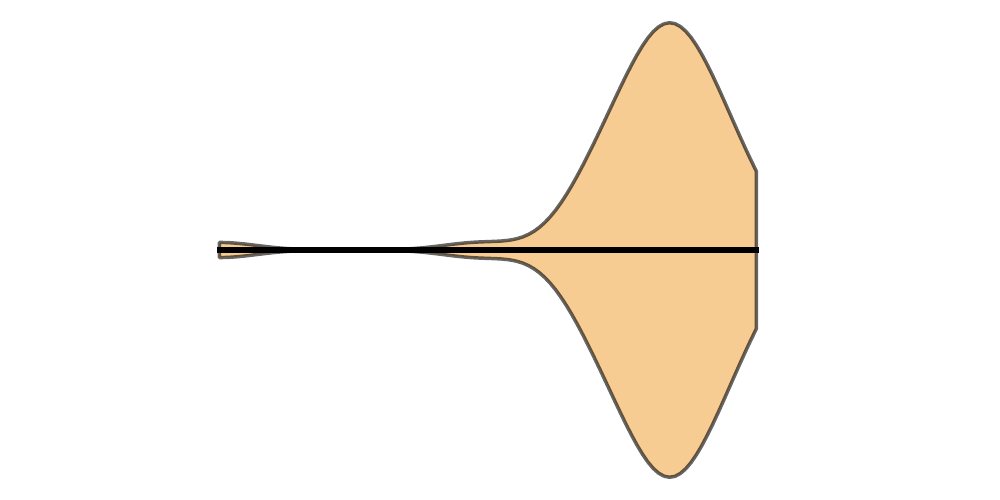}}}}\mbox{\rlap{\mbox{\hspace{21pt}0.7}}} & \\ 
     & HK & \mbox{\rlap{\raisebox{-2pt}{\includegraphics[width=40pt,height=10pt]{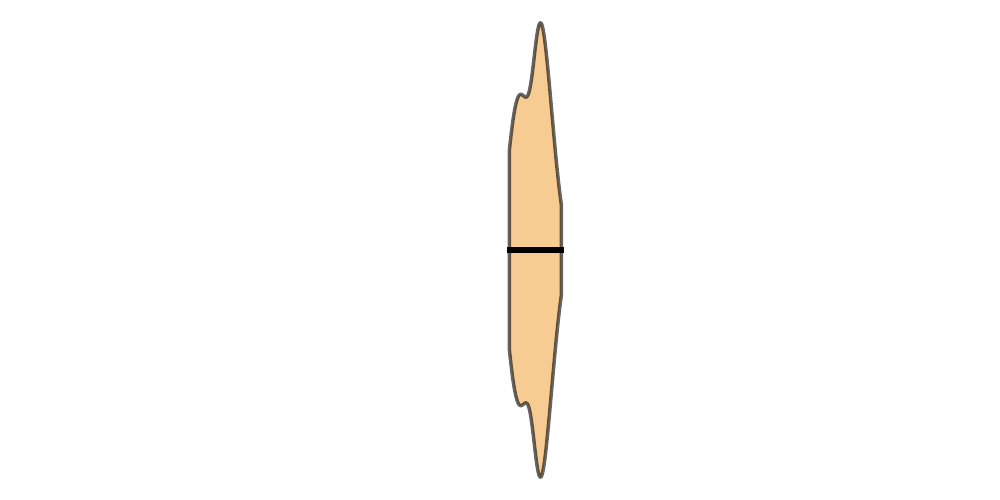}}}}\mbox{\rlap{\mbox{\hspace{15pt}0.5}}} & & \mbox{\rlap{\raisebox{-2pt}{\includegraphics[width=40pt,height=10pt]{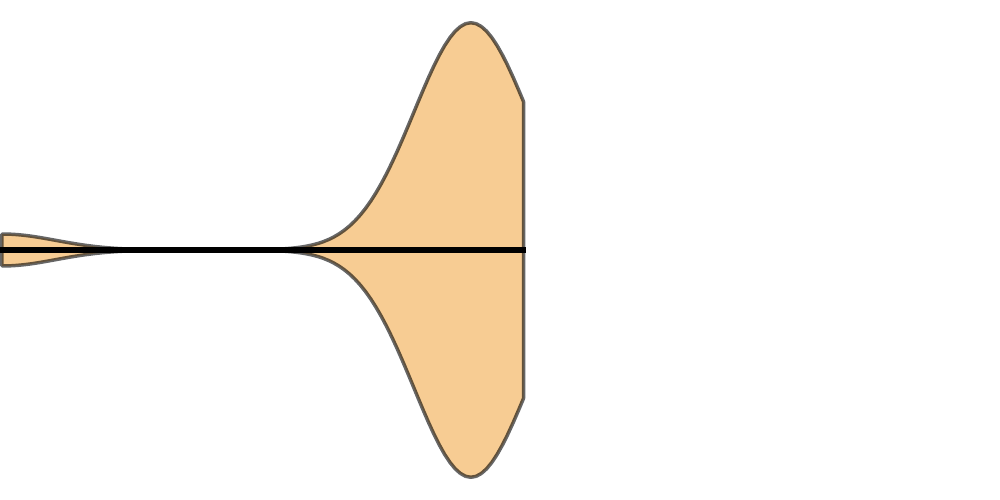}}}}\mbox{\rlap{\mbox{\hspace{15pt}0.5}}} & & \mbox{\rlap{\raisebox{-2pt}{\includegraphics[width=40pt,height=10pt]{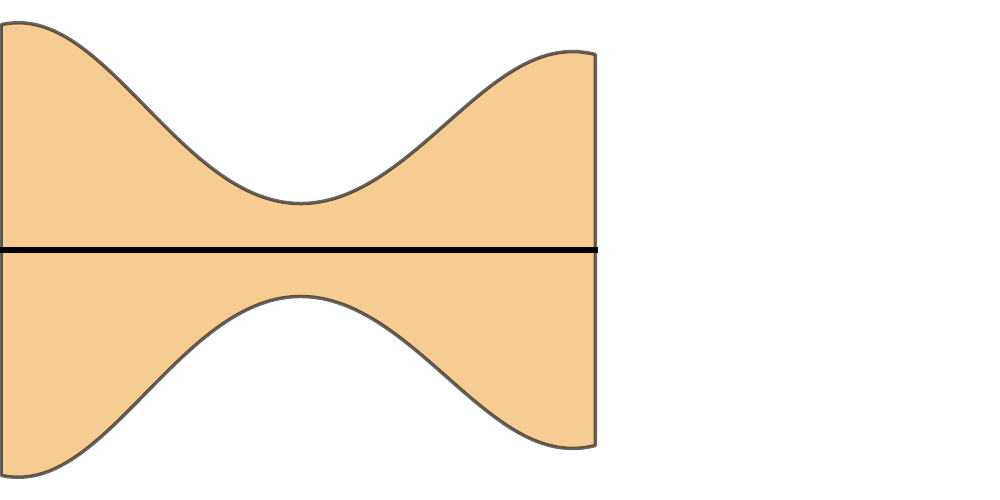}}}}\mbox{\rlap{\mbox{\hspace{0pt}0.0}}} & & \mbox{\rlap{\raisebox{-2pt}{\includegraphics[width=40pt,height=10pt]{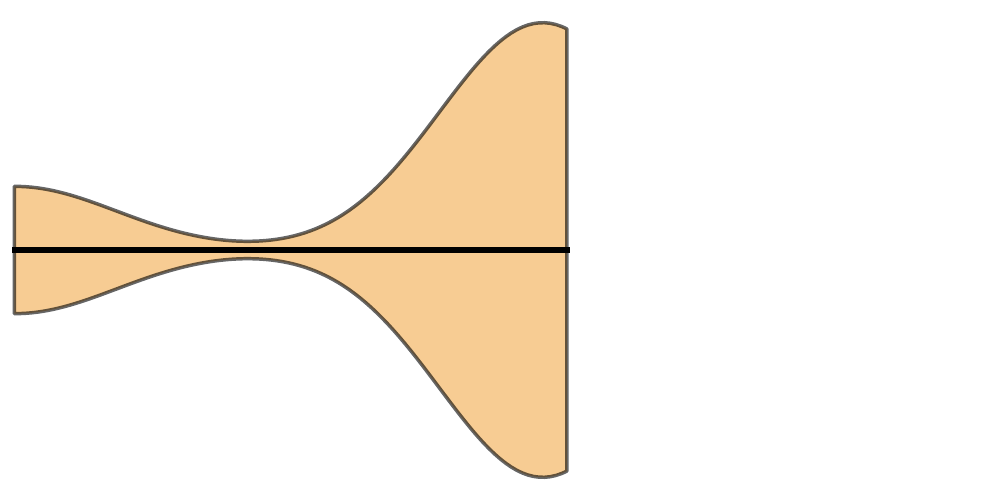}}}}\mbox{\rlap{\mbox{\hspace{15pt}0.5}}} & & \mbox{\rlap{\raisebox{-2pt}{\includegraphics[width=40pt,height=10pt]{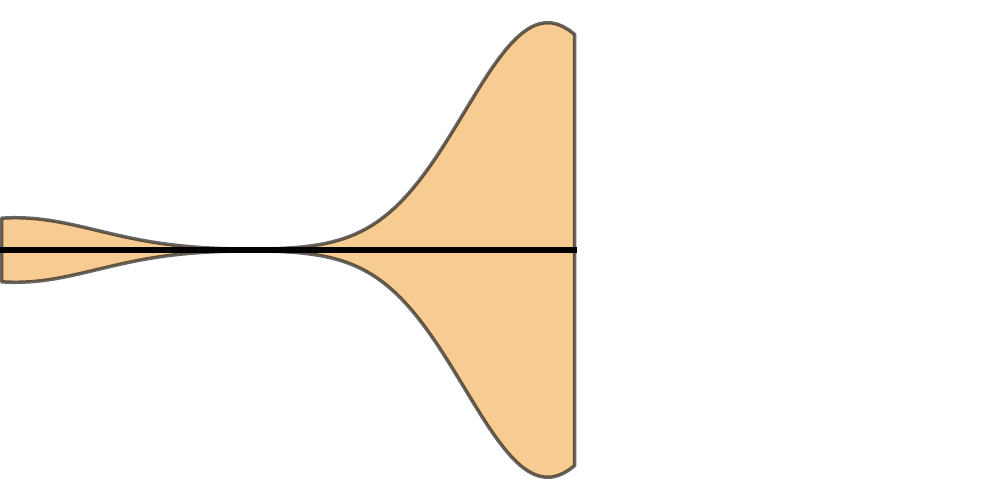}}}}\mbox{\rlap{\mbox{\hspace{15pt}0.5}}} & & \mbox{\rlap{\raisebox{-2pt}{\includegraphics[width=40pt,height=10pt]{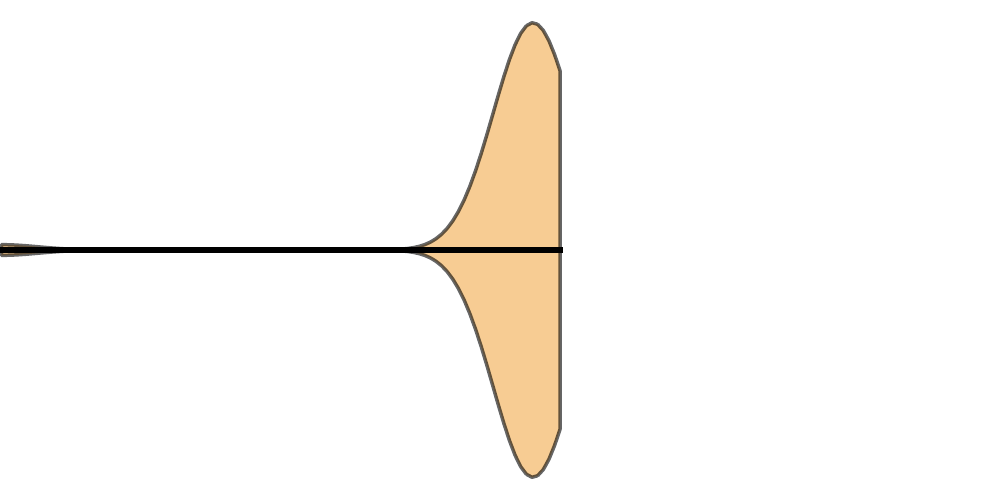}}}}\mbox{\rlap{\mbox{\hspace{15pt}0.5}}} & & \mbox{\rlap{\raisebox{-2pt}{\includegraphics[width=40pt,height=10pt]{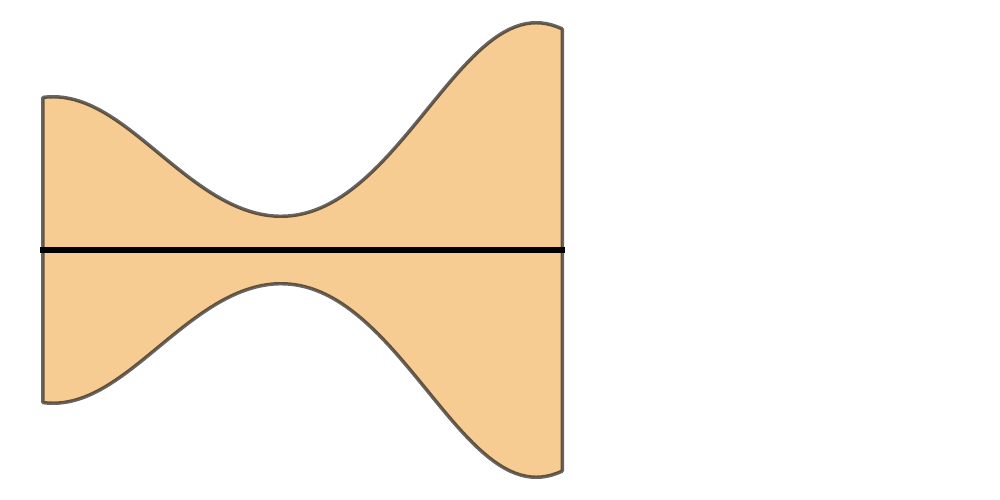}}}}\mbox{\rlap{\mbox{\hspace{15pt}0.5}}} & \\ 
     & NLD & \mbox{\rlap{\raisebox{-2pt}{\includegraphics[width=40pt,height=10pt]{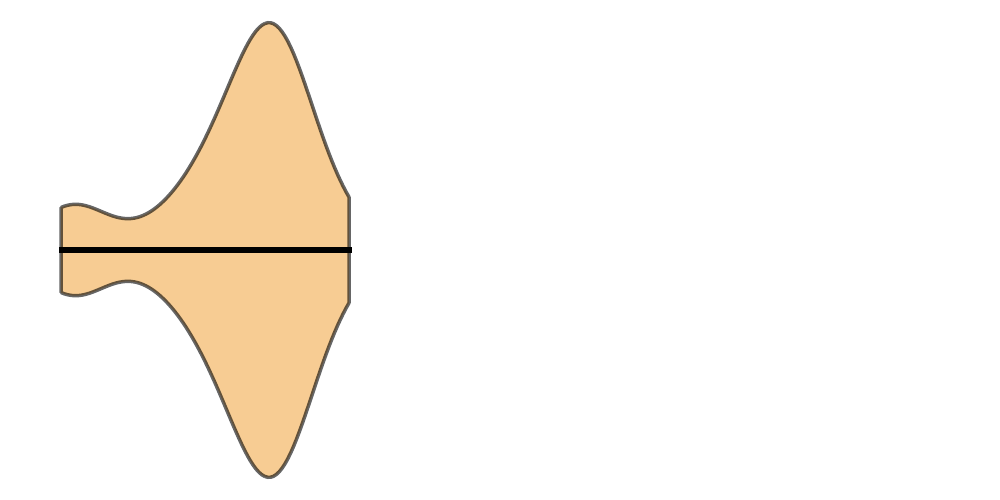}}}}\mbox{\rlap{\mbox{\hspace{9pt}0.3}}} & & \mbox{\rlap{\raisebox{-2pt}{\includegraphics[width=40pt,height=10pt]{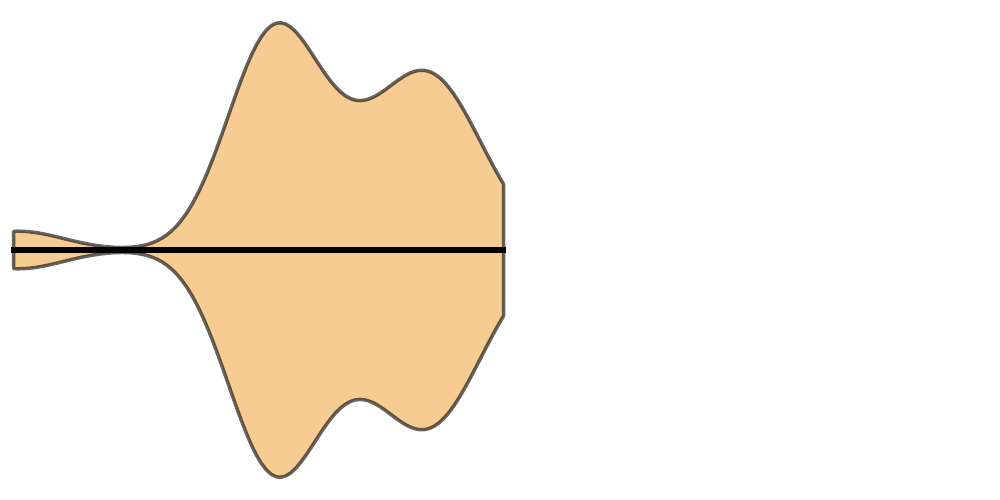}}}}\mbox{\rlap{\mbox{\hspace{9pt}0.3}}} & & \mbox{\rlap{\raisebox{-2pt}{\includegraphics[width=40pt,height=10pt]{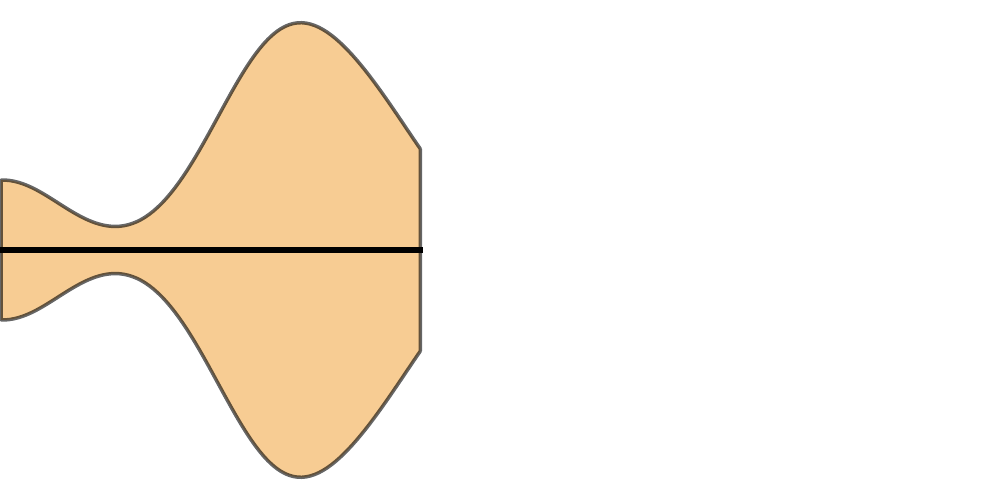}}}}\mbox{\rlap{\mbox{\hspace{9pt}0.3}}} & & \mbox{\rlap{\raisebox{-2pt}{\includegraphics[width=40pt,height=10pt]{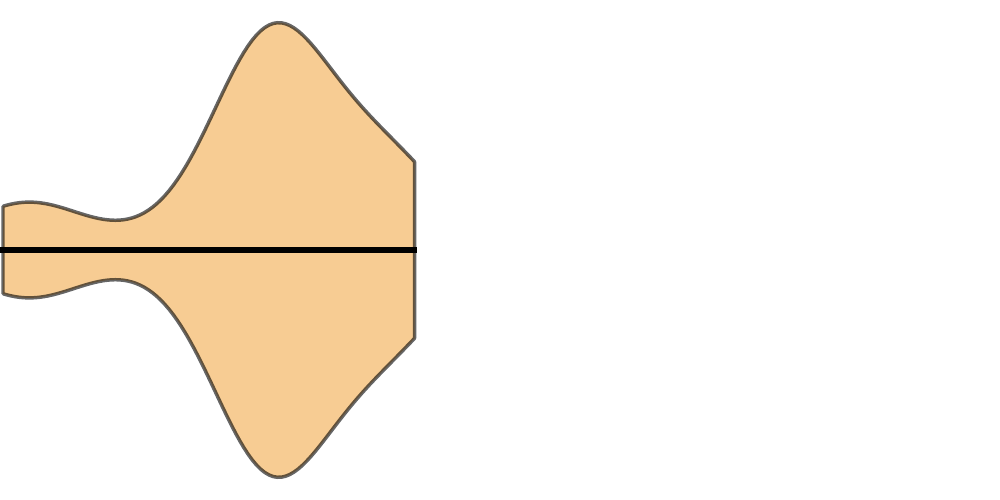}}}}\mbox{\rlap{\mbox{\hspace{9pt}0.3}}} & & \mbox{\rlap{\raisebox{-2pt}{\includegraphics[width=40pt,height=10pt]{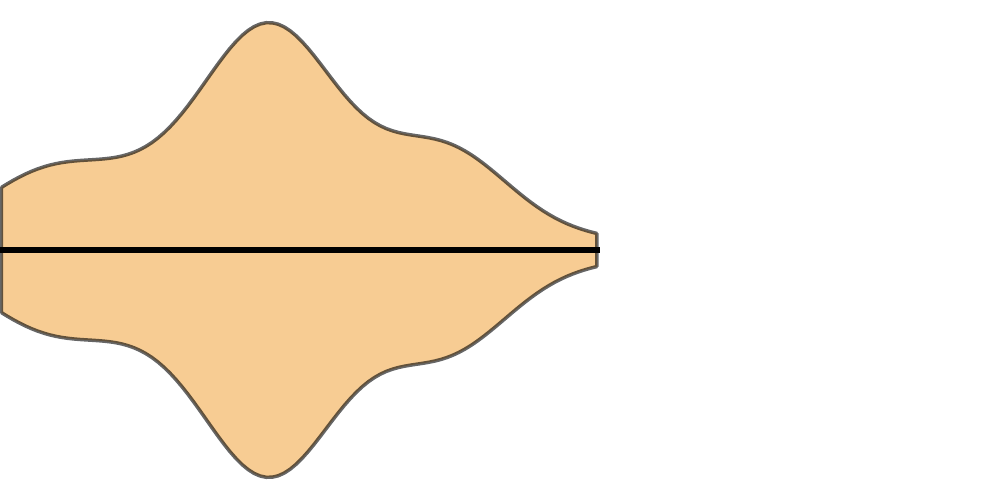}}}}\mbox{\rlap{\mbox{\hspace{9pt}0.3}}} & & \mbox{\rlap{\raisebox{-2pt}{\includegraphics[width=40pt,height=10pt]{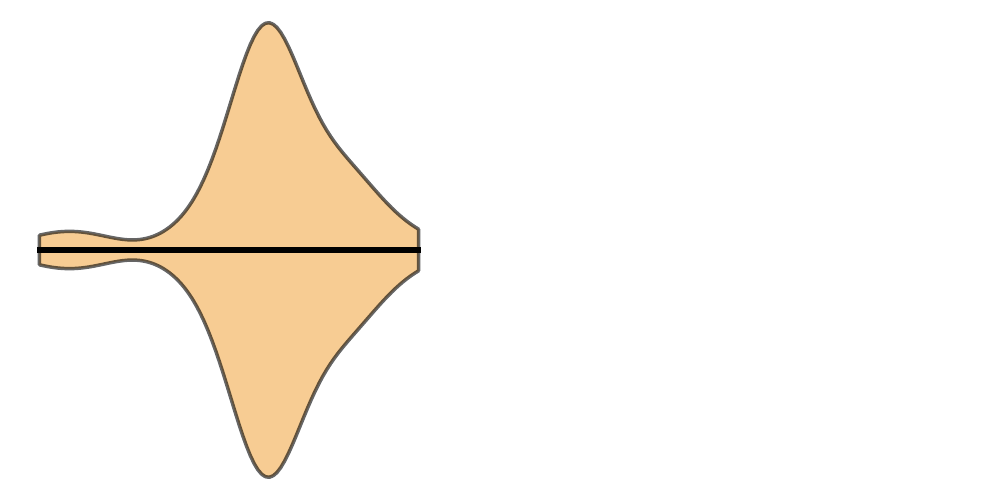}}}}\mbox{\rlap{\mbox{\hspace{9pt}0.3}}} & & \mbox{\rlap{\raisebox{-2pt}{\includegraphics[width=40pt,height=10pt]{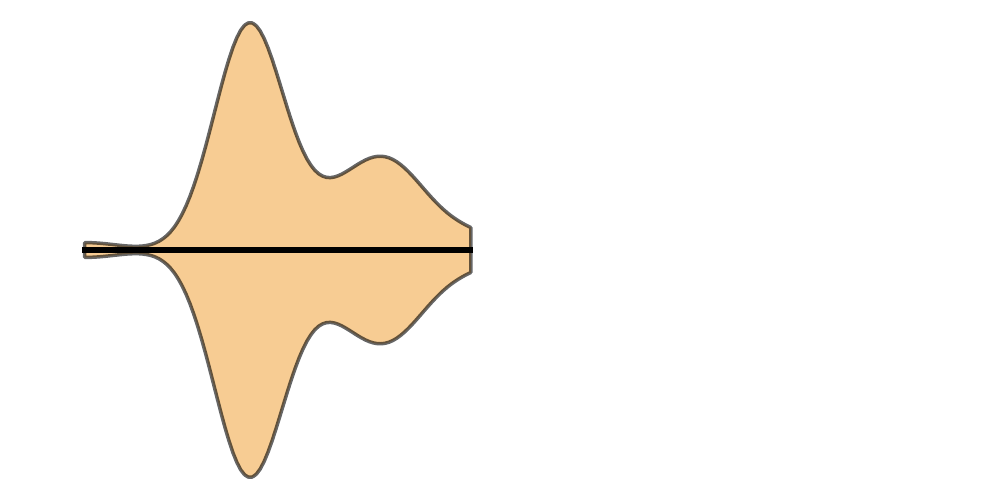}}}}\mbox{\rlap{\mbox{\hspace{6pt}0.2}}} & \\ 
     & GCN & \mbox{\rlap{\raisebox{-2pt}{\includegraphics[width=40pt,height=10pt]{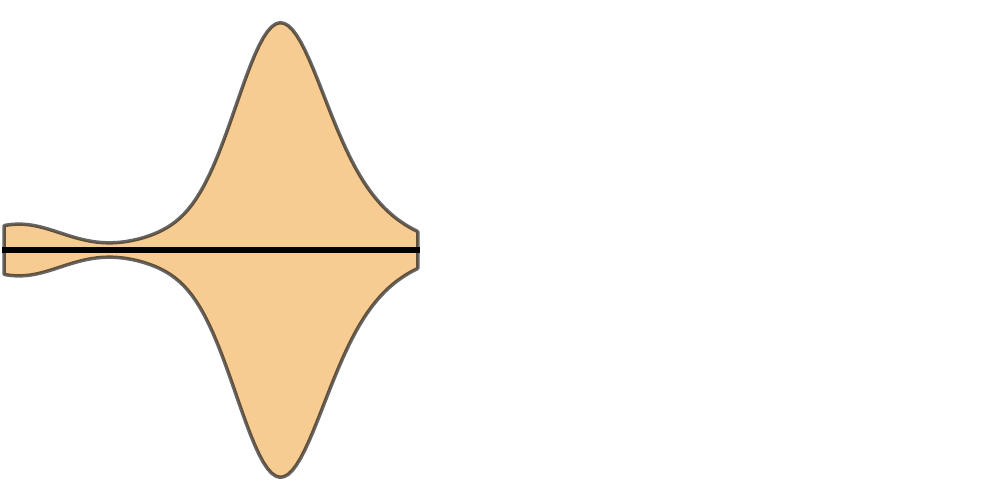}}}}\mbox{\rlap{\mbox{\hspace{9pt}0.3}}} & & \mbox{\rlap{\raisebox{-2pt}{\includegraphics[width=40pt,height=10pt]{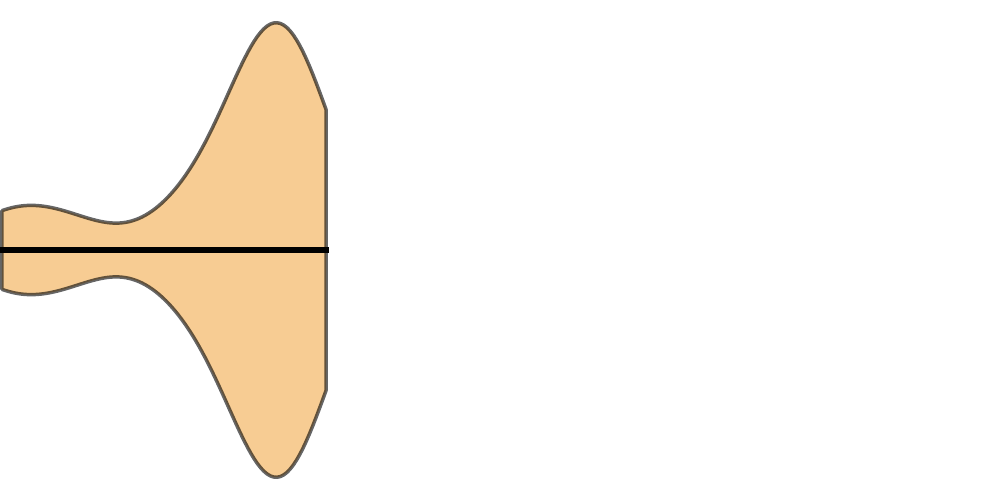}}}}\mbox{\rlap{\mbox{\hspace{9pt}0.3}}} & & \mbox{\rlap{\raisebox{-2pt}{\includegraphics[width=40pt,height=10pt]{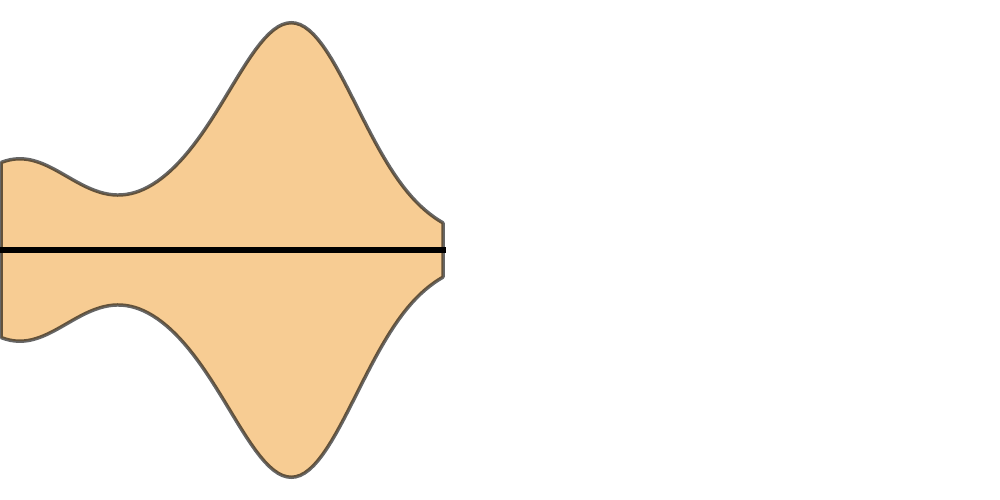}}}}\mbox{\rlap{\mbox{\hspace{9pt}0.3}}} & & \mbox{\rlap{\raisebox{-2pt}{\includegraphics[width=40pt,height=10pt]{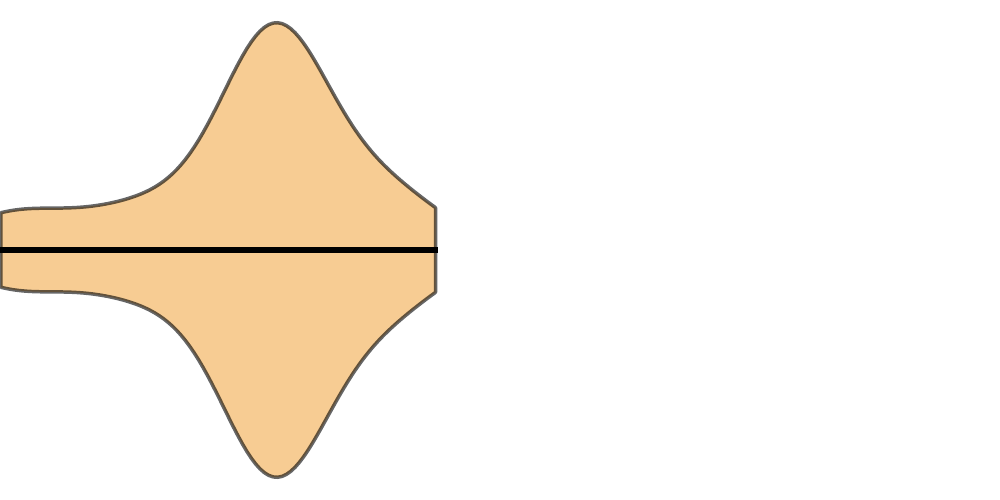}}}}\mbox{\rlap{\mbox{\hspace{9pt}0.3}}} & & \mbox{\rlap{\raisebox{-2pt}{\includegraphics[width=40pt,height=10pt]{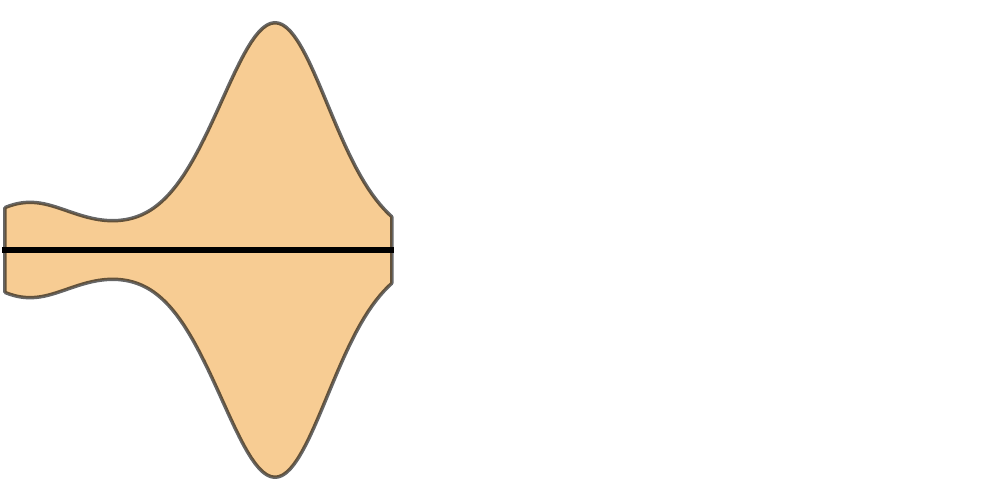}}}}\mbox{\rlap{\mbox{\hspace{9pt}0.3}}} & & \mbox{\rlap{\raisebox{-2pt}{\includegraphics[width=40pt,height=10pt]{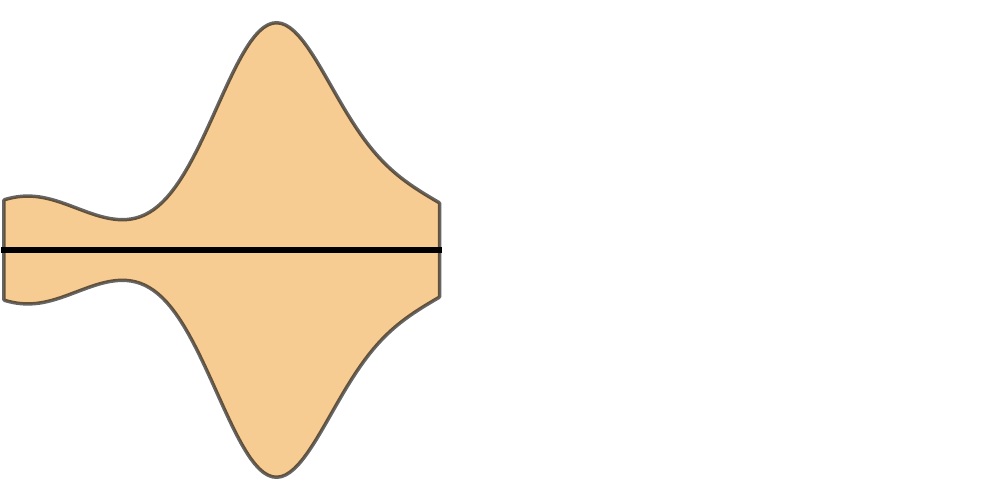}}}}\mbox{\rlap{\mbox{\hspace{9pt}0.3}}} & & \mbox{\rlap{\raisebox{-2pt}{\includegraphics[width=40pt,height=10pt]{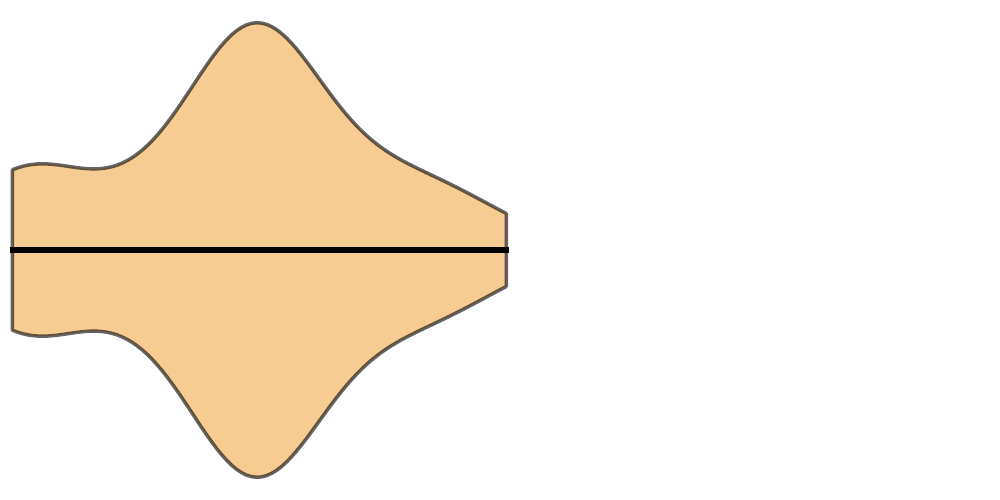}}}}\mbox{\rlap{\mbox{\hspace{9pt}0.3}}} & \\ 
\bottomrule 
\end{tabularx}

\end{fullwidthtable}

\begin{table}[t]

\centering
\begin{tabularx}{\linewidth}{@{}l*{9}{X}@{}}
\toprule
Method & SLQ & SLQ$\delta$ & \mbox{\rlap{CRD-3}} & \mbox{\rlap{CRD-5}} & ACL & FS & HK & NLD & GCN\\
\cmidrule(r){1-10}
Time  & 123 & 80 & 3049 & 9378 & 12 & 1593 & 106 & 10375 & 16534 \\
(seconds) & \\
\bottomrule
\end{tabularx}
\vspace{-0.5\baselineskip}
\caption{Total running time of methods in this experiment.}
\label{table:facebook-runtime}
\end{table}

\subsection{Recall during a sweep}
\label{sec:experiment-3}
The final experiment evaluates a finding from~\cite{Kloumann-2014-seeds} on the recall of seed-based community detection methods. For a group of communities with roughly the same size, we evaluate the recall of the largest $k$ entries in a diffusion vector. Minimizing conductance is not an objective in this experiment.
They found PageRank (ACL) outperformed many different methods. Also, ACL -- with the standard degree normalization for conductance based sweepcuts performed worse than ACL without degree normalization in this particular setting, which is different from what conductance theory suggests. Here, with the flexibility of $q$, we see the same general result with respect to degree normalization and found that SLQ with $q > 2$ gives the best performance even though the conductance theory suggests $1<q<2$ for the best conductance bounds.

\subsection{Varying seed size}\label{sec:seeds}

Finally we would like to describe an experiment where we study the performance change of different methods when varying the size of the seed set. The dataset we use is the same MIT Facebook dataset and the target cluster is class year 2008. This choice is one where most of the methods in Table~\ref{tab:facebook} did poorly, but ACL did better in some trials. We repeat 50 times for each seed size level. From the previous experiments, we can see that none of the methods works well finding this cluster. In this experiment, we only report results from SLQ, ACL, FS, CRD-3 and HK as they are all strongly local methods and they perform better than global methods as we have seen from previous experiments. Also, we didn't add CRD-5 because CRD-3 performed better than CRD-5 on this particular cluster as shown in Table~\ref{tab:facebook}. The result of this experiment is in Figure~\ref{fig:varing-seeds}. When seed size is smaller than 15 nodes, the F1 score of all methods improves as we increase seed size. After 15 nodes, only the F1 score of SLQ and ACL continues to improve when seed size becomes larger, while the performance of other methods stays the same or even slightly worse.

\begin{marginfigure}
\centering
\includegraphics[width=\linewidth]{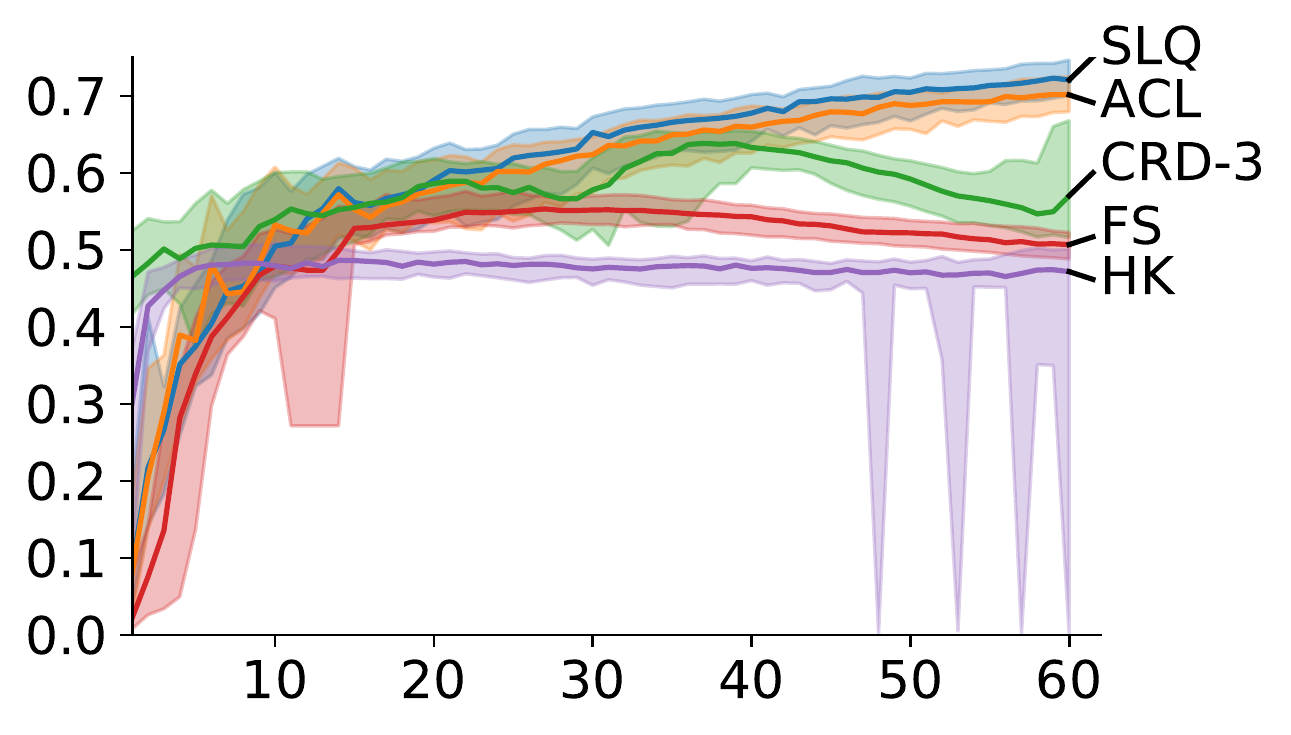}
\caption{This figure shows the performance change (F1 score) of different methods when we vary the size of seed set. The dataset is MIT Facebook with the true cluster to be class year 2008. The envelope represents 20\%-80\% quantile.}
\label{fig:varing-seeds}
\end{marginfigure}

\begin{fullwidthfigure}[t]
\centering
\parbox{0.5\linewidth}{(a) DBLP}\hfill
\parbox{0.5\linewidth}{(b) LiveJournal}%

\parbox{0.5\linewidth}{
\includegraphics[width=0.95\linewidth]{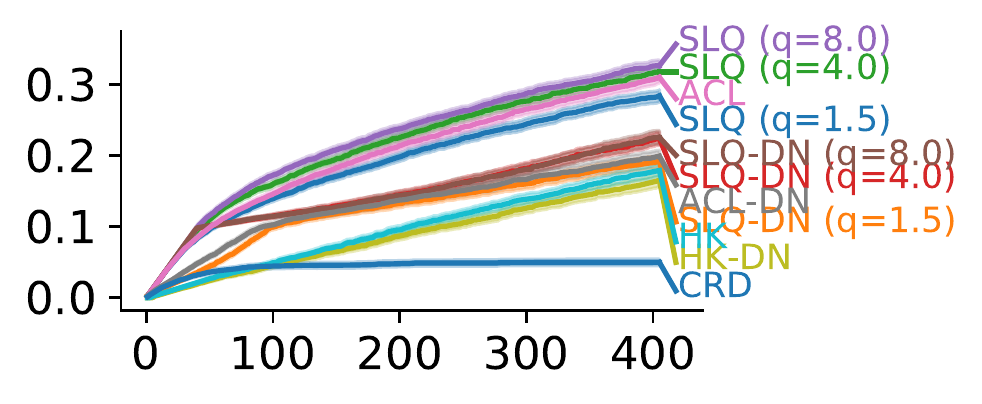}
}%
\parbox{0.5\linewidth}{
\includegraphics[width=0.95\linewidth]{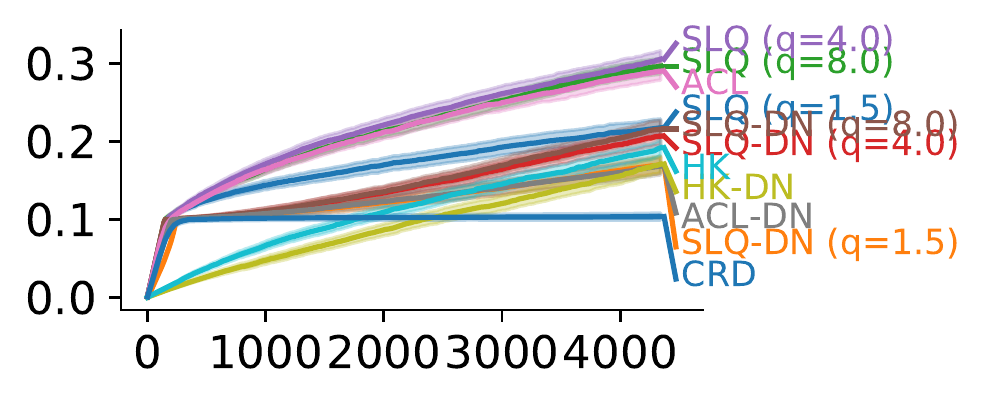}
}
\caption{A replication of an experiment from \cite{Kloumann-2014-seeds} with SLQ on DBLP \cite{Backstrom-2006-group-formation,Yang-2012-ground-truth} (with 1M edges) and edges LiveJournal~\cite{Mislove-2007-measurement} (with 65M edges). The plot shows median recall over 600 groups of roughly the same size  as we look at the top $k$ entries in the solution vector (x axis). The envelope represents 2 standard error. This shows SLQ with $q > 2$ gives better performance than ACL (PageRank), and all improve on the degree-normalized (DN) versions used for conductance-minimizing sweep cuts. } 
\label{fig:600-communities}

\end{fullwidthfigure}

\paragraph{Reproduction details}
For HK and CRD-3, we use the same parameters as the previous Facebook experiment. For ACL and SLQ, we use a coarse binary search (initial region is between 0.001 and 0.1, smallest feasible region is 0.001) to find a good sparsity level such that the total number of nonzero entries is 20\% of the total number of nodes. The other parameters are the same as the previous Facebook experiment. We also use a similar coarse binary search (initial region is between 0.4 and 5.0, smallest feasible region is 0.1) to choose $\epsilon$ for FS. We didn't implement this procedure for CRD and HK because CRD doesn't have a standalone parameter to control the sparsity of the solution and HK has already been set up to choose the best cluster from a list of parameters. One thing we would like to mention is that in Table~\ref{tab:facebook}, we use 1\% nodes of the true cluster as seeds which is roughly 32 nodes in this case. So we can see that the performance of both ACL and SLQ is improved upon this extra layer of binary search (i.e. the median F1 score is increased to 0.6). While the performance of FS remains the same.

\subsection{Our Full Julia implementation}
\label{sec:julia}

Our full implementation is available in the \verb#SLQ.jl# function on github: \url{github.com/MengLiuPurdue/SLQ} and the experiment codes are available  too. 
We verified this Julia implementation of ACL is as efficient as ACL implemented in C++. So there is no appreciable overhead of using Julia compared with C or C++ for this computation. 

First we want to mention that in our experiments, we find that we can speed up SLQ by using a slightly modified binary search procedure. The logic is when $q$ is close to 1 and $\text{vol}(S)$ is small, $\Delta x_i$ after each step of ``push'' procedure is also small. So it doesn't make sense to set the initial range of binary search to be $[0,1]$. Instead, we set the initial range to be $[10^{k-1}t,10^{k}t]$, where $t$ is chosen from either last $\Delta x_i$ or $(\text{vol}(S)/\text{vol}(A))^{1/(q-1)}$.  (Note this is just the lower bound of $x_i$ when $\gamma\rightarrow 0$.) Since we can check which side of the bounds we are on, we then determine a value of $k$ by checking $k=1,2,...$ until the residual becomes negative. This strategy is implemented in our code.

\section{Related work and discussion}
\label{sec:related}
\label{sec:discussion}

The most strongly related work was posted to arXiv~\cite{Yang-preprint} contemporaneously as we were finalizing our results. This research applies a $p$-norm function to the flow dual of the mincut problem with a similar motivation. This bears a resemblance to our procedures, but does differ in that we include the localizing set $S$ in our nonlinear penalty. Also, our solver uses the cut values instead of the flow dual on the edges and we include details that enable q-Huber and Berq functions for faster computation. In the future, we plan to compare the approaches more concretely. 

There also remain ample opportunities to further optimize our procedures. As we were developing these ideas, we drew inspiration from algorithms for $p$-norm regression~\cite{adil2019iterative}. Also there are faster converging (in theory) solvers using different optimization procedures~\cite{Fountoulakis-2017-variational} for 2-norm problems as well as parallelization strategies~\cite{SKFM2016}. 

Our work further contributes to the ongoing research into $p$-Laplacian research~\cite{Amghibech-2003-pLaplacian,buhler2009spectral,NIPS2011_4185,Bridle-2013-p-Laplacians,Li-2018-hypergraphs} by giving a related problem that can be solved in a strongly local fashion. We note that our ideas can be easily adapted to the growing space of hypergraph and higher-order graph analysis literature~\cite{Benson-2016-motif-spectral,Yin-2017-local-motif,Li-2018-hypergraphs} where the strategy is to derive a useful hypergraph from graph data to support deeper analysis. We are also excited by the opportunities to combine with generalized Laplacian perspectives on diffusions~\cite{ghosh2014interplay}. Moreover, our work contributes to the general idea of using \emph{simple} nonlinearities on existing successful methods. A recent report shows that a simple nonlinearity on a Laplacian pseudoinverse is competitive with complex embedding procedures~\cite{chanpuriya2020infinitewalk}.  

Finally, we note that there are more general constructions possible. For instance, differential penalties for $S$ and $\bar{S}$ in the localized cut graph can be used for a variety of effects~\cite{orecchia2014flow,Veldt-2019-resolution}. For $1$-norm objectives, optimal parameters for $\gamma$ and $\kappa$ can also be chosen to model desierable clusters~\cite{Veldt-2019-resolution} -- similar ideas may be possible for these $p$-norm generalizations. We view the structured flexibility of these ideas as a key advantage because ideas are easy to compose. This contributed to using personalized PageRank to make graph convolution networks faster~\cite{klicpera2018predict}. 

In conclusion, given the strong similarities to the popular ACL -- and the improved performance in practice -- we are excited about the possibilities for localized $p$-norm-cuts in graph-based learning.

\begin{fullwidth}
	\bibcolumns=3
\bibliographystyle{dgleich-bib}
\bibliography{references}

\begin{thebibliography}{60}
\providecommand{\natexlab}[1]{#1}
\providecommand{\bibextraformatting}{\relax}
\bibextraformatting

\bibitem[\protect\citeauthoryear{Adil et~al.}{2019}]{adil2019iterative}
D.~\textsc{Adil}, R.~\textsc{Kyng}, R.~\textsc{Peng}, and S.~\textsc{Sachdeva}.
\newblock \emph{Iterative refinement for $\ell$p-norm regression}.
\newblock In \emph{Proceedings of the Thirtieth Annual ACM-SIAM Symposium on
  Discrete Algorithms}, pp. 1405--1424. 2019.

\bibitem[\protect\citeauthoryear{Alamgir and Luxburg}{2011}]{NIPS2011_4185}
M.~\textsc{Alamgir} and U.~V. \textsc{Luxburg}.
\newblock
  \href{http://papers.nips.cc/paper/4185-phase-transition-in-the-family-of-p-resistances.pdf}{\emph{Phase
  transition in the family of p-resistances}}.
\newblock In \emph{Advances in Neural Information Processing Systems 24}, pp.
  379--387. Curran Associates, Inc., 2011.

\bibitem[\protect\citeauthoryear{Amghibech}{2003}]{Amghibech-2003-pLaplacian}
S.~\textsc{Amghibech}.
\newblock \emph{Eigenvalues of the discrete p-laplacian for graphs}.
\newblock Ars Comb., 67, 2003.

\bibitem[\protect\citeauthoryear{Andersen et~al.}{2006}]{andersen2006local}
R.~\textsc{Andersen}, F.~\textsc{Chung}, and K.~\textsc{Lang}.
\newblock \emph{Local graph partitioning using pagerank vectors}.
\newblock In \emph{2006 47th Annual IEEE Symposium on Foundations of Computer
  Science (FOCS'06)}, pp. 475--486. 2006.

\bibitem[\protect\citeauthoryear{Andersen and Lang}{2008}]{AL08_SODA}
R.~\textsc{Andersen} and K.~J. \textsc{Lang}.
\newblock \emph{An algorithm for improving graph partitions}.
\newblock In \emph{Proceedings of the nineteenth annual ACM-SIAM symposium on
  Discrete algorithms}, pp. 651--660. 2008.

\bibitem[\protect\citeauthoryear{Backstrom
  et~al.}{2006}]{Backstrom-2006-group-formation}
L.~\textsc{Backstrom}, D.~\textsc{Huttenlocher}, J.~\textsc{Kleinberg}, and
  X.~\textsc{Lan}.
\newblock \href{http://dx.doi.org/10.1145/1150402.1150412}{\emph{Group
  formation in large social networks: membership, growth, and evolution}}.
\newblock In \emph{Proceedings of the 12th ACM SIGKDD international conference
  on Knowledge discovery and data mining}, pp. 44--54. 2006.
\newblock \href {http://dx.doi.org/10.1145/1150402.1150412}
  {\normalcolor\path{doi:10.1145/1150402.1150412}}.

\bibitem[\protect\citeauthoryear{Benson
  et~al.}{2016}]{Benson-2016-motif-spectral}
A.~\textsc{Benson}, D.~F. \textsc{Gleich}, and J.~\textsc{Leskovec}.
\newblock \href{http://dx.doi.org/10.1126/science.aad9029}{\emph{Higher-order
  organization of complex networks}}.
\newblock Science, 353~(6295), pp. 163--166, 2016.
\newblock \href {http://dx.doi.org/10.1126/science.aad9029}
  {\normalcolor\path{doi:10.1126/science.aad9029}}.

\bibitem[\protect\citeauthoryear{Blum and Chawla}{2001}]{Blum-2001-mincuts}
A.~\textsc{Blum} and S.~\textsc{Chawla}.
\newblock
  \href{http://www.aladdin.cs.cmu.edu/papers/pdfs/y2001/mincut.pdf}{\emph{Learning
  from labeled and unlabeled data using graph mincuts}}.
\newblock In \emph{Proceedings of the Eighteenth International Conference on
  Machine Learning}, pp. 19--26. 2001.

\bibitem[\protect\citeauthoryear{Brindle and
  Zhu}{2013}]{Bridle-2013-p-Laplacians}
N.~\textsc{Brindle} and X.~\textsc{Zhu}.
\newblock
  \href{http://snap.stanford.edu/mlg2013/submissions/mlg2013_submission_6.pdf}{\emph{p-voltages:
  Laplacian regularization for semi-supervised learning on high-dimensional
  data.}}
\newblock Workshop on Mining and Learning with Graphs (MLG2013), 2013.

\bibitem[\protect\citeauthoryear{B{\"u}hler and
  Hein}{2009}]{buhler2009spectral}
T.~\textsc{B{\"u}hler} and M.~\textsc{Hein}.
\newblock \emph{Spectral clustering based on the graph p-laplacian}.
\newblock In \emph{Proceedings of the 26th Annual International Conference on
  Machine Learning}, pp. 81--88. 2009.

\bibitem[\protect\citeauthoryear{Chanpuriya and
  Musco}{2020}]{chanpuriya2020infinitewalk}
S.~\textsc{Chanpuriya} and C.~\textsc{Musco}.
\newblock \emph{Infinitewalk: Deep network embeddings as laplacian embeddings
  with a nonlinearity}.
\newblock 2020.
\newblock \href {http://arxiv.org/abs/2006.00094}
  {\normalcolor\path{arXiv:2006.00094}}.

\bibitem[\protect\citeauthoryear{Chung}{2007}]{chung2007-pagerank-heat}
F.~\textsc{Chung}.
\newblock \href{http://dx.doi.org/10.1073/pnas.0708838104}{\emph{The heat
  kernel as the {PageRank} of a graph}}.
\newblock Proceedings of the National Academy of Sciences, 104~(50), pp.
  19735--19740, 2007.
\newblock \href {http://dx.doi.org/10.1073/pnas.0708838104}
  {\normalcolor\path{doi:10.1073/pnas.0708838104}}.

\bibitem[\protect\citeauthoryear{Chung}{1992}]{Chung-1992-book}
F.~R.~L. \textsc{Chung}.
\newblock \emph{Spectral Graph Theory}, American Mathematical Society, 1992.

\bibitem[\protect\citeauthoryear{Fountoulakis
  et~al.}{2020{\natexlab{a}}}]{Fountoulakis-preprint-flow}
K.~\textsc{Fountoulakis}, M.~\textsc{Liu}, D.~F. \textsc{Gleich}, and M.~W.
  \textsc{Mahoney}.
\newblock \emph{Flow-based algorithms for improving clusters: A unifying
  framework, software, and performance}.
\newblock arXiv, cs.LG, p. 2004.09608, 2020{\natexlab{a}}.

\bibitem[\protect\citeauthoryear{Fountoulakis
  et~al.}{2017}]{Fountoulakis-2017-variational}
K.~\textsc{Fountoulakis}, F.~\textsc{Roosta-Khorasani}, J.~\textsc{Shun},
  X.~\textsc{Cheng}, and M.~W. \textsc{Mahoney}.
\newblock \href{http://dx.doi.org/10.1007/s10107-017-1214-8}{\emph{Variational
  perspective on local graph clustering}}.
\newblock Mathematical Programming, 2017.
\newblock \href {http://dx.doi.org/10.1007/s10107-017-1214-8}
  {\normalcolor\path{doi:10.1007/s10107-017-1214-8}}.

\bibitem[\protect\citeauthoryear{Fountoulakis
  et~al.}{2020{\natexlab{b}}}]{Yang-preprint}
K.~\textsc{Fountoulakis}, D.~\textsc{Wang}, and S.~\textsc{Yang}.
\newblock \emph{p-norm flow diffusion for local graph clustering}.
\newblock In \emph{Proceedings of the International Conference on Machine
  Learning}, pp. 5619--5629. 2020{\natexlab{b}}.

\bibitem[\protect\citeauthoryear{Ghosh et~al.}{2014}]{ghosh2014interplay}
R.~\textsc{Ghosh}, S.-h. \textsc{Teng}, K.~\textsc{Lerman}, and
  X.~\textsc{Yan}.
\newblock \emph{The interplay between dynamics and networks: centrality,
  communities, and cheeger inequality}.
\newblock In \emph{Proceedings of the 20th ACM SIGKDD international conference
  on Knowledge discovery and data mining}, pp. 1406--1415. 2014.

\bibitem[\protect\citeauthoryear{Gleich and Mahoney}{2014}]{gleich2014anti}
D.~\textsc{Gleich} and M.~\textsc{Mahoney}.
\newblock \emph{Anti-differentiating approximation algorithms: A case study
  with min-cuts, spectral, and flow}.
\newblock In \emph{International Conference on Machine Learning}, pp.
  1018--1025. 2014.

\bibitem[\protect\citeauthoryear{Gleich}{2015}]{Gleich-2015-prbeyond}
D.~F. \textsc{Gleich}.
\newblock \href{http://dx.doi.org/10.1137/140976649}{\emph{{PageRank} beyond
  the web}}.
\newblock SIAM Review, 57~(3), pp. 321--363, 2015.
\newblock \href {http://dx.doi.org/10.1137/140976649}
  {\normalcolor\path{doi:10.1137/140976649}}.

\bibitem[\protect\citeauthoryear{Gleich and
  Mahoney}{2015}]{Gleich-2015-robustifying}
D.~F. \textsc{Gleich} and M.~W. \textsc{Mahoney}.
\newblock \href{http://dx.doi.org/10.1145/2783258.2783376}{\emph{Using local
  spectral methods to robustify graph-based learning algorithms}}.
\newblock In \emph{Proceedings of the 21th ACM SIGKDD International Conference
  on Knowledge Discovery and Data Mining}, pp. 359--368. 2015.
\newblock \href {http://dx.doi.org/10.1145/2783258.2783376}
  {\normalcolor\path{doi:10.1145/2783258.2783376}}.

\bibitem[\protect\citeauthoryear{Grover and
  Leskovec}{2016}]{Grover-2016-node2vec}
A.~\textsc{Grover} and J.~\textsc{Leskovec}.
\newblock \href{http://dx.doi.org/10.1145/2939672.2939754}{\emph{Node2vec:
  Scalable feature learning for networks}}.
\newblock In \emph{Proceedings of the 22Nd ACM SIGKDD International Conference
  on Knowledge Discovery and Data Mining}, pp. 855--864. 2016.
\newblock \href {http://dx.doi.org/10.1145/2939672.2939754}
  {\normalcolor\path{doi:10.1145/2939672.2939754}}.

\bibitem[\protect\citeauthoryear{Hallac et~al.}{2017}]{hallac2017snapvx}
D.~\textsc{Hallac}, C.~\textsc{Wong}, S.~\textsc{Diamond}, A.~\textsc{Sharang},
  R.~\textsc{Sosic}, S.~\textsc{Boyd}, and J.~\textsc{Leskovec}.
\newblock \emph{Snapvx: A network-based convex optimization solver}.
\newblock The Journal of Machine Learning Research, 18~(1), pp. 110--114, 2017.

\bibitem[\protect\citeauthoryear{Hansen and
  Mahoney}{2012}]{Hansen-2012-semi-supervised-eigenvectors}
T.~J. \textsc{Hansen} and M.~W. \textsc{Mahoney}.
\newblock
  \href{http://papers.nips.cc/paper/4560-semi-supervised-eigenvectors-for-locally-biased-learning.pdf}{\emph{Semi-supervised
  eigenvectors for locally-biased learning}}.
\newblock In \emph{Advances in Neural Information Processing Systems 25}, pp.
  2528--2536. 2012.

\bibitem[\protect\citeauthoryear{Ibrahim and
  Gleich}{2019}]{Ibrahim-2019-nonlinear}
R.~\textsc{Ibrahim} and D.~F. \textsc{Gleich}.
\newblock \href{http://dx.doi.org/10.1145/3308558.3313483}{\emph{Nonlinear
  diffusion for community detection and semi-supervised learning}}.
\newblock In \emph{The World Wide Web Conference}, pp. 739--750. 2019.
\newblock \href {http://dx.doi.org/10.1145/3308558.3313483}
  {\normalcolor\path{doi:10.1145/3308558.3313483}}.

\bibitem[\protect\citeauthoryear{Jeub et~al.}{2015}]{Jeub-2015-locally}
L.~G.~S. \textsc{Jeub}, P.~\textsc{Balachandran}, M.~A. \textsc{Porter}, P.~J.
  \textsc{Mucha}, and M.~W. \textsc{Mahoney}.
\newblock \href{http://dx.doi.org/10.1103/PhysRevE.91.012821}{\emph{Think
  locally, act locally: Detection of small, medium-sized, and large communities
  in large networks}}.
\newblock Phys. Rev. E, 91, p. 012821, 2015.
\newblock \href {http://dx.doi.org/10.1103/PhysRevE.91.012821}
  {\normalcolor\path{doi:10.1103/PhysRevE.91.012821}}.

\bibitem[\protect\citeauthoryear{Joachims}{2003}]{Joachims-2003-transductive}
T.~\textsc{Joachims}.
\newblock
  \href{http://www.aaai.org/Papers/ICML/2003/ICML03-040.pdf}{\emph{Transductive
  learning via spectral graph partitioning}}.
\newblock In \emph{ICML}, pp. 290--297. 2003.

\bibitem[\protect\citeauthoryear{Kipf and Welling}{2016}]{kipf2016semi}
T.~N. \textsc{Kipf} and M.~\textsc{Welling}.
\newblock \emph{Semi-supervised classification with graph convolutional
  networks}.
\newblock arXiv preprint arXiv:1609.02907, 2016.

\bibitem[\protect\citeauthoryear{Klicpera et~al.}{2019}]{klicpera2018predict}
J.~\textsc{Klicpera}, A.~\textsc{Bojchevski}, and S.~\textsc{G{\"u}nnemann}.
\newblock \emph{Predict then propagate: Graph neural networks meet personalized
  pagerank}.
\newblock In \emph{International Conference on Learning Representations
  (ICLR)}. 2019.

\bibitem[\protect\citeauthoryear{Kloster and
  Gleich}{2014}]{Kloster-2014-hkrelax}
K.~\textsc{Kloster} and D.~F. \textsc{Gleich}.
\newblock \href{http://dx.doi.org/10.1145/2623330.2623706}{\emph{Heat kernel
  based community detection}}.
\newblock In \emph{Proceedings of the 20th {ACM SIGKDD} International
  Conference on Knowledge Discovery and Data Mining}, pp. 1386--1395. 2014.
\newblock \href {http://dx.doi.org/10.1145/2623330.2623706}
  {\normalcolor\path{doi:10.1145/2623330.2623706}}.

\bibitem[\protect\citeauthoryear{Kloumann and
  Kleinberg}{2014}]{Kloumann-2014-seeds}
I.~M. \textsc{Kloumann} and J.~M. \textsc{Kleinberg}.
\newblock \href{http://dx.doi.org/10.1145/2623330.2623621}{\emph{Community
  membership identification from small seed sets}}.
\newblock In \emph{Proceedings of the 20th ACM SIGKDD International Conference
  on Knowledge Discovery and Data Mining}, pp. 1366--1375. 2014.
\newblock \href {http://dx.doi.org/10.1145/2623330.2623621}
  {\normalcolor\path{doi:10.1145/2623330.2623621}}.

\bibitem[\protect\citeauthoryear{Kloumann
  et~al.}{2016}]{Kloumann-2016-block-models}
I.~M. \textsc{Kloumann}, J.~\textsc{Ugander}, and J.~\textsc{Kleinberg}.
\newblock \href{http://dx.doi.org/10.1073/pnas.1611275114}{\emph{Block models
  and personalized {PageRank}}}.
\newblock Proceedings of the National Academy of Sciences, 114~(1), pp. 33--38,
  2016.
\newblock \href {http://dx.doi.org/10.1073/pnas.1611275114}
  {\normalcolor\path{doi:10.1073/pnas.1611275114}}.

\bibitem[\protect\citeauthoryear{Koutra
  et~al.}{2011}]{Koutra-2011-guilt-by-association}
D.~\textsc{Koutra}, T.-Y. \textsc{Ke}, U.~\textsc{Kang}, D.~H. \textsc{Chau},
  H.-K.~K. \textsc{Pao}, and C.~\textsc{Faloutsos}.
\newblock \href{http://dx.doi.org/10.1007/978-3-642-23783-6_16}{\emph{Unifying
  guilt-by-association approaches: Theorems and fast algorithms}}.
\newblock In \emph{ECML/PKDD}, pp. 245--260. 2011.
\newblock \href {http://dx.doi.org/10.1007/978-3-642-23783-6_16}
  {\normalcolor\path{doi:10.1007/978-3-642-23783-6_16}}.

\bibitem[\protect\citeauthoryear{Lancichinetti
  et~al.}{2008}]{Lancichinetti-2008-lfr}
A.~\textsc{Lancichinetti}, S.~\textsc{Fortunato}, and F.~\textsc{Radicchi}.
\newblock \href{http://dx.doi.org/10.1103/PhysRevE.78.046110}{\emph{Benchmark
  graphs for testing community detection algorithms}}.
\newblock Phys. Rev. E, 78, p. 046110, 2008.
\newblock \href {http://dx.doi.org/10.1103/PhysRevE.78.046110}
  {\normalcolor\path{doi:10.1103/PhysRevE.78.046110}}.

\bibitem[\protect\citeauthoryear{Lang and Rao}{2004}]{LR04}
K.~\textsc{Lang} and S.~\textsc{Rao}.
\newblock \emph{A flow-based method for improving the expansion or conductance
  of graph cuts}.
\newblock In \emph{IPCO 2004: Integer Programming and Combinatorial
  Optimization}, pp. 325--337. 2004.

\bibitem[\protect\citeauthoryear{Leskovec
  et~al.}{2009}]{Leskovec-2009-community-structure}
J.~\textsc{Leskovec}, K.~J. \textsc{Lang}, A.~\textsc{Dasgupta}, and M.~W.
  \textsc{Mahoney}.
\newblock
  \href{http://dx.doi.org/10.1080/15427951.2009.10129177}{\emph{Community
  structure in large networks: Natural cluster sizes and the absence of large
  well-defined clusters}}.
\newblock Internet Mathematics, 6~(1), pp. 29--123, 2009.
\newblock \href {http://dx.doi.org/10.1080/15427951.2009.10129177}
  {\normalcolor\path{doi:10.1080/15427951.2009.10129177}}.

\bibitem[\protect\citeauthoryear{Li and Milenkovic}{2018}]{Li-2018-hypergraphs}
P.~\textsc{Li} and O.~\textsc{Milenkovic}.
\newblock \href{http://proceedings.mlr.press/v80/li18e.html}{\emph{Submodular
  hypergraphs: p-laplacians, {C}heeger inequalities and spectral clustering}}.
\newblock In \emph{Proceedings of the 35th International Conference on Machine
  Learning}, pp. 3014--3023. 2018.

\bibitem[\protect\citeauthoryear{Li et~al.}{2019}]{li2019label}
Q.~\textsc{Li}, X.-M. \textsc{Wu}, H.~\textsc{Liu}, X.~\textsc{Zhang}, and
  Z.~\textsc{Guan}.
\newblock \emph{Label efficient semi-supervised learning via graph filtering}.
\newblock In \emph{Proceedings of the IEEE Conference on Computer Vision and
  Pattern Recognition}, pp. 9582--9591. 2019.

\bibitem[\protect\citeauthoryear{Lisewski and
  Lichtarge}{2010}]{Lisewski-2010-gid}
A.~M. \textsc{Lisewski} and O.~\textsc{Lichtarge}.
\newblock \href{http://dx.doi.org/10.1016/j.physa.2010.04.005}{\emph{Untangling
  complex networks: Risk minimization in financial markets through accessible
  spin glass ground states}}.
\newblock Physica A: Statistical Mechanics and its Applications, 389~(16), pp.
  3250--3253, 2010.
\newblock \href {http://dx.doi.org/10.1016/j.physa.2010.04.005}
  {\normalcolor\path{doi:10.1016/j.physa.2010.04.005}}.

\bibitem[\protect\citeauthoryear{Mahoney et~al.}{2012}]{Mahoney-2012-local}
M.~W. \textsc{Mahoney}, L.~\textsc{Orecchia}, and N.~K. \textsc{Vishnoi}.
\newblock
  \href{http://www.jmlr.org/papers/volume13/mahoney12a/mahoney12a.pdf}{\emph{A
  local spectral method for graphs: With applications to improving graph
  partitions and exploring data graphs locally}}.
\newblock Journal of Machine Learning Research, 13, pp. 2339--2365, 2012.

\bibitem[\protect\citeauthoryear{Mihail}{1989}]{Mihail-1989-conductance}
M.~\textsc{Mihail}.
\newblock \href{http://dx.doi.org/10.1109/SFCS.1989.63529}{\emph{Conductance
  and convergence of markov chains-a combinatorial treatment of expanders}}.
\newblock In \emph{Foundations of Computer Science, 1989., 30th Annual
  Symposium on}, pp. 526 --531. 1989.
\newblock \href {http://dx.doi.org/10.1109/SFCS.1989.63529}
  {\normalcolor\path{doi:10.1109/SFCS.1989.63529}}.

\bibitem[\protect\citeauthoryear{Mislove
  et~al.}{2007}]{Mislove-2007-measurement}
A.~\textsc{Mislove}, M.~\textsc{Marcon}, K.~P. \textsc{Gummadi},
  P.~\textsc{Druschel}, and B.~\textsc{Bhattacharjee}.
\newblock \href{http://dx.doi.org/10.1145/1298306.1298311}{\emph{Measurement
  and analysis of online social networks}}.
\newblock In \emph{Proceedings of the 7th ACM SIGCOMM Conference on Internet
  Measurement}, pp. 29--42. 2007.
\newblock \href {http://dx.doi.org/10.1145/1298306.1298311}
  {\normalcolor\path{doi:10.1145/1298306.1298311}}.

\bibitem[\protect\citeauthoryear{Orecchia and
  Mahoney}{2011}]{Orecchia-2011-implicit-regularization}
L.~\textsc{Orecchia} and M.~W. \textsc{Mahoney}.
\newblock
  \href{http://www.icml-2011.org/papers/120_icmlpaper.pdf}{\emph{Implementing
  regularization implicitly via approximate eigenvector computation}}.
\newblock In \emph{Proceedings of the 28th International Conference on Machine
  Learning (ICML-11)}, pp. 121--128. 2011.

\bibitem[\protect\citeauthoryear{Orecchia and Zhu}{2014}]{orecchia2014flow}
L.~\textsc{Orecchia} and Z.~A. \textsc{Zhu}.
\newblock \emph{Flow-based algorithms for local graph clustering}.
\newblock In \emph{Proceedings of the twenty-fifth annual ACM-SIAM symposium on
  Discrete algorithms}, pp. 1267--1286. 2014.

\bibitem[\protect\citeauthoryear{Owen}{2007}]{owen2007robust}
A.~B. \textsc{Owen}.
\newblock \emph{A robust hybrid of lasso and ridge regression}.
\newblock Contemporary Mathematics, 443~(7), pp. 59--72, 2007.

\bibitem[\protect\citeauthoryear{Pan
  et~al.}{2004}]{pan2004-cross-modal-discovery}
J.-Y. \textsc{Pan}, H.-J. \textsc{Yang}, C.~\textsc{Faloutsos}, and
  P.~\textsc{Duygulu}.
\newblock \href{http://dx.doi.org/10.1145/1014052.1014135}{\emph{Automatic
  multimedia cross-modal correlation discovery}}.
\newblock In \emph{KDD '04: Proceedings of the tenth ACM SIGKDD international
  conference on Knowledge discovery and data mining}, pp. 653--658. 2004.
\newblock \href {http://dx.doi.org/10.1145/1014052.1014135}
  {\normalcolor\path{doi:10.1145/1014052.1014135}}.

\bibitem[\protect\citeauthoryear{Peel}{2017}]{Peel-2017-relational}
L.~\textsc{Peel}.
\newblock \href{http://dx.doi.org/10.1137/1.9781611974973.49}{\emph{Graph-based
  semi-supervised learning for relational networks}}.
\newblock In \emph{Proceedings of the 2017 {SIAM} International Conference on
  Data Mining}, pp. 435--443. 2017.
\newblock \href {http://dx.doi.org/10.1137/1.9781611974973.49}
  {\normalcolor\path{doi:10.1137/1.9781611974973.49}}.

\bibitem[\protect\citeauthoryear{Perozzi et~al.}{2014}]{Perozzi-2014-deepwalk}
B.~\textsc{Perozzi}, R.~\textsc{Al-Rfou}, and S.~\textsc{Skiena}.
\newblock \href{http://dx.doi.org/10.1145/2623330.2623732}{\emph{{DeepWalk}:
  Online learning of social representations}}.
\newblock In \emph{Proceedings of the 20th ACM SIGKDD International Conference
  on Knowledge Discovery and Data Mining}, pp. 701--710. 2014.
\newblock \href {http://dx.doi.org/10.1145/2623330.2623732}
  {\normalcolor\path{doi:10.1145/2623330.2623732}}.

\bibitem[\protect\citeauthoryear{Shi and Malik}{2000}]{Shi2000-normalized-cuts}
J.~\textsc{Shi} and J.~\textsc{Malik}.
\newblock \href{http://dx.doi.org/10.1109/34.868688}{\emph{Normalized cuts and
  image segmentation}}.
\newblock Pattern Analysis and Machine Intelligence, IEEE Transactions on,
  22~(8), pp. 888--905, 2000.
\newblock \href {http://dx.doi.org/10.1109/34.868688}
  {\normalcolor\path{doi:10.1109/34.868688}}.

\bibitem[\protect\citeauthoryear{Shun et~al.}{2016}]{SKFM2016}
J.~\textsc{Shun}, F.~\textsc{Roosta-Khorasani}, K.~\textsc{Fountoulakis}, and
  M.~W. \textsc{Mahoney}.
\newblock \emph{Parallel local graph clustering}.
\newblock Proceedings of the VLDB Endowment, 9~(12), pp. 1041--1052, 2016.

\bibitem[\protect\citeauthoryear{Traud et~al.}{2012}]{Traud-2012-facebook}
A.~L. \textsc{Traud}, P.~J. \textsc{Mucha}, and M.~A. \textsc{Porter}.
\newblock \href{http://dx.doi.org/10.1016/j.physa.2011.12.021}{\emph{Social
  structure of facebook networks}}.
\newblock Physica A: Statistical Mechanics and its Applications, 391~(16), pp.
  4165--4180, 2012.
\newblock \href {http://dx.doi.org/10.1016/j.physa.2011.12.021}
  {\normalcolor\path{doi:10.1016/j.physa.2011.12.021}}.

\bibitem[\protect\citeauthoryear{Veldt
  et~al.}{2016}]{Veldt-2016-simple-local-flow}
L.~N. \textsc{Veldt}, D.~F. \textsc{Gleich}, and M.~W. \textsc{Mahoney}.
\newblock \href{http://jmlr.org/proceedings/papers/v48/veldt16.html}{\emph{A
  simple and strongly-local flow-based method for cut improvement}}.
\newblock In \emph{International Conference on Machine Learning}, pp.
  1938--1947. 2016.

\bibitem[\protect\citeauthoryear{Veldt
  et~al.}{2019{\natexlab{a}}}]{Veldt-2019-flow}
N.~\textsc{Veldt}, C.~\textsc{Klymko}, and D.~F. \textsc{Gleich}.
\newblock \href{http://dx.doi.org/10.1137/1.9781611975673.43}{\emph{Flow-based
  local graph clustering with better seed set inclusion}}.
\newblock In \emph{Proceedings of the SIAM International Conference on Data
  Mining}, pp. 378--386. 2019{\natexlab{a}}.
\newblock \href {http://dx.doi.org/10.1137/1.9781611975673.43}
  {\normalcolor\path{doi:10.1137/1.9781611975673.43}}.

\bibitem[\protect\citeauthoryear{Veldt
  et~al.}{2019{\natexlab{b}}}]{Veldt-2019-resolution}
N.~\textsc{Veldt}, A.~\textsc{Wirth}, and D.~F. \textsc{Gleich}.
\newblock \href{http://dx.doi.org/10.1145/3308558.3313471}{\emph{Learning
  resolution parameters for graph clustering}}.
\newblock In \emph{The World Wide Web Conference}, pp. 1909--1919.
  2019{\natexlab{b}}.
\newblock \href {http://dx.doi.org/10.1145/3308558.3313471}
  {\normalcolor\path{doi:10.1145/3308558.3313471}}.

\bibitem[\protect\citeauthoryear{Wang et~al.}{2017}]{wang2017capacity}
D.~\textsc{Wang}, K.~\textsc{Fountoulakis}, M.~\textsc{Henzinger}, M.~W.
  \textsc{Mahoney}, and S.~\textsc{Rao}.
\newblock \emph{Capacity releasing diffusion for speed and locality}.
\newblock In \emph{Proceedings of the 34th International Conference on Machine
  Learning-Volume 70}, pp. 3598--3607. 2017.

\bibitem[\protect\citeauthoryear{Yadati et~al.}{2019}]{yadati2019hypergcn}
N.~\textsc{Yadati}, M.~R. \textsc{Nimishakavi}, P.~\textsc{Yadav},
  V.~\textsc{Nitin}, A.~\textsc{Louis}, and P.~\textsc{Talukdar}.
\newblock \emph{Hypergcn: A new method for training graph convolutional
  networks on hypergraphs}.
\newblock In \emph{NeurIPS}. 2019.

\bibitem[\protect\citeauthoryear{Yang and
  Leskovec}{2012}]{Yang-2012-ground-truth}
J.~\textsc{Yang} and J.~\textsc{Leskovec}.
\newblock \href{http://dx.doi.org/10.1109/ICDM.2012.138}{\emph{Defining and
  evaluating network communities based on ground-truth}}.
\newblock In \emph{Data Mining (ICDM), 2012 IEEE 12th International Conference
  on}, pp. 745--754. 2012.
\newblock \href {http://dx.doi.org/10.1109/ICDM.2012.138}
  {\normalcolor\path{doi:10.1109/ICDM.2012.138}}.

\bibitem[\protect\citeauthoryear{Yin et~al.}{2017}]{Yin-2017-local-motif}
H.~\textsc{Yin}, A.~R. \textsc{Benson}, J.~\textsc{Leskovec}, and D.~F.
  \textsc{Gleich}.
\newblock \href{http://dx.doi.org/10.1145/3097983.3098069}{\emph{Local
  higher-order graph clustering}}.
\newblock In \emph{Proceedings of the 23rd ACM SIGKDD International Conference
  on Knowledge Discovery and Data Mining}, pp. 555--564. 2017.
\newblock \href {http://dx.doi.org/10.1145/3097983.3098069}
  {\normalcolor\path{doi:10.1145/3097983.3098069}}.

\bibitem[\protect\citeauthoryear{Zhou et~al.}{2003}]{Zhou-2003-semi-supervised}
D.~\textsc{Zhou}, O.~\textsc{Bousquet}, T.~N. \textsc{Lal}, J.~\textsc{Weston},
  and B.~\textsc{Sch{\"o}lkopf}.
\newblock
  \href{http://research.microsoft.com/en-us/um/people/denzho/papers/llgc.pdf}{\emph{Learning
  with local and global consistency}}.
\newblock In \emph{NIPS}. 2003.

\bibitem[\protect\citeauthoryear{Zhu et~al.}{2003}]{Zhu-2003-diffusion}
X.~\textsc{Zhu}, Z.~\textsc{Ghahramani}, and J.~\textsc{Lafferty}.
\newblock
  \href{http://www.aaai.org/Papers/ICML/2003/ICML03-118.pdf}{\emph{Semi-supervised
  learning using gaussian fields and harmonic functions}}.
\newblock In \emph{ICML}, pp. 912--919. 2003.

\bibitem[\protect\citeauthoryear{Zhu et~al.}{2013}]{zhu2013local}
Z.~A. \textsc{Zhu}, S.~\textsc{Lattanzi}, and V.~S. \textsc{Mirrokni}.
\newblock \emph{A local algorithm for finding well-connected clusters.}
\newblock In \emph{ICML (3)}, pp. 396--404. 2013.

\end{thebibliography}
\end{fullwidth}	
\end{document}